\def\l@subsubsection#1#2{}
\newtheorem{lemma}{Lemma}[section]
\newtheorem{theorem}[lemma]{Theorem}
\newtheorem{corollary}[lemma]{Corollary}
\theoremstyle{definition}
\newtheorem{definition}[lemma]{Definition}
\newtheorem{remark}[lemma]{Remark}
\newcommand{\CC}{{\mathbb{C}}}
\newcommand{\RR}{{\mathbb{R}}}
\newcommand{\ZZ}{{\mathbb{Z}}}
\newcommand{\Z}{{\mathbb{Z}}}
\newcommand{\FF}{{\mathbb{F}}}
\newcommand{\diag}{{\mathrm{diag}}}
\DeclareMathOperator*{\Tr}{{Tr}}
\DeclareMathOperator*{\Coe}{{Coe}}
\DeclareMathOperator{\im}{\mathrm{im}}
\DeclareMathOperator{\rank}{\mathrm{rank}}
\DeclareMathOperator{\coker}{\mathrm{coker}}
\DeclareMathOperator{\ann}{\mathrm{ann}}
\newcommand{\commA}{{\cal A}}
\newcommand{\opZ}{{\cal Z}}
\newcommand{\opX}{{\cal X}}
\newcommand{\stab}{\mathcal{S}}
\newcommand{\pauligroup}{\mathcal{P}}
\newcommand{\dist}{\mathrm{dist}}
\newcommand{\Circuit}{\textsf{Circuit}}
\newcommand{\Circshift}{\textsf{CircShift}}
\newcommand{\All}{\textsf{All}}
\newcommand{\Id}{\mathrm{Id}}
\newcommand{\Swap}{\mathrm{SWAP}}
\newcommand{\fone}{f_1}
\newcommand{\ftwo}{f_2}
\newcommand{\fthree}{f_3}
\newcommand{\grotimes}{{\otimes_{gr}}}
\begin{document}

\title{Nontrivial Quantum Cellular Automata in Higher Dimensions}

\author{Jeongwan Haah}
\affiliation{Quantum Architectures and Computation, Microsoft Research, Redmond, WA 98052, USA}

\author{Lukasz Fidkowski}
\affiliation{Department of Physics, University of Washington, Seattle WA 98195, USA}

\author{Matthew B.~Hastings}
\affiliation{Station Q, Microsoft Research, Santa Barbara, CA 93106-6105, USA}
\affiliation{Quantum Architectures and Computation, Microsoft Research, Redmond, WA 98052, USA}

\begin{abstract}
We construct a three-dimensional quantum cellular automaton (QCA),
an automorphism of the local operator algebra on a lattice of qubits, 
which disentangles the ground state of the Walker-Wang three fermion model.
We show that if this QCA can be realized by a quantum circuit of constant depth,
then there exists a two-dimensional commuting projector Hamiltonian 
which realizes the three fermion topological order
which is widely believed not to be possible.
We conjecture in accordance with this belief
that this QCA is not a quantum circuit of constant depth, and
we provide two further pieces of evidence to support the conjecture.
We show that this QCA maps every local Pauli operator to a local Pauli operator,
but is not a Clifford circuit of constant depth.
Further, we show that if the three-dimensional QCA can be realized by a quantum circuit of constant depth, 
then there exists a two-dimensional QCA acting on fermionic degrees of freedom 
which cannot be realized by a quantum circuit of constant depth; 
i.e., we prove the existence of a nontrivial QCA in either three or two dimensions.
The square of our three-dimensional QCA can be realized by a quantum circuit of constant depth,
and this suggests the existence of a $\ZZ_2$ invariant of a QCA in higher dimensions, 
totally distinct from the classification by positive rationals 
(i.e., by one integer index for each prime) in one dimension.

In an appendix, unrelated to the main body of this paper, 
we give a fermionic generalization of a result of Bravyi and Vyalyi~\cite{bravyi2005commutative}
on ground states of 2-local commuting Hamiltonians.
\end{abstract}

\maketitle

\clearpage
\tableofcontents
\clearpage

\section{Introduction}\label{sec:introduction}

A quantum cellular automaton (QCA) 
is an automorphism of the $*$-algebra of operators acting on the Hilbert space of a quantum system 
which obeys certain locality constraints.
For finite size system, any such QCA can be described by conjugation by a unitary:
the QCA $\alpha$ is defined by $\alpha(O)=U^\dagger O U$ for some unitary $U$,
where $O$ is an arbitrary operator.

There are at least four distinct cases of QCA that are considered in the literature.
First, the QCA may be ``strictly local'' or may ``have tails.''
In the strictly local case, 
we require that for any operator $O$ that is supported on a single site,
the image $\alpha(O)$ be supported on the set of sites 
within some bounded distance of the given site,
while in the case with tails, one merely requires that 
$\alpha(O)$ can be approximated by an operator supported within some distance $r$ of the site,
up to an error decaying in $r$ (perhaps exponentially).
The present paper considers only strictly local QCA,
so we will avoid giving a precise definition of the case with tails.
Throughout this paper, a QCA will always means such a strictly local QCA,
even if not explicitly mentioned, 
though when we review previous work we will in some cases
discuss the difference between strictly local and tails.
Second, the degrees of freedom may be qudits, 
or one may have both qudits and fermionic degrees of freedom.
The present paper will consider both types of QCA 
(indeed, part of our understanding of the nontriviality of a certain three-dimensional qudit QCA
will depend upon the existence or non-existence 
of a two-dimensional fermionic QCA).
We require that a fermionic QCA preserve the fermion parity.

Here ``nontriviality'' can be understood in a variety of ways,
such as whether a QCA can be implemented by a quantum circuit of bounded depth and range.
We define this more precisely below.

While QCA in one dimension are fully classified
(at least for the case of strictly local QCA),
following the work of Gross, Nesme, Vogts and Werner (abbreviated GNVW later),
in the qudit case~\cite{Gross_2012}
and in the case with fermionic degrees of freedom~\cite{fermionGNVW1,fermionGNVW2},
the classification of QCA in varying dimensions is much less well-developed.
In fact, in higher dimensions, we do not have {\it any} examples of nontrivial QCA,%
\footnote{
There is one possible candidate however in $5$ dimensions 
using fermionic degrees of freedom and allowing tails in the QCA.
Consider a system with Majorana degrees of freedom and restrict to ``Gaussian QCA,''
which map Majorana operators to linear combinations of Majorana operators.
This linear map of Majorana operators is described by an orthogonal matrix,
and the $K$-theory classification~\cite{Kitaev_2009,Ryu_2010} of so-called ``chiral orthogonal matrices,'' symmetry class BDI,
has an integer invariant in dimensions $4k+1$ as well as ${\mathbb Z}_2$ invariants in dimensions $8k+6,8k+7$.
However, it is not clear if this invariant remains 
if one allows more general QCA which need not be Gaussian.
}
except for a classification using homology~\cite{FreedmanHastings} 
which uses the index theory of GNVW in higher dimensions 
by considering various ways of dimensionally reducing a higher dimensional manifold to one dimension;
the indices found there can be cancelled by tensoring the higher-dimensional QCA 
with various QCA which act as shift operators on a line of sites which is a nontrivial cycle of the manifold.
Indeed, in two dimensions, it is known~\cite{FreedmanHastings} 
that all strictly local QCA are trivial in a sense defined below.

The purpose of the present paper is 
to show the existence of nontrivial QCA in higher dimensions.
We will show that (as made precise later) either
a nontrivial three-dimensional qudit QCA exists or a nontrivial two-dimensional fermionic QCA exists.

\subsection{Classification of Hamiltonians and relation to classification of QCA; Outline of proof}

Let us briefly discuss the classification of Hamiltonians, rather than QCA.
We do this to emphasize that much more is known about the classification
of Hamiltonians than of QCA {\it and} 
we do this because our nontrivial QCA are constructed by considering quantum Hamiltonians.
While we only work out one particular case in detail,
our technique suggests a way of constructing further examples of QCA by using known results about Hamiltonians.

If we restrict to Hamiltonians which are sums of commuting projectors, 
then in one-dimension the structure of ground states 
is fully understood for qudit degrees of freedom following Bravyi and Vyalyi~\cite{bravyi2005commutative}.
In \cref{sec:fBV} of the present paper,
we extend this result to fermionic degrees of freedom.
Thus in one-dimension, the classification of Hamiltonians and QCA are both fully understood.
In two dimensions, commuting projectors Hamiltonians are known 
which realize nontrivial topological order; 
examples include the Levin-Wen models~\cite{Levin_2005}.
However, there are few if any classification results in two dimensions 
to show that we have in some sense exhausted the possible topological phases. 
In three or more dimensions, fewer examples of topologically ordered phases are known 
(examples of commuting projector Hamitonians include higher dimensional toric codes~\cite{Kitaev_2003,Freedman_2002,Dennis_2002},
Dijkgraaf-Witten models~\cite{dijkgraaf1990topological},
generalized double semion models~\cite{Freedman_2016})
but there are also some models such as the cubic code~\cite{Haah_2011} with exotic properties 
which suggest that in three or more dimensions we are very far from exhausting all possible kinds of topological order.
Thus, while we do not have a classification of Hamiltonians in higher dimensions, 
we have a conjectural classification in two-dimensions,
and we have many nontrivial examples in higher dimensions.

One particularly interesting kind of Hamiltonian is the three-dimensional Walker-Wang model~\cite{Walker_2011,Curt}.
There is actually a whole class of Walker-Wang models;
in this paper we will focus on a specific one, 
based on the so-called three fermion anyon model described below.
In this case, the model has (loosely speaking) no bulk topological order.
However, if the model is terminated at a surface, 
with any additional surface terms chosen so that the Hamiltonian remains a sum of commuting projectors,
then it is believed that some type of topological order necessarily appears at the surface 
if one wants to give the model a unique (or unique up to topological degeneracy) ground state.

Our construction of a three-dimensional QCA is as follows.
We consider the three-dimensional Walker-Wang model for the three fermion model.
We show that there exists a QCA which disentangles the ground state of this model.
While we first construct this QCA by an explicit computation using the polynomial method~\cite{Haah2013}, 
in fact the existence of this QCA follows on more general grounds 
using the idea of a ``locally flippable separator'' that we define.
Then, if this three-dimensional QCA could be described by a quantum circuit 
(or even if it were trivial in a more general sense given later),
then we show that this would allow us to consider the Walker-Wang model on a slab 
with top and bottom two-dimensional faces 
and to disentangle the top from the bottom.
This in turn would allow us to construct a commuting projector model for the three fermion model.
Such a commuting projector model is widely believed not to be possible,
since (at least within the TQFT description) the three fermion model 
has a chiral central charge~\cite{frohlich1990braid,rehren1989braid} 
while a commuting projector model should not have such a charge 
(following the ideas of Appendix~D of \cite{Kitaev_2005}),
although we do not know any proof of this in the literature.
Thus, this gives strong evidence that the three-dimensional QCA is nontrivial.

While we do not prove that such a commuting projector model for the three fermion model does not exist, 
we are able to give two useful intermediate steps.
First, we show that no Pauli stabilizer Hamiltonian 
(i.e., a Hamiltonian which is a sum of commuting terms, each of which is product of Pauli operators) 
for the three fermion model exists.
This implies that the three-dimensional QCA is nontrivial as a Clifford QCA,
in that it cannot be decomposed as a quantum circuit of Clifford gates, possibly composed with a shift.
We also show that even if there is a commuting projector model for the three fermion model,
we can construct a nontrivial two-dimensional fermionic QCA.

\subsection{Definitions and Results}

We consider systems whose Hilbert space has a tensor product structure
(or a graded tensor product structure if there are fermions).
We mostly consider finite size systems throughout this paper, proving bounds uniform in system size.
We will sometimes consider infinite systems,
especially if considering finite systems only makes the notation cumbersome
without making exposition clear;
however, every infinite system we consider will have a translation-invariant Hamiltonian,
and hence can always be made into a finite system by periodic boundary conditions.%
\footnote{
While we allow the finite size systems to be translation non-invariant to work in the most general setting,
we note that a QCA on a finite size system on a torus, with the linear size of the torus sufficiently large compared to the range of the QCA, can be used to construct a translation invariant QCA
in an infinite system by going to the universal cover of the torus.
Further, even if the ambient space is not a torus, 
given a translation non-invariant QCA one can construct a translation invariant QCA 
that agrees with the given QCA on a disk using the ``torus trick''~\cite{hastings2013classifying}.
}
Using translationally-invariant systems 
lets us use the polynomial method to prove certain properties of the three fermion Walker-Wang model 
and other Pauli stabilizer Hamiltonians; 
many of these properties were expected to be true on physical grounds previously.
Later we will consider circuits or QCAs, which might not be translationally invariant, 
acting on the terms of such a translationally invariant Pauli stabilizer Hamiltonian; 
even though the resultant Hamiltonian will not be translationally invariant, 
the fact that it is the image of a translationally invariant Hamiltonian 
will let us still apply many of the results for translationally invariant systems.

The Hilbert space is the tensor product of Hilbert spaces for several degrees of freedom.
To define the geometry of the system, we have a set of  ``sites.''
We assume that there is some metric ${\rm dist}(i,j)$ measuring distance between sites.
Our three-dimensional QCA will use a lattice of sites labeled by three-integers $(x,y,z)$.  
There will be a finite number of sites on a $3$-torus; 
we identify sites under translation by $(L,0,0)$, $(0,L,0)$, $(0,0,L)$ 
for some integers $L$ which is the linear size of the system 
(of course, one could also allow different lengths in $x,y,z$ directions without much change).
The metric that we use is a graph metric with sites being distance $1$ if they differ in only one coordinate by $\pm 1$.
Our two-dimensional QCA will use two coordinates $(x,y)$, 
identified under translation by $(L,0)$ or $(0,L)$ with the same metric.

Each degree of freedom will be associated with some site.
There may be multiple degrees of freedom per site.
The metric on sites defines a metric between degrees of freedom: 
the distance between any two degrees of freedom is the distance between the corresponding sites.
Certain degrees of freedom will be referred to as ``qudits.''
For each qudit,
there is a Hilbert space of dimension $D_i$.  
We assume that all $D_i$ are prime without loss of generality 
(if one has a system with a composite dimension on some degree of freedom, 
one may instead define a new system replacing that degree of freedom with several degrees of freedom, 
one for each prime factor).
Other degrees of freedom will be referred to being ``fermionic.''
Let there be $N_f$ such fermionic degrees of freedom.
In this case, for each such degree of freedom $i$, 
we define a pair of operators $\gamma_i,\gamma'_i$ 
obeying the canonical Majorana anti-commutation relations: 
$\{\gamma_i,\gamma_j\}=\{\gamma'_i,\gamma'_j\}=2\delta_{i,j}$ and $\{\gamma_i,\gamma'_j\}=0$.
The Hilbert space of the whole system will be a tensor product of the qudit Hilbert spaces 
times a Hilbert space of dimension $2^{N_f}$.
An operator $O$ will be said to be supported on some set $S$ of sites 
if $O$ is in the subalgebra generated by the Majorana operators $\gamma_i,\gamma'_i$ for
a degree of freedom $i$ on a site in $S$ 
and by the operators on the qudit Hilbert spaces in sites in $S$.

Remark: while we develop much of the formalism using both Majorana and qudit degrees of freedom, 
it may be useful at first reading of this paper to consider only the qudit case.

An operator is said to have odd fermion parity if it anti-commutes 
with the product of $\gamma_i \gamma'_i$ over all fermionic degrees of freedom
and even fermion parity otherwise.
Given two operators $A,B$, we define the supercommutator as
$[A,B\} = \{A,B\}$ if both $A$ and $B$ have odd fermion parity and $[A,B\} = [A,B]$ otherwise.

\begin{definition}
A QCA has range $R$ if, for any operator $O$ supported on a site $i$, 
$\alpha(O)$ is supported on the set of sites within distance $R$ of site $i$.
\end{definition}

Note that we can bound the range of the inverse of a QCA as follows:
\begin{lemma}[\cite{Arrighi2007}]
If $\alpha$ has range $R$, then $\alpha^{-1}$ has range $R$.
\end{lemma}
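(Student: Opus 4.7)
The plan is to argue by contrapositive: if $\alpha^{-1}$ fails to have range $R$, we produce a witness that violates the range bound on $\alpha$ itself. Concretely, suppose for contradiction that there exists a single-site operator $O$ supported at some site $i$ such that $\alpha^{-1}(O)$ is not supported on the ball $B_R(i)=\{k:\mathrm{dist}(i,k)\le R\}$. The first step is to translate the phrase ``not supported on $B_R(i)$'' into a super-commutator statement: I would invoke the (super-)double-commutant characterization of support, which says that an operator $X$ is supported on a set $S$ if and only if it super-commutes with every operator supported on the complement of $S$. Applied here, this yields a site $j$ with $\mathrm{dist}(i,j)>R$ and an operator $P$ supported at $j$ such that $[\alpha^{-1}(O),P\}\ne 0$.

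Next I would transport this inequality through $\alpha$. Since $\alpha$ is a $*$-algebra automorphism it preserves products, hence commutators and anti-commutators; and since $\alpha$ is assumed to preserve fermion parity, the parities of $\alpha^{-1}(O)$ and $P$ agree with those of $O$ and $\alpha(P)$, so that the relevant choice of commutator vs.~anti-commutator is the same on both sides. Thus $\alpha$ intertwines super-commutators, and applying $\alpha$ to $[\alpha^{-1}(O),P\}\ne 0$ gives
\begin{equation}
[O,\alpha(P)\}=\alpha\bigl([\alpha^{-1}(O),P\}\bigr)\ne 0.
\end{equation}

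Now the range hypothesis on $\alpha$ kicks in: because $P$ is supported at the single site $j$, $\alpha(P)$ is supported on $B_R(j)$, and because $\mathrm{dist}(i,j)>R$ the site $i$ lies outside $B_R(j)$. Hence $O$ (supported on $\{i\}$) and $\alpha(P)$ (supported on $B_R(j)$) have disjoint supports, and by the standard fact that operators on disjoint subsets of sites super-commute in the graded algebra, $[O,\alpha(P)\}=0$. This contradicts the previous display, completing the argument.

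The only real subtlety, and the step I expect to need the most care, is the first one: characterizing ``not supported on $S$'' by the existence of a super-commutator witness on the complement. In the purely qudit setting this is immediate from the tensor product factorization, but in the presence of Majorana degrees of freedom one needs the analogous statement for the graded tensor product, i.e.~a super-version of the double commutant theorem for finite-dimensional $\ZZ_2$-graded matrix algebras. Everything else is bookkeeping: the automorphism and parity-preservation properties, together with disjoint-support super-commutation, do all the remaining work and do not depend on the specific form of the lattice or of $\alpha$.
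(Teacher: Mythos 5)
Your argument is correct and is essentially the paper's proof run in reverse (by contradiction rather than directly): the paper starts from $[\alpha(O_i),O_j]=0$ for $\mathrm{dist}(i,j)>R$, applies $\alpha^{-1}$, and uses arbitrariness of $O_i$ together with the commutant characterization of support, which is exactly the ingredients you use. One small point in your favor: the paper writes ordinary commutators $[\cdot,\cdot]$, but in the fermionic setting the correct statement is the supercommutator version $[\cdot,\cdot\}$ (with parity preservation of $\alpha$ ensuring it is transported correctly), which you make explicit.
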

\begin{proof}
Since $\alpha$ has range $R$, $[\alpha(O_i),O_j]=0$ if $O_i,O_j$ are supported on $i,j$ respectively and ${\rm dist}(i,j)>R$.
So, since $\alpha$ is an automorphism, $[O_i,\alpha^{-1}(O_j)]=0$.
Since $O_i$ is arbitrary, this implies that $\alpha^{-1}$ has range $R$.
\end{proof}

\begin{definition}
A {\bf quantum circuit} is a QCA $\alpha$ that can be written in the form
$\alpha=\alpha_d \circ \alpha_{d-1} \circ \cdots \circ \alpha_{1},$
where each $\alpha_a$ is a QCA that can be written in the form
\[
\alpha_a(O) = \prod_{S \in G_a} U_{S,a}^\dagger O U_{S,a}
\]
where $G_a$ is a collection of disjoint sets of sites, 
and where $U_{S,a}$ is a unitary of even fermion parity supported on $S$.
Each $U_{S,a}$ is called a {\bf gate} on set $S$.
The number $d$ is called the {\bf depth} of the quantum circuit.
We require that the diameter of all $S\in G_a$ for all $a$ be bounded by 
some constant, called the {\bf range} of the gates.
\end{definition}
Note that the range of $\alpha$ is bounded by $d$ times the range of the gates.

Remark: we require that $U_{S,a}$ have even fermion parity 
so that $U_{S,a}^\dagger O U_{S,a}=O$ for any operator $O$ supported on the complement of $S$;
if $U$ supported on $S$ had odd fermion parity,
then for any fermion odd operator $O$ supported on the complement of $S$,
we have $U^\dagger O  U = -O$.
We want to avoid this because we want the gates in the circuit 
to only act nontrivially on operators supported near it.

We consider finite size systems throughout this paper.  
Of course, on a finite size system, any QCA $\alpha$ can be written as a quantum circuit of depth $1$,
using just a single gate acting on the entire system.
Thus, what will be interesting for us is to consider a family of systems of increasing size, 
with a family of QCA for each system having range $R=O(1)$.
We will say that this family of QCA can be realized by a family of quantum circuits 
if each QCA can be realized by a quantum circuit, 
all having depth $O(1)$ with all having range of the gates $O(1)$.

From now on, we will avoid explicitly mentioning the families of QCA, unless needed;
we will simply consider whether certain quantities such as a range are $O(1)$ or not.
Since, for example, the three-dimensional Walker-Wang model 
is defined in a translation invariant fashion, 
it can be constructed for any system size.
Similarly, our explicit construction using the polynomial method of a QCA 
that disentangles the Walker-Wang model 
will give a translation invariant QCA 
so it can be readily defined on any finite size system of size large enough compared to the range of the QCA.

Since we consider families, we can define a group structure for QCAs.
Given two families of QCAs, both of which have the same degrees of freedom, 
we can define their product by composing them.
Since each QCA in each family has range $O(1)$, their product has range $O(1)$.
Note that if, instead of considering families, we considered a single finite size, 
then there is no way to define such a group structure while preserving a meaningful bound on the range of the QCA.

We define three groups of (families of) QCAs.
\begin{definition}
The group $\Circuit$, consists of quantum circuits.
The group $\Circshift$, is defined to consist to quantum circuits, composed with ``shifts,''
where a {\bf shift} QCA has the property that for each prime $p$,
there is a permutation $f_p$ on the set of all qudits $i$ of dimension $D_i=p$.
The shift QCA maps the operators generalized Pauli operators $Z_i,X_i$ on $i$ to $Z_{f_p(i)},X_{f_p(i)}$.
Further, there is a permutation $f_M$ on the set of all Majorana operators $\gamma_i,\gamma'_i$ 
and the shift QCA maps $\gamma_i \mapsto \pm f_M(\gamma_i)$ and $\gamma_i' \mapsto \pm f_M(\gamma'_i)$.
The group $\All$, consists of all QCAs.
\end{definition}
Note that a shift may map $\gamma_i$ to $\gamma_j '$ where $j$ is near $i$.

\begin{lemma}\label{lem:normalsubgroup}
Assume the metric is such that for any distance $r$, 
there is an $O(1)$ bound on the number of sites within distance $r$ of any site (this holds for all lattices that we consider).
Then,
$\Circuit$ is a normal subgroup of $\All$.
\end{lemma}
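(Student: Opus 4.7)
The plan is to show that for any $\beta \in \All$ and any $\alpha \in \Circuit$, the conjugate $\beta \circ \alpha \circ \beta^{-1}$ is again in $\Circuit$. Decompose $\alpha = \alpha_d \circ \cdots \circ \alpha_1$ according to the definition, where each layer $\alpha_a$ is conjugation by the disjoint, commuting, even-parity gates $\{U_{S,a}\}_{S \in G_a}$, each of diameter at most some constant $r$. By functoriality of conjugation, $\beta \alpha \beta^{-1} = (\beta \alpha_d \beta^{-1}) \circ \cdots \circ (\beta \alpha_1 \beta^{-1})$, so it suffices to rewrite each $\beta \alpha_a \beta^{-1}$ as an $O(1)$-depth circuit of local gates.

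I would then analyze a single layer. Writing $V_a := \prod_{S \in G_a} U_{S,a}$ (well-defined since the factors are even and on disjoint sets, hence pairwise commuting), $\alpha_a$ is conjugation by $V_a$, so $\beta \alpha_a \beta^{-1}$ is conjugation by $\beta(V_a) = \prod_{S \in G_a} \beta(U_{S,a})$. Each $\beta(U_{S,a})$ is unitary (as $\beta$ is an automorphism), has even fermion parity (automorphisms preserve parity by the requirement on QCA), commutes pairwise with the other $\beta(U_{S'',a})$ (since their preimages do), and is supported on the fattened set $S' := \{j : \dist(j,S) \leq R_\beta\}$, where $R_\beta$ is the range of $\beta$. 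Thus each $\beta(U_{S,a})$ is a legal gate on a set of diameter at most $r + 2R_\beta = O(1)$.

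The obstruction is that the fattened sets $\{S'\}_{S \in G_a}$ need not be disjoint, so the family is not directly a valid circuit layer. To fix this, I would invoke the bounded density hypothesis: by the assumed uniform bound on the number of sites in a ball of any fixed radius, each $S' \in G_a$ intersects at most a constant number $c$ of other fattened sets $S''$ with $S'' \in G_a$. A greedy coloring of this conflict graph on $G_a$ then uses at most $c+1$ colors, and within each color class the gates act on genuinely disjoint sets. This expresses $\beta \alpha_a \beta^{-1}$ as a composition of at most $c+1$ legal circuit layers, each of gate range $r + 2R_\beta$, and hence expresses $\beta \alpha \beta^{-1}$ as a circuit of depth at most $d(c+1) = O(1)$ with gate range $O(1)$, which is the required membership in $\Circuit$.

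The argument is largely bookkeeping; the only genuinely load-bearing step is the coloring, which is where the bounded-density hypothesis enters. I do not foresee a real obstacle, but one subtlety to keep in mind is that the argument goes through uniformly across a family of system sizes precisely because $r$, $R_\beta$, and the density bound are all independent of system size, so the resulting depth and gate range bounds are also uniform.
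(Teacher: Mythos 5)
Your proposal is correct and takes essentially the same approach as the paper: conjugate each gate by $\beta$, note the resulting supports fatten by $O(1)$ and may overlap, and use the bounded-density hypothesis to re-schedule the conjugated gates into $O(1)$ non-overlapping rounds. The paper reduces to the depth-$1$ case at the outset (since $\Circuit$ is generated by depth-$1$ circuits) rather than conjugating layer by layer, and it leaves the coloring/scheduling step more implicit than your explicit greedy coloring, but these are presentational differences, not different arguments.
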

\begin{proof}
Let $\beta$ be an arbitrary QCA, and let $\alpha\in \Circuit$ be a circuit of depth~1.
We have to show that $\beta \circ \alpha \circ \beta^{-1}$ is a circuit of $O(1)$ depth.
This will imply the result for general $\alpha$,
since $\Circuit$ is generated by circuits of depth~1.

So $\alpha(O) = \prod_{S \in G} U_{S}^\dagger O U_{S}$.
Consider any unitary $U$ and any operator $O$.  
Then, since $\beta$ is an algebra homomorphism, we have
\[
\beta(U^\dagger \beta^{-1}(O) U) = \beta(U^\dagger) \beta\beta^{-1}(O) \beta(U) = \beta(U^\dagger) O \beta(U)
\]
Applying this to each gate $U_S$ we have
\[
\beta\circ\alpha\circ\beta^{-1}(O) = \prod_{S \in G} \beta(U_{S}^\dagger) O \beta(U_{S}).
\]
Since $\beta$ has bounded range, each $\beta(U_S)$ has bounded range.

However, it is possible that the support of $\beta(U_S)$ overlaps with the support of $\beta(U_T)$ for some $S\neq T$, so that this expression for
$\beta\circ\alpha\circ\alpha^{-1}$ is not a quantum circuit of depth~$1$.
However, using the assumption on the metric, 
we can write this as a circuit with a bounded number of rounds: 
In each round, we implement some number of the gates $\beta(U_S)$, 
choosing them not to overlap in any given round.
\end{proof}

When we classify QCAs, it is useful to allow ``stabilization" by adding additional degrees of freedom. 
Given any system defined by a set of sites, a metric, and degrees of freedom, and given any QCA $\alpha$, we stabilize as follows.  
We consider a system with the same set of sites and metric, but with additional degrees of freedom 
(i.e., the degrees of freedom of the new system are a superset of those of the original system), 
and we define a new QCA $\alpha \otimes \Id$, where $\Id$ is the identity QCA on the additional degrees of freedom.

Given these groups and this notion of stabilization, we can now define a ``nontrivial QCA":
\begin{definition}
A (family of) QCAs $\alpha$ is said to be {\bf nontrivial} 
if there is no way to stabilize (i.e., no choice of additional degrees of freedom) such that
$\alpha \otimes I$ is in $\Circshift$.
\end{definition}

Remark: we allow a shift QCA to map Majorana operators with arbitrary signs.
Thus, for example, a QCA $\alpha$ that maps all $\gamma_i,\gamma'_i$ to themselves but changes the sign of one of them 
(i.e.,  $\alpha(\gamma'_i)=\gamma'_i$ for all $i$ and $\alpha(\gamma_i)=\gamma_i$ for all $i$ but one $i=0$ 
for which $\alpha(\gamma_0)=-\gamma_0$) is a shift QCA.
Such a QCA does not preserve fermion parity.
Such a QCA cannot be described by a quantum circuit with our definition of a quantum circuit.
We consider such QCA to be a shift QCA however,
so they are trivial according to the definition above.

It is worth noting that
\begin{lemma}\label{lem:abelianModCircuit}
If stabilization is allowed, the quotient $\All/\Circuit$ is an abelian group.
\end{lemma}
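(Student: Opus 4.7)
The plan is to use a SWAP doubling trick. Given two QCAs $\alpha,\beta \in \All$ on a system $A$, I will stabilize by adjoining a second copy $A'$ of $A$, placing each new site within distance $O(1)$ of its original partner, and show that the stabilized classes $[\alpha \otimes \Id]$ and $[\beta \otimes \Id]$ commute in $\All/\Circuit$ on the doubled system; this suffices because stabilization does not change the class of a QCA.

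The key ingredient is the SWAP QCA $\sigma$ that, for each site $s \in A$, exchanges the degrees of freedom at $s$ with those at its partner $s' \in A'$. Because partner sites sit at distance $O(1)$, $\sigma$ is a depth-$1$ quantum circuit whose gates are disjoint local SWAPs, one per pair $(s,s')$: for qudit degrees of freedom each gate is the usual SWAP unitary, while for fermionic modes each gate exchanges the Majorana pairs $(\gamma_s,\gamma_s') \leftrightarrow (\gamma_{s'},\gamma_{s'}')$, which permutes an even number of Majoranas and is therefore implementable by a local unitary of even fermion parity as required. Hence $\sigma \in \Circuit$.

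With $\sigma$ in hand, I will use two observations. First, $\sigma \circ (\alpha \otimes \Id) \circ \sigma^{-1} = \Id \otimes \alpha$, and similarly for $\beta$; combined with $\sigma \in \Circuit$ and the normality of $\Circuit$ in $\All$ (\cref{lem:normalsubgroup}), this gives $[\alpha \otimes \Id] = [\Id \otimes \alpha]$ and $[\beta \otimes \Id] = [\Id \otimes \beta]$ in the quotient. Second, $\alpha \otimes \Id$ and $\Id \otimes \beta$ act on disjoint sets of degrees of freedom and hence commute as honest QCAs. Combining, I get
\[
[\alpha \otimes \Id][\beta \otimes \Id] = [\alpha \otimes \Id][\Id \otimes \beta] = [\Id \otimes \beta][\alpha \otimes \Id] = [\beta \otimes \Id][\alpha \otimes \Id],
\]
which is the desired commutativity.

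I do not expect a serious obstacle: the only point requiring real care is the fermionic SWAP gate, which must have even fermion parity, and this is automatic since each local SWAP permutes an even number of Majorana operators; one also verifies that the doubled-system metric still satisfies the polynomial-growth hypothesis used in \cref{lem:normalsubgroup}, which is immediate because each new site is placed at bounded distance from a unique old site.
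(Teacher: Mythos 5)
Your proposal is correct and takes essentially the same approach as the paper: both proofs rely on the SWAP between the two copies being a depth-$1$ circuit, the normality of $\Circuit$ in $\All$ from \cref{lem:normalsubgroup}, and the trivial commutativity of $\alpha \otimes \Id$ with $\Id \otimes \beta$. One small imprecision: the fact that the fermionic SWAP gate has even parity is not because it ``permutes an even number of Majoranas,'' but because the swap of all Majoranas between partner sites is a product of an even number of disjoint transpositions (one per Majorana mode), hence an element of $SO$; the conclusion is nonetheless correct.
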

\begin{proof}
Let $\alpha,\beta$ be arbitrary QCA acting on the same system.
Stabilize the system by tensoring with an additional copy of itself.
Let $\Swap$ be the QCA that swaps degrees of freedom between the two copies.
Then,
$(\alpha \otimes \Id) \circ \Swap \circ (\beta \otimes \Id) \circ \Swap=\alpha \otimes \beta$.
Using \cref{lem:normalsubgroup}, 
this implies that $\alpha \circ \beta$ is equal to $\alpha \otimes \beta$ up to an element of $\Circuit$~\cite{Arrighi2007}.

Also, $\Swap \circ (\alpha \otimes \beta) \circ \Swap=\beta \otimes \alpha$, 
so $\alpha \otimes \beta$ is equal to $\beta \otimes \alpha$ up to an element of $\Circuit$.
So, $(\alpha \otimes \Id) \circ (\beta \otimes \Id)$ is equal to $(\beta \otimes \Id) \circ (\alpha \otimes \Id)$ 
up to an element of $\Circuit$.
\end{proof}

As we are discussing groups, let us mention our convention on the verb ``generate.''
A group is said to be generated by a set of elements (that may be infinite),
if the group is equal to the set of all \emph{finite} products of the generators.
Likewise, an algebra is said to be generated by a set of elements (that may be infinite)
if the algebra is equal to the set of all \emph{finite} linear combinations
of \emph{finite} products of the generators.
In fact, this is not just a convention, 
but is the natural way to begin defining groups and algebras,
since infinite products are not a priori defined.
We note this convention explicitly here to avoid any confusion.
For example,
on an infinite lattice, the even subalgebra of Majorana operators
is \emph{generated} by elements of form $\gamma \gamma'$.
One might think that a suitably defined infinite product may leave one $\gamma$ at a site,
and push all the rest to the infinity (which is sometimes useful),
but we do not allow such a product in the algebra generated by finitely supported operators.

\subsection{Outline}

In \cref{separatorsection} we introduce the concept of a ``separator'' and relate separators to QCAs.
Separators are related to the idea of a commuting projector Hamiltonian,
and we will see that, roughly, the classification of separators is the same as the classification of QCAs modulo shifts.
In \cref{WWsection}, we give the main results, constructing a QCA $\alpha_{WW}$ 
from the Walker-Wang model and showing nontriviality of this QCA as a Clifford circuit 
and showing that if this QCA is trivial then a nontrivial two-dimensional fermionic QCA exists.
In \cref{sec:pauli}, we develop a background theory using polynomials under which we find $\alpha_{WW}$,
and prove that $\alpha_{WW}^2 \otimes \Id$ belongs to $\Circshift$.
We also prove that every translation invariant Pauli algebra in one dimension contains a locally generated maximal commutative algebra.
This is used to show that every Pauli stabilizer Hamiltonian that is topologically ordered in two dimensions
contains a nontrivial boson, which in turn is used in the proof that $\alpha_{WW}$ is not a Clifford circuit.
In \cref{discussionsection}, we discuss some future directions.
\Cref{sec:fBV} contains a result on ``2-local'' commuting Hamiltonians with fermionic degrees of freedom;
this result is unrelated to the rest of the paper.

\section{Separators}
\label{separatorsection}

In this section, we define the idea of ``separators.''
Separators are closely related to commuting projector Hamiltonians, so we briefly recall that idea first.
Such Hamiltonians are a sum of local commuting projectors, so that
$H=\sum_i \Pi_i$ where $\Pi_i$ are projectors with $[\Pi_i,\Pi_j]=0$ and with each $\Pi_i$ supported on some set of bounded diameter.
These Hamiltonians are especially interesting when the ground state has zero energy so that mutual $0$ eigenspace of all projectors $\Pi_i$ has a nonzero dimension.

In quantum information theory, a ``syndrome'' of a given state 
refers to the result of measuring all projectors $P_i$ for that state, 
i.e., it is an assignment of a value $0$ or $1$ for each $i$.
For certain commuting projector Hamiltonians, 
each syndrome defines a one-dimensional subspace of states.
A ``separator'' generalizes this idea.
The following definition is for systems with a finite number of sites, 
each having a finite dimensional degrees of freedom.
\begin{definition}\label{def:separator}
A {\bf separator} is an indexed set of unitary operators $\opZ_a$ satisfying all of the following properties:
\begin{enumerate}
\item Every $\opZ_a$ has fermion parity even.
\item For each index $a$ there is an integer $D_a \geq 2$ such that $\opZ_a^{D_a}=I$.
\item $[\opZ_a,\opZ_b]=0$ for all $a,b$.
\item Every $\opZ_a$ is supported on a disk of diameter $R=O(1)$.  
\item For any arbitrary assignment $a \mapsto \omega(a)$, where $\omega(a)$ is a $D_i$-th root of unity,
the space of states $\ket\psi$ such that $\opZ_a \ket\psi=\omega(a)\ket\psi$ for all $a$ is one-dimensional.
\end{enumerate}
For every element $\opZ_a$ of the separator, if $D_a=2$, 
we will specify that the element is either a ``qudit element'' or a ``fermionic element.''
Every element with $D_a>2$ is a ``qudit element.'' 
\end{definition}
We identify two separators if they are the same up to relabeling of the index $a$.

One may think of the assignment $\omega(a)$ as 
specifying the result of a multi-outcome projective measurement for each $a$.
That is, since the $\opZ_a$ commute, they can be simultaneously diagonalized, 
and we can think of the set of $\omega(a)$ as defining the outcomes of measuring all the $\opZ_a$.
Note that the number of possible choices of $\omega(a)$, 
which may be regarded as the number of possible outcomes one can obtain by performing all of these multi-outcome measurements,
must equal to the dimension of the Hilbert space of the system,
so $\prod_a D_a$ is equal to the dimension of the Hilbert space of the system.

While we have demanded that the eigenvalues of $\opZ_a$ be $D_a$-th roots of unity,
this is just chosen for convenience later.
We can define a separator from any commuting set of operators,
each having $D_a$ distinct eigenvalues and having bounded support,
if the common eigenspace of these operators is always one-dimensional.

For each qudit $j$, we define ``computational'' basis states $\ket 0, \ket 1, \ldots, \ket{D_j-1}$,
and also a generalized Pauli operator 
\begin{align}
Z_j = \sum_{k=0}^{D_j -1} \exp(2\pi i k /D_j) \ket k \bra k.
\end{align}
We say that
\begin{definition}
A separator is {\bf trivial}
if there is a bijection $f(\cdot)$ from elements of the separator to degrees of freedom such that,
for each index $a$,
$\opZ_a = Z_{f(a)}$ for a qudit element
and $\opZ_a = i \gamma'_{f(a)} \gamma_{f(a)}$ for a fermionic element.
\end{definition}

\begin{definition}
A {\bf local flipper} $\{ \tilde \opX_a\}$ associated with a separator $\{ \opZ_a \}$ 
is a set of unitary operators indexed by the same index set as the separator
such that  
\begin{enumerate}
\item every $\tilde \opX_a$ is supported on a disk of radius $R=O(1)$,
\item each $\tilde \opX_a$ commutes with all $\opZ_b$ with $b \neq a$ but 
$\opZ_a \tilde \opX_a =\tilde \opX_a \opZ_a \exp\left(\frac{2 \pi i}{D_a}\right)$, and
\item $\tilde \opX_a$ is of even fermion parity if $\opZ_a$ is a qudit element, 
or of odd fermion parity if $\opZ_a$ is fermionic.
\end{enumerate}
A separator is {\bf locally flippable} if there exists an associated local flipper.%
\footnote{
The term ``flipper'' is natural;
for $D_a>2$ the change is not just ``flipping'' an outcome but we use this terminology anyway.
}
\end{definition}
Clearly, then, every trivial separator is locally flippable ---
simply take $\tilde \opX_a$ to be the (generalized) Pauli operator 
$X = \sum_{j=0}^{D-1} \ket{j+1 \mod D}\bra{j}$ on site $f(a)$
for qudit elements and take $\tilde \opX_a = \gamma_{f(a)}$ for fermionic elements.

We emphasize that the definition of a locally flippable separator 
does {\em not} impose any requirements on the supercommutators $[\tilde \opX_a,\tilde \opX_b\}$.
If all of these supercommutators vanished, 
then the set of operators $\opZ_a,\tilde \opX_a$ for qudit elements 
would give a representation of a tensor product of the algebra of generalized Pauli operators, 
and the operators $\tilde \opX_a, i\tilde \opX_a \tilde \opZ_a$ for fermionic elements
would give a representation of the algebra of Majorana operators.
In this case, a locally flippable separator 
would be mapped to a trivial separator by a QCA of range $O(1)$:
We can define a QCA to map a qudit element $\opZ_a$ to the operator $Z_{f(a)}$ 
on some qudit $f(a)$ with dimension $D_{f(a)}=D_a$,
and to map a fermionic element $\opZ_a$ to $i \gamma_{f(a)} \gamma'_{f(a)}$,
where $f$ is a bijection from elements of the separator to degrees of freedom such that ${\rm dist}(a,f(a))=O(1)$.

The question whether such a mapping $f(\cdot)$ exists is not obvious,
but if there are no fermionic degrees of freedom
we will show that this is possible using the Hall marriage theorem.%
\footnote{ 
the application of the marriage theorem to problems in QCA was first suggested by M. Freedman.
}
If there are fermionic elements, 
then the application of the Hall marraige theorem 
does not seem to solve this problem of finding $f(\cdot)$, 
since we must map fermionic elements of the separator to fermionic degrees of freedom
and qudit elements of the separator to qudit degrees of freedom,
but it is possible that the numbers of these degrees of freedom do not match.
For example, the graded algebra of two fermionic degrees of freedom (four Majorana modes)
is the same as that for one fermionic degree of freedom and one qubit,
so one could have one complex fermion and one qubit on one site 
but have a separator with two fermionic elements.
So, in this case we will need to make an additional assumption of the existence of a bijection $f(\cdot)$.
We will explain how this assumption can be made to hold by stabilization later.

The construction of QCA from a locally flippable separator that we just explained briefly,
assumes that the local flipper elements are supercommuting.
However, we do {\it not} know if those supercommutators vanish.
Nevertheless, in \cref{sec:disent} we show how it is still possible 
to define a QCA that maps a locally flippable separator to a trivial separator.

\subsection{Pauli stabilizer models and Examples}

We now clarify the definition of a separator and a locally flippable separator by examples.
Let us consider any Hamiltonian which is a sum of terms, each of which is a product of Pauli operators 
(often called a ``Pauli stabilizer Hamiltonian'').

First, a toric code on a sphere has a set of terms that satisfy the first four conditions to be a separator.  
The fifth condition is not satisfied, since there is a redundancy in the terms of the Hamiltonian, 
that the product of all plaquette terms (or the product of all star terms) is equal to~$+1$.
If a single plaquette and a single vertex term are removed, then this does define a separator.
However, this separator is not locally flippable because of the conservation law of topological charges.

Second, consider a two dimensional Ising model $H = - \sum_{\langle j k \rangle} Z_j Z_k$.
The set $\{ Z_j Z_k \}$ does not define a separator because there is a redundancy among the terms.
In this case, the redundancy cannot be removed by neglecting only a small fraction of terms.

Third, we show later that the Walker-Wang model for the three fermion theory
has another set of stabilizer generators with no redundancy;
this is one of our main results.
From these, we define a separator, and further show that this stabilizer generating set is locally flippable;
this was to be expected on physical grounds since the bulk of the model has no anyons.

\subsection{Disentangling locally flippable separators by QCA} \label{sec:disent}

Our main result in this section is:
\begin{theorem} \label{thm:qcadisent}
Given any locally flippable separator in a system without fermionic degrees of freedom,
there exists a QCA $\alpha$ of range $O(R)$ such that
the image of the locally flippable separator under the QCA is a trivial separator.
That is, there is a bijection $f(\cdot)$ from elements to degrees of freedom such that
$\alpha(\opZ_a)=Z_{f(a)}$, where $Z_{f(a)}$ is the (generalized) Pauli $Z$ operator on degree of freedom $f(a)$.

If the system has fermionic degrees of freedom, 
then the same result holds under the additional assumption that 
there is a bijection $g(\cdot)$ 
from element indices $a$ of the separator with $D_a=2$ 
to degrees of freedom $g(a)$ with $D_a=2$ 
such that $\dist(a,g(a))=O(R)$ and such that 
if $\tilde \opX_a$ has even fermion parity then $g(a)$ is a qudit,
and if $\tilde \opX_a$ has odd fermion parity then $g(a)$ is a fermionic degree of freedom.
That is, under this assumption,
there is a bijection $f(\cdot)$ from elements to degrees of freedom such that
$\alpha(\opZ_a)=Z_{f(a)}$ if $\opZ_a$ is a qudit element and
$\alpha(\opZ_a)=i\gamma'_{f(a)} \gamma_{f(a)}$ otherwise. 
Here, $f(a)=g(a)$ whenever $D_a =2$.
\end{theorem}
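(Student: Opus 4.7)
The plan is to construct $\alpha$ as a quantum circuit of constant depth, assembled layer by layer from local gates with disjoint support.  The first step is to produce a bijection $f(\cdot)$ with $\dist(a,f(a))=O(R)$.  In the qudit-only case I apply Hall's marriage theorem prime by prime: for each prime $p$, form the bipartite graph whose left vertices are separator elements with $D_a=p$, right vertices are qudit degrees of freedom with $D_i=p$, and edges join $a$ to $i$ whenever $\dist(a,i)\le cR$ for a suitably large constant $c$.  Hall's condition follows from a counting argument using $\prod_a D_a = \dim\mathcal{H}$ together with the bounded-geometry hypothesis on the metric: any subset $S$ of one side has at least $|S|$ neighbors on the other.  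In the fermionic case the bijection $g(\cdot)$ for $D_a=2$ elements is supplied by hypothesis, and Hall's theorem settles the remaining $D_a>2$ part.

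Once $f$ is in hand, I partition the elements into $d=O(1)$ layers such that within each layer the sets $\{a,f(a)\}$ are pairwise separated by more than $cR$; such a partition exists by bounded geometry.  For each layer I build a product of commuting local gates $W_a$, each supported in a ball $B_a$ of radius $O(R)$ around $a\cup f(a)$, chosen so that $W_a^\dagger \opZ_a W_a$ equals $Z_{f(a)}$ in the qudit case and $i\gamma'_{f(a)}\gamma_{f(a)}$ in the fermionic case.  Concretely, using the spectral projectors $\Pi_a^{(n)}=\frac{1}{D_a}\sum_k \omega^{-nk}\opZ_a^k$ with $\omega=e^{2\pi i/D_a}$, I form conditional shifts $C_1=\sum_n \Pi_a^{(n)}\otimes X_{f(a)}^n$ and $C_2=\sum_m \tilde \opX_a^m \otimes \Pi_{f(a)}^{(m)}$, where the latter uses the flipper as a raising operator for the $\opZ_a$ eigenvalue label; $W_a$ is then a three-CNOT-style alternating product of such conditional shifts, whose action on the computational/$\opZ_a$ basis is a relabeling swap.

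After applying a layer I check that for every unprocessed element $b$, the image of $\opZ_b$ remains local with range inflated by at most $O(R)$, still commutes with the other $\opZ$-type operators (processed or not), and the image of $\tilde \opX_b$ still provides a valid local flipper: operators whose support is disjoint from every $B_a$ are unchanged, and the remaining commutation relations are preserved under conjugation.  Iterating over all $d=O(1)$ layers then yields $\alpha$ of range $O(R)$ with the desired action on the full separator.

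The main obstacle is exhibiting $W_a$ with the right support \emph{and} even fermion parity, since the non-supercommutativity of distant flippers forbids using anything but the $a$-local data $\opZ_a,\tilde \opX_a$ together with the $f(a)$-local data.  In the qudit case the conditional-shift construction is manifestly even parity.  In the fermionic case the parity bookkeeping is more delicate: because $\tilde \opX_a$ is odd, individual summands in $C_2$ have mixed parity, so $W_a$ must instead be assembled from even-parity Majorana-SWAP-like blocks such as $\tilde \opX_a\,\gamma_{f(a)}$ and $\tilde \opX_a\,\gamma'_{f(a)}$ --- products that become even precisely because $g$ pairs fermionic separator elements with fermionic degrees of freedom.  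This parity matching is exactly where the hypothesis on $g$ is used essentially, and carrying out the construction while checking parity and locality at each step is the most delicate piece of the proof.
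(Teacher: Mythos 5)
Your proposal takes a fundamentally different route from the paper, and it has a genuine gap that, if filled, would actually prove something the paper explicitly conjectures is false.

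The paper does \emph{not} build $\alpha$ as a quantum circuit. Instead it first constructs corrected flippers $\opX_a$ that supercommute with each other (defined abstractly via their action on the basis $\{\ket{\vec v}\}$), proves the crucial locality statement \cref{lem:lcom} that $[\tilde\opX_a,\opX_b\}=0$ when $\dist(s_a,s_b)>4R$, deduces that each $\opX_a$ is supported within $O(R)$ of $s_a$, and only then defines $\alpha$ abstractly by $\alpha(\opZ_a)=Z_{f(a)}$, $\alpha(\opX_a)=X_{f(a)}$. The Hall marriage counting there also \emph{uses} the locality of the corrected $\opX_a$: the argument that the neighborhood $T$ contains enough qudits of each prime dimension rests on $\{\opX_a,\opZ_a : a\in W\}$ generating a tensor-product Pauli algebra of dimension $\prod_{a\in W}D_a^2$ supported on $T$, which is not manifest with the raw flippers $\tilde\opX_a$.

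Your proposal tries instead to synthesize a constant-depth circuit from conditional-shift gates built out of the raw flippers $\tilde\opX_a$, and this is where the gap lies. First, $C_1=\sum_n \Pi_a^{(n)}X_{f(a)}^n$ is only unitary if $X_{f(a)}$ commutes with $\opZ_a$, and the marriage theorem does not guarantee $f(a)$ lies outside the support of $\opZ_a$; likewise $C_2=\sum_m\tilde\opX_a^m\Pi_{f(a)}^{(m)}$ is only unitary if $\tilde\opX_a$ commutes with $Z_{f(a)}$. Second, nothing in the definition of a local flipper forces $\tilde\opX_a^{D_a}=I$; generically $\tilde\opX_a^{D_a}$ is a nontrivial diagonal unitary of bounded support, so $\tilde\opX_a$ is not a clean cyclic shift and the ``three-CNOT'' algebra does not close. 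Third, and most fundamentally, the paper explicitly flags that the $\tilde\opX_a$ need \emph{not} supercommute, and resolving the phases this introduces is the entire content of \cref{lem:lcom} and the ensuing commutant argument; your check that ``the image of $\tilde\opX_b$ still provides a valid local flipper'' after each layer is asserted without the bookkeeping that would make it rigorous, and this is precisely where those phases would bite. The final warning sign: if your construction went through, you would have shown $\alpha_{WW}\otimes\Id$ is a constant-depth circuit, which by the paper's own \cref{lem:betabulk} and \cref{H3fdef} would yield a 2D commuting-projector Hamiltonian realizing the three-fermion theory --- exactly what the authors conjecture is impossible. So the proposal is not merely missing details; it is reaching for a stronger conclusion than the theorem, and the missing steps are the ones that would refute that stronger conclusion.
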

\begin{definition}
We say that
the QCA $\alpha$ of \cref{thm:qcadisent} {\bf disentangles} the locally flippable separator.
\end{definition}
The additional assumption in the fermionic case 
can always be made to hold by stabilization
where we add degrees of freedom to the system and elements to the separator:
For each element~$\opZ_a$ of the given separator with dimension $D_a=2$, 
we add a fermionic or qubit degree of freedom near~$\opZ_a$ 
and set this added degree of freedom to be the image $g(a)$.
For any other degree of freedom $j$ with $D_j = 2$ that was existing from the outset,
we add a trivial element to the seperator.

To prove \cref{thm:qcadisent},
we first define a set of unitary operators $\opX_a$ such that
$\opX_a$ and $\tilde\opX_a$ have the {\it same} fermion parity as each other for all $a$, 
and 
\begin{align*}
[\opX_a,\opX_b\}&=0 &\text{ for all } a, b, \\
[\opZ_a,\opX_b] & = 0 & \text{ for all } a\neq b,\\
\opZ_a \opX_a &= \opX_a \opZ_a \exp\left(\frac{2 \pi i}{D_a}\right)& \text{ for all } a
\end{align*}
by considering a representation of the separator and the flipper.
Having done this, we will next show that the $\opX_a$ are supported on sets of bounded diameter,
and finally construct the disentangling QCA $\alpha$.

\begin{proof}[Proof of \cref{thm:qcadisent}]
Let $\ket 0$ be a state which is the $+1$ eigenstate of all operators $\opZ_a$.
In this proof, we write vectors such as $\vec v$ and $\vec w$
to refer to a vector of length equal to the number of elements in the separator.
We fix an arbitrary ordering of the elements in the separator 
(so one may regard the index $a$ as an integer),
and the $a$-th entry $v_a$ of $\vec v$ will be chosen from $\{ 0,\ldots,D_a-1 \}$.
For any such vector $\vec v$, let $\ket{\vec v}$ be the state given by
\begin{align}
\ket{\vec v} = \left( \prod_a \tilde \opX_a^{v_a} \right) \ket 0,
\end{align}
where we order the product in order of increasing $a$.%
\footnote{
For example, for $v=10102$ we set 
$\ket{\vec v} = \tilde \opX_1 \tilde \opX_3 \tilde \opX_5^2 \ket 0$ where $D_5\geq 3$.
}
By definition of a separator, the states $\ket{\vec v}$ are an orthonormal basis for the Hilbert space;
the vector is simply a way of describing a particular basis state in the eigenbasis of the $\opZ_a$.

Then, if $\tilde \opX_a$ has \emph{even} fermion parity,
we define $\opX_a$ to be the operator such that
\begin{align}
\opX_a \ket{\vec v} = \ket{ \vec v + \vec e_a}
\end{align}
for all $a$ and $\vec v$,
where $\vec e_a$ is the vector with a sole nonzero entry $1$ in its $a$-th component.
To define~$\opX_a$ when $\tilde \opX_a$ has \emph{odd} fermion parity,
let $p_c = 0,1$ denote the fermion parity of $\tilde \opX_c$ for any $c$;
$p_c = 0$ for even $c$, $p_c = 1$ for odd $c$.
With this notation, we define%
\footnote{
\cref{eq:defopX} covers the even fermion parity case,
but we display it separately to emphasize that
there is nothing delicate about the sign $(-1)^{p_a \sum_{ c < a } v_c p_c }$.
If $\tilde \opX_a$'s supercommuted,
then our construction of $\opX_a$ should be redundant,
and the sign factor of \cref{eq:defopX} is a necessary consistency condition.
}
\begin{align}
\opX_a \ket{\vec v} = (-1)^{p_a \sum_{ c < a } v_c p_c } \ket{ \vec v + \vec e_a }. \label{eq:defopX}
\end{align}

This definition gives $\opX_a$ the same fermion parity as $\tilde \opX_a$, 
since each state $\ket v$ has a definite fermion parity.
The commutation relations 
$[\opX_a,\opX_b\} = 0$
and $\opZ_a \opX_b = \opX_b \opZ_a \exp( 2\pi i \delta_{ab}/D_a )$
follow immediately.

In order to show the locality of the operators $\opX_a$, we use the following.
Since $\tilde \opX_a$ does not commute with $\opZ_a$, 
there must be a site $s_a$ in the support of both operators.
Then, $\tilde \opX_a$ and $\opZ_a$ are both supported within distance $R$ of $s_a$,
where $R$ is a constant that bounds the diameter of disks 
on which $\tilde \opX_a$ and $\opZ_a$ are supported.
\begin{lemma} \label{lem:lcom}
$[\tilde \opX_a,\opX_b\}=0$ if ${\rm dist}(s_a,s_b) > 4R$.
\end{lemma}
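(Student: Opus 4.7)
The strategy is to factor $\opX_b = F_b\,\tilde\opX_b$ with $F_b := \opX_b\tilde\opX_b^{-1}$. Since $\opX_b$ and $\tilde\opX_b$ conjugate every $\opZ_c$ by the same phase $\omega_b^{\delta_{bc}}$, $F_b$ commutes with every $\opZ_c$; because the $\opZ_c$'s generate a maximal commutative subalgebra (the separator property), $F_b$ is diagonal in the basis $\{\ket{\vec v}\}$ with phase eigenvalues $\phi_b(\vec v)$. Using that $F_b$ is even, $\tilde\opX_b$ has parity $p_b$, and $\tilde\opX_a,\tilde\opX_b$ already supercommute when $\mathrm{dist}(s_a,s_b)>2R$ (which holds a fortiori under our hypothesis), a short rearrangement yields $[\tilde\opX_a,\opX_b\} = [\tilde\opX_a, F_b]\,\tilde\opX_b$. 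Because $\tilde\opX_a$ commutes with every $\opZ_c$ for $c\neq a$, the lemma reduces to showing that $\phi_b(\vec v)$ does not depend on $v_a$.

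Unwinding the definitions gives $\phi_b(\vec v) = c_b(\vec v - \vec e_b)^{-1}\,(-1)^{p_b\sum_{c<b}p_cv_c}$, where $c_b(\vec u)$ is the phase defined by $\tilde\opX_b\ket{\vec u} = c_b(\vec u)\ket{\vec u+\vec e_b}$. Independence of $\phi_b$ on $v_a$ is therefore equivalent to the identity
\[
  \frac{c_b(\vec u+\vec e_a)}{c_b(\vec u)} \;=\; (-1)^{p_ap_b\,[a<b]},
\]
with $[a<b]$ equal to $1$ if $a<b$ and $0$ otherwise. I first handle the case $a<b$ by introducing $O_a := W_{<a}(\vec u)\,\tilde\opX_a\,W_{<a}(\vec u)^{-1}$ with $W_{<a}(\vec u) := \prod_{c<a}\tilde\opX_c^{u_c}$. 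By construction $O_a\ket{\vec u} = \ket{\vec u+\vec e_a}$, and because $u_b$ does not enter $W_{<a}$ when $a<b$, the same $O_a$ sends $\ket{\vec u+\vec e_b}$ to $\ket{\vec u+\vec e_a+\vec e_b}$. The target identity then follows from $\tilde\opX_b O_a = (-1)^{p_ap_b}\,O_a\tilde\opX_b$, which, after invoking $\tilde\opX_b\tilde\opX_a\tilde\opX_b^{-1} = (-1)^{p_ap_b}\tilde\opX_a$, reduces to showing that the operator $W_{<a}^{-1}\bigl(\tilde\opX_b\,W_{<a}\,\tilde\opX_b^{-1}\bigr)$ commutes with $\tilde\opX_a$.

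The hypothesis $\mathrm{dist}(s_a,s_b)>4R$ enters precisely here: the triangle inequality forces $\mathrm{dist}(s_b,s_c)>2R$ for every $c$ with $\mathrm{dist}(s_a,s_c)\le 2R$, so $\tilde\opX_b$ supercommutes with every $\tilde\opX_c$ whose support could overlap that of $\tilde\opX_a$. Conjugation by $\tilde\opX_b$ of each factor of $W_{<a}$ is therefore merely a sign for factors within $2R$ of $s_a$, and the non-scalar part of $\tilde\opX_bW_{<a}\tilde\opX_b^{-1}$ lives outside the $R$-neighborhood of $s_a$. Consequently $W_{<a}^{-1}(\tilde\opX_bW_{<a}\tilde\opX_b^{-1})$ has even fermion parity (the parities of $W_{<a}$ and its $\tilde\opX_b$-conjugate cancel) and support disjoint from the support of $\tilde\opX_a$, so the two operators commute. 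The remaining case $a>b$ I would handle via the symmetric identity $c_b(\vec u+\vec e_a)/c_b(\vec u) = (-1)^{p_ap_b}\,c_a(\vec u+\vec e_b)/c_a(\vec u)$ (a direct consequence of the $\tilde\opX_a,\tilde\opX_b$ supercommutation) combined with the $a<b$ case applied with $a$ and $b$ exchanged. The main technical obstacle is the support-tracking in this last paragraph: one must argue carefully that conjugation by $\tilde\opX_b$ enlarges the support of $W_{<a}$ only at those ``bad'' factors $c$ with $\mathrm{dist}(s_b,s_c)\le 2R$, all of which the triangle inequality forces to lie at distance greater than $R$ from $s_a$.
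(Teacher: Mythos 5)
Your reduction to ``$\phi_b(\vec v)$ is independent of $v_a$'' and the identity you extract for the phases $c_b(\vec u)$ are correct and closely parallel the paper's calculation (your $c_b(\vec u)$ is the matrix element $\bra{\vec u + \vec e_b}\tilde\opX_b\ket{\vec u}$, and what you need is exactly that it changes by $(-1)^{p_ap_b[a<b]}$ under $\vec u \mapsto \vec u + \vec e_a$). The problem is the step you try to use to establish that identity: the operator-level claim that $\tilde\opX_b O_a\tilde\opX_b^{-1}=(-1)^{p_ap_b}O_a$, which you reduce to ``$W_{<a}^{-1}(\tilde\opX_b W_{<a}\tilde\opX_b^{-1})$ has support disjoint from that of $\tilde\opX_a$.'' This support claim does not follow from the dichotomy you state. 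For each single factor $c$ the dichotomy is sound: if $c$ is within $2R$ of $s_a$ then it is more than $2R$ from $s_b$ (so $\tilde\opX_b$ conjugates $\tilde\opX_c$ to $\pm\tilde\opX_c$), while if $c$ is within $2R$ of $s_b$ then $\tilde\opX_b\tilde\opX_c\tilde\opX_b^{-1}$ lives far from $s_a$. But when you form the full product $W_{<a}^{-1}(\tilde\opX_b W_{<a}\tilde\opX_b^{-1})$ the factors you want to cancel are not adjacent, and moving them past each other can propagate support: there is nothing forbidding a chain of flippers $\tilde\opX_{c_1},\tilde\opX_{c_2},\dots$ with $c_1$ near $s_b$, $c_2$ overlapping $c_1$ but more than $2R$ from both $s_a$ and $s_b$, $c_3$ overlapping $c_2$ and within $2R$ of $s_a$, and so on. Along such a chain the conjugations $g_j^{-1}(\cdot)g_j$ spread the support of $W_{<a}^{-1}\tilde\opX_b W_{<a}$ all the way back to the neighborhood of $s_a$, and the advertised ``support disjoint from $\tilde\opX_a$'' conclusion does not hold. (A small side error: the triangle inequality gives distance $> 2R$ from $s_a$ for the bad factors, not $> R$; but the real issue is the chaining, not the numerical constant.)

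The paper circumvents this precisely by not asking for a global operator commutation. After expressing the relevant phases as products of $\phi(a,c)=\tilde\opX_a\tilde\opX_c\tilde\opX_a^{-1}\tilde\opX_c^{-1}$ acting on specific basis states $\ket{\vec w}$, the comparison between $\ket{\vec w}$ and $\ket{\vec w+\vec e_b}$ is carried out \emph{factor by factor}: for each individual $c$, either $\phi(a,c)=\pm I$ is a scalar (when $c$ is far from $a$), or $\phi(a,c)$ is supported far from $s_b$ and hence $\tilde\opX_b^{-1}\phi(a,c)\tilde\opX_b=\phi(a,c)$ (when $c$ is near $a$). Because the paper only ever compares $\bra{\vec w}\phi(a,c)\ket{\vec w}$ with $\bra{\vec w+\vec e_b}\phi(a,c)\ket{\vec w+\vec e_b}$ for one $c$ at a time, no support-chaining arises. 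Your approach, by bundling all the $\phi(a,c)$-type corrections into a single conjugate $W_{<a}^{-1}\tilde\opX_b W_{<a}$ and demanding it (super)commute with $\tilde\opX_a$ as an operator, is asking for more than what is true in general; the lemma itself holds, but not through the route you took. If you want to salvage the $O_a$ idea, you would need to replace the operator commutation claim with a matrix-element argument that tracks each factor $\phi(a,c)$ individually, at which point you have essentially reconstructed the paper's proof.
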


\begin{proof}
For any $a,b$, we define a unitary operator 
$\phi(a,b) = \tilde \opX_a \tilde \opX_b \tilde \opX_a^{-1} \tilde \opX_b^{-1}$,
which is diagonal in the basis $\{ \ket{\vec v} \}$.
Note that for $\dist(s_a,s_b) > 2R$, we have $[\tilde \opX_a,\tilde \opX_b\}=0$,
so $\phi(a,b) = (-1)^{p_a p_b}$.

We will compute $\opX_b \tilde \opX_a \ket{\vec v}$ and $\tilde \opX_a \opX_b \ket{\vec v}$
for any $\ket{ \vec v }$.
Up to a phase, both of these are equal to $\ket{ \vec v + \vec e_a + \vec e_b}$.
We have to compare the phases:
\begin{align}
\bra{ \vec v+\vec e_a+ \vec e_b } \opX_b \tilde \opX_a \ket{ \vec v }
&= (-1)^{\sum_{d < b} (v_d +\delta_{ad}) p_d p_b} 
\bra{ \vec v+\vec e_a}  \tilde \opX_a \ket{ \vec v }, \label{eq:ph0} \\
\bra{ \vec v+\vec e_a+\vec e_b } \tilde \opX_a \opX_b \ket{ \vec v }
&= (-1)^{\sum_{d < b} v_d p_d p_b} 
\bra{ \vec v+\vec e_a+\vec e_b } \tilde \opX_a \ket{ \vec v + \vec e_b }
\end{align}
where $\delta$ is the Kronecker $\delta$-function.
We can express $\tilde \opX_a$ in terms of $\opX_a$ and $\phi(a,b)$ 
by successively commuting $\tilde \opX_a$ through $\tilde \opX_c$ for $c < a$:%
\footnote{
For the example $v=10102$ above, the state $\tilde \opX_4 |v\rangle$ is equal to
$\tilde \opX_4 \tilde \opX_1 \tilde \opX_3 \tilde \opX_5^2 \ket 0
=
\phi(4,1) \tilde \opX_1 \tilde \opX_4 \tilde \opX_3 \tilde \opX_5^2 \ket 0
=
\phi(4,1) \tilde \opX_1  \phi(4,3) \tilde\opX_3 \tilde \opX_4 \tilde \opX_5^2 \ket 0$.
} 
\begin{align}
\tilde \opX_a \ket{\vec v} =
\left( \prod_{c < a} (\phi(a,c) \tilde \opX_c)^{v_c} \right)
\left( \prod_{c\geq a} \tilde\opX_c^{v_c+\delta_{ac}} \right)  \ket 0 .
\end{align} 
So,
\begin{align}
\bra{ \vec v + \vec e_a } \tilde \opX_a \ket{ \vec v }
&=
\bra 0 
\left( \prod_{c} \tilde\opX_c^{v_c+\delta_{ac}} \right)^\dagger
\left( \prod_{c< a} (\phi(a,c) \tilde \opX_c)^{v_c} \right) 
\left( \prod_{c\geq a} \tilde\opX_c^{v_c+\delta_{ac}} \right)  
\ket 0, \label{eq:ph1} \\
\bra{ \vec v + \vec e_a + \vec e_b } \tilde \opX_a \ket{ \vec v + \vec e_b }
&=
\bra 0
\left( \prod_{c} \tilde\opX_c^{v_c+\delta_{ac}+\delta_{bc}} \right)^\dagger
\left( \prod_{c < a} (\phi(a,c) \tilde \opX_c)^{v_c+\delta_{bc}} \right)
\left( \prod_{c\geq a} \tilde\opX_c^{v_c+\delta_{ac}+\delta_{bc}} \right)
\ket 0. \label{eq:ph2}
\end{align}
\cref{eq:ph1} and \cref{eq:ph2} both evaluate to some complex number of unit norm, a phase factor.
We need to show that this phase is the same for both, 
with an extra factor of $-1$ if both $\tilde \opX_a$ and $\opX_b$ have odd fermion parity.
Consider any term $\phi(a,c)$ in either \cref{eq:ph1} or \cref{eq:ph2}.
This in either equation acts on some state $\ket{ \vec w }$
which is one of our basis states up to a phase.%
\footnote{
For example, in \cref{eq:ph1}, suppose that $v_c=1$ for some $c<a$.
Then, $\phi(a,c)$ appears once and acts on the state
$\left(\prod_{d > c} \tilde\opX_d^{v_d+\delta_{ad}} \right) \ket 0$ up to a phase.
If $v_c=2$, then $\phi(a,c)$ appears twice
and they act on the states 
$\tilde \opX_c \left( \prod_{d>c}  \tilde\opX_d^{v_d+\delta_{ad}} \right) \ket 0$
and
$\left( \prod_{d>c} \tilde\opX_d^{v_d+\delta_{ad}} \right) \ket 0$, respectively.
}
So, each such $\phi(a,c)$ contributes some phase factor $\bra{ \vec w } \phi(a,c) \ket{ \vec w}$
for some basis state $\ket { \vec w }$ that depends on $a,b,c$;
that is, we can replace each such $\phi(a,c)$ by a phase factor 
$\bra{ \vec w } \phi(a,c) \ket{ \vec w }$.

For $c\neq b$, there is an obvious way to make a correspondence between terms $\phi(a,c)$ 
in \cref{eq:ph1} and those in \cref{eq:ph2}:
a term $\phi(a,c)$ appears exactly $v_c$ times in each equation,
and we make a correspondence between these terms \emph{in order}.
If some term contributes a phase
$\bra{ \vec w } \phi(a,c) \ket{ \vec w }$ in \cref{eq:ph1},
then in \cref{eq:ph2} it contributes a phase
$\bra{ \vec w } \phi(a,c) \ket{ \vec w }$ for $c > b$, 
or 
$\bra{ \vec w+\vec e_b } \phi(a,c) \ket{ \vec w + \vec e_b }$ for $c < b$.
(Recall that we have dropped any $\phi(a,c)$ with $c=b$.)
Clearly the two phase factors are the same if $c > b$.
For $c < b$, note that
$\bra{ \vec w+\vec e_b } \phi(a,c) \ket{ \vec w + \vec e_b }
=
\bra{ \vec w } \tilde \opX_b^{-1} \phi(a,c) \tilde \opX_b \ket{ \vec w }$,
since
$\tilde \opX_b \ket{ \vec w } = \ket{ \vec w+\vec e_b }$ up to a phase
that cancels between bra and ket.
However, for $\dist(s_a,s_b) > 4R$, either $\phi(a,c) = \pm I$, or
$\dist(s_c,s_b) > 2R$ by a triangle inequality.
So $[\phi(a,c),\tilde \opX_b]=0$ always, and 
$\bra{ \vec w } \tilde \opX_b^{-1} \phi(a,c) \tilde \opX_b \ket{ \vec w } 
= \bra{ \vec w } \phi(a,c) \ket{ \vec w }$.
Therefore, the product of phase factors for $c\neq b$ 
is the same for both \cref{eq:ph1,eq:ph2}.

If $b < a$, 
then there is $\phi(a,b)$ from \cref{eq:ph2},
and no extra sign from \cref{eq:ph0}.
For $\dist(s_a,s_b) > 2R$, we know $\phi(a,b)= (-1)^{p_a p_b} I$.
If $b > a$,
then there is no $\phi(a,b)$ from \cref{eq:ph2},
but an extra sign $(-1)^{p_a p_b}$ from \cref{eq:ph0}.
\end{proof}

Since the vectors $\ket{\vec v}$ form a basis, 
the set of all separator and flipper elements generates the algebra of all operators,
which is a simple $\dagger$-algebra.
For each $a$, consider a simple $\dagger$-subalgebra $\mathcal E_a$ generated by $\opZ_b$ and $\opX_b$
for $b$ such that $\dist(s_a, s_b) > 4R$.
By \cref{lem:lcom}, a flipping operator~$\tilde\opX_a$ commutes with every generator of $\mathcal E_a$.
Hence, the operator $U_a = \opX_a \tilde \opX_a^{-1}$ 
that belongs to the commutant of $\mathcal E_a$ within the algebra of all operators,
must be in the algebra generated by $\opZ_c$ and $\opX_c$ for $c$ such that
$\dist(s_a,s_c) \leq 4R$.%
\footnote{
To see this formally, one can expand an operator as a linear combination of products of 
$\opZ_a$ and $\opX_a$, and examine commutation relations to remove unwanted terms.
See \cref{lem:supercommutant} for a general result.
}
Since $U_a$ is diagonal in the basis $\{ \ket{\vec v} \}$,
it belongs to the algebra generated by $\opZ_c$,
and therefore, $\opX_a = U_a \tilde \opX_a$ is supported within distance $5R$ of $s_a$.
This establishes the locality of the operators $\opX_a$ that we wished to show.

Finally, we will define a bijection $f$ from elements of the separator to sites,
and define a QCA $\alpha$ to map $\opZ_a$ to $Z_{f(a)}$ and $\opX_a$ to $X_{f(a)}$.

The inverse QCA $\alpha^{-1}$ will map $Z_{f(a)}$ to $\opZ_a$ and $X_{f(a)}$ to $\opX_a$.
We can bound the range of $\alpha^{-1}$ (and hence the range of $\alpha$) if for each element $a$, we have a bound on ${\rm dist}(s_a,f(a))$.

We claim that it is possible to choose $f(a)$ such that ${\rm dist}(s_a,f(a))\leq 5R$ 
so that $\alpha^{-1}$ has range at most $10R$.
The Hall marriage theorem enables us to establish this bound 
if there are no fermionic degrees of freedom.%
\footnote{
The marriage theorem~\cite[Chap.~22~Thm.~3]{BookProof} states that
given a bipartite graph between ``left'' vertices and ``right'' vertices,
if, for every subset $W$ of left vertices, the neighbor of $W$ in the right
has at least as many elements as $W$,
then there is a one-to-one mapping from the left to the right.
}
Let us say that for each element $a$, a degree of freedom $j$ is 
``acceptable'' if $\dist(s_a,j)\leq 5R$ and $D_a=D_j$.
Let $W$ be any subset of elements (indices) of the separator,
and $T$ be the set of sites within distance $5R$ from the set of $s_a$ for $a \in W$.
By the locality of $\opX_a$,
the set $T$ supports the algebra generated by $\opX_a$ and $\opZ_a$ for $a \in W$.
Hence, if $\prod_{a \in W} D_a$ has a prime factorization $\prod_{a \in W}D_a= 2^{n_2} 3^{n_3} \cdots$,
we see that the set $T$ must contain at least $n_d$ qudits of dimension $d$.
This implies that any degree of freedom acceptable to $a \in W$ is on some site in $T$,
and furthermore the number of acceptable qudits in $T$ 
is at least the number of elements in $W$.
So, the marriage condition is obeyed.
If there are fermionic degrees of freedom, 
we apply the Hall marriage theorem to construct $f(a)$ for elements $a$ with $D_a>2$, 
and use the assumption of the existence of $g(\cdot)$ to set $g(a)=f(a)$ for $D_j=2$.

This completes the proof of \cref{thm:qcadisent}.
\end{proof}

\subsection{Mapping from QCA to locally flippable separator and Inverse}

In \cref{thm:qcadisent}, we have shown that there exists a QCA that disentangles a locally flippable separator.
This QCA is not unique; many possible choices may be made.
We can use this to define a map $\mathcal F$ from locally flippable separators to QCAs 
by picking an arbitrary QCA that fulfills the conditions of the lemma.

Conversely, we can define a map $\mathcal G$ from QCAs to locally flippable separators in an obvious fashion 
by considering the image of the trivial separator under a given QCA.  
That is, the elements of the locally flippable separator will be the images of the elements of the trivial separator.

The composition ${\cal F} \circ {\cal G}$ is a map from QCAs to QCAs.
Let us consider first the case of a qudit QCA without any fermionic degrees of freedom.
Then, the image of any shift QCA under ${\cal G}$ is the trivial separator.
Hence, for any shift QCA $\beta$, and any QCA $\alpha$, 
we have ${\cal F}({\cal G}(\alpha \circ \beta))={\cal F}({\cal G}(\alpha))$.
This shows one advantage to considering locally flippable separators rather than QCAs.
By considering a locally flippable separator defined from a QCA, we mod out shifts.
This is useful since our goal is to classify QCAs which are not in $\Circshift$.
See a remark below \cref{lem:elemCphase}.

If we have fermionic degrees of freedom, 
then there are shifts which map the trivial separator to some other separator.  
One standard example is the Majorana chain: 
a shift on a one-dimensional chain can map a separator with elements $i \gamma_i \gamma'_i$ 
to one with elements $i \gamma_i \gamma'_{i+1}$.  
The sum of elements of the first separator is often called the ``trivial Hamiltonian,''
while the sum of elements of the second separator is often called the ``nontrivial Majorana chain Hamiltonian.''
So, in this case considering locally flippable separators does not completely mod out shifts;
it mods out shifts of all qudit degrees of freedom at least.

\section{Three-dimensional Walker-Wang model} \label{WWsection}

The Walker-Wang models~\cite{Walker_2011} are 
a class of $3$-dimensional gapped commuting projector Hamiltonian lattice models with no bulk anyon excitations.
When a $2$-dimensional surface boundary is introduced, 
a Walker-Wang model can be terminated at this surface in such a way that it remains a gapped commuting projector model.
However, such a surface will typically host a non-trivial $2$-dimensional topological order, 
i.e., anyons confined to move only on the surface~\cite{Curt}.
In fact, given any complete and consistent algebraic description of a theory of anyons 
--- i.e., a unitary modular tensor category ---
one can always build an associated Walker-Wang model 
that realizes this theory of anyons at its surface.%
\footnote{
More generally, Walker-Wang models can take as input a premodular category, 
in which case the 3D bulk also contains some deconfined topologically non-trivial excitations.
}
Its ground state can then be thought of as 
a superposition of string-net configurations in $3$ spatial dimensions.
Each string-net configuration can be interpreted as 
the space-time history of a fusion and braiding process in $2+1$ dimensions,
and the Walker-Wang ground state's amplitude of such a string-net configuration 
is proportional to the amplitude of this braiding process.
Walker-Wang models are interesting for condensed-matter physics 
primarily because they often provide exactly solvable points 
for symmetry protected topological (SPT) phases~\cite{BCFV, TopoSC, TI, ProjS}.  

In the present paper we will only be dealing with one specific example of a Walker-Wang model,
based on the $3$-fermion modular tensor category (UMTC) $\{1,\fone,\ftwo,\fthree \}$~\cite{Kitaev_2005}.
As an abelian theory of anyons, the $3$-fermion UMTC describes 
$3$ non-trivial quasiparticles $\fone,\ftwo,\fthree$, 
which are all fermions and any pair of which has full braiding phase of $-1$.
The fusion rules are $\Z_2 \times \Z_2$.
Physically, the $3$-fermion UMTC can be realized as 
a $U(1)$ Chern-Simons theory whose $4 \times 4$ $K$-matrix 
is equal to the Cartan matrix of $SO(8)$~\cite{BCFV}.
This Chern-Simons theory is a low energy description of 
a $2$-dimensional fractional quantum Hall (FQH) state of bosons, 
whose (non-commuting-projector) Hamiltonian can be constructed directly from the $K$-matrix~\cite{VL}.
Importantly, because the signature of the $K$-matrix is equal to $4$ 
--- in particular, is nonzero --- 
this FQH state is chiral.
In fact, it is believed~\cite{Kitaev_2005} that the chiral central charge 
$c_{-}$ of a 2d topologically ordered system 
is related to the anyon statistics as follows:
\begin{align}
e^{2 \pi i c_{-} / 8} = \frac{\sum_a d_a^2 \theta_a}{\sqrt{\sum_a d_a^2}}
\end{align}
where the sum is over all anyons $a$ with quantum dimensions $d_a$ and topological spins $\theta_a$.
For the $3$-fermion theory the right hand side is $-1$, 
which fixes $c_{-}$ to be equal to $4$ modulo $8$ in any $2D$ physical realization of the $3$-fermion theory.
Thus, assuming this formula, the chiral central charge 
would have to be non-zero in any $2d$ physical realization of the $3$-fermion theory.
In particular, no time reversal invariant realization would be possible, 
since an edge chirality manifestly breaks time reversal.

On the other hand, the algebraic data defining the $3$-fermion UMTC 
(namely, the $F$ and $R$ matrices) 
can be chosen to all simultaneously be real~\cite{Kitaev_2005}.
Thus the fusion and braiding amplitudes of all braiding processes are real, 
and therefore so is the ground state of the Walker-Wang model 
built on the $3$-fermion theory (up to an overall phase).
In fact, the Hamiltonian itself can be taken to be real.
This is a commuting Pauli Hamiltonian, 
with two spin-$\frac{1}{2}$ degrees of freedom per link of a cubic lattice.
It was originally written down~\cite{BCFV} as a model of 
the `beyond-cohomology' SPT phase of bosons with $\Z_2$ time reversal symmetry in 3 spatial dimensions.

\subsection{Model and its properties}

\subsubsection{Model}

We work with the $3$-fermion Walker-Wang model Hamiltonian constructed in~\cite{BCFV}, which we review here.
There are two qubits on every link $\ell$ of the $3d$ cubic lattice (i.e., the ``links" are the ``sites" of the system), 
with Pauli algebra generated by $X^{\ell}_i$, $Z^{\ell}_i$, $i=1,2$.
The Hamiltonian is a sum of vertex $(A_V)$ and plaquette $(B_P)$ terms:
\begin{align}
H_{WW} = - \sum_V A_V - \sum_P B_P. \label{eq:WWHam}
\end{align}
The first sum above is over all vertices 
and the second over the square plaquettes.
We define the vertex term as:
\begin{align}
A_V = \prod_{\ell \sim V} X_1^\ell + \prod_{\ell \sim V} X_2^\ell \label{eq:vertexterm}
\end{align}
where $\ell \sim V$ means that $V$ is one of the endpoints of $\ell$.

\begin{figure*}
\centering
\includegraphics[width=0.8\textwidth,trim={1cm 4cm 6cm 1cm},clip]{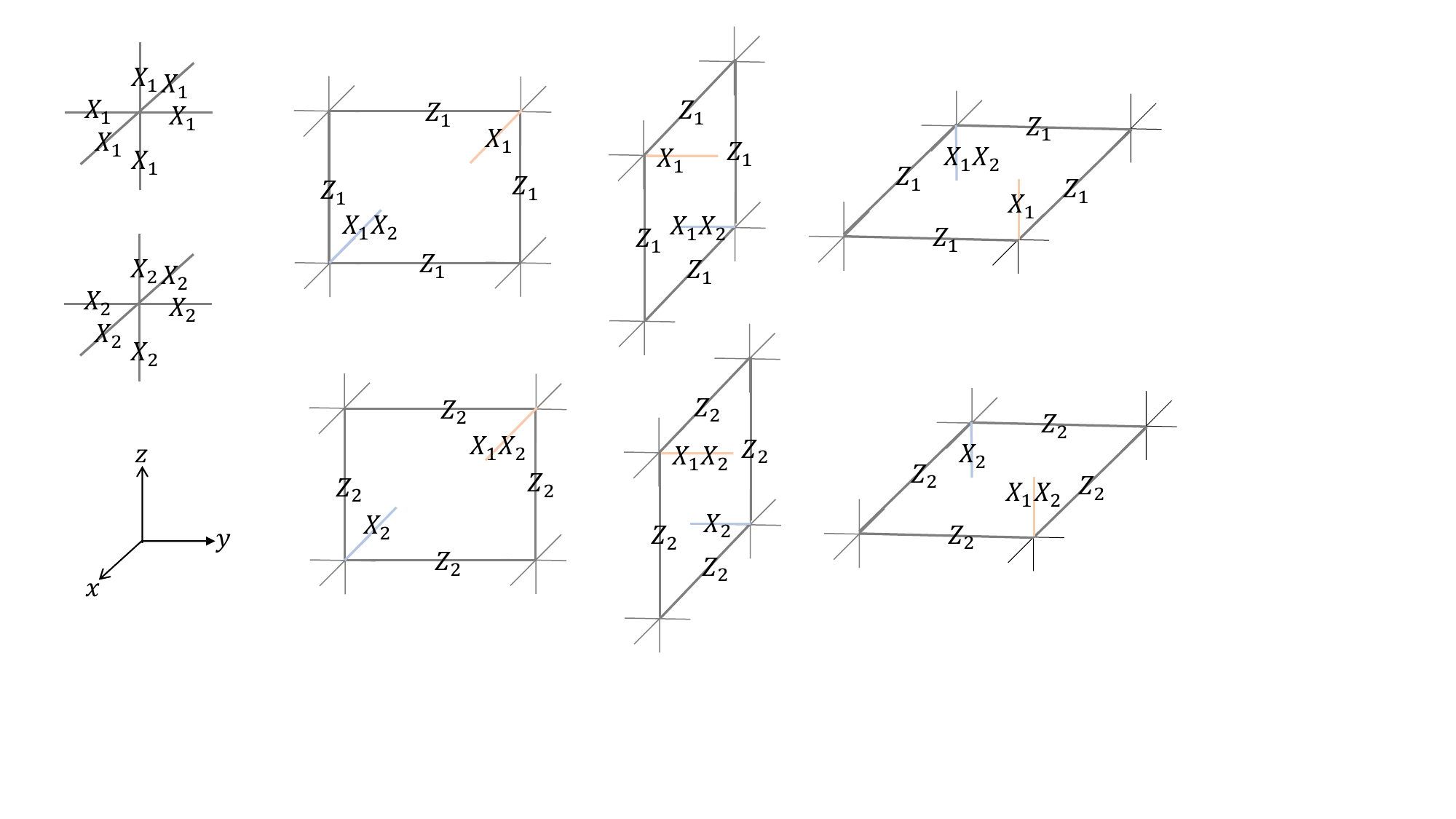}
\caption{Three-fermion Walker-Wang model.
There are two qubits per link
and the ground state is the common eigenstate of eigenvalue $+1$
of the vertex terms in \cref{eq:vertexterm} and the plaquette terms in \cref{eq:plaquetteterm}.  The orange and blue links lie `over' and 'under' the plaquette for this choice of projection, and are referred to as $O$ and $U$ links in the main text.
}
\label{fig:WW}
\includegraphics[width=0.7\textwidth, trim={0cm 6.5cm 15cm 0cm},clip]{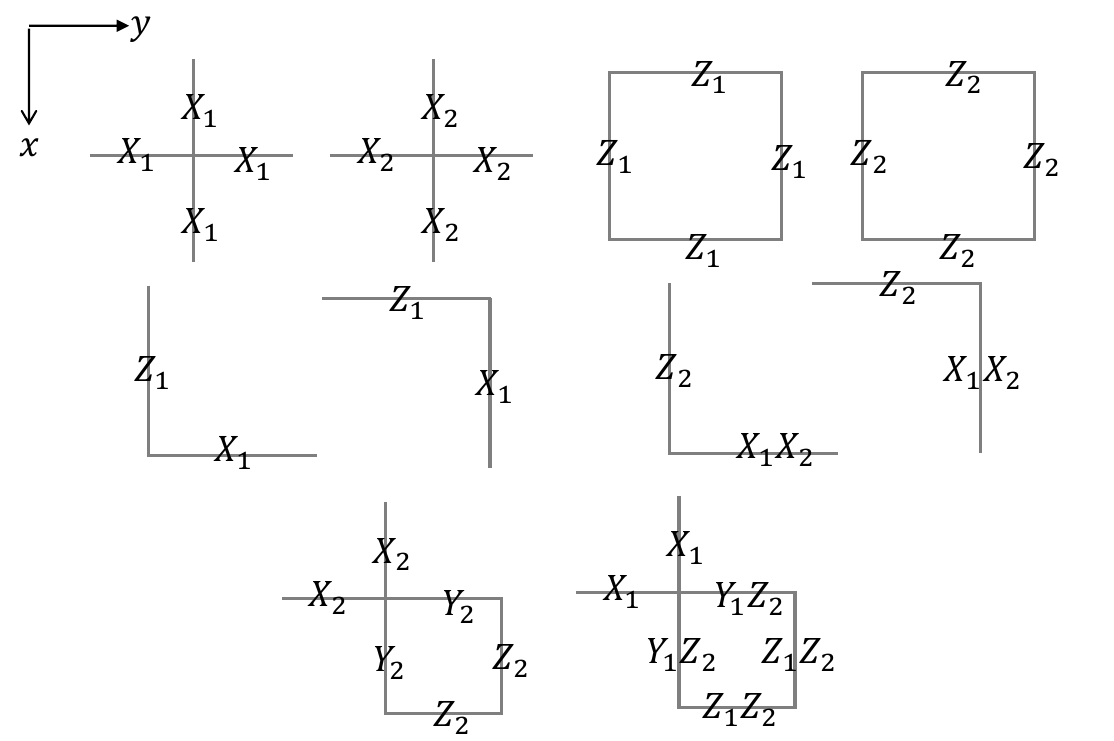}
\caption{Boundary terms of the three fermion Walker-Wang model.
Those on the top two rows are from \emph{truncated bulk terms},
and the two figures on the bottom, that we call \emph{small fermion loop operators},
are inserted for the theory with boundary to satisfy the local topological order condition.
In the top two rows, only the operator factors acting on the boundary ($z=0$) is shown;
e.g., the vertex operator on the top left
is in fact a product of five $X$'s, but one factor that acts 
on the link perpendicular to the $z=0$ plane is not shown.
They may look noncommuting, but they in fact commute with hidden factors included.
The small fermion loop operators are supported genuinely on the $z=0$ plane.
}
\label{fig:WWsurface}
\end{figure*}

To define the plaquette term $B_P$, 
we first fix a projection of the $3d$ cubic lattice into $2d$, 
in such a way that the projection of each plaquette 
is in one of the three forms shown in \cref{fig:WW}.
Out of all the links adjacent to vertices of a given plaquette,
there are exactly two that lie in the interior of its projection.
We label those as $O$ and $U$, 
depending on whether they lie `over' or `under' the plaquette.
Then we define $B_P = B_{P,1} + B_{P,2}$, where
\begin{align}
B_{P,1} &= X_1^O X_1^U X_2^U \prod_{\ell \in \partial P} Z_1^\ell , \nonumber\\
B_{P,2} &= X_1^O X_2^O X_2^U \prod_{\ell \in \partial P} Z_2^\ell . \label{eq:plaquetteterm}
\end{align}
The terms $A_V, B_{P,1}$ and $B_{P,2}$ all commute with each other.
The only non-trivial case to check is that the $B_{P,i}$ terms commute with each other.
This is due to the fact that given any two plaquettes $P$ and $P'$,
it is the case that either none of the $O$ and $U$ links of $P$ have any overlap with $\partial P'$ 
(which is equivalent to the condition with $P$ and $P'$ interchanged),
or it must be that the $O$ link of $P$ is in $\partial P'$ and the $U$ link of $P'$ is in $\partial P$,
or this is true with $P$ and $P'$ interchanged.
From this we see that any minus signs from anti-commutation of Pauli matrices in the two terms 
always come in pairs, and the terms commute. See \cref{fig:WW}.
More details can be found in~\cite{BCFV}.

The general construction of the Walker-Wang model
forbids any topologically nontrivial particle in the bulk,
whenever the input algebraic theory of anyons is modular.
Hence, we anticipate that the ground state should be unique under periodic boundary conditions,
since different ground states would be reached by
a particle-anti-particle pair that travels across the system.
Indeed, we can directly verify this for the Hamiltonian of \cref{eq:WWHam},
which we state as the following lemma.
\begin{lemma}\label{lem:nondegeneracy}
The Hamiltonian of \cref{eq:WWHam} has nondegenerate ground state
on which all $A_V, B_{P,1}$ and $B_{P,2}$ take eigenvalue $+1$
under the periodic boundary conditions ($3$-torus) of any system size.
\end{lemma}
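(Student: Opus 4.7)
The plan is to treat $H_{WW}$ as a commuting Pauli stabilizer Hamiltonian and reduce nondegeneracy to a rank computation. Since the operators $A_{V,i} = \prod_{\ell \sim V} X_i^\ell$ (for $i=1,2$) together with $B_{P,1}, B_{P,2}$ mutually commute and have spectrum $\{\pm 1\}$, once frustration-freeness is verified the common $+1$-eigenspace has dimension $2^{n-k}$, where $n = 6L^3$ is the number of qubits on the $L^3$-torus and $k$ is the rank of the abelian stabilizer group modulo $\pm I$. The lemma thus reduces to showing $k = 6L^3$ together with the absence of any product of generators equal to $-I$.

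The total number of generators is $8L^3$ ($2L^3$ vertex operators across both types and $6L^3$ plaquette operators across both types), so I need exactly $2L^3$ independent relations. Two are the global relations $\prod_V A_{V,i} = I$ for $i = 1, 2$, which hold because each link has two endpoints. The remainder come from ``cube relations'': products $\prod_{P \in \partial C} B_{P,i}$ in which the $Z_i$ factors cancel edge-by-edge on $\partial C$ (each edge of the cube lies in exactly two faces), while the $X$ factors on $O$ and $U$ links reassemble into products of vertex operators supported on $C$. A careful accounting of dependencies among the $2L^3$ cube relations and the two global relations should yield precisely $2L^3$ independent relations. Frustration-freeness is verified concurrently: each such relation must hold with a $+$ sign, which follows from the same pairwise cancellation of $X$--$Z$ anti-commutations used to establish commutativity of the $B_{P,i}$ in the paragraph preceding the lemma. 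The least error-prone implementation of this bookkeeping is the polynomial method developed in \cref{sec:pauli}, under which the translation-invariant stabilizer map becomes a matrix over $\mathbb{F}_2[x^{\pm 1}, y^{\pm 1}, z^{\pm 1}]$ and the rank calculation on the $L$-torus reduces to a Laurent polynomial computation uniform in $L$.

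The main obstacle is controlling any additional ``topological'' relations arising from non-contractible $2$-surfaces of the $3$-torus, which --- if present --- would enlarge the ground space beyond one dimension. This is precisely what happens for the ordinary $3$D toric code, where $H_2(T^3;\mathbb{F}_2) = \mathbb{F}_2^3$ contributes three extra relations and yields an $8$-fold degenerate ground space. Their absence in the three-fermion Walker-Wang model reflects the modularity of the input UMTC: no deconfined bulk anyonic loop can be threaded through a non-contractible cycle to produce an independent ground state. Verifying this absence explicitly, either by combinatorial enumeration or (more systematically) via the vanishing of the relevant cohomology in the polynomial-method computation, is the step that carries the physical content of the lemma.
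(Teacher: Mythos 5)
Your plan is conceptually aligned with the paper's proof --- cast $H_{WW}$ as a translation-invariant Pauli stabilizer Hamiltonian, reduce nondegeneracy to a rank/relation count, and control topological relations from non-contractible $2$-cycles via the polynomial method --- but the proposal stops exactly where the work is. You write that verifying the absence of extra topological relations ``is the step that carries the physical content of the lemma,'' and indeed it is, but you do not execute it. The paper does not attempt a direct combinatorial enumeration of relations on the $L$-torus; instead it writes down the stabilizer map $\sigma$ explicitly over $\FF_2[x^\pm,y^\pm,z^\pm]$, exhibits a $2$-column matrix $K$ with $\sigma K = 0$ and $I_2(K)=R$ (your ``cube relations,'' expressed once per unit cell), checks by direct Pauli-phase calculation that the two resulting products of Hamiltonian terms equal $+I$ rather than $-I$, and then computes the determinantal ideal $I_6(\sigma) = R$ by Gr\"obner basis. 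That last condition, fed into \cref{lem:chaincondition}, yields $\ker\sigma^\dagger\lambda_3 = \im\sigma$, and \cite[Cor.~4.2]{Haah2013} converts this exactness into nondegeneracy, uniformly in $L$, via the Krull intersection theorem. The determinantal-ideal computation is the step you wave at as ``vanishing of the relevant cohomology,'' and without it the proposal contains no proof.

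Two smaller caveats. First, your relation count (two global vertex relations plus $2L^3$ cube relations targeting exactly $2L^3$ independent relations) already hints at a dependency you would need to resolve --- the global vertex relations are not independent of appropriate products of cube relations --- and accounting for this correctly over $\FF_2[x^\pm]/(x^L-1)\otimes\cdots$ is precisely the bookkeeping the Buchsbaum--Eisenbud criterion is designed to avoid. Second, frustration-freeness does not quite ``follow from the same pairwise cancellation'' used for commutativity; the sign of the product of Paulis encoded by each column of $K$ is a separate finite check, which the paper performs explicitly. Neither point is fatal to your plan, but both require work the proposal defers.
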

\begin{proof} 
See page \pageref{pflem:nondegeneracy}.
\end{proof}

\subsubsection{Boundary and local topological order condition}

Given a Hamiltonian we may introduce a `boundary' by omitting terms
whose support lies outside the boundary.
If the Hamiltonian consists of pair-wise commuting terms,
this omission of terms produces a Hamitonian that is gapped,
but in an absurdly trivial manner.
A meaningful procedure or question is whether it is possible to define 
terms near the boundary in such a way that no more terms can be introduced
without being redundant.
We can formulate this intuitive requirement into a rigorous condition using
the concept of local topological order~\cite{BravyiHastingsMichalakis2010stability}.

A local Hamiltonian consisting of commuting projectors is said to exhibit
{\bf local topological quantum order (LTQO)}
at scale $L^\star$ if the reduced density matrix on a disk $D$ of radius $L^\star$
is unique for every state $\ket \psi$ that minimizes all Hamiltonian terms 
supported on the concentric disk $D^+ \supset D$ of radius $L^\star + O(1)$.
In other words, a Hamiltonian with LTQO determines the reduced density matrix uniquely
for any disk as long as the underlying state minimizes Hamiltonian terms around it.
Note that the LTQO condition does not ask about the boundary conditions of the system,
and hence it can be imposed or tested for systems with boundaries.

We define boundary terms for the three fermion Walker-Wang model in the following.
We will show that with our boundary terms the whole Hamiltonian obeys the LTQO condition.
Let us use the lattice as depicted in \cref{fig:WW}.
For clarity of presentation, let the vertices have integer coordinates, and the $z$-axis go upward.
The ``bulk'' occupies the region of coordinate $z < 0$,
and the ``boundary'' lies at $z=0$.
Any term that is supported in the bulk remains intact.
We define boundary vertex term as the product $\prod_{\ell \sim V} X_i^\ell$ ($i=1,2$)
of five $X$'s, instead of six, around a vertex $V$ with $z=0$,
omitting one factor on the link with $z > 0$.
The boundary plaquette terms are similarly defined:
Any term associated with a plaquette with a vertex of $z > 0$ is simply omitted.
Any term associated with a plaquette in the $zx$- or $zy$-plane are kept intact,
even if the plaquette has a vertex with $z=0$.
For a plaquette $P$ that lies in the $z=0$ plane,
we truncate the factor on the `over' link, similarly to the vertex terms:
$B_{P(z=0),1} = X_1^U X_2^U \prod_{\ell \in \partial P} Z_1^\ell$ that is a product of six Pauli matrices,
and 
$B_{P(z=0),2} = X_2^U \prod_{\ell \in \partial P} Z_2^\ell$ that is a product of five Pauli matrices.
The boundary term that are defined so far is depicted in
the top two rows of \cref{fig:WWsurface}.
Additionally, we define \emph{small fermion loop} terms, that live on the $z=0$ plane
and take eigenvalue $+1$ on any ground state, as in the third row of \cref{fig:WWsurface}.
The crucial reason we introduce the small fermion loop operators is the following.

\begin{lemma}\label{lem:enough-smallfermionloops}
For any operator $O$ supported on a bounded disk on the plane of $z=0$ (the surface),
if $O$ commutes with all truncated bulk terms in the top two rows of \cref{fig:WWsurface},
then $O$ is a $\CC$-linear combination of finite products of 
small fermion loop operators in the bottom of \cref{fig:WWsurface}.
Moreover, the small fermion loop operators that constitute $O$
can be chosen within an $O(1)$-neighborhood of the support of $O$.
\end{lemma}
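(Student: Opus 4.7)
The plan is to reduce to the case where $O$ is a single Pauli operator and then analyze the local commutation constraints explicitly, using the $\mathbb{F}_2$ symplectic/polynomial formalism developed in \cref{sec:pauli}.

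First I would reduce to a single Pauli. Expand $O = \sum_\mu c_\mu P_\mu$ into distinct Pauli operators. Each truncated bulk term $S$ from the top two rows of \cref{fig:WWsurface} is itself a Pauli, so $S P_\mu S^{-1} = \epsilon_\mu(S) P_\mu$ with $\epsilon_\mu(S) \in \{\pm 1\}$. Grouping the $P_\mu$ by their sign pattern across all $S$ and using linear independence of distinct Paulis, the condition $[O,S]=0$ for all $S$ forces every $P_\mu$ with $c_\mu\neq 0$ to satisfy $\epsilon_\mu(S)=+1$ for every $S$. Hence it suffices to treat the case $O=P$, a single Pauli supported on a bounded disk $D\subset\{z=0\}$.

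Next I would translate the commutation condition into a finite system of linear equations over $\mathbb{F}_2$. The Pauli $P$ corresponds to an $\mathbb{F}_2$-vector whose support lies in the links near $D$. The relevant truncated bulk terms are of three types: the truncated vertex terms $\prod_{\ell\sim V,\,\ell\neq\ell_{\uparrow}} X_i^\ell$ for surface vertices $V$ and $i=1,2$; the intact plaquette terms in the $xz$- and $yz$-planes whose boundary touches $z=0$; and the truncated in-plane plaquette terms $B_{P(z=0),i}$. Each of these imposes an $\mathbb{F}_2$-linear condition on the components of $P$. Using either a direct finite linear-algebra computation or the translation-invariant polynomial machinery of \cref{sec:pauli} specialized to the two-dimensional surface algebra, I would compute the joint kernel of these linear functionals and verify that it equals the subgroup generated by the small fermion loop operators in the bottom row of \cref{fig:WWsurface}. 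One inclusion is automatic: the small fermion loops are introduced precisely so as to commute with all the truncated bulk terms, so they lie in the kernel. The content of the lemma is the reverse inclusion, which I would prove by exhibiting a concrete spanning set of elementary ``moves'' --- each a short fermion loop --- sufficient to represent every element of the kernel.

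For the locality (``moreover'') statement, I would argue by a greedy local cancellation: starting from a vertex or plaquette on the boundary of $\mathrm{supp}(P)$ and proceeding inward, multiply $P$ by a small fermion loop supported within distance $O(R)$ of the current disagreement to kill one generator of the kernel at a time. Since each step enlarges the working support by only $O(1)$, and since the procedure terminates in a number of steps controlled by the area of $\mathrm{supp}(P)$, all small fermion loops appearing in the final decomposition lie within an $O(1)$-neighborhood of $\mathrm{supp}(O)$. The main obstacle will be the verification in the previous paragraph that the explicit list of small fermion loops in \cref{fig:WWsurface} already spans the joint kernel of the commutation constraints; this involves a delicate bookkeeping across the two qubit layers ($i=1,2$) and the three plaquette orientations meeting the surface, and is exactly where the polynomial computation of \cref{sec:pauli} plays the crucial role.
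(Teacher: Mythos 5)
Your reduction to a single Pauli is correct and matches the paper's first step verbatim: since each truncated bulk term $S$ is Pauli, conjugation by $S$ acts diagonally on the Pauli basis with $\pm 1$ eigenvalues, and linear independence of Paulis forces every $P_\mu$ with nonzero coefficient to commute with every $S$. Your plan to recast the commutation constraints as an $\FF_2$-linear system is also aligned with what the paper does in its ``alternative proof'' in \cref{sec:pauli}, where the truncated bulk terms are encoded in a matrix $\sigma_\text{surf}$ and the small fermion loops in $\sigma_\text{SFL}$, and the claim becomes the exactness of $R^2 \xrightarrow{\sigma_\text{SFL}} R^8 \xrightarrow{\sigma_\text{surf}^\dagger \lambda_4} R^8$.

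However, you never actually carry out the decisive step. You write that you ``would compute the joint kernel \ldots and verify that it equals'' the subgroup generated by the small fermion loops, and that you ``would prove by exhibiting a concrete spanning set of elementary moves,'' and you explicitly flag the verification as ``the main obstacle.'' This is precisely the content of the lemma, and it is not an afterthought: one must either (a) run the elementary ``bounding rectangle'' deformation argument the paper uses in its first proof --- fixing a rectangular frame for the support of $P$, using commutativity with truncated vertex and plaquette terms to force identity factors at a corner, and inductively shrinking the rectangle by multiplying with small fermion loops at each stage (\cref{fig:boundingRectangle}) --- or (b) compute, as the paper does in the polynomial proof, that $I_6(\sigma_\text{surf}^\dagger\lambda_4) = I_2(\sigma_\text{SFL}) = \big((1+x)^2,(1+x)(1+y),(1+y)^2\big)$, both of depth $2$, so that the Buchsbaum--Eisenbud criterion certifies exactness. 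Neither is done in your proposal, so the core of the argument is missing. Similarly, your ``greedy local cancellation'' for the $O(1)$-neighborhood claim is only a sketch; the paper substantiates it concretely either by the explicit rectangle deformation, whose total displacement is bounded, or via the Gr\"obner-basis division algorithm under a total-degree monomial order. Until one of those verifications is carried out, the proposal is a correct plan but not a proof.
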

\begin{proof}
Express $O$ as a $\CC$-linear combination of Pauli operators
since Pauli operators form an operator basis:
$O = \sum_j c_j P_j$ where $c_j \in \CC$.
Considering the assumption $O = T O T^\dagger$ for any term $T$ in the first two rows of \cref{fig:WWsurface},
the truncated bulk terms,
we see that nonzero $c_j$ is accompanied by a Pauli operator $P_j$ 
that commutes with every truncated bulk terms.
Thus, it remains to prove the lemma when $O$ is a Pauli operator of bounded support,
and overall phase factor of $O$ is immaterial.

Without loss of generality we may assume that $O$ is supported on a rectangle
as in the leftmost figure of \cref{fig:boundingRectangle}.
On the edge at the top left, the factor must commute with $Z_1,Z_2$ 
of the plaquette terms in the top right of \cref{fig:WWsurface},
and it also commutes with $X_1$ and $X_1 X_2$. Hence, the factor there is the identity.
The same argument (by the reflection symmetry of \cref{fig:WWsurface} about $x=y$ line)
shows that the factor on the horizontal edge on the top left is the identity;
see the second figure of \cref{fig:boundingRectangle}.
Inductively proceeding to the right, 
we see that the factor on the second vertical edge on the top left has to be one of $I, X_1, X_2, X_1 X_2$.
In every case, it is possible to eliminate it by multiplying the small fermion loop operators
within the bounding rectangle
--- 
since the small fermion loops are commuting with all the truncated bulk terms,
the claim that $O$ is a product of the small fermion loops is equivalent to
$O \prod (\text{small fermion loops})$ is a product of small fermion loops.
Moving to the right in a similar manner,
we come to a situation where the top row of the deformed rectangle
is now a single vertical edge;
see the third figure of \cref{fig:boundingRectangle}.
Then, the commutativity with ``L''-shaped operators forces the vertical edge to be the identity;
we have reduced the height of the bounding rectangle,
and can proceed similarly;
see the fourth figure of \cref{fig:boundingRectangle}.

\begin{figure*}
\includegraphics[width=\textwidth,trim={0cm 16cm 9cm 0cm},clip]{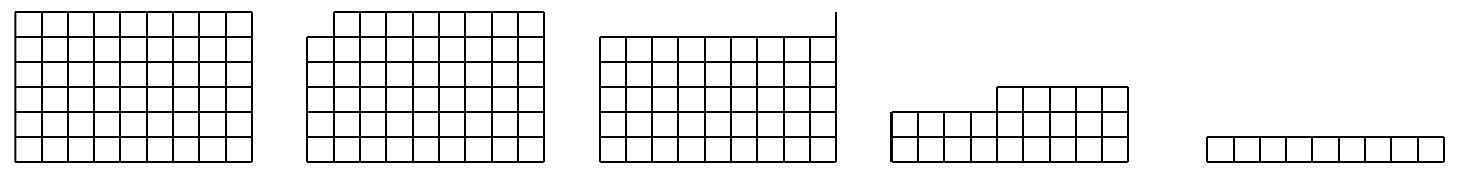}
\caption{
Bounding rectangle of an operator in the commutant of truncated boundary terms.
The rectangle can be deformed by small fermion loop operators so that it disappear eventually.
See the proof of \cref{lem:enough-smallfermionloops}.
}
\label{fig:boundingRectangle}
\end{figure*}

By induction, we may assume that the bounding rectangle has height $\le 1$;
see the last figure in \cref{fig:boundingRectangle}.
We can apply the commutativity that eliminated the top left corner in the beginning,
to conclude that the bounding rectangle of height at most 1 can only accommodate the identity operator.
This complete the proof.

See page~\pageref{pflem:enough-smallfermionloops} 
for an alternative proof that is perhaps more systematic.
\end{proof}

If one wants a finite system, 
one can define boundary terms on the bottom at, say $z = -L_z < 0$, similarly.
An easy way is to use the spatial inversion symmetry of our model
and impose periodic boundary conditions along $x$- and $y$-directions.
The symmetry is the inversion about any body center of the cubic lattice,
followed by the interchange of the qubits $1$ and $2$ within each link.
The inversion symmetry is not essential to define boundary terms,
and could be absent in some other model Hamiltonian of the same quantum phase.

\begin{lemma}\label{lem:ltqo}
The three fermion Walker-Wang model Hamiltonian on a system of linear size $L > 20$ 
with top and bottom boundaries open along $z$-direction
and periodic along $x$- and $y$-directions
that has the bulk and boundary terms as described above,
obeys the local topological order condition with $L^\star = L/2$.
\end{lemma}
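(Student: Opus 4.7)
The plan is to reduce the LTQO condition to a purely algebraic claim about Pauli operators, and then prove that claim by combining \cref{lem:enough-smallfermionloops} with a bulk analogue. Since every Hamiltonian term in question is a commuting Pauli projector (each $A_V$ is a sum of two commuting Paulis $\prod_\ell X_i^\ell$, and similarly for $B_P$), it suffices to show the following local generation statement: every Pauli operator $P$ whose support is contained in a disk $D$ of radius $L^\star = L/2$ either anti-commutes with some Hamiltonian term supported in the concentric disk $D^+$ of radius $L^\star + O(1)$, or is a product of such terms up to a phase. Indeed, expanding an arbitrary operator $O$ on $D$ in the Pauli basis, this claim gives $\Pi_{D^+}\, O\, \Pi_{D^+} = c_O\, \Pi_{D^+}$ for a scalar $c_O$ independent of the chosen minimizing state, which is precisely the LTQO condition for $\rho_D$.

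Given such a $P$ that commutes with every term in $D^+$, I would peel off its bulk support first and its surface support second. The bulk peel uses a local generation property of the bulk three-fermion Walker-Wang stabilizer: any Pauli operator supported on a bounded region strictly below the boundary that commutes with every bulk vertex and plaquette term in a slightly larger region is a product of those terms. One can prove this by a layer-by-layer induction in the $z$-direction analogous to the bounding rectangle argument of \cref{lem:enough-smallfermionloops}: at each step, commutation with the nearby $A_V$, $B_{P,1}$, $B_{P,2}$ constrains the Pauli content on the outermost layer so that multiplying by bulk generators eliminates it. Applied within $D^+$ with an $O(1)$ buffer, this reduces $P$ to a Pauli operator $P'$ supported in an $O(1)$-thick slab adjacent to $z = 0$ (and, by the same procedure at the bottom boundary, adjacent to $z = -L_z$); the condition $L > 20$ guarantees that these two neighborhoods fit inside $D^+$ without colliding.

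Because $P'$ still commutes with every truncated bulk term inside $D^+$, \cref{lem:enough-smallfermionloops} expresses $P'$ as a product of small fermion loop operators, and by the ``moreover'' clause these loops can be chosen within an $O(1)$-neighborhood of $P'$ and hence inside $D^+$. Combining this with the product of bulk terms used in the first peel exhibits $P$ as a product of Hamiltonian generators in $D^+$ up to a phase, which completes the local generation claim and therefore the LTQO condition.

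The hard part is the bulk local generation statement invoked in the first peel: in three dimensions the direct bounding box induction is subtler than in \cref{lem:enough-smallfermionloops}, and one must rule out every locally nontrivial Pauli syzygy of the Walker-Wang bulk stabilizer. The cleanest route is via the polynomial formalism developed in \cref{sec:pauli}, which represents the three-fermion stabilizer as a matrix over a Laurent polynomial ring and lets one compute its local syzygies directly; the same structural input underlies \cref{lem:nondegeneracy}. One also has to verify at each step that the generators used in the peeling are genuinely supported inside $D^+$, which is where the $O(1)$ buffer between $D$ and $D^+$ (and the lower bound $L > 20$) is essential.
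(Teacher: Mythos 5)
Your skeleton --- rephrase LTQO as a Pauli local-generation statement, push the Pauli's support up to the $z=0$ plane, then apply \cref{lem:enough-smallfermionloops} --- is the same as the paper's. But the middle step is where all the work sits, and your version of it has a real gap. The statement you invoke is bulk LTQO: a Pauli strictly in the bulk that commutes with nearby bulk terms is a product of bulk terms. A Pauli $P$ on $D$ in general touches the boundary, and the part of $P$ supported strictly below the boundary need not commute with the bulk terms by itself (the near-surface part of $P$ can compensate the anticommutations), so bulk LTQO cannot be applied as a black box to ``peel off the bulk support.'' What you actually need is a constructive push-to-surface lemma, which is stronger. Your proposed layer-by-layer induction in $z$ also does not obviously terminate as stated: eliminating a $Z$ on the outermost layer by multiplying in a $B_{P,i}$ introduces $X$-factors on the ``over'' and ``under'' links, which need not vacate that layer, so peeling one layer at a time does not manifestly make progress. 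You flag this subtlety and propose falling back to the polynomial formalism; that is a viable route (it is how the paper handles \cref{lem:nondegeneracy} and the alternative proof of \cref{lem:enough-smallfermionloops}), but you would then have to carry the computation out including the truncated boundary terms, not just the bulk stabilizer module, which you have not done.

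For reference, the paper makes the push-to-surface argument terminate by first splitting $O = O_X O_Z$. Commutation with $A_V$ forces $O_Z$ to be a union of closed strings, and these are removed by multiplying in $B_{P,1}$, $B_{P,2}$ at the cost of extra $X$-factors, reducing to $O = O_X$. Commutation with the residual $B_{P,i}$ then reduces to commutation with small $Z$-loops, which forces $O_X$ to be a closed dual membrane; this is eliminated by a 3D bounding-box argument using $A_V$, identical to the 3D toric code and a variant of \cref{lem:enough-smallfermionloops}. The two-stage split decouples $Z$-strings from $X$-membranes and is precisely what makes the peel work; once the residual lands on $z=0$, \cref{lem:enough-smallfermionloops} finishes as you describe.
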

The constant 20 is a sufficiently large but arbitrary constant,
which we do not optimize.
Similarly, the factor of 2 in $L^\star = L/2$ is unimportant.
\begin{proof}
For commuting Pauli Hamiltonians (Pauli stabilizer Hamiltonian),
it is shown~\cite[Lem.~2.1]{BravyiHastingsMichalakis2010stability}
that the LTQO condition is equivalent to the requirement
that any Pauli stabilizer $O$ for the ground state subspace 
that is supported on a disk of radius $ < L^\star$
must be a product of terms of the Hamiltonian supported on an $O(1)$-neighborhood of the disk.
We are going to prove this equivalent condition,
just assuming that $O$ is commuting with all Hamiltonian terms.

Before we present how to find a decomposition of $O$ in terms of the terms of the Hamiltonian,
let us make a geometric observation.
Given a Pauli operator that is a product of $Z_{1,2}$'s,
we imagine a link is `occupied' if there is a nonidentity $Z_{1,2}$ operator acting on the link.
We may refine the picture to say that the link is occupied by a type-1 or type-2 string segment,
depending on whether $Z_1$ acts or $Z_2$ does on the link.
Then, the commutativity with the vertex term demands 
that the occupied links must form a closed string of $Z$'s, separately for each string type.

First, we prove the claim when a Pauli operator $O$ is entirely within the bulk,
i.e., the support of $O$ is distance~5 away from the surface.
The operator $O$ can be written as $O_X O_Z$ up to an unimportant phase factor
where $O_X$ is a product of $X_1,X_2$'s and $O_Z$ is a product of $Z_1,Z_2$'s,
and we know $O$ commutes with all the bulk terms.
Since $O_Z$ has to commute with all the vertex terms,
it is a collection of closed strings.
But, since $O$ is within a ball in the bulk,
and any closed string can be expressed by a product of plaquettes.
Since the $B_{P,i}$ terms have these plaquettes in them,
we can remove the strings at the expense of introducing extra $X_i$ factors
from `over' and `under' links of $B_{P,i}$.
That is, we find $O_X'$ in the vicinity of $O_Z$ such that $O_Z O_X' = \prod B_{P,1} \prod B_{P,2}$
up to an unimportant phase factor.
Therefore, it suffices to prove the claim when $O_Z$ is the identity.
In such a case, the commutativity of $O_X$ with the plaquette terms $B_{P,i}$ is equivalent
to the commutativity of $O_X$ with small loop operators, the product of four $Z_i$'s around a plaquette.

Now the support of $O_X$ cannot have any link that meets a plaquette alone;
it must accompany another perpendicular adjacent link,
and such a pair of links can be removed by multiplying a vertex term.
Formally, we can consider a bounding box of $O_X$,
impose the commutativity condition,
and deform the bounding box by multiplying a vertex term,
eventually to eliminate the bounding box.
(The argument is identical to the 3D toric code's case,
and is a variant of the proof of \cref{lem:enough-smallfermionloops}.)
It is now clear that $O_X$ must be a product of vertex terms.

Second, we prove the claim when a Pauli operator $O$ has a factor near the surface.
The geometric interpretation of $Z$'s as a string segments is valid,
even with the truncated vertex terms at the boundary.
Hence, the reduction in the previous paragraph 
from a general $O$ to $O = O_X$ is still valid.
Now, if $O = O_X$ acts on a link with $z < 0$ by nonidentity,
the argument in the preceding paragraph
allows us to ``push'' $O_X$ to the surface.
Then, we can apply \cref{lem:enough-smallfermionloops}
to conclude that $O$ is indeed a product of the terms of the Hamiltonian
that are near the support of $O$.
\end{proof}

\begin{remark}\label{rem:ltqo}
We can repeat the proof of \cref{lem:ltqo} for a semi-infinite system 
on the half space $z \le 0$ with one boundary at $z = 0$.
The conclusion is that 
if any operator of finite support commutes with all terms of the Hamiltonian $H_{w/bd}$,
then it is a $\CC$-linear combination of products of terms in the Hamiltonian
in the $O(1)$-neighborhood of the operator.
In particular, we remark the following observation for later use.
Consider a two-dimensional slab at the boundary 
(i.e., $-L_z \le z \le 0$ form some $L_z \ge 0$),
and let $\stab$ be the multiplicative group generated by all terms of $H_{w/bd}$ (the stabilizer group).
We define two more associated groups.
Every $g \in S$ is a product $g = g_{in} g_{out}$ 
of two factors $g_{in}$ that is supported inside the slab, and $g_{out}$ that is outside the slab.
This decomposition is ambiguous on scalar phase factors, but that is the only ambiguity,
and we ignore such scalar phase factors.
Define $\stab(L_z) = \{ g \in \stab ~:~ g_{out} = I\}$
to be the subgroup of all $g \in \stab$ such that $g$ is supported on the slab.
Also, define $\stab|_{L_z} = \{ g_{in} ~:~ g \in \stab \}$
to be the multiplicative group of all truncated Pauli operators.
The group $\stab|_{L_z}$ is nonabelian, but includes the abelian group $\stab(L_z)$, and is supported on the slab.
Now, if a Pauli operator $g'$ supported on the slab commutes with every element of $\stab|_{L_z}$,
then it commutes with every element of $\stab$.
Then, the LTQO proof implies that $g'$ belongs to $\stab(L_z)$.
In other words, 
\emph{
$\stab(L_z)$ is the commutant of $\stab|_{L_z}$ within the group of all Pauli operators on the slab.
}
\end{remark}

\subsection{Separators and Clifford QCA}
\label{sec:modH}

In this subsection,
we will find a locally flippable separator whose common eigenstate of eigenvalue $+1$ 
is equal to the ground state of the Walker-Wang model \emph{without} boundary.
The separator that we find \emph{lacks} any vertex term,
but consists of the original plaquette terms $B_{P,1}$ (the right three in the first row of \cref{fig:WW}),
and modified plaquette terms $B'_{P,2}$.
This implies that the state admits a different set of stabilizer generators,
which can be used to define another Hamiltonian that share the same ground state as the original Hamiltonian.

The modified plaquette terms are not simple to draw on a piece of paper,
and is not instructive to present such a drawing;
we have to present it in a compact notation to be explicit.
The natural choice is to use the polynomial framework~\cite{Haah2013}
which we will elaborate on the next section.
In terms of polynomials, $B'_{P,2}$ is given by
\begin{align}
B'_{P,2} = 
\left(
\begin{array}{c|c|c}
 \frac{1}{x^2 y z}+\frac{1}{x y z} & \frac{z}{x}+z+\frac{1}{x y}+\frac{1}{x^2 y} & \frac{y z}{x}+\frac{1}{x^2}+\frac{1}{x^2 y z} \\
 \frac{1}{x y^2 z}+\frac{1}{x y z} & x z+\frac{z}{y}+z+\frac{1}{y}+\frac{1}{x y}+\frac{1}{x y^2}+\frac{1}{x y^2 z} & y z+z+\frac{1}{x}+\frac{1}{x y} \\
 x y+\frac{1}{z}+\frac{1}{x y z} & z+\frac{1}{x y}+\frac{1}{x y z}+1 & z y+y+\frac{1}{x}+\frac{1}{x z} \\
 \frac{1}{x^2 y z}+\frac{1}{x y z} & \frac{z}{x}+z+\frac{1}{x y}+\frac{1}{x^2 y} & \frac{y z}{x}+\frac{1}{x^2}+\frac{1}{x^2 y z} \\
 \frac{1}{x y^2 z}+\frac{1}{x y z} & x z+\frac{z}{y}+z+\frac{1}{y}+\frac{1}{x y}+\frac{1}{x y^2}+\frac{1}{x y^2 z} & y z+z+\frac{1}{x}+\frac{1}{x y} \\
 x y+\frac{1}{z}+\frac{1}{x y z} & z+\frac{1}{x y}+\frac{1}{x y z}+1 & z y+y+\frac{1}{x}+\frac{1}{x z} \\
\hline
 \frac{1}{x y z}+\frac{1}{x z} & \frac{1}{x y z}+\frac{1}{x y} & 0 \\
 \frac{1}{x y z}+\frac{1}{y z} & 0 & \frac{1}{x y z}+\frac{1}{x y} \\
 0 & \frac{1}{x y z}+\frac{1}{y z} & \frac{1}{x y z}+\frac{1}{x z} \\
 y+1 & z+1 & 0 \\
 x+1 & 0 & z+1 \\
 0 & x+1 & y+1 \\
\end{array}
\right).
\end{align}
Briefly, each column represents one operator up to translations in the lattice.
A $j$-th row in the upper half block represents factors of $X$ in an operator at $j$-th qubit within a unit cell.
The ordering of the qubits within a unit cell is
``1''-qubits on the edge along $x$-, $y$-, $z$-axes, and then ``2''-qubits.
A monomial $x^a y^b z^c$ with $a,b,c \in \ZZ$ in the upper half block
represents a nonidentity factor of $X$ at the unit cell of coordinate $(a,b,c)$,
and that in the lower half block represents $Z$, respectively.
Any $Y = i X Z$ factor is represented by repeating a monomial on the both upper and lower half blocks;
e.g., the operator represented by the third column has a factor of $Y_2$ at the edge along $z$-axis.
For further details, see \cref{sec:pauli}.

We claim that this is a locally flippable separator and
that the group generated by our separator is the same as the group generated by terms of our Hamiltonian.
Importantly, there exists a Clifford QCA%
\footnote{
A Clifford QCA is by definition a QCA that maps a Pauli operator to a Pauli operator.
In one dimension, any translation invariant QCA on the system of \emph{one} qubit per site
is known to be a Clifford circuit up to a shift~\cite{clifQCA}.
In fact, it is not difficult to generalize the result of \cite{clifQCA}
to more general one-dimensional translation invariant Clifford QCAs
on a system of $q \ge 1$ qudits of prime dimension $p$ per site,
following the calculation in \cite[Sec.~6]{Haah2013}.
The result is that
\emph{any such one-dimensional Clifford QCA is a Clifford circuit up to a shift.}
}
that disentangles the ground state of the three fermion Walker-Wang model.
The disentangling Clifford QCA is too complicated to write in this page, 
even with the compact polynomial notation.
We present calculation in the accompanying computer algebra script
to establish these claims.
A noteworthy feature of our QCA is that it maps any complex-conjugation-invariant operator 
to a complex-conjugation-invariant operator.
That is, our QCA does not break time-reversal symmetry,
but manifestly disentangles the ground state of the Walker-Wang model.
A reader might want to recall that
this Walker-Wang model with three fermion theory at the boundary
is proposed to represent a symmetry-protected topological phase under the time-reversal symmetry~\cite{BCFV}.

We denote by $\alpha_{WW}$ this disentangling QCA.

Since the separator generates the same group as the terms of $H_{WW}$ in \cref{eq:WWHam},
we may say that the separator defines another local Hamiltonian 
that represents the same quantum phase as $H_{WW}$.
A feature of this new Hamiltonian is that the terms are nonredundant.
\begin{definition}\label{def:nonredundant}
A set of operators on a finite system is {\bf nonredundant} 
if any nonempty product is never proportional to the identity operator.
A translation invariant set of operators on an infinite system is {\bf locally nonredundant}
if it is nonredundant.
A locally nonredundant set may become redundant in a periodic finite system.
\end{definition}
One may then wonder if we could choose a nonredundant boundary terms as well.
We give a partially affirmative answer:
\begin{lemma}\label{lem:nonredudantBd}
Let $r \ge 1$ be a sufficiently large number
such that every term of $H_{WW}$ can be written as a product of the elements of the separator 
within a ball of diameter $r$ around the term.
Then, for any $ t \ge r$,
there exists a locally nonredundant commuting Pauli Hamiltonian $H(t)$ 
(translation invariant in $x,y$-directions)
supported on the slab $(-t \le z \le 0)$ of the semi-infinite system ($z \le 0$)
such that the union of the terms of $H(t)$ and the elements of the separator below the plane of $z = - t + r$
generate the same multiplicative group as the $H_{w/bd}$ of \cref{rem:ltqo}.
\end{lemma}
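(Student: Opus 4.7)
The plan is to obtain $H(t)$ by pruning the natural candidate
\[
H_0 := \{T \in H_{w/bd}:\operatorname{supp}(T)\cap\{-t \le z \le 0\}\ne\emptyset\}.
\]
I first check that $\langle H_0\cup\mathcal B\rangle = \stab := \langle H_{w/bd}\rangle$, where $\mathcal B$ denotes the separator elements below $z = -t + r$. Any term of $H_{w/bd}$ outside $H_0$ is a bulk term supported in $\{z < -t\}$; by the defining property of $r$ it is a product of separator elements in a ball of diameter $r$ about it, and since that ball sits in $\{z < -t + r\}$, every separator element used belongs to $\mathcal B$. The lemma thus reduces to pruning $H_0$ to a locally nonredundant subset (or a suitable modification thereof supported on the slab) while preserving the group generated together with $\mathcal B$.

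I then pass to the polynomial/module formalism of \cref{sec:pauli}. Translation invariance in $x,y$ makes the Pauli algebra on the slab modulo phases into a finitely generated free module $P_{slab}$ over $R = \FF_2[x^{\pm 1},y^{\pm 1}]$, and local nonredundancy of a translation-invariant set of Pauli operators is precisely $R$-linear independence of its translation classes in $P_{slab}$. The task becomes: find an $R$-linearly independent subset $H(t)\subseteq \stab\cap P_{slab}$ whose $R$-span together with $\langle \mathcal B\rangle$ recovers $\stab$.

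I build $H(t)$ in two layers. First I include every separator element whose support lies in the slab but is not strictly below $z = -t + r$ (so it is not in $\mathcal B$); since the separator is locally nonredundant by definition, this part of $H(t)$ is automatically $R$-linearly independent, and together with $\mathcal B$ it captures the bulk stabilizer subgroup up to boundary contributions. Next I process the finitely many translation classes of truncated-bulk and boundary terms of $H_{w/bd}$: at each step I modify the candidate $h$ by a suitable element of $\langle \mathcal B\rangle\cap P_{slab}$ to clear any expressible piece, then test $R$-linear independence of $h$ against the already-accepted classes after tensoring with $\operatorname{Frac}(R)$; I add $h$ if independent, discard it otherwise (in which case $h\in\langle H(t)\cup\mathcal B\rangle$).

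The main obstacle is that the naive test ``$h$ lies in $\langle H(t)\cup\mathcal B\rangle_R$'' does not by itself imply $R$-linear independence: a torsional relation $qh=\sum_j p_j h_j + b$ with $q\in R$ not a unit would be compatible with $h\notin\langle H(t)\cup\mathcal B\rangle_R$ yet would produce a hidden syzygy among the accepted classes. I plan to dispatch this using the commutant description from \cref{rem:ltqo}, which identifies $\stab(L_z)$ with the commutant of $\stab|_{L_z}$ inside the slab Pauli algebra. This rigidity of $\stab\cap P_{slab}$, combined with the freeness of $P_{slab}$ over $R$ and the translation-invariant structure of the separator, forces every such torsional $qh$ to lift to an honest $R$-combination of accepted classes together with an element of $\langle\mathcal B\rangle$, so the greedy procedure terminates with an $R$-linearly independent $H(t)$ satisfying $\langle H(t)\cup\mathcal B\rangle = \stab$.
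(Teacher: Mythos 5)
Your opening reduction is fine and matches the paper's own observation: any term of $H_{w/bd}$ outside the slab is a bulk term below $z=-t$, which by the definition of $r$ is a product of separator elements inside $\{z < -t+r\}$, hence in $\mathcal B$. So the whole content of the lemma is to build a locally nonredundant generating set for the stabilizer group inside the slab, and you correctly identify the slab stabilizer group with the commutant $\stab(L_z)$ of the truncated group $\stab|_{L_z}$ via \cref{rem:ltqo}.

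The gap is exactly where you flag it. You propose a greedy procedure: accept or discard candidate terms based on $R$-linear independence after tensoring with $\operatorname{Frac}(R)$. But a candidate $h$ that is dependent over $\operatorname{Frac}(R)$ need not lie in the $R$-span of the accepted set together with $\langle\mathcal B\rangle$; one only gets $qh$ in that span for some non-unit $q \in R$. Discarding $h$ then loses coverage, while accepting it produces a nontrivial $R$-relation and ruins local nonredundancy. You acknowledge this as ``the main obstacle'' and then claim that ``this rigidity\dots forces every such torsional $qh$ to lift,'' but no argument is given, and the commutant description of \cref{rem:ltqo} does not by itself supply one. Nothing about being a commutant automatically rules out torsion in the quotient module $\stab(L_z)/\langle\text{accepted}\rangle$; this is precisely the nontrivial fact that needs to be proven.

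The paper resolves this with a single structural theorem, \cref{lem:removingRelations2D}: \emph{any} commutant of a translation-invariant Pauli group on a 2D lattice admits a locally nonredundant translation-invariant generating set. Its proof is genuinely deep --- it takes a finite free resolution of the kernel $\ker G^\dagger\lambda_q$, applies the Buchsbaum-Eisenbud exactness criterion to force the determinantal ideal of the second syzygy map $T$ to be unit (because the base ring $\FF_2[x^\pm,y^\pm]$ is two-dimensional), and then invokes Quillen--Suslin--Swan to split $T$ and extract a free basis. That is what kills torsion; it cannot be recovered by a greedy algorithm plus an appeal to ``rigidity.'' Your proposal would become a proof if you replaced the entire two-layer greedy construction with a single citation of \cref{lem:removingRelations2D} applied to $\stab(t) = \stab(L_z)|_{L_z=t}$: that lemma hands you the locally nonredundant translation-invariant generating set $H(t)$ directly, after which your coverage argument finishes the job.
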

Note that this does not imply that the boundary terms form a locally flippable separator
on the boundary.
Also, even though $H(t)$ and the separator are nonredundant on their own, 
there may be redundancy in the overlapping region where $-t \le z \le -t + r$.
\begin{proof}
The claim will be proved by combining \cref{rem:ltqo} and \cref{lem:removingRelations2D}.
Since $S(t)$ is a commutant of $S|_t$ that is a translation-invariant group
on a 2-dimensional lattice of qubits, 
\cref{lem:removingRelations2D} implies that $S(t)$ has a locally nonredundant translation-invariant
generating set. We define $H(t)$ to be the negative sum of all these generators.
Our boundary terms are contained in the slab of thickness 1.
Hence, the subgroup $S(t)$ for $t \ge 1$ as defined in \cref{rem:ltqo}
contains all of our boundary terms.
Also, by the choice of $r$ any term of $H_{WW}$ either belongs to $S(t)$ or
is a product of elements of the separator below the plane with $z = -t +r$.
Therefore, the terms of $H(t)$ and the chosen separator generate the full stabilizer group.
\end{proof}

Given the complicated separator above,
a reader might be curious how we find the separator for this model.
In fact, in \cref{sec:pauli} we develop a more general theory 
in which we give a constructive proof of the existence of a separator
for translation-invariant commuting Pauli Hamiltonians (Pauli stabilizer Hamiltonians).
The separator above is an example of our constructive proof
applied to the three fermion Walker-Wang model.
Within the class of translation-invariant commuting Pauli Hamiltonians,
the sole assumption that guarantees a separator 
is the nondegeneracy of the ground state. See \cref{thm:ticp-disentangler}.
This is a very mild condition, and is certainly a necessary one;
however, a Hamiltonian being composed of commuting Pauli operators is special.

\subsection{Three-fermion theory by a hypothetical commuting Hamiltonian in 2D}

Our next question is whether the found QCA can be a quantum circuit.
We conjecture that the answer is no, even if we allow translations in the lattice and ancillary qudits.
In the next two subsections we present evidences in favor of our conjecture.
Before we begin elaboration, here we note an important fact:
\emph{
If the disentangling QCA $\alpha_{WW}$ were a circuit,
then the three fermion theory can be realized in a two-dimensional lattice
of finite dimensional qudit degrees of freedom with a commuting projector Hamiltonian.
More generally, the same claim holds if the QCA $\alpha_{WW}\otimes \Id$ were a circuit,
where $\Id$ is the identity QCA acting on additional qudit degrees of freedom.}
We will call this hypothetical commuting projector Hamiltonian $H_{3F}$.

To show this fact, of course, we must define what we mean by ``realizing'' the three fermion theory.  
One might imagine various possible ways of ``realizing'' the three fermion theory, 
by constructing different microscopic models whose long distance behavior in some way has the correct anyons,
but it might be quite a difficult task to even define, for an arbitrary microscopic theory, 
what quantum phase it realizes at long distances.  
In the present case, however, we will be able to demonstrate that 
there are only three types of nontrivial topological charges in $H_{3F}$
and they have mutual and self statistics that are identical to those of 
the unitary modular tensor category of three fermions.
In this sense, we say $H_{3F}$ realizes the three fermion theory.

First, we must explain how to construct $H_{3F}$.
Roughly speaking, we consider the Walker-Wang model with boundaries 
and we use the assumption that $\alpha_{WW}$ is a quantum circuit 
to construct another QCA $\beta$ which acts like $\alpha_{WW}$ deep in the bulk 
while acting like the identity near the top boundary.  
Acting with this QCA $\beta$ on the model with boundaries, 
the terms near the top boundary are left unchanged, 
while the bulk becomes disentangled, 
allowing us to consider the theory on the top boundary alone, separate from that of the bottom boundary.
One technical detail is that the $\alpha_{WW}$ is defined using the modified terms (the separator) of \cref{sec:modH},
rather than using the original terms of the Walker-Wang model; 
to embrace this technicality, 
we modify the Walker-Wang model with boundaries to define a Hamiltonian $\tilde H$ 
so that near the boundaries we use the terms of the original Walker-Wang model, 
slightly further from the boundaries we use both original and modified terms, 
and in the bulk we just use the modified terms.  

Let us now describe the procedure in detail.
\begin{lemma}\label{lem:betabulk}
Assume that $\alpha_{WW}\otimes \Id$ is a quantum circuit of depth $O(1)$;
here the factor $\Id$ represents the identity QCA on any added degrees of freedom.
Then there exists $m_0 = O(1)$ such that for any $m \ge m_0$
there exists a QCA $\beta$ with the following properties.

For any operator $O$ supported within distance $m$ of the plane $z=0$ we have $\beta(O)=O$,
while for any operator $O$ supported far from the plane $z=0$ by distance $>2m$ 
we have $\beta(O)=\alpha(O)$.
\end{lemma}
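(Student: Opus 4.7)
The plan is a standard light-cone truncation of the hypothetical circuit for $\alpha_{WW}\otimes\Id$. By hypothesis I may fix constants $d=O(1)$ (depth) and $r=O(1)$ (gate range), and write $\alpha_{WW}\otimes\Id = \alpha_d\circ\cdots\circ\alpha_1$ with each layer of the form $\alpha_a(O)=\prod_{S\in G_a}U_{S,a}^\dagger O U_{S,a}$, the sets $S\in G_a$ being pairwise disjoint of diameter $\le r$ and the $U_{S,a}$ being even-parity unitaries. I will set $m_0 := dr$ (any sufficiently large $O(1)$ multiple would do). For any $m\ge m_0$, define $\beta_a$ to be the layer obtained from $\alpha_a$ by \emph{keeping only} those gates $U_{S,a}$ with $S\subset\{z\le -m-1\}$ and discarding the rest; then set $\beta:=\beta_d\circ\cdots\circ\beta_1$. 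Because the kept gates form a subset of the disjoint, commuting family $G_a$, each $\beta_a$ is again a valid depth-$1$ layer, so $\beta$ is a quantum circuit of depth $d$ with range $dr$, in particular a QCA.

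The two required properties are verified by an induction on the layer index. For the ``identity near the boundary'' clause, suppose $\mathrm{supp}(O)\subset\{z\ge -m\}$. Every kept gate is supported in $\{z\le -m-1\}$, hence disjoint from $\mathrm{supp}(O)$, so $\beta_1(O)=O$; the support is unchanged, so $\beta_2$ again acts trivially, and iterating gives $\beta(O)=O$. For the ``agrees with $\alpha$ deep in the bulk'' clause, suppose $\mathrm{supp}(O)\subset\{z\le -2m-1\}$. I claim by induction on $a$ that
\begin{equation*}
\beta_a\circ\cdots\circ\beta_1(O)=\alpha_a\circ\cdots\circ\alpha_1(O),
\qquad \mathrm{supp}\bigl(\alpha_a\circ\cdots\circ\alpha_1(O)\bigr)\subset\{z\le -2m-1+ar\}.
\end{equation*}
The support bound is immediate from the range bound of $\alpha_a$ together with the inductive hypothesis. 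For the equality, it suffices to show that every discarded gate $U_{S,a}$ of layer $a$ commutes with the intermediate operator $\alpha_{a-1}\circ\cdots\circ\alpha_1(O)$. A discarded gate has some site in $\{z\ge -m\}$, and since $\mathrm{diam}(S)\le r$ we get $S\subset\{z\ge -m-r\}$; this is disjoint from $\{z\le -2m-1+(a-1)r\}$ provided $-m-r>-2m-1+(a-1)r$, i.e.\ $m\ge ar$, which holds for all $a\le d$ because we chose $m\ge dr$.

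The main thing to keep straight is the bookkeeping of the two buffer inequalities: the truncation height must be low enough that no kept gate ever touches an operator supported within distance $m$ of $z=0$, and simultaneously high enough that no discarded gate ever touches an operator whose support, after the first $a-1$ layers of $\alpha$, still lies far below the surface. There is no genuine obstacle because a depth-$d$ circuit with range-$r$ gates spreads operator supports by at most $dr=O(1)$, and the single threshold $z=-m-1$ with $m\ge dr$ absorbs the spread in all $d$ layers at once. The stabilizing identity factor $\Id$ plays no active role in the construction; it is only used to put the hypothesis in the clean form ``$\alpha_{WW}\otimes\Id$ is a circuit of depth $O(1)$''.
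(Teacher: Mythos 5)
Your proof is correct and follows the same truncation strategy as the paper's: discard gates too close to $z=0$ and track operator spread through the layers. You simply carry out the light-cone bookkeeping more explicitly (fixing $m_0 = dr$ and doing the layer-by-layer induction), while the paper abbreviates this with "$2m - O(1) > m$ for $m$ sufficiently large."
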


\begin{proof}
Write $\alpha_{WW} \otimes \Id $ as a quantum circuit
$\alpha=\alpha_d \circ \alpha_{d-1} \circ \cdots \circ \alpha_{1},$
where
each $\alpha_a$ is some QCA that can be written in the form
$\alpha_a(O) = \prod_{S \in G_a} U_{S,a}^\dagger O U_{S,a}$.
By assumption, $d = O(1)$.
Define
$\beta=\beta_d \circ \beta_{d-1} \circ \cdots \circ \beta_{1}$,
where
each $\beta_a$ is some QCA that can be written in the form
$\beta_a(O) = \prod_{S \in \tilde G_a} U_{S,a}^\dagger O U_{S,a}$, 
where $\tilde G_a$ includes only the set of
gates whose support is far from the $z=0$ plane by distance $>m$.

By construction, then, $\beta(O) = O$ 
whenever $O$ is supported within distance $m$ of the plane $z=0$ 
since $\beta$ acts by identity there.
If $O$ is supported on the set of links of distance more than $2m$ from the plane $z=0$,
the gates that determine $\alpha(O)$ are supported 
on links that are far from the $z=0$ plane by distance $2m - O(1)$.
For any sufficiently large $m$, we have $2m - O(1) > m$,
and $\beta$ has all the necessary gates to have $\beta(O) = \alpha(O)$.
\end{proof}

Our construction of the Hamiltonian $H_{3F}$ will use only the properties of $\beta$ in \cref{lem:betabulk}.  So even if $\alpha_{WW}$ is not a circuit but such a $\beta$ can still be constructed, then the properties of $H_{3F}$ will still follow.

To construct $H_{3F}$ by $\beta$ of \cref{lem:betabulk},
we consider a system periodic in the $x,y$ coordinates, but with $z$ coordinate ranging from $0$ to $ - 10 L_z < 0$,
where $L_z$ is larger than 
any diameter of elements of the separator, the ranges of $\beta$ and $\beta^{-1}$,
and the constant $m_0$ of \cref{lem:betabulk}.
We define an intermediate Hamiltonian $\tilde H$ as follows. 
$\tilde H$ includes all terms in the Walker-Wang model with boundary 
which are supported \emph{within} distance $2L_z$ from the top boundary $z=0$.
$\tilde H$ also includes all modified terms of \cref{sec:modH} which are
supported \emph{below} the plane $z = -L_z$.
Here, the ``modified terms'' are equal to $-1$ times the elements of the separator, 
so that the $+1$ eigenstate of the elements of the separator minimizes the energy.
Note that in the middle slab specified by $-2L_z \le z \le -L_z$,
there coexist the original terms of $H_{WW}$ and the elements of the separator.
Finally, if we have added additional degrees of freedom (stabilization),
then $\tilde H$ include a term that fixes the ground state of each such qudit
--- we choose such a term to be the generalized Pauli operator $-Z$ on that qudit.
If the reader objects to a term such as $-Z$ for qudits since that term is non-Hermitian in general, one can choose instead any
term that is diagonal in the $Z$ basis for that degree of freedom, 
with all eigenvalues distinct and such that the ground state has $Z=1$ for that degree of freedom.

We have not specified what to do near the bottom boundary $z \le -9L_z$; an
arbitrary choice can be made there, as we will drop that region anyway.

We then construct the QCA $\beta$ of \cref{lem:betabulk} with $2m = L_z$ 
and consider the Hamiltonian $\beta(\tilde H)$.
(\cref{lem:betabulk} does not specify $\beta$ near the bottom boundary $z \le -9L_z$,
but one can just drop the gates of $\alpha_{WW} \otimes \Id$ near the bottom boundary.)
This Hamiltonian $\beta(\tilde H)$ is a sum of local commuting terms.
Since $\beta(O)=\alpha(O)$ for $O$ supported far from the plane $z=0$ by distance $> 2m = L_z$,
each term of $\tilde H$ below the plane $z = -L_z$ that is an element of the separator of \cref{sec:modH},
is mapped to (up to a sign) a single-qubit Pauli $Z$ operator.
Every generalized Pauli $Z$ term of $\tilde H$ below the $(z=-L_z)$-plane, 
that fixes the ground state of additional qudits, remains unchanged under $\beta$,
since $\alpha \otimes \Id$ acts by the identity there.
So, restricting to the $-1$ eigenspace of these terms in Hamiltonian $\beta(H)$ 
(i.e., the $+1$ eigenspace of the corresponding stabilizer), 
we can remove the corresponding qubit and qudit, 
replacing all occurrences of Pauli $Z$ operators on that qudit (qubit) in $\beta(\tilde H)$ 
with a scalar $\pm 1$.

Furthermore, every qubit or qudit with $z$ coordinate being $-8 L_z \le z \le -2L_z$,
has a term which maps to a Pauli $Z$ operator on that qubit or qudit.
This holds for the original qubit degrees of freedom 
since $\alpha_{WW}$ disentangles the separator for the Walker-Wang model.
For any added qudit, there is a $Z$ term that remains intact under $\beta$ in this region.

\begin{definition}
\label{H3fdef}
We define $H_{3F}$ as
the sum of terms of $\beta( \tilde H)$ that are supported within distance $3L_z$ of the top boundary $z=0$.  The system
for $H_{3F}$ includes only the sites supported within distance $3L_z$ of the top boundary $z=0$.
%
\end{definition}

Hamiltonian $H_{3F}$ can be regarded as a two-dimensional lattice Hamiltonian by ignoring the $z$ coordinates, regarding all sites in the three-dimensional lattice with the same $x,y$ coordinate but different $z$ coordinates as corresponding to the same site in some two-dimensional lattice.  Since the $z$ coordinate
takes only $O(1)$ possible values, if we have stabilized by adding only $O(1)$ qudits on each site, then this
two-dimensional system has only $O(1)$ qudits on each site.

\begin{remark}
This definition of $H_{3F}$ may have some redundancies among the terms, 
i.e., some products of terms may be proportional to the identity operator.
However, if we instead use the terms of \cref{lem:nonredudantBd} to define $H_{3F}$, 
we arrive at a locally nonredundant Hamiltonian 
whose terms generate the same multiplicative group as the present $H_{3F}$.
Indeed, \cref{lem:nonredudantBd} gives an alternative Hamiltonian $H(3L_z)$ near the top boundary 
that is locally nonredundant.
If we use the terms of $H(2L_z)$ in place of the original terms of $H_{WW}$ near the top boundary,
then $\beta(H(3L_z))$ has locally nonredundant terms. Additional terms to $H_{3F}$
are from the separator, but they are all single-qubit or single-qudit $Z$'s after $\beta$.
So, even though these additional terms may give some redundancy,
they are easily removed by setting these single-qubit $Z$ factors in the terms of $\beta(H(3L_z))$ by scalars.
We have chosen to define $H_{3F}$ without reference to \cref{lem:nonredudantBd} for more explicit presentation.
We emphasize that for any finite system, the Hamiltonian will not be nonredundant, since, in particular, the product of all translates of any given small fermion
loop will equal the identity. 
\end{remark}

\begin{remark}
The importance of the multiplicative abelian group $S$ generated by the terms of 
our commuting Hamiltonian is due to the following observation:
Let $A$ be a set of local generators for $S$ and $B$ be another such set.
Then, an obvious Hamiltonian path
\begin{align}
H_\zeta = -(1-\zeta) \sum_{a \in A} a - \zeta \sum_{b \in B} b
\end{align}
is gapped with an invariant ground state subspace throughout the path where $\zeta$ ranges 
from~$0$ to~$1$.
\end{remark}

This motivates the following definition of topological charges for a Hamiltonian on an infinite lattice.
\begin{definition}[\cite{Haah2013}]
An excitation --- a finite set of flipped terms --- of a commuting Pauli Hamiltonian (Pauli stabilizer Hamiltonian)
is {\bf physical} if it can be created from a ground state by a Pauli operator of possibly infinite support.
Two Pauli operators of possibly infinite support can be multiplied,
and this induces abelian group structure on the set of all physical excitations.
A physical excitation has {\bf trivial charge} if it is created by a Pauli operator of finite support.
A {\bf (topological) charge} is an equivalence class of physical excitations modulo trivial charges.
\end{definition}
The definition is narrow, since it cannot handle nonabelian anyons,
but will be sufficient for our purposes.
For a family of finite systems of increasing system size,
we can define a topological charge as follows.
\begin{definition}\label{def:charge}
An excitation $e$ --- a set of $O(1)$ flipped terms --- of a commuting Pauli Hamiltonian
is {\bf physical} if $e$ and possibly some other excitation a distance $\omega(1)$ from $e$
can be created from a ground state by a Pauli operator.
A physical excitation has {\bf trivial charge} if it can be created without creating any other excitation.
A {\bf (topological) charge} is a equivalence class of physical excitations modulo trivial charges.
\end{definition}

We now investigate the topological charges of $H_{3F}$.
Strictly speaking, $H_{3F}$ does \emph{not} consist of Pauli operators,
since $\beta$ does not transform every Pauli operator to a Pauli operator.
However, the definition of topological charges will carry over to $H_{3F}$ via $\beta$;
if desired, we can always pull an operator on the system of $H_{3F}$ 
back to the original Walker-Wang system, examine properties, and push it forward.
The fact that $\beta$ does not preserve the ``cut'' at the plane $z = - 3L_z$
will not affect the analysis of charges;
any Pauli operator $P$ that inserts a topological charge near the top surface
in the original Walker-Wang system with boundary,
can be chosen such that $\beta(P)$ is supported within distance $2L_z$ from the top surface,
because every element of the locally flippable separator represents a trivial charge,
and thus can be ignored from a representative excitation for the charge.

Our goal is to show that any topological charge of $H_{3F}$ 
is inserted to the system by an open fermion string operator.
Of course, we have to define what open fermion string operators are.
\begin{figure}
\centering
\includegraphics[width=0.7\textwidth,trim={0cm 14cm 16cm 0cm},clip]{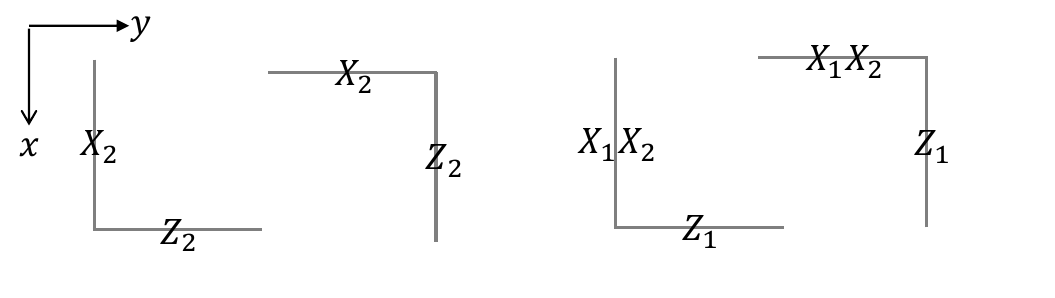}
\caption{Segments of fermion string operators.
We call any product of these as fermion string operators.
Depending on whether the links acted by $Z_j$ are a closed chain or not,
we call a fermion string operator closed or open.
}
\label{fig:open}
\end{figure}
\begin{definition}
A {\bf fermion string operator} is an arbitrary product of operators in \cref{fig:open} on $z=0$ plane.
We associate a {\bf chain} (one-dimensional chain as in cellular homology) of type-$j$ ($j=1,2$)
to a fermion string operator as the set of links on which $Z_j$'s act.
The set of {\bf end points} of a fermion string operator is the set of vertices 
at which there are an odd number of incident links of the associated chain.
A fermion string operator is {\bf closed} if it is free of end points,
and {\bf open} otherwise.
\end{definition}

Our small fermion loop operators in the bottom of \cref{fig:WWsurface}
are the smallest closed fermion string operators.
In fact, any closed fermion string operator whose chain is of null homology
is a product of the small fermion loop operators.
Note that a closed fermion string operator that is supported along the circumference of a disk,
is a product of all small fermion loop operators within the disk.

\begin{lemma}\label{lem:chargesOfH3F}
Let $B$ be a set of points.
Suppose that an operator $O$ commutes with all terms in $H_{3F}$ 
except those terms whose support overlaps with $B$; here we regard $H_{3F}$ as a two-dimensional lattice Hamiltonian so that the points in $B$ are specified by $x,y$ coordinates.
Then, $O$ can be written as
a $\CC$-linear combination of operators, each of which is a product of form $h_{3F} O_f O_B$,
where $h_{3F}$ is a product of terms of $H_{3F}$,
$O_f$ is a fermion string operator whose end points are within distance $O(1)$ from $B$,
and $O_B$ is an operator supported within distance $O(1)$ from $B$.
If $B$ is empty, then $O_f$ must be a closed fermion string operator and $O_B = I$.
\end{lemma}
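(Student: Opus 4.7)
The plan is to pull $O$ back through the QCA $\beta$ to an operator on the original three-fermion Walker-Wang system, apply local topological order (\cref{lem:ltqo} and \cref{rem:ltqo}), and push the resulting decomposition forward. By linearity one may reduce to the case that $\tilde O = \beta^{-1}(O)$ is a Pauli operator, since distinct Pauli operators are linearly independent and have distinct commutation patterns with the Pauli terms of $\tilde H$, so the hypothesis transfers to each Pauli component of $\beta^{-1}(O)$ separately. Since $\beta^{-1}$ has range $O(1)$, $\tilde O$ is supported within distance $O(1)$ of the top slab; and because every term of $H_{3F}$ is the $\beta$-image of a term of $\tilde H$, $\tilde O$ commutes with every term of $\tilde H$ whose $\beta$-image is disjoint from $B$, in particular with all terms of $\tilde H$ supported outside an $O(1)$-neighborhood $B'$ of $B$.

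The next step is to neutralize the excitation pattern of $\tilde O$, which is a finite collection of $\tilde H$ terms that anticommute with $\tilde O$ and that lies inside $B'$. For this one exploits the topological charge structure of the three-fermion Walker-Wang model with boundary: bulk truncated-vertex and truncated-plaquette excitations are confined and creatable by local Pauli operators, while the only nontrivial topological charges allowed on the surface are the three fermions, created at endpoints of the open fermion string operators of \cref{fig:open}. I therefore expect to construct an open fermion string operator $O_f$ with endpoints inside $B'$ and a local Pauli operator $\hat O_B$ supported in $B'$ such that $\tilde O \cdot O_f \cdot \hat O_B$ commutes with every term of $\tilde H$. The construction should proceed by a bounding-box argument in the spirit of the proof of \cref{lem:enough-smallfermionloops}: iteratively eliminate Pauli factors by multiplying with small fermion loops, truncated plaquette and vertex terms, and open fermion-string segments, reducing any residual boundary syndrome to a short neutralizing fermion string $O_f$ and a local $\hat O_B$. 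By \cref{lem:ltqo} and \cref{rem:ltqo}, $\tilde O \cdot O_f \cdot \hat O_B$ is then a product $\prod_j T_j$ of terms of $\tilde H$ within an $O(1)$-neighborhood of its support.

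Finally I push the identity $\tilde O = \bigl(\prod_j T_j\bigr) \cdot O_f^{-1} \cdot \hat O_B^{-1}$ forward through $\beta$. Because $O_f$ and $\hat O_B$ are supported near the $z=0$ plane and $\beta$ is constructed in \cref{lem:betabulk} to act as the identity on operators within distance $m$ of the boundary (choosing $L_z$ larger than this range), we have $\beta(O_f) = O_f$ and $\beta(\hat O_B) = \hat O_B$. Each factor $\beta(T_j)$ is either a term of $H_{3F}$ (if $T_j$ sits near the top) or, for $T_j$ a deep-bulk separator element, a single-qubit $Z$ operator that becomes a scalar $\pm 1$ under the restriction described in \cref{H3fdef}. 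The product of such $\beta(T_j)$ yields the desired $h_{3F}$, so $O = h_{3F} \cdot O_f^{-1} \cdot \hat O_B^{-1}$ up to a sign, which is the stated form. If $B = \emptyset$, then the excitation pattern is empty and the neutralizing short-string part of $O_f$ is absent, so $O_f$ can be chosen closed and $\hat O_B$ trivial.

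The hard part will be the second step: showing that every Pauli excitation pattern near $B$ on the surface of the three-fermion Walker-Wang model is neutralizable by a fermion string with endpoints in $B'$ together with a local correction. Physically this is the statement that the surface carries exactly the three-fermion UMTC and that bulk excitations are confined, but verifying it directly will require a case analysis extending the bounding-rectangle argument of \cref{lem:enough-smallfermionloops} to accommodate a nontrivial boundary syndrome and to track the fermion-string decompositions for the Pauli configurations that arise.
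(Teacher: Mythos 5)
Your approach matches the paper's at the structural level: pull $O$ back through $\beta$ to $\tilde O$ on the original Walker--Wang system with boundary, reduce by linearity to a single Pauli operator, apply the local topological order of $\tilde H$ together with a surface analysis, and push the resulting decomposition forward through $\beta$ using the facts that $\beta$ is the identity near $z=0$ and that the bulk separator elements become single-qubit $Z$'s. The reduction and push-forward steps are handled essentially correctly.

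Where you stop short is exactly where the work of the lemma lives: you defer the claim that $\tilde O$ can be modified by stabilizers, a fermion string $O_f$ with endpoints near $B$, and a local $\hat O_B$ so that the product lies in $\stab$, and you acknowledge this requires a case analysis. In the paper this is carried out in two stages. First, the bulk geometry (as in the proof of \cref{lem:ltqo}) lets one push $\tilde O$ to the surface \emph{away from $B$}: commutativity with vertex terms forces $Z$-factors to form closed strings which are removed by multiplying plaquette terms $B_{P,i}$ (transferring to extra $X$-factors), and commutativity with plaquettes then lets the $X$-factors be moved by vertex terms. Multiplying by these bulk terms of $\tilde H$ is multiplying by elements of $\stab$, so it leaves the equivalence class unchanged; after this step, the part of $\tilde O$ far from $B$ lies on the $z=0$ plane. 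Second, one inspects the truncated boundary terms of \cref{fig:WWsurface}: the loop and dual-loop operators force the far-from-$B$ surface factors into strings of $Z$ and dual strings of $X$, and the ``L''-shaped operators align them into fermion string operators, with endpoints confined to the $O(1)$-neighborhood of $B$ since everywhere else the operator must commute. What remains near $B$ is absorbed into $O_B$. Your ``neutralize the syndrome near $B$'' framing is a slight misidentification of the hard part — the nontrivial content is the characterization of $\tilde O$ \emph{far} from $B$, not absorbing excitations near $B$, which is automatic — but if you run the bounding-box argument as you sketch you would arrive at the same place.

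Two small details that the paper handles and you should not omit: factors of $\tilde O$ on stabilization qudits are removed by single-qudit Paulis, and the single-qubit $Z$ terms in the slab $-3L_z \le z \le -2L_z$ are used to guarantee that $\beta(\tilde O)$ can be taken supported within distance $2L_z$ of the surface, which is needed for the pushed operator to stay inside the system of $H_{3F}$.
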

\begin{proof}
Although $O$ is an operator on the system of $H_{3F}$ (which excludes the deep bulk),
we identify $O$ with $O \otimes I$ that is an operator on the system of $\beta(\tilde H)$ 
(which includes the deep bulk).
The extended operator $O \otimes I$ obviously commutes with every term deep in the bulk.
Write $\beta^{-1}(O \otimes I) = \sum_k c_k P_k$ as a $\CC$-linear combination of Pauli operators $P_k$.
Since every term of $\tilde H$ is a Pauli operator,
it suffices to prove the lemma in the case where the sum $\sum_k c_k P_k$ had only one summand,
i.e., $\beta^{-1}(O \otimes I) = P$ is a Pauli operator,
which we assume for the rest of the proof.
The supposition of the lemma implies that $P$ commutes with all terms of $\tilde H$ except those near $B$, where ``near $B$" means that
the term is supported on some site whose projection to the $z=0$ plane by $(x,y,z) \rightarrow (x,y,0)$ is within distance $O(1)$ of $B$.

If $P$ acts on the qudits that are inserted by stabilization,
then we may immediately forget about it, by modifying $P$ by single-qudit Pauli operators.
Also, since $H_{3F}$ has single-qubit $Z$ terms near the plane $z = -3L_z$,
we may assume that $\beta(P)$ is supported within distance $2L_z$ from the top boundary.
(Any excitation in the region $-3L_z \le z \le -2 L_z$ can be eliminated by a single-qubit $X$'s near $B$,
and any remaining factor of $Z$ can be eliminated by a term of $H_{3F}$ that is a single-qubit~$Z$.)

The important fact is that we can ``push'' $P$ to the top boundary, for the part that is far from $B$.
Recall that a local product of terms of the original Hamiltonian $H_{WW}$ is an element of the separator,
and conversely the elements of the separator locally generate every term of $H_{WW}$.
Thus, the geometric interpretation in the proof of \cref{lem:ltqo} ---
that $Z$'s form strings if it commutes with the vertex terms 
and that $X$'s form a dual cube if factors of $Z$ are absent --- is applicable
in the region where $P$ commutes with terms of $\tilde H$.
As in the proof of \cref{lem:ltqo}, we first push the factors of $Z$'s to the top boundary
from the region where $P$ does not create excitation,
by multiplying plaquette operators of $H_{WW}$,
which in turn is equivalent to multiplying terms of $\tilde H$.
Next, similarly, we can push $X$-factors of $P$ to the top boundary by multiplying terms of $\tilde H$.
Note that on the operator $\beta(P) = O$, 
this amounts to pushing $O$ to the top boundary by the terms of $H_{3F}$ only,
without single-qubit $Z$'s in the deep bulk ($z \le -3L_z$)
--- this is a reason we have defined $H_{3F}$ to include the region $-3L_z \le z  \le -2L_z$
where only single-qubit $Z$'s terms are present.

Now, we are left with a Pauli operator whose factors $F$
that act far from $B$ are strictly on the top surface.
We claim $F$ must form fermion strings with the end points near $B$.
This is by inspection of \cref{fig:WWsurface}:
The factor $F$ must commute with every surface terms.
In particular, the loop and dual loop operators on the first row of \cref{fig:WWsurface}
forces $F$ to be some strings of $Z$ and dual strings of $X$.
The operators on the second row of \cref{fig:WWsurface} 
dictates that these strings must align to become fermion string operators.
Since $\beta$ acts like the identity at the top surface,
we conclude the proof.
\end{proof}

By \cref{lem:chargesOfH3F} we conclude that there are only three topologically nontrivial charges:
The end point of a fermion string operator of type~1, or that of type~2, 
or the end point of the product of the two.
Since an end point depends on the parity of the incident links of 
the associated chain of a fermion string operator,
we may say they are $\ZZ_2$ charges.
They deserve the name ``fermion'' because of the following definition of self-statistics.
\begin{definition}[\cite{LevinWen2003Fermions}]\label{def:topologicalSpin}
Suppose a topological charge $a$ under \cref{def:charge}
is created at the end point of a string operator 
--- an operator supported on an $O(1)$-neighborhood 
of a path of length $\omega(1)$ without self-intersection.
We define the {\bf topological spin} $\theta_a$ of $a$ 
as the phase factor in the commutation relation
\begin{align}
t_{1} t_{2}^\dagger t_{3} = \theta_a t_{3} t_{2}^\dagger t_{1}.
\end{align}
where $t_j$ ($j=1,2,3$) 
is a string operator that moves a charge from a point $p_j$ to a common point $p_0$
where the three string operators are arranged counterclockwise around $p_0$.
\end{definition}
The definition is applicable only for charges with string operators;
e.g., we cannot apply it for charges in the cubic code model of~\cite{Haah_2011}.
Note also that we are relying on the fact that $t_j$'s obey commutation relation up to a phase factor,
which may not at all be true for general Levin-Wen models~\cite{Levin_2005},
but is true at least if the string operators are (generalized) Pauli operators.
One can easily see that the precise choices of hopping operators $t_j$ are not important;
a string operator may be modified by multiplying terms of the Hamitonian,
but as long as the modifying terms are far from points $p_1,P_2,p_3$
the commutation relation among the string operators remains unchanged.

The standard terminology is that $a$ is fermion if $\theta_a = -1$, or boson if $\theta_a = +1$.
The trivial topological charge ``$1$'' always has $\theta_1 = 1$,
but we generally refer to the trivial topological charge as the vacuum, rather than as a boson.
We leave it to the reader to compute the topological spins of the charges of $H_{3F}$
from the fermion string operators, to show that there are only fermions other than the vacuum;
see~\cite{BCFV}.

\subsection{The disentangling QCA is not a Clifford Circuit}
\label{ntcliff}

In this subsection, we prove the following:
\begin{theorem}
The disentangling Clifford QCA $\alpha_{WW}$ of \cref{sec:modH}
is not a product of a Clifford circuit (of $O(1)$ depth) and a shift.
\end{theorem}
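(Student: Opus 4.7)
The plan is to proceed by contradiction. Suppose that, after some choice of stabilization, $\alpha_{WW}\otimes\Id$ can be written as $C\circ S$ where $C$ is a Clifford circuit of depth $O(1)$ and $S$ is a shift. Since every Clifford gate and every shift maps a Pauli operator to a Pauli operator (up to a sign), so does this composition. First I would observe that the truncation construction in the proof of \cref{lem:betabulk} is completely insensitive to the type of gates used: keeping only those Clifford gates of $C$ supported more than $m$ away from the plane $z=0$, together with a similarly truncated shift, produces a Clifford QCA $\beta$ which still sends Pauli to Pauli, agrees with $\Id$ within distance $m$ of the top boundary, and agrees with $\alpha_{WW}\otimes\Id$ beyond distance $2m$. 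Because $\tilde H$ is a Pauli stabilizer Hamiltonian and $\beta$ is Pauli-preserving, $\beta(\tilde H)$ is again a Pauli stabilizer Hamiltonian, and hence the two-dimensional model $H_{3F}$ of \cref{H3fdef} is a commuting Pauli Hamiltonian on a 2D lattice of qudits with $O(1)$ local Hilbert space dimension.

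Next I would invoke \cref{lem:chargesOfH3F} and the topological-spin computation at the end of the previous subsection. These together show that the only nontrivial topological charges of $H_{3F}$, in the sense of \cref{def:charge}, are the end points of type-$1$, type-$2$, and combined fermion string operators, and that all three have topological spin $\theta_a=-1$. In particular, $H_{3F}$ is topologically ordered (its fusion group is $\ZZ_2\times\ZZ_2$) and hosts \emph{no} nontrivial boson.

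The desired contradiction will come from the structural result to be proved in \cref{sec:pauli}: every two-dimensional translation-invariant Pauli stabilizer Hamiltonian that is topologically ordered must contain a nontrivial boson (an open string operator whose end point is a nontrivial charge with topological spin $+1$). I expect the main obstacle to be reconciling this translation-invariant statement with the fact that our $H_{3F}$ need not be translation invariant, since the Clifford circuit decomposition $C$ and the truncation defining $\beta$ need not respect lattice translations. The plan to remove this obstruction is to exploit the fact that $\tilde H$, and hence the \emph{underlying} stabilizer group $\stab$ prior to the action of $\beta$, is translation invariant in the $x,y$ directions; one can therefore, via the torus trick~\cite{hastings2013classifying} and the polynomial framework of \cref{sec:pauli}, replace $H_{3F}$ by a translation-invariant Pauli stabilizer Hamiltonian that agrees with $H_{3F}$ on a large disk and possesses the same set of topological charges and spins. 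Applying the boson-existence theorem to this translation-invariant replacement then produces a nontrivial boson in the three-fermion theory, which is the desired contradiction.
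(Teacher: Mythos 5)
Your proof is correct in its core structure and mirrors the paper's own argument: assume $\alpha_{WW}\otimes\Id$ lies in $\Circshift$ after stabilization, observe that $H_{3F}$ is then a commuting Pauli Hamiltonian whose only nontrivial charges are fermions (by \cref{lem:chargesOfH3F}), and derive a contradiction with the statement that any topologically ordered two-dimensional translation-invariant commuting Pauli Hamiltonian must host a nontrivial boson. Two small remarks. First, the boson-existence result is \cref{cor:2dBoson} in the same subsection \cref{ntcliff}, not in \cref{sec:pauli} as you write; it does rely on \cref{thm:maximal-commutative-algebra} from \cref{sec:pauli}, which is likely the source of the misattribution. Second, and more substantively, your handling of translation invariance diverges from the paper's. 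You propose to replace $H_{3F}$ by a translation-invariant model agreeing with it on a large disk via the torus trick; this would require a separate verification that the replacement inherits the same topological charges and spins, which is not automatic. The paper instead picks a finite instance on a periodic torus and re-reads that same finite Hamiltonian as the unit cell of an infinite lattice Hamiltonian with a coarser translation group — since this is literally the same Hamiltonian with a larger period, the charge and spin structure is manifestly preserved, sidestepping the extra verification. On the shift part of the decomposition: rather than truncating the shift (which is delicate near a boundary, since a truncation of a permutation need not remain a bijection), it is cleaner to note, as in \cref{separatorsection}, that a qudit shift fixes the trivial separator as a set; moving the shift to one side using the normality of $\Circuit$ in $\All$ then lets you absorb it and assume without loss of generality that $\alpha_{WW}\otimes\Id$ is a Clifford circuit with no shift factor.
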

\begin{proof}
The conclusion from the previous subsection is that $H_{3F}$ contains only fermions other than vacuum.
However, any translation-invariant two-dimensional commuting Pauli Hamiltonian 
that satisfies the local topological order condition has always a boson by \cref{cor:2dBoson} below.
Since a Clifford circuit maps a Pauli operator to a Pauli operator,
these two facts are contradictory \emph{if} $\alpha_{WW}$ consisted of Clifford gates.

Here, the hypothetical Clifford circuit may not be translation-invariant.
However, if we consider a family of Clifford circuits on finite systems with periodic boundary conditions,
then we can pick an instance in the family, and using the periodic boundary conditions
promote the instance to define another family that has coarser translation symmetry.
\end{proof}

The statement we will prove is suggested in~\cite{Bombin2011Structure},
where Bombin has demonstrated how to extract copies of toric codes
from a translation-invariant commuting Pauli Hamiltonians (Pauli stabilizer Hamiltonians) 
on two-dimensional lattice of qubits.
An obstruction was identified in such extraction, and it was termed ``chirality.''
This chirality is revealed only after one examines all topological charges of the model,
and Ref.~\cite{Bombin2011Structure} suggests that it is the same notion as
in Appendix~D of~\cite{Kitaev_2005}.
The trouble is that the two notions have different definitions,
between which rigorous connection has not been established.

Our new ingredient here is, roughly speaking, to show 
that the one-dimensional boundary of a two-dimensional commuting Pauli Hamiltonian can be gapped out.
We state our result only for qubit systems,
but a parallel argument will prove an analogue for systems of prime $p$-dimensional qudits.
We are unaware of any previous result that constructs gappable boundaries generally,
though the boundaries for the toric code is understood very well.

Then, we combine this result with an idea of Levin~\cite{Levin2013}
that a gapped boundary implies that there is a boson in the bulk.
We continue to use \cref{def:topologicalSpin} whenever we refer to a fermion, a boson, or topological spin.

To avoid lengthy phrases,
we adopt the terminology and convention of~\cite{Haah2013}:
An {\bf exact code Hamiltonian} is a locally topologically ordered
frustration-free translation-invariant local Hamiltonian 
on an {\em infinite} system of qubits without boundary that consists of commuting products of Pauli matrices.
If a finite system or a boundary is needed, we will explicitly mention it.

\begin{lemma}[Thm.~4 in Sec.~7 of~\cite{Haah2013}]\label{lem:2dbulk}
For any exact code Hamiltonian in two dimensions,
there exists another exact code Hamiltonian $H$ that is locally nonredundant
and whose terms generate the same multiplicative group.
Any term $h$ of $H$ can be flipped alone by a local Pauli operator,
or there are two terms $h_x$ and $h_y$ that are translates of $h$ along $x$- and $y$-directions,
respectively,
such that the pair $h,h_x$ are simultaneously flipped by a local Pauli operator,
and so are the pair $h,h_y$.
\end{lemma}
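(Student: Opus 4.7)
The natural framework for this result is the polynomial method of \cite{Haah2013}, in which a translation-invariant Pauli stabilizer Hamiltonian on the two-dimensional qubit lattice is encoded by a matrix $\sigma$ over the Laurent polynomial ring $R = \FF_2[x^{\pm 1}, y^{\pm 1}]$: the columns of $\sigma$ list the stabilizer generators per unit cell, commutation of terms becomes a symplectic identity on $\sigma$, and the excitation pattern created by a local Pauli operator $P$ is obtained by pairing $P$ (itself a column vector over $R$) against $\sigma$ with a fixed antisymmetric form. The ring $R$ is a two-dimensional regular Noetherian unique factorization domain, so its module theory is very tractable, which is what will do the real work.

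For the first assertion, the plan is to apply the locally-nonredundant-generator construction already invoked in \cref{lem:removingRelations2D} (and used in the proof of \cref{lem:nonredudantBd}) to the stabilizer group of the given exact code Hamiltonian. A Smith-normal-form / free-resolution argument over $R$ produces a generating set of the stabilizer group with no nontrivial local syzygies; since the stabilizer group itself is unchanged, LTQO and frustration-freeness are preserved, so the resulting Hamiltonian $H$ remains an exact code Hamiltonian.

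For the flipping structure, I would argue column by column of the new $\sigma$. Fix a term $h$, and consider the $\FF_2$-linear map $\Phi_h$ sending local Pauli operators supported near $h$ to the space of local excitation patterns. If the indicator of $h$ lies in the image of $\Phi_h$, then $h$ flips alone and we are done. Otherwise, the image of $\Phi_h$ near $h$ records the pairs of terms that can be flipped together. The crucial point is that the maximal ideal of $R$ at the origin is generated by $x-1$ and $y-1$, so the minimal nontrivial syzygies on any single-site excitation come from multiplication by one of these two generators; applied to $h$, these yield the simultaneous flipping of $h$ with its nearest translate $h_x$ along $x$ and with its nearest translate $h_y$ along $y$, respectively. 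LTQO is what guarantees that these algebraic relations are actually realized by genuinely local (bounded-diameter) Pauli operators, so a small neighborhood of $h$ already supports both flipping moves.

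The main obstacle I foresee is ensuring that \emph{both} axis translates are simultaneously available whenever single-term flipping fails. A priori, one might imagine a term whose nontrivial flipping partners lie only in the $x$-direction (or only along some skew direction). Ruling this out requires using LTQO together with the two-dimensionality of $R$: a model supporting only a one-parameter family of flipping moves around $h$ would effectively possess a line-like conserved quantity through $h$, which is incompatible with the bounded-range LTQO constraint on $\sigma$. Turning this intuition into a clean statement—most naturally via the primary decomposition of the annihilator of the coset of $h$ in the excitation module, or equivalently via a depth/Krull-dimension count in $R$—is the technical heart of the argument, and is where I expect the proof in \cite{Haah2013} to invest its work.
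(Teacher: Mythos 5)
This lemma is not proved in the paper; it is imported as Theorem~4 in Section~7 of Ref.~\cite{Haah2013}, so there is no ``paper's approach'' to compare against. What you have produced is a proof \emph{sketch}, and you are candid that the central step is missing, so let me focus on where the gap actually lies.

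Your treatment of the first assertion is essentially fine: apply \cref{lem:removingRelations2D} to the stabilizer group, which (by exactness, $\ker\sigma^\dagger\lambda_q = \im\sigma$) is its own commutant inside the Pauli group, and note that replacing the generating set does not change the group and hence preserves LTQO and frustration-freeness. That is the right route.

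The gap is in the second assertion, and it is not merely a matter of ``investing more work'' — the justification you offer would not go through as written. Passing to the polynomial picture, $h$ flips alone iff $e_h \in \im\sigma^\dagger\lambda_q$, i.e.\ the class $[e_h]$ vanishes in $Q=\coker\sigma^\dagger\lambda_q$; the pairs $(h,h_x)$ and $(h,h_y)$ flip iff $(x^a-1)[e_h]=0$ and $(y^b-1)[e_h]=0$ for some $a,b\geq 1$. What the exactness and the height bound on the determinantal ideal give is that $\ann Q$ has height $\geq 2$, so $Q$ is supported on a finite set of closed points. That does \emph{not} immediately put $x^a-1$ or $y^b-1$ into $\ann[e_h]$ for any $a,b$. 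First, the support could a priori lie at points whose coordinates are not roots of unity, in which case no $x^a-1$ ever annihilates. Second, even after the spontaneous-symmetry-breaking step that confines the support to $(1,1)$, knowing $\ann[e_h]$ is $(x-1,y-1)$-primary does not yield $x-1,y-1\in\ann[e_h]$ (compare $I=((x-1)^2,y-1)$); you then need a characteristic-$p$ trick --- $(x-1)^{p^k}=x^{p^k}-1$ --- to pull a pure power of $x-1$ out of the primary ideal and pass to a coarser translation group. Your statement that ``minimal nontrivial syzygies come from multiplication by $x-1$ or $y-1$'' conflates syzygies with annihilator elements and skips both of these hurdles, and the ``line-like conserved quantity contradicts LTQO'' intuition, while pointing in the right direction, is not yet an argument. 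These are exactly the steps that Haah's Theorem~4 carries out carefully, so the plan is plausible but the proposal as written stops short of being a proof.
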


This means that \emph{every topological charge is attached to a string operator},
and we may speak of its topological spin.
The local nonredundancy is not important and will not be used for the rest of this subsection,
but is included for clarity of the presentation of the statement.
This is a special result tailored to two dimensions.

\begin{lemma}\label{lem:2dbd-ltqo}
For any exact code Hamiltonian $H$ in two dimensions,
let $H_{bulk}$ be the sum of all terms of $H$ that is supported on the half-plane $y \le 0$.
Then, there exists a local Hamiltonian $H_{bd}$ within distance $O(1)$ of the boundary $y=0$
that is translation-invariant along the boundary 
(where the translation group of $H_{bd}$ may be smaller, i.e., coarser,
than that of $H$ along $x$-direction)
such that $H_{w/bd} = H_{bulk} + H_{bd}$ is locally topologically ordered
and consists of commuting Pauli operators.
\end{lemma}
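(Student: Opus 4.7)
The plan is to construct $H_{bd}$ by completing the truncated bulk terms near $y=0$ to a maximally commuting set of local Pauli operators, using the one-dimensional result on locally generated maximal commutative subalgebras mentioned in \cref{sec:pauli}. The underlying picture is that once we cut the 2D Hamiltonian along $y=0$, the truncated bulk terms live in a strip of thickness $O(1)$ near the boundary, and this strip is effectively a 1D translation-invariant Pauli system along the $x$-direction. Whatever is missing to make the truncated strip a bona fide commuting Pauli Hamiltonian with LTQO can be supplied by a translation-invariant local boundary Hamiltonian, provided we can always extend a set of local commuting Pauli operators in 1D to a maximal one.

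More concretely, I would first fix a thickness $t = O(1)$ larger than the diameter of any term of $H$, and consider the algebra $\mathcal{A}_t$ of Pauli operators supported on the strip $-t \le y \le 0$, translation-invariant in $x$. Let $\mathcal{G}_{bulk}$ be the subgroup generated by truncations to $y \le 0$ of those terms of $H$ whose support crosses $y=0$, together with all terms of $H$ fully contained in the strip. The elements of $\mathcal{G}_{bulk}$ need not commute, since truncation can break commutativity of pairs of terms that originally straddled $y=0$. Let $\mathcal{C}$ denote the centralizer of $\mathcal{G}_{bulk}$ inside $\mathcal{A}_t$; this is an abelian Pauli group that contains the abelian subgroup $\mathcal{G}_{bulk} \cap \mathcal{C}$ of truncated terms that did happen to commute. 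Viewing $\mathcal{C}$ as a translation-invariant Pauli algebra in one dimension (with a thick unit cell of size $O(t)$), I would apply the 1D theorem on locally generated maximal commutative subalgebras to extend the abelian subgroup generated by already-commuting truncations to a locally-generated, translation-invariant, maximal commutative subgroup $\mathcal{M} \subset \mathcal{C}$. Define $H_{bd}$ as minus the sum of a local translation-invariant generating set for $\mathcal{M}$ modulo the bulk truncations already present.

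To verify that $H_{w/bd} = H_{bulk} + H_{bd}$ has LTQO I would invoke the Bravyi--Hastings--Michalakis criterion for commuting Pauli Hamiltonians, which reduces LTQO to showing that any Pauli stabilizer of the ground state supported on a small disk is a product of Hamiltonian terms in an $O(1)$-neighborhood. For disks entirely in the bulk this follows from LTQO of the original $H$. For disks meeting the boundary I would argue in two steps, in the spirit of the proof of \cref{lem:enough-smallfermionloops} and \cref{rem:ltqo}: first, commutation with the truncated bulk terms plus those $H$-terms just above the cut forces any such stabilizer, after multiplication by bulk terms, into $\mathcal{C}$; second, maximality of $\mathcal{M}$ inside $\mathcal{C}$ together with local generation forces the remainder to be a local product of generators of $\mathcal{M}$, i.e., of terms of $H_{bd}$. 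The main obstacle is the maximality-plus-locality step: the 1D theorem guarantees a locally generated maximal commutative subalgebra of the full centralizer, but I must make sure the extension respects the translation symmetry along $x$ \emph{and} does not require generators that overhang into $y>0$, which requires a careful set-up of the 1D system so that "local" in the 1D sense really means "within $O(1)$ of $y=0$" in the 2D sense.
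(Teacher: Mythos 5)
Your overall strategy -- cut a strip near $y=0$, apply the 1D maximal-commutative-algebra theorem to define $H_{bd}$, and verify LTQO by pushing boundary stabilizers into the strip -- is the same as the paper's (which uses \cref{thm:maximal-commutative-algebra} with $G$ the group of factors-in-the-strip of bulk stabilizers). However, the setup has slips: the commutant $\mathcal{C}$ of $\mathcal{G}_{bulk}$ in the strip algebra is \emph{not} abelian in general, which is exactly why the 1D theorem is needed in the first place; and ``truncations to $y\le 0$'' cuts at the wrong place -- the truncation that matters is at the \emph{inner} edge of the strip (say $y=-3L_y$), since boundary terms must commute with the in-strip factors of bulk stabilizers that straddle that line, not with things living above $y=0$.

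The more serious gap is in the LTQO verification. A putative stabilizer $P$ near the boundary commutes with all of $H_{w/bd}$, but these terms live only on $y\le 0$, so when you view $P$ as an operator on the full infinite plane it need \emph{not} commute with the terms of $H$ that straddle $y=0$; it generically creates excitations of $H$ in a neighborhood of the cut. Multiplying by bulk terms alone cannot remove these excitations, so ``forces any such stabilizer, after multiplication by bulk terms, into $\mathcal{C}$'' does not go through. The paper fills this hole with a correction step: it uses the fact that in 2D every topological charge of an exact code Hamiltonian is attached to a string operator (\cref{lem:2dbulk}), which implies that any excitation supported on a rectangle that is created by \emph{some} finite operator is in fact created by an operator within $O(1)$ of that rectangle. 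This gives a local operator $Q$ near the cut with $PQ$ commuting with all of $H$, after which LTQO of $H$ can be invoked; only then are deep-bulk terms $T$ peeled off to produce an operator $PT$ supported near the boundary, commuting with both $G$ and the chosen maximal commutative subgroup, and hence (by part (ii) of \cref{thm:maximal-commutative-algebra}) belonging to it. Without the $Q$ step, the pushing argument does not reduce to a statement inside the strip, and the lemma is not established.
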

The construction of $H_{w/bd}$ is very similar to that of $\tilde H$ in the previous subsection,
except that we have to find a boundary Hamiltonian.
\begin{proof}
Since $H$ obeys the LTQO,
we know any stabilizer (a product of terms of the Hamiltonian) on a rectangle
is a product of terms within $r$-neighborhood of the rectangle for some $r = O(1)$.
In fact, an analogous fact for excitations is also true:
\emph{
Any excitation of $H$ on a rectangle, 
which is created by a finitely supported operator that may or may not be supported on the rectangle,
is created by an operator within $r'$-neighborhood of the rectangle for some $r' = O(1)$.
}
This can be understood from \cref{lem:2dbulk}
since any charge is moved by string operators along any direction.
Let $L_y$ be a positive constant that is larger than $r$, $r'$, and the interaction range of $H$.

As in \cref{rem:ltqo}, consider the multiplicative group $S_{bulk}$
of all terms of $H_{bulk}$, and define $S|_{3L_y}$ as the group of factors of $s \in S_{bulk}$
that lie in the strip $-3L_y \le y \le 0$.
The group $S|_{3L_y}$ is translation invariant along $x$-direction.
Setting $G = S|_{3L_y}$ in \cref{thm:maximal-commutative-algebra},
we obtain a subgroup $S$ of the commutant of $G$
such that (i) $S$ admits a generating set that is translation-invariant along $x$-direction 
(with possible spontaneous translation symmetry breaking), and
(ii) if a finitely supported Pauli operator $P$ commutes with everything in $G$ and $S$, then $P$ belongs to $S$.
We declare the sum of operators in the generating set of $S$ to be $H_{bd}$.

We have to show that $H_{w/bd} = H_{bulk} + H_{bd}$ obeys the LTQO.
Let $P$ be any Pauli operator on a finite rectangle that commutes with every term of $H_{w/bd}$.

First, if $P$ is supported far from the boundary by distance $> L_y$,
then $P$ commutes with every term of $H$,
and by the LTQO of $H$, we know $P$ can be written as a product of terms
in the half-plane $y \le 0$.
Since $H_{w/bd}$ includes all such terms, $P$ is a product of terms of $H_{w/bd}$.

Second, if the support of $P$ overlaps the strip $-L_y \le y \le 0$,
then we temporarily regard $P$ as an operator in the system of $H$ on the full infinite plane.
By assumption, $P$ commutes with every term of $H$ in the lower half-plane,
but $P$ may not commute with some terms of $H$ that are not in the lower half-plane.
So, some excitation may be created by $P$ from a ground state of $H$,
but the excitation must reside in the strip $-L_y \le y \le + L_y$.
By the remark in the first part of this proof,
we know there exists a Pauli operator $Q$ within the strip $-2L_y \le y \le 2L_y$
such that $PQ$ commutes with every term of $H$.
The product $PQ$ is supported on the rectangle $A$
that is enlarged from the support of $P$ by a rectangle in the strip $-2L_y \le y \le 2L_y$.
Then, by the LTQO of $H$, $PQ$ is a product of terms on the $L_y$-neighborhood of $A$.
Collect the terms that participate in $PQ$ but are supported on the region $y < -2L_y$;
call the product of these as $T$. Note that $T$ does not overlap with $Q$.
The role of $T$ is to ``push'' $P$ to the boundary.
Indeed, $PQT$ is supported on the region $ y \ge -3L_y$, so is $PT$.

Now we go back to the system of $H_{w/bd}$ with the operator $PT$ 
which commutes with every term of $H_{w/bd}$ and is supported on the region $y \ge -3L_y$.
In particular, $PT$ commutes with $G$ and $S$,
and therefore, $PT$ must be a product of terms of $H_{bd}$.
\end{proof}

A similar argument can be repeated to a bottom boundary at, say $y = -10 L_y$.
The Hamiltonian with LTQO on a system with two boundaries at $y = 0, -10L_y$ is our object in the next lemma.
Let us say that a boundary of an exact code Hamiltonian $H$ is {\bf gapped out}
by a commuting Pauli Hamitonian $H_{bd}$ near the boundary
if $H_{w/bd}$ that includes all terms of $H$ on one side of the boundary and all terms of $H_{bd}$,
obey the local topological order condition.

\begin{lemma}\label{lem:2dChargeAbsorb}
Suppose that $H$ is a two-dimensional exact code Hamiltonian that admits a nontrivial topological charge.
Then, in a quasi-1D system obtained by gapping out two parallel boundaries $y = 0$ and $y = -10L_y$
for any sufficiently large $L_y$,
an excitation that lies in the region $-6L_y \le y \le -4 L_y$ 
and represents some nontrivial topological charge of $H$,
is created by a finite Pauli operator.
\end{lemma}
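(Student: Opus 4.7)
My plan is to exhibit the desired finite Pauli operator as a bulk string operator for the charge $a$ whose distant endpoint is absorbed into one of the gapped boundaries.

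First, applying \cref{lem:2dbulk} to the infinite-plane Hamiltonian $H$, I obtain a string operator $S$ for $a$: a Pauli operator that, acting on a ground state of $H$, creates $a$ at one endpoint and its conjugate at the other. I route $S$ so that one endpoint is the prescribed point $p_1$ in the region $-6L_y \le y \le -4L_y$, the other endpoint $q$ lies within distance $L_y$ of the bottom boundary $y = -10L_y$, and the support of $S$ is contained in the strip. Viewed as an operator on the quasi-1D system $H_{w/bd}$, $S$ creates the target $a$-excitation at $p_1$ together with some configuration $E_q$ of excited $H_{w/bd}$-terms supported within an $O(L_y)$-neighborhood of $q$; in general $E_q$ may involve both truncated bulk terms and terms of $H_{bd}$.

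The proof then reduces to producing a Pauli operator $T$ supported in an $O(L_y)$-neighborhood of $q$ that from a ground state of $H_{w/bd}$ also creates the configuration $E_q$: given such a $T$, the product $S T^{-1}$ has finite support and creates only the target $a$-excitation at $p_1$, so it is the finite Pauli operator required by the lemma. I produce $T$ using the local topological order of $H_{w/bd}$ proved in \cref{lem:2dbd-ltqo}. The key observation is that the configuration $E_q$, being localized in an $O(L_y)$-neighborhood of $q$ and representing a physically realizable excitation pattern of $H_{w/bd}$, is locally realizable: for an LTQO commuting Pauli Hamiltonian, any excitation pattern supported in a disk $D$ that is realized by some, possibly global, Pauli operator is also realized by a Pauli operator supported in a concentric enlarged disk $D^+$ of radius larger by the LTQO length scale. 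Applying this principle to $E_q$ with a disk around $q$, provided $L_y$ is larger than the LTQO length scale, yields the desired local $T$.

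The main obstacle is the rigorous derivation of the local realizability principle from the LTQO of \cref{lem:2dbd-ltqo}. I would prove it by a push-down argument: given a global Pauli $P$ creating only the configuration $E_q$ inside a disk $D$, factor $P = P_{D^+} \cdot P_{(D^+)^c}$ into its restrictions to $D^+$ and to its complement. Because $P$ creates no excitations outside $D$ and $P_{(D^+)^c}$ is supported outside $D^+$, the outer factor $P_{(D^+)^c}$ can only create excitations on $H_{w/bd}$-terms whose support crosses $\partial D^+$, and it commutes with every $H_{w/bd}$-term whose support lies strictly outside $\partial D^+$. LTQO applied to successive annuli concentric with $D^+$ then expresses $P_{(D^+)^c}$ as a product of local $H_{w/bd}$-terms near $\partial D^+$ times a further operator that commutes with all nearby $H_{w/bd}$-terms and can be iteratively eliminated outward. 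Absorbing the resulting products of $H_{w/bd}$-terms into $P_{D^+}$ produces a Pauli operator supported in $D^+$ with the same excitation pattern $E_q$, which is the required $T$.
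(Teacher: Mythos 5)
Your ``local realizability principle'' is the crux, and it is false as stated. The claim that for an LTQO commuting Pauli Hamiltonian, any excitation pattern in a disk $D$ realized by some possibly global Pauli is also realized by a Pauli supported in a concentric slightly-enlarged disk $D^+$, is exactly the assertion that the pattern carries trivial topological charge. LTQO does not imply absence of nontrivial charges: the 2D toric code satisfies LTQO yet a single plaquette defect in a disk is realized by a semi-infinite string but by no finitely-supported Pauli. So this ``key observation'' is not a consequence of \cref{lem:2dbd-ltqo}; it is essentially equivalent to what \cref{lem:2dChargeAbsorb} is trying to establish, and assuming it makes the argument circular.

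The push-down argument does not rescue this. After writing $P = P_{D^+} P_{(D^+)^c}$ you conclude correctly that $P_{(D^+)^c}$ commutes with every term strictly outside $D^+$ and possibly anticommutes with terms crossing $\partial D^+$. But LTQO only governs operators that commute with \emph{all} terms in a region; it says nothing about an operator that creates excitations on $\partial D^+$. In particular if $P_{(D^+)^c}$ is itself a semi-infinite string creating a nontrivial charge near $\partial D^+$, there is no iterative elimination: multiplying by stabilizers or local terms can move but cannot destroy that charge. You have also not addressed whether the boundary algebra (the $H_{bd}$-terms) makes $E_q$ physically realizable with no other excitation at all --- your $S$ leaves the target charge $a$ at $p_1$ alive, and capping it off requires precisely the kind of local operator whose existence you are trying to prove.

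What the paper uses, and what your approach lacks, is genuinely one-dimensional structure. If the quasi-1D system (periodic in $x$) has degenerate ground space, the Bravyi--Terhal argument produces a nontrivial logical Pauli supported on a thin vertical strip; splitting it into a low and a high half, one of the halves must create a nontrivial bulk charge (else both halves would be stabilizers by LTQO, contradicting nontriviality), and that half is the required finite operator. If the ground space is nondegenerate, the classification of translation-invariant 1D commuting Pauli Hamiltonians shows the quasi-1D system is, up to a local Clifford, the frozen-qubit Hamiltonian (Ising is excluded by nondegeneracy, non-interacting qubits by LTQO), hence has no nontrivial charges at all, so the bulk charge is finitely created. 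Neither step follows from LTQO alone, which is precisely the point at which your argument breaks down.
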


That is, some nontrivial charge is ``absorbed'' by the gapped boundary.

\begin{proof}
We choose $L_y$ as in \cref{lem:2dbd-ltqo},
so that the resulting quasi-1D Hamiltonian $H_1$ with two boundaries obeys the LTQO at scale $5L_y$.
The system size is infinite along $x$-direction, but finite along $y$-direction.

First, suppose that under a periodic boundary condition along $x$-direction with period $L_x$ much larger
than the interaction range $C$ of $H_1$,
the quasi-1D system has a degenerate ground state subspace.
The range $C$ may be larger than the interaction range of the bulk terms,
since that of the boundary terms may be larger,
but is independent of $L_x, L_y$.
The argument of~\cite{BravyiTerhal2009no-go} proves that 
there exists a Pauli operator $P$
supported on a rectangle $0 \le x \le 2C$
such that $P$ induces a nontrivial transformation on the ground space.
That is, there must be a nontrivial logical operator that traverses two boundaries.
Write $P = P_{low} P_{high}$ as a product of two Pauli operators $P_{low}$ and $P_{high}$
where $P_{low}$ is the factor in the region $-10 L_y \le y \le -5 L_y$, 
and $P_{high}$ in the region $-5 L_y < y \le 0$.

The excitation created by $P_{low}$ in the bulk must represent a nontrivial topological charge;
otherwise, there would be an operator $Q$ around the excitation such that 
$P_{low}Q$ (and also $P_{high}Q^\dagger$) do not create any excitation 
by the remark in the beginning of the proof of \cref{lem:2dbd-ltqo},
and therefore each of $P_{low} Q$ and $P_{high} Q^\dagger$ would be a product of terms of $H_1$ by the LTQO,
contradicting the fact that $P$ induces nontrivial transformation on the ground space.
The excitation created by $P_{low}$ is the promised one.

Second, suppose that under a periodic boundary condition along $x$-direction with period $L_x$ much larger
than the interaction range $C$ of $H_1$,
the quasi-1D system has a nondegenerate ground state.
The classification of translation-invariant 1D commuting Pauli Hamiltonians~\cite[Thm.~3]{Haah2013},
basically says that up to a local Clifford circuit, 
there are only the Ising model, the trivial Hamiltonian $-\sum_j Z_j$
where qubits are frozen, and non-interacting qubits.%
\footnote{We could use this result in the first case, but we cite an earlier result that suffices.}
Applied to the present case, this implies that up to a Clifford circuit of depth that depends on $L_y$ but not on $L_x$,
the system must be equivalent to a trivial Hamiltonian.
(The Ising model is ruled out by the nondegeneracy assumption, 
and non-interacting qubits are ruled out by the LTQO.)
In particular, $H_1$ that is infinite along $x$-direction
has no nontrivial topological charge.
Since we assume some flip of bulk terms of $H_1$ represents a nontrivial charge in the infinite 2D system,
we conclude that an excitation of nontrivial charge in the bulk
can be created by a finite operator in the quasi-1D system.
\end{proof}

\begin{corollary}\label{cor:2dBoson}
For any two-dimensional exact code Hamiltonian,
either every topological charge is trivial,
or there exists a nontrivial boson.
\end{corollary}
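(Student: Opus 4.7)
The plan is to prove the contrapositive: assuming some nontrivial topological charge $a$ exists in the 2D exact code Hamiltonian $H$, I will produce a nontrivial boson.

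First, I will invoke \cref{lem:2dbd-ltqo} to gap out two parallel boundaries of $H$, yielding a quasi-1D exact code Hamiltonian that still obeys the LTQO. Then \cref{lem:2dChargeAbsorb} guarantees that in this quasi-1D system, some nontrivial topological charge of the 2D bulk---call it $a$---can be created, alone, by a finite Pauli operator $W$. By \cref{lem:2dbulk}, $a$ has a string operator $t$ in the full 2D system, so its topological spin $\theta_a$ is well-defined via \cref{def:topologicalSpin}; it remains to show $\theta_a = 1$.

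The key claim, following Levin, is that a charge condensable on a bosonic gapped boundary must itself be bosonic. To make this precise, I would use the T-junction relation $t_1 t_2^\dagger t_3 = \theta_a t_3 t_2^\dagger t_1$, where $t_j$ takes charge from $p_j$ to a common point $p_0$, arranged counterclockwise. Using $W$, one can replace each $t_j$ by a modified ``string'' $\tau_j$ that creates $a$ at $p_j$ alone, with its other endpoint absorbed by the gapped boundary (concretely, a suitable translate of $W$). Since the boundary is long, the $\tau_j$ can be routed through pairwise disjoint regions and hence commute pairwise. Combined with the identification of $\tau_j$ with $t_j$ composed with a $W$-type operator localized near $p_0$ (valid because $a = \bar a$ for a $\ZZ_2$ charge, so that the charge at $p_0$ cancels), the two sides of the T-junction equation are forced to yield equal phases, giving $\theta_a = 1$.

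The main obstacle is the rigorous phase bookkeeping in the T-junction manipulation. One must identify $\tau_j$ with $t_j$ composed with $W_{p_0}$ up to stabilizer factors, which exist by LTQO since both operators create the same excitation pattern, and then track how these stabilizer factors contribute to commutators when supports inevitably overlap in a neighborhood of $p_0$. The Pauli nature of every operator involved restricts all such phases to $\pm 1$, reducing the analysis to a finite combinatorial check around $p_0$; the delicate step is to verify that the stabilizer corrections contribute identically to the two sides of the T-junction relation, so that the relative phase $\theta_a$ is unambiguously $+1$.
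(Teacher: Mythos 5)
Your overall strategy matches the paper's: gap out two parallel boundaries via \cref{lem:2dbd-ltqo}, use \cref{lem:2dChargeAbsorb} to obtain a finite Pauli that inserts a nontrivial charge, and run a Levin-style T-junction argument to force $\theta_a = 1$. The lemmas and the conceptual skeleton are the right ones. The gap is exactly where you flag it: you attempt the T-junction calculation at the operator level, writing $\tau_j = t_j W_{p_0} S_j$ with stabilizer corrections $S_j$, and assert (without verification) that the $S_j$ contribute identically to the two sides. This is not automatic. The $S_j$ are supported in a region containing $p_0$ and the routes, so one must track their commutator signs against the $t_k$'s, against $W_{p_0}$, and against one another; none of these signs vanish by inspection, and the product $c_{12}c_{13}c_{23}$ (which, for Pauli $t_j$'s, equals $\theta_a$) is only forced to $1$ after a finite but nontrivial combinatorial check that you leave undone.

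The paper dissolves this by working with the ground state $\ket\psi$ rather than with operator identities. By LTQO one has the identity action $(t' P_0)^\dagger\, t_j P_j \ket\psi = \ket\psi$ up to an unimportant scalar, where $P_j$ is your $\tau_j$ (the translate of the charge-inserting Pauli to $p_j$) and $t'P_0$ plays the role of your $W_{p_0}^{-1}$ (it moves a charge from the boundary in to $p_0$). The proof then inserts and cancels $t'P_0(t'P_0)^\dagger$, applies the T-junction relation once, and commutes the pairwise-disjoint $P_j$'s past everything else, arriving at $\ket\psi = \theta_a \ket\psi$. The stabilizer corrections never surface because they act as scalars on $\ket\psi$, and those scalars cancel between the forward and reverse applications of the identity action. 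If you reformulate your argument at the state level --- using $\tau_j\ket\psi = t_j W_{p_0}\ket\psi$ up to a scalar and applying each operator string directly to $\ket\psi$ --- the unresolved phase bookkeeping disappears and your proof closes.
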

\begin{proof}
Assuming there is a nontrivial topological charge in the bulk,
by \cref{lem:2dChargeAbsorb} we have some topological charge 
that can be inserted in the quasi-1D system on $-10L_y \le y \le 0$ by a finitely supported operator $P$.
This finite operator can in fact be chosen to live on $ -5L_y \le y \le 0$ 
so that it does not touch the bottom boundary:
The reason is similar to the first case in the proof of \cref{lem:2dChargeAbsorb}.
We divide any given $P$ into two pieces $P_{low} P_{high}$ where $P_{low}$ is supported on 
$-10L_y \le y \le -5L_y$ and $P_{high}$ on $-5 L_y < y \le 0$.
Either of $P_{low}$ or $P_{high}$ must insert a nontrivial charge due to the LTQO condition.

\begin{figure}
\centering
\includegraphics[width=0.6\textwidth,trim={0cm 14cm 23cm 0cm},clip]{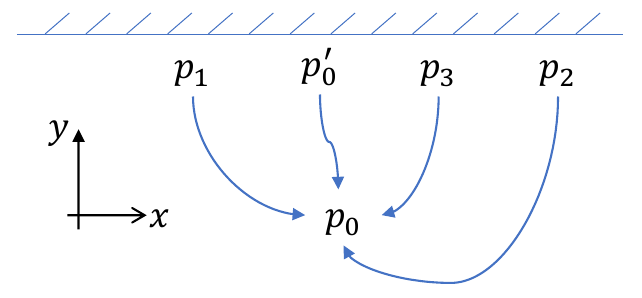}
\caption{A charge inserted from a boundary is a boson.}
\label{fig:bosonAtBd}
\end{figure}

We thus have shown that there is some nontrivial charge $b$
that can be inserted to the bulk from one boundary at, say $y=0$.
We no longer consider the quasi-1D system, but the semi-infinite system on the half-plane $y \le 0$.
Let $P_0$ be the charge insertion operator into an $O(1)$-neighborhood of a point $p'_0$ near the boundary.
Place points $p_1, p_2, p_3$ near the top boundary
so that translates $P_1,P_2,P_3$ of the operator $P_0$ 
insert $b$ into the respective neighborhoods of points $p_j$.
We order the points $p_j$ as in \cref{fig:bosonAtBd}.
The points are sufficiently far apart so that the translates of $P_0$ do not overlap.
By \cref{lem:2dbulk} we know every charge can be moved by a string operator.
Let $t'$ be the string operator that moves $b$ from $p'_0$ to $p_0$
where $p_0$ is a point deep in the bulk.
Similarly, let $t_j$ be the string operator that moves $b$ from $p_j$ to $p_0$ as in \cref{fig:bosonAtBd}.
By the LTQO condition, we have identity actions on a ground state $\ket \psi$
(up to an unimportant scalar):
\begin{align}
(t' P_0)^\dagger  t_j P_j \ket \psi &=\ket \psi, & j = 1,2,3. \label{eq:idaction}
\end{align}
Since $P_j$'s all commute as they are nonoverlapping, we see
\begin{align*}
\ket \psi 
&= (t' P_0)^\dagger  t_1 P_1 \cdot P_2^\dagger t_2^\dagger t' P_0 \cdot (t' P_0)^\dagger  t_3 P_3 \ket \psi 
& (\text{by \cref{eq:idaction}}) \\
&= P_2^\dagger (t' P_0)^\dagger  t_1   t_2^\dagger t' P_0 (t' P_0)^\dagger  t_3 P_1 P_3  \ket \psi 
& (\text{moving }P_1,P_2) \\
&= P_2^\dagger (t' P_0)^\dagger  t_1   t_2^\dagger  t_3 P_1 P_3  \ket \psi 
& (\text{cancelling } t' P_0) \\
&= \theta_b P_2^\dagger (t' P_0)^\dagger  t_3   t_2^\dagger  t_1 P_1 P_3  \ket \psi 
& (\text{definition of }\theta_b) \\
&= \theta_b P_2^\dagger (t' P_0)^\dagger  t_3   t_2^\dagger  t' P_0 (t' P_0)^\dagger t_1 P_1 P_3  \ket \psi 
& (\text{uncancelling } t' P_0) \\
&= \theta_b (t' P_0)^\dagger  t_3 P_3 \cdot P_2^\dagger t_2^\dagger t' P_0 \cdot (t' P_0)^\dagger  t_1 P_1 \ket \psi 
& (\text{moving }P_1,P_2)\\
&= \theta_b \ket \psi 
& (\text{by \cref{eq:idaction}}).
\end{align*}
Therefore, $\theta_b = +1$ for $\ket \psi$ to be nonzero.
\end{proof}

\subsection{Nontrivial 3D QCA or Nontrivial 2D Fermionic QCA}

In \cref{ntcliff},  we have shown that the disentangling QCA $\alpha_{WW}$ 
cannot be decomposed as a quantum circuit using Clifford gates.
In this subsection, we consider the general question of 
whether $\alpha_{WW}$ is trivial using arbitrary gates, 
i.e., whether $\alpha_{WW} \otimes \Id$ is in $\Circshift$ for some choice of stabilization.
Here we only allow stabilization with additional qudit degrees of freedom, not additional fermionic degrees of freedom.
That is, we are only interested in whether $\alpha_{WW}$ is nontrivial as a qudit QCA.

If $\alpha_{WW}$ is indeed trivial, then we can construct $H_{3F}$ as above.
So, any Hamiltonian term of the Walker-Wang model supported on the top layer 
will also be in the group generated by the terms of $H_{3F}$.
These terms supported on the top layer are those shown pictorially in the last row of \cref{fig:WWsurface},
as well as their translates.  
These operators can be thought of as describing ``small loops'' of fermions, 
and they have expectation value $+1$ in the ground state of $H_{3F}$.
We call the two different small loops shown the $\ftwo$ loop and the $\fone$ loop respectively, reading from left to right.  
We define an $\fthree$ loop to be the product of the $\ftwo$ loop and the $\fone$ loop.

All the other terms in the Hamiltonian commute with these, 
i.e., they are in the commutant algebra of the algebra generated by these small fermion loops.
We will find it useful to consider the commutant algebra of the first of the two small fermion loops, 
i.e., the $\ftwo$ loop, as this algebra has a nice representation in terms of Majorana operators. 
See Ref.~\cite{Kapustin2d}.
This commutant algebra is generated by arbitrary operators acting on the first qubit on each edge
and arbitrary operators acting on the additional degrees of freedom (those other than the first or second qubit),
and by a certain algebra, which we denote $\commA$, of operators acting on the second qubit.
This algebra $\commA$ is generated by the operators shown in the top row of \cref{fig:majorana}.
In this figure we omit the subscript $2$; symbols $Z,X$ denote Pauli operators.

The bottom row of \cref{fig:majorana} gives a representation of this commutant algebra in terms of Majorana operators, 
with two Majorana operators $\gamma,\gamma'$ on each vertex.
This representation is obtained by mapping each operator in the top row 
to the operator in terms of Majorana operators directly below it.  
We show the edges on the bottom row to help the reader understand the geometry,
but we emphasize that the Majorana operators are on vertices, rather than edges.  
Thus, for example, the operator on the left-hand side of the figure 
with a $Z$ on a vertical edge and an $X$ on a horizontal edge as shown,
maps to an operator which is a product of $\gamma'$ 
on the vertex at the top of the vertical edge and $\gamma$ at the bottom of the edge.
One may verify that this mapping preserves the commutation relations.

This representation using Majorana operators is 
{\it not} a faithful representation of the commutant algebra ${\commA}$.  
Rather, one may show that the $\ftwo$ loop operators are equal to $+1$.
We will call the system on which this representation using Majorana operators acts 
(i.e., the system with two Majorana operators $\gamma,\gamma'$ per vertex, 
as well as degrees of freedom of $H_{3F}$ other than the second qubit) 
the {\bf Majorana system}, 
while calling the system of $H_{3F}$ the {\bf qudit system}.

\begin{figure}[t!]
\includegraphics[width=3in]{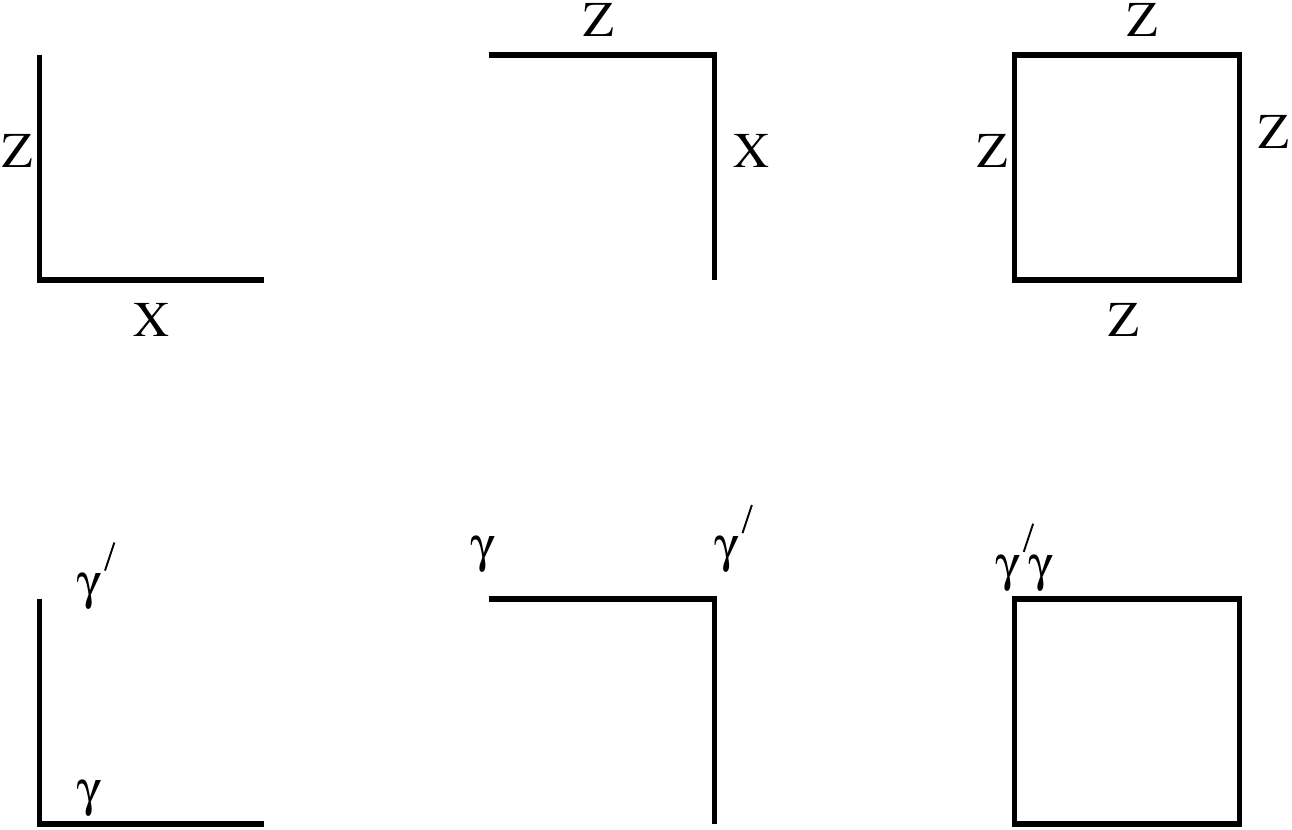}
\caption{Majorana representation of algebra $\commA$.  The three operators on the top row (and their translates) generate the commutant, as may be verified by the polynomial method.  The bottom row gives a representation of this algebra using Majorana operators.}
\label{fig:majorana}
\end{figure}

We emphasize that {\it any} fermion bilinear is the image of some operator in the commutant algebra.
An open string of $\ftwo$ fermions (more precisely, such an open string modified at the endpoints) 
gives such a fermion bilinear.
For an example, see \cref{fig:bilinear}
where we have drawn an open string of $\ftwo$ fermions, 
by multiplying five string segments from \cref{fig:open}, 
and then further multiplying by an additional two $X_2$ operators at each end of the string.  
The multiplication by the $X_2$ operators is necessary to have an operators in the commutant algebra;
without these operators the open string would fail to commute at the ends.
One may verify that this operator indeed is a fermion bilinear as follows.
First, remove the $X_2$ operators by multiplying by operators from \cref{fig:majorana}, 
i.e., multiply by operators from \cref{fig:majorana} 
to replace each $X_2$ on some segment with $Z_2$ on another segment.  
Each of these operators gives a fermion bilinear.  
Then, what is left is a homologically trivial closed loop of $Z_2$,
which is a product of loops of $Z_2$ around plaquettes, 
which can be mapped using \cref{fig:majorana} into a product of $\gamma' \gamma$ on vertices.
After some algebra, one may verify that it indeed is a fermion bilinear.

Thus we have
\begin{lemma}
\label{lem:image}
Let $\varphi(\cdot)$ be the mapping from the qudit representation to the Majorana repsentation.
Let $O$ be any operator in the Majorana representation with even fermion parity.
Suppose that $O$ is supported on some set $S$ such that given any pair of vertices in $S$, 
there is a path of edges in $S$ connecting those two vertices.
Then, $O$ is the image under $\varphi$ of some operator in the qudit representation supported within
distance $O(1)$ of $S$.
\end{lemma}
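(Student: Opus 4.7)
The plan is to reduce to the case of a single Majorana bilinear on two vertices of $S$, and then to further reduce each bilinear to a composition of ``nearest-neighbor'' bilinears along an edge-path in $S$, each of which is explicitly in the image of $\varphi$ via one of the generators shown in the top row of \cref{fig:majorana} (or a translate or rotation thereof).

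First I would expand $O$ as a $\CC$-linear combination of Majorana monomials supported on the vertices of $S$. Since $O$ has even fermion parity, each monomial contains an even number of Majorana factors, so up to a global sign from reordering it can be written as a product of Majorana bilinears $M_1 M_2 \cdots M_r$ with each $M_\ell = \gamma^{(\prime)}_{u_\ell} \gamma^{(\prime)}_{v_\ell}$ and $u_\ell, v_\ell \in S$. Because the even subalgebra is linearly spanned by such products and $\varphi$ is an algebra homomorphism, it suffices to realize each bilinear $M_\ell$ as the image of a qudit operator supported within distance $O(1)$ of $S$.

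For a given bilinear $M_\ell$, I would use the path-connectivity hypothesis to choose an edge-path $u_\ell = w_0, w_1, \ldots, w_k = v_\ell$ lying in $S$. For each consecutive pair $(w_i, w_{i+1})$ joined by an edge in $S$, the generators depicted in the top row of \cref{fig:majorana} and their translates and rotations exhibit qudit operators in $\commA$ whose images under $\varphi$ are Majorana bilinears supported on that edge. Using $\gamma_{w_i}^2 = 1 = (\gamma'_{w_i})^2$ to telescope, I can write $M_\ell$ as a product of such nearest-neighbor bilinears, adjusted by on-site bilinears $i \gamma_v \gamma'_v$ at intermediate vertices whenever I need to toggle between $\gamma$- and $\gamma'$-factors. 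The on-site bilinear is itself in the image of $\varphi$: as the surrounding text observes, a homologically trivial loop of $Z_2$ around a single plaquette is a product of the same edge generators and can be identified with a product of $\gamma_v \gamma'_v$ on the four corners of the plaquette, so on-site parities can be assembled locally. The product of the corresponding qudit preimages then maps to $M_\ell$ and is supported in an $O(1)$-neighborhood of the path, hence of $S$.

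The main obstacle I expect is sign and phase bookkeeping. I need to verify that the generators of \cref{fig:majorana}, together with their symmetry-related versions and the on-site parities $i\gamma_v\gamma'_v$, really do produce every type of nearest-neighbor Majorana bilinear, and that the signs arising from Majorana reordering in the bilinear decomposition, from the telescoping identities $\gamma^2 = 1$, and from the factors of $i$ in the Majorana normalization all collect into a single $\pm 1$ that can be absorbed into the $\CC$-coefficients without enlarging any support. The worked example already displayed in the surrounding text --- an open $\ftwo$ string decorated by two $X_2$ operators at its endpoints, which is checked by hand to be a single fermion bilinear --- is precisely the prototype of this construction, and the general case should follow by systematic iteration of that move along any chosen path in $S$.
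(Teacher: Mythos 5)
Your proposal matches the paper's proof in structure: both decompose $O$ into a $\CC$-linear combination of (even) Majorana monomials, factor each monomial into bilinears, and realize each bilinear $\gamma^{(\prime)}_u\gamma^{(\prime)}_v$ as the image under $\varphi$ of an operator built along a path of edges in $S$ connecting $u$ to $v$, multiplied by a local correction near the endpoints. The paper compresses this to a one-sentence observation, and you have made the underlying telescoping mechanism explicit, correctly flagging sign bookkeeping as the point needing verification.

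One imprecision worth noting: a single-plaquette loop of $Z_2$ maps under $\varphi$ to the product of $\gamma'_v\gamma_v$ over all four corners of that plaquette, not to a lone on-site parity $i\gamma_v\gamma'_v$, so your appeal to plaquette loops does not by itself manufacture an isolated on-site factor. The residual on-site factors accumulated when you telescope nearest-neighbor generators along the path are instead absorbed into the $O(1)$-supported endpoint corrections (this is exactly what the extra $X_2$'s at the two ends of the string in \cref{fig:bilinear} are doing), so the conclusion is unaffected, but this step of your write-up as stated is not yet a justification.
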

\begin{proof}
Decompose $O$ as a sum of products of Majorana operators and qudit operators supported on $S$.
By assumption, only even products of Majorana operators appear in the sum.
Any product of a pair Majorana operators, supported on some pair of sites $i,j$ respectively,
is the image under $\varphi$ of some operator in the qudit representation 
on an (arbitrary) path of edges connecting those two sites $i,j$, 
multiplied by some operator supported within distance $O(1)$ of the endpoints.
\end{proof}

\begin{figure}[t!]
\centering
\includegraphics[width=0.7\textwidth,trim={0cm 12cm 13cm 0cm},clip]{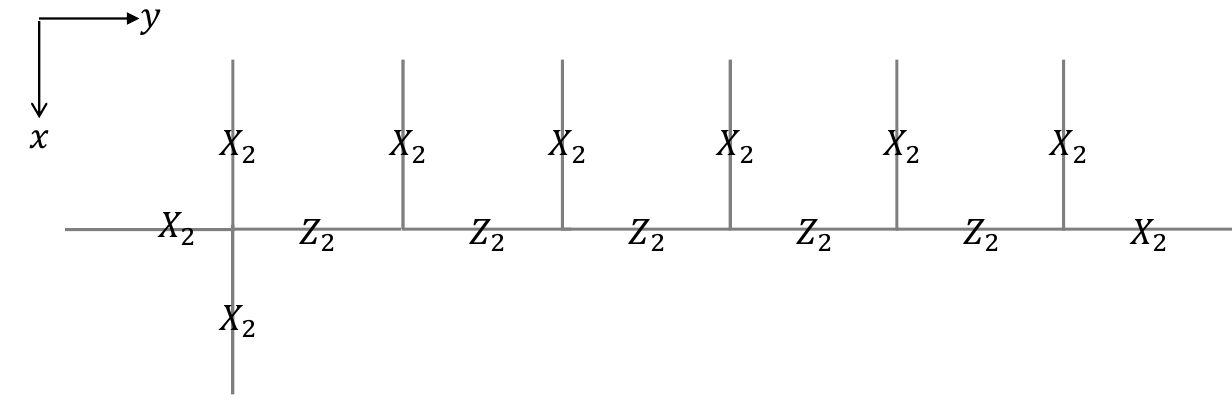}
\caption{Open string of $\ftwo$ fermions, modified by multiplying by two extra $X_2$ operators at both left and right ends.  One may verify that this maps to a fermion bilinear, with one fermion operator at each end of the string.}
\label{fig:bilinear}
\end{figure}

Thus, using this representation of the commutant $\commA$,
the Hamiltonian $H_{3F}$ includes the $\ftwo$ loop operators,
arbitrary terms which have even fermion parity in the Majorana operators $\gamma,\gamma'$,
and arbitrary operators on the additional degrees of freedom.

Let us heuristically explain why this representation will be useful: 
roughly speaking, it turns a model with three fermion topological order into a model with no topological order.
The three fermion model has string operators which create defects at their ends.
One such operator is an open string of an $\ftwo$ fermion (modified at the endpoints).  
This operator maps to a fermion bilinear as noted.
That is, a string operator maps to an operator which acts only at the ends of the string.  
We will enlarge the Hilbert space of this fermionic model by allowing states with either even or odd fermion parity;
that is, we will consider operators with odd fermion parity,
so an open string of $\ftwo$ fermions becomes just a product of two local operators,
one at each end of the string.
The  $\fone$ loop is a product of operators acting on the first qubit 
times a product of $Z_2$ around a closed loop; 
from \cref{fig:majorana}, the product of $Z_2$ 
corresponds to the fermion parity $\gamma' \gamma$ at a vertex of the loop.  
So, if we take the product of all small $\fone$ loops inside some disk, 
in the Majorana representation the result is equal to 
some operator on the second qubits around the boundary of the disk 
times the product of all $\gamma' \gamma$ inside the disk.
At the same time, in the qubit representation this product of all small $\fone$ loops inside a disk 
is an operator supported on the boundary of the disk that is a ``large $\fone$ loop.''
Thus, while in the qubit representation, there are stringlike operators 
which are large loops of $\ftwo$ or $\fone$ fermions (or their product, an $\fthree$ loop),
in the Majorana representation, the corresponding operators are 
either pointlike (for open $\ftwo$ loops) or disk-like (the disk being the interior of an $\fone$ loop).

To make this precise we show that:
\begin{lemma} \label{localflipMajorana}
If $H_{3F}$ exists, then for sufficiently large system size, there exists a locally flippable separator for the Majorana system, whose
terms generate the same multiplicative group as the terms of $H_{3F}$ in this representation.
\end{lemma}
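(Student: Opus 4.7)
The plan is to construct the separator directly from the terms of $H_{3F}$ after transporting them to the Majorana representation via $\varphi$, strip off the redundant pieces, and then exhibit local flippers using the fact that open fermion string operators in the qudit picture become pointlike operators in the Majorana picture.

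First I would take the set $T$ of all terms of $H_{3F}$ and form $\varphi(T)$. Since $H_{3F}$ commutes with the $\ftwo$ small loops, the images $\varphi(t)$ are well-defined even-parity operators on the Majorana system tensored with the ancillary qudits. These images pairwise commute. Some are trivial: the $\ftwo$ loops themselves map to the identity and can be discarded. I would then apply a redundancy-removal argument in the spirit of \cref{lem:removingRelations2D}, adapted to the graded (Majorana plus qudit) setting, to extract from $\varphi(T)$ a locally nonredundant generating set $\{\opZ_a\}$ of the multiplicative group generated by $\varphi(T)$. Each $\opZ_a$ then has bounded support, finite order $D_a\ge 2$, and even fermion parity, settling conditions 1--4 of \cref{def:separator}.

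Next I would verify condition 5, that the simultaneous eigenspace for each character is one-dimensional, by a dimension count: I need $\prod_a D_a$ to equal the dimension of the Hilbert space of the Majorana system. The essential input is \cref{lem:chargesOfH3F}: every nontrivial topological charge of $H_{3F}$ is carried by an open fermion string, and under $\varphi$ such an open string becomes a local Majorana operator (by \cref{lem:image} together with the explicit description in the text surrounding \cref{fig:bilinear}). Hence in the Majorana representation every topological charge is created by a local operator, so no nontrivial superselection sector survives; after we strip off the scalar $\ftwo$-loop relations, the remaining commuting set must be maximal. For sufficiently large system size this count closes, giving condition 5.

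Finally, for local flippability I would invert the construction of \cref{lem:image}. For a generator $\opZ_a$ that comes from an ancillary qudit term, the flipper is simply the generalized Pauli $X$ on that qudit. For a generator $\opZ_a$ coming from a loop-type term of $H_{3F}$ (an $\fone$ or $\fthree$ small loop, or one of the remaining Walker-Wang-derived terms), I would build the flipper from an open fermion string ending in an appropriate location relative to the loop; in the Majorana representation this endpoint is a single Majorana $\gamma_i$ or a short bilinear, of the correct parity, supported within $O(1)$ of $\opZ_a$. The commutation relation with $\opZ_a$, and with $\opZ_b$ for $b \ne a$, can be read off directly from the Majorana dictionary in \cref{fig:majorana}. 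The main obstacle is precisely this last step: producing, uniformly in $a$, a flipper that is local and commutes with every other generator. The reason the construction succeeds is conceptual: the three-fermion theory, which obstructs a qubit Pauli-stabilizer realization in $2D$ (by \cref{cor:2dBoson}), becomes locally trivial once the fermions are represented as physical Majorana modes, so the usual topological conservation law that would forbid local flippers is absent here.
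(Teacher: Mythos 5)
Your high-level structure — extract a nonredundant generating set from the Majorana images of $H_{3F}$, argue that the representation eliminates nontrivial charges via \cref{lem:chargesOfH3F}, and then produce local flippers — matches the paper's strategy, but the two load-bearing steps are left with genuine gaps.

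First, the nonredundancy and the flipper construction. You invoke \cref{lem:removingRelations2D} "adapted to the graded setting," and later note that producing a flipper commuting with every other generator "is the main obstacle," but you resolve that obstacle only with a heuristic ("the usual topological conservation law ... is absent here"). Both \cref{lem:removingRelations2D} and the flipper lemma \cref{lem:NoredunNochargeFlippable-qudit} require a \emph{translation-invariant} Hamiltonian, and $H_{3F}$ (hence $\varphi(T)$) is not translation invariant because the QCA $\beta$ need not be. The paper's resolution is essential and you omit it: pull the Majorana-representation Hamiltonian back through $\beta_{Maj}^{-1}$ to a translation-invariant sum of Pauli/Majorana products, apply the polynomial-method lemmas there (via \cref{rem:includingFermions}), and then push the resulting local flippers forward through $\beta_{Maj}$. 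Without this pullback, there is no way to invoke the polynomial machinery that actually delivers the uniform local flippability you need; dimension counting alone does not produce flippers.

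Second, the "charges become local" step is stated too loosely to be correct. You say that "every nontrivial topological charge of $H_{3F}$ is carried by an open fermion string, and under $\varphi$ such an open string becomes a local Majorana operator." That is true only for type-$2$ ($\ftwo$) strings; open $\fone$ or $\fthree$ strings are \emph{not} in the domain of $\varphi$, since they fail to commute with the $\ftwo$ loops. The correct argument, which the paper gives, is different in logical shape: any operator on the Majorana system pulls back to something that must commute with \emph{all} $\ftwo$ loops, including those near its support, and this forces the type-$1$ chain of the decomposition in \cref{lem:chargesOfH3F} to vanish; only type-$2$ strings remain, and those do map to local Majorana operators. Your version presupposes that every $H_{3F}$ charge lifts to the Majorana system, which is exactly what has to be ruled out. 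This is more than a wording slip — it is the crux of why passing to the Majorana representation kills the topological order, and the argument has to run in the direction the paper gives, not the direction you state.
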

\begin{proof}
Write $H_{w/bd}$ in the Majorana representation.  That is, introduce (just at the top boundary $z=0$) a pair of Majoranas $\gamma,\gamma'$ on each vertex, and write all terms in the $z=0$ plane which act on the second qubit in terms of these Majoranas $\gamma,\gamma'$.

Take the resulting Hamiltonian and from this Hamiltonian construct a new Hamiltonian, $H_{Maj}$, following the same steps that we used to construct $H_{3F}$ from $H_{WW}$.  In detail: the QCA $\beta$ acts on the system of qubits (as well as any additional degrees of freedom) rather than on the Majorana system but we since $\beta$ acts as the identity on operators supported on the $z=0$ plane, we can define a QCA $\beta_{Maj}$ such that $\beta_{Maj}(O)=\beta(O)$ for any operator $O$ whose support does not include the $z=0$ plane and such that
$\beta_{Maj}(O)=O$ is $O$ is supported on the $z=0$ plane.
Define $\tilde H_{Maj}$ by writing the terms of $\tilde H$ in the Majorana representation.
Define $H_{Maj}$ as
the sum of terms of $\beta_{Maj}( \tilde H_{Maj})$ that are supported within distance $3L_z$ of the top boundary $z=0$. 
Regard $H_{Maj}$  as a two-dimensional lattice Hamiltonian by ignoring the $z$ coordinates, regarding all sites in the three-dimensional lattice with the same $x,y$ coordinate but different $z$ coordinates as corresponding to the same site in some two-dimensional lattice.

Just as it was possible to construct $H_{3F}$ so that the terms are locally nonredundant using the terms
\cref{lem:nonredudantBd} to define $H_{3F}$, it is also possible to
construct $H_{Maj}$ so that its terms are locally non-redundant, 
since, as remarked in \cref{rem:includingFermions}, 
the same result applies for translationally invariant Hamiltonians 
which are sums of products of Pauli or Majorana operators.

Consider the Hamiltonian $\beta_{Maj}^{-1}(H_{Maj})$.  
Here, while the system of $H_{Maj}$ includes only sites within distance $3L_z$ of the top boundary $z=0$,
we identify each term in $H_{Maj}$
operator on the system of $H_{3F}$ (which excludes the deep bulk),
we identify $O$ with $O \otimes I$ that is an operator on the system of $\beta(\tilde H_{Maj})$ 
(which includes the deep bulk), so in that way we define $\beta_{Maj}^{-1}(H_{Maj})$.
This Hamiltonian is translation invariant and is a sum of products of Paulis and Majoranas.  
Note that $\beta_{Maj}^{-1}(H_{Maj})$ will necessarily have a degenerate ground state 
because it has no terms supported in the bulk, far from the top boundary.

We will show later that $\beta_{Maj}^{-1}(H_{Maj})$ has no nontrivial charges.  
Then, as remarked in \cref{rem:includingFermions},
\cref{lem:NoredunNochargeFlippable-qudit} also applies to Hamiltonians 
which are sums of products of Paulis and Majoranas; 
applying this lemma to $\beta_{Maj}^{-1}(H_{Maj})$, 
it follows that every term of $\beta^{-1}(H_{3F})$ 
can be flipped by a product of Paulis and Majoranas of bounded support.  
Hence, on any sufficiently large finite system (larger than the support of the flippers),
the terms of $\beta_{Maj}^{-1}(H_{Maj})$ are also nonredundant:
to see this, note that any flipper of bounded support can only flip terms in its support,
and so the result of the infinite system implies 
that the flipper will flip exactly one term in the finite system, 
and since each term can be flipped independently, they must be nonredundant.

Thus, each term of $H_{Maj}$ can be flipped by an operator of bounded support, 
and the terms of $H_{Maj}$ are nonredundant.  
While $\beta_{Maj}^{-1}(H_{Maj})$ has a degenerate ground state, 
$H_{Maj}$ has a unique ground state as can be verified using nonredundancy of terms and counting the number of terms.%
\footnote{
Another way to see the nondegeneracy is to consider a logical operator
that must exist on a thin subsystem of codimension one (in this case a string-like region),
break up the logical operator into halves,
and use the fact that there is no nontrivial charge 
and that the Hamiltonian obeys the local topological order condition.
}
Hence, the ability to locally flip the terms and the nonredundancy of terms means that the
terms of $H_{Maj}$ define a locally flippable separator, completing the proof.

It remains to show that $\beta_{Maj}^{-1}(H_{Maj})$ has no nontrivial charges.
Let $B$ be a set of points.
Let $O$ be an operator acting on the Majorana system that commutes with all terms in $H_{Maj}$ 
except those whose support overlap with~$B$.
Then, $\beta^{-1}$ commutes with all terms in $H_{3F}$ except those whose support is within $O(1)$ of $B$.
So, by \cref{lem:chargesOfH3F}
Then, $\beta^{-1}(O)$ can be written as
a $\CC$-linear combination of operators, each of which is a product of form $h_{3F} O_f O_B$,
where $h_{3F}$ is a product of terms of $H_{3F}$,
$O_f$ is a fermion string operator whose end points are within distance $O(1)$ from $B$,
and $O_B$ is an operator supported within distance $O(1)$ from $B$.
If $B$ is empty, then $O_f$ must be a closed fermion string operator and $O_B = I$.

However, we also have that $\beta^{-1}(O)$ commutes with {\it all} $\ftwo$ loop operators, 
even those supported near $B$.  
This allows us to show that $O_f$ must be a type-$2$ string operator.  
That is, the type-$1$ chain associated with $O_f$ must be trivial.  
The reason is that any nonvanishing type-$1$ chain would anti-commute with the $\ftwo$ loop.  
There is a subtlety: one might wonder whether it is possible to have a nonvanishing type-$1$ chain 
but also choose $O_B$ so that the operator commutes with the $\ftwo$ loops.  
(In fact, we did this for type-$2$ strings, 
multiplying for example by extra factors of $X_2$ in as in \cref{fig:bilinear}.)
To show that this is not possible, 
consider the product of $\ftwo$ loop operators in a disc of radius $O(1)$ near each point in $B$, 
choosing the disc large enough that its boundary is disjoint from the support of $O_B$.  
Then, any nonvanishing type-$1$ chain does not commute with this product, 
and this product commutes with $O_B$. 
(Equivalently, note that any type-$1$ string 
anti-commutes with an {\it odd} number of $f_2$ loop operators at its endpoints, 
which cannot be corrected by a local $O_B$.)

Now, $O_f$ is a type-$2$ string operator, 
and hence $\beta(O_B O_f)$ is a product of Majorana operators at the ends of the string, 
multiplied by some local operators, 
i.e., it is a product of operators supported near points in $B$ so creates trivial charge.
\end{proof}

By \cref{thm:qcadisent} and \cref{localflipMajorana}, 
if $\alpha_{WW}$ is trivial, 
then there exists a two-dimensional fermionic QCA $\alpha_{Maj}$ 
which disentangles the locally flippable separator defined by \cref{localflipMajorana}.
There is one detail here: in the fermionic case, 
\cref{thm:qcadisent} requires a bijection $g(\cdot)$ 
from elements $a$ of the separator with $D_a=2$ to degrees of freedom $g(a)$ with $D_a=2$ 
such that $\dist(a,g(a))=O(R)$ 
and such that, if $\tilde \opX_a$ has even fermion parity,
then $g(a)$ is a qudit and if $\tilde \opX_a$ has odd fermion parity then $g(a)$ is a fermion.
However, $\beta_{Maj}^{-1}(H_{Maj})$ is translation invariant,
and so in each unit cell we have some number of flippers which are fermion parity odd
and some which are fermion parity even.
There must be at least one fermion parity odd flipper;
otherwise, all states would have the same fermion parity.
Once we have at least one fermion parity odd flipper in each cell,
so that we have some number $n_{odd}>0$ of odd parity flippers 
and $n_{even}$ even parity flippers 
(just considering the flippers on degrees of freedom with dimension $2$, 
ignoring any added degrees of freedom for stabilization),
we can find another set of terms with the same cardinality 
which generates the same algebra with any desired nonzero number of odd flippers;
this is a consequence of the fact that the 
(graded) operator algebra on a system of $n > 0$ Majorana modes and $m$ qubits 
is equivalent to the (graded) operator algebra on any other system of
$n'>0$ Majorana modes and $m'$ qubits with $n+2m=n'+2m'$.
(Proof of fact: Consider a Jordan-Wigner transformation, 
which is a faithful representation of the Majorana operators. 
Tensor in qubits, and interpret all the Pauli operators as a result of Jordan-Wigner on $n+2m$ Majorana operators.)
In this way, we ensure that such a bijection $g$ exists.

We now show that
\begin{theorem}
If $\alpha_{WW}$ is trivial, then $\alpha_{Maj}$ is nontrivial.
\end{theorem}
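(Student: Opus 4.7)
The plan is to argue by contradiction. Suppose $\alpha_{Maj}\otimes\Id\in\Circshift$ for some qudit stabilization. Then the disentangling of $H_{Maj}$ by $\alpha_{Maj}$ realizes $H_{Maj}$ as a trivial fermionic Hamiltonian conjugated by an $O(1)$-depth fermionic circuit composed with a shift. Since trivial Hamiltonians carry no topological charge and circuits-plus-shifts preserve locality of charge creation, this forces every topological charge of $H_{Maj}$, viewed as a 2D fermionic Hamiltonian, to be trivial. I will contradict this by showing that the $\fone$ charge of $H_{3F}$ lifts to a nontrivial topological charge of $H_{Maj}$.

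Concretely, take an operator $O_{\fone}\in\commA$ creating the $\fone$-pair at two distant endpoints $A,B$; this is nontrivial in $H_{3F}$ by \cref{lem:chargesOfH3F}. Its Majorana image $\varphi(O_{\fone})$ creates the same flipped set of terms of $H_{Maj}$. Suppose, toward a contradiction, that this pair were trivial in $H_{Maj}$, i.e., $\varphi(O_{\fone})=h_{Maj}\,O_A\,O_B$ modulo a ground-state stabilizer, with $O_A,O_B$ local operators near $A,B$ and $h_{Maj}$ a product of terms of $H_{Maj}$. Because the generators in \cref{fig:majorana} send $\commA$ into Majorana bilinears (together with first-qubit and additional-qudit operators), $\varphi(\commA)$ locally coincides with the even-fermion-parity subalgebra of the Majorana system, and every term of $H_{Maj}$ is of even fermion parity. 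Hence $O_A$ and $O_B$ must share a common parity.

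The argument then splits by this parity. If both $O_A,O_B$ are even, then by a local inverse of $\varphi$ (parallel to \cref{lem:image}) one writes $O_A=\varphi(Q_A)$, $O_B=\varphi(Q_B)$ with local $Q_A,Q_B\in\commA$, and $h_{Maj}=\varphi(h_{3F}')$ for some product $h_{3F}'$ of terms of $H_{3F}$; since the kernel of $\varphi$ on $\commA$ is generated by $\ftwo$-loops, all of which are ground-state stabilizers of $H_{3F}$, one concludes that $Q_A Q_B$ creates the $\fone$-pair locally in $H_{3F}$, contradicting \cref{lem:chargesOfH3F}. If both are odd, dress them by local Majoranas $\gamma_A,\gamma_B$ to get local even operators $O_A\gamma_A=\varphi(Q'_A)$, $O_B\gamma_B=\varphi(Q'_B)$; since the bilinear $\gamma_A\gamma_B$ corresponds under $\varphi$ to an $\ftwo$-string from $A$ to $B$ modulo stabilizers, $Q'_A Q'_B$ creates the $\fone\cdot\ftwo=\fthree$-pair locally in $H_{3F}$, again contradicting \cref{lem:chargesOfH3F}. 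A mixed-parity case is ruled out by parity matching already noted.

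The hard part will be controlling the non-faithfulness of $\varphi$: since the kernel on $\commA$ is generated by the $\ftwo$-loops, which are ground-state stabilizers of $H_{3F}$, lifting an even-parity local operator on the Majorana side to a local element of $\commA$ is only determined up to such stabilizers, and these happily do not affect which topological charge is created. Making this locality of the lift precise --- an analogue of \cref{lem:image} in the reverse direction --- is the key technical step; the rest is the parity case analysis sketched above.
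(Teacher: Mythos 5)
Your argument has a fatal gap at its very first step: the operator $O_{\fone}\in\commA$ that you want to start from does not exist. The algebra $\commA$ is by definition the commutant of \emph{all} small $\ftwo$ loop operators. However, any operator that creates a pair of $\fone$ charges must anti\-commute with an odd number of $\ftwo$ loops near each endpoint --- this is exactly what it means for $\fone$ to have mutual statistics $-1$ with $\ftwo$. The paper addresses this point explicitly in the proof of \cref{localflipMajorana}: ``any type-$1$ string anti-commutes with an \emph{odd} number of $\ftwo$ loop operators at its endpoints, which cannot be corrected by a local $O_B$.'' Unlike the $\ftwo$ string, which \emph{can} be dressed at its endpoints to lie in $\commA$ (\cref{fig:bilinear}), the $\fone$ string (and the $\fthree$ string) cannot. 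So the map $\varphi$ is simply never applied to a local $\fone$-pair creator; in the Majorana picture, the $\fone$ charge is not a pointlike object that a local operator creates, but rather a $\pi$-flux/vortex endpoint, whose creation operator in $\varphi(\commA)$ necessarily has support on a whole disk (as the paper discusses: a ``large $\fone$ loop'' maps to a boundary operator times the product of all $\gamma'\gamma$ in the interior disk).

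There is also a red flag earlier in your argument that should have prompted suspicion: $H_{Maj}$ has a locally flippable separator (\cref{localflipMajorana}) and hence no nontrivial topological charges \emph{unconditionally}, independent of whether $\alpha_{Maj}$ is trivial. So the premise you feed into your contradiction --- that $\varphi(O_{\fone})$ is a trivial excitation in $H_{Maj}$ --- is automatically true. If your chain of deductions were sound, you would be deriving an unconditional contradiction; since that is absurd, one of the steps must fail, and indeed it is the nonexistence of $O_{\fone}\in\commA$.

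Because the $\fone$ (and $\fthree$) charges cannot be passed through $\varphi$ as local objects, the paper takes the opposite route: it uses the hypothetical triviality of $\alpha_{Maj}$ on the Majorana side to build a \emph{circuit} $\alpha_{qud}$ on the qubit side, gate by gate, via \cref{lem:image}. One then shows that $\alpha_{qud}(H_{3F})$ contains the $\ftwo$ loop and, at each vertex, a product of $Z_2$ around a plaquette, hence contains the toric code on the second qubits. The toric code has bosonic charges, and pulling these back through $\alpha_{qud}^{-1}$ contradicts \cref{lem:chargesOfH3F}, which says every nontrivial charge of $H_{3F}$ is a fermion. The crucial maneuver is precisely avoiding any attempt to lift $\fone$ or $\fthree$ through $\varphi$; you would need to adopt this direction of argument for a correct proof.
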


\begin{proof}
The QCA $\alpha_{Maj}$ acts on the Majorana system.  
Note that if $\alpha_{Maj}$ is trivial, 
it can be decomposed as a circuit QCA $\alpha_C$ followed by a shift QCA $\alpha_S$.
Hence, the circuit $\alpha_C$ acting on the separator defined by \cref{localflipMajorana} 
is equal to $\alpha_S^{-1}$ acting on the trivial separator.
A shift leaves all qudit elements of the trivial separator invariant, 
so we may assume without loss of generality that the shift $\alpha_S$ acts only on the fermionic degrees of freedom.
In this case, the image of each trivial separator elements $i \gamma' \gamma$ under the shift $\alpha_S^{-1}$ 
is a Majorana bilinear, with the two operators in the bilinear separated by a distance $O(1)$.
By a standard argument (see \cref{alphapC}), on a two-torus there exists a circuit $\alpha'_C$ 
which maps such a separator to the trivial separator, except possibly 
on a small subsystem of the two-torus consisting of two (long) lines,
one horizontal and one vertical, 
where on such long lines the system may be in either the trivial separator
or in the trivial separator shifted by one Majorana mode, i.e., in a Majorana chain.
We absorb this circuit $\alpha'_C$ into $\alpha_C$, redefining $\alpha_C\rightarrow \alpha'_C \circ \alpha_C$, 
so that the circuit QCA $\alpha_C$ maps the separator defined by \cref{localflipMajorana} 
to the trivial separator except possibly on these two long lines.

We will show that
there exists some other QCA $\alpha_{qud}$ 
acting on the qudit system such that the representation map $\varphi$ converts $\alpha_{C}$ into $\alpha_{qud}$:
let $\varphi$ be the map from operators $O$ on the qudit system which commute with the $\ftwo$ loop, 
to operators in the Majorana representation.
We wish that for any such $O$, we have 
\[
\alpha_{C}(\varphi(O))=\varphi(\alpha_{qud}(O)).
\]
Further, we wish that $\alpha_{qud}$ maps any small $\ftwo$ loop to itself.

Given these properties of $\alpha_{qud}$, applying $\alpha_{qud}$ to the terms of $H_{3F}$ 
gives a set of terms which includes the $\ftwo$ loop 
and includes for each vertex a term $O$ 
such that $\varphi(\alpha_{qud}(O))$ is equal to $\gamma' \gamma$ on that vertex,
at least in the region far away from the two long lines.
In what follows, we will ignore this small subsystem.
However, then this term $O$ is equal to a product of $Z_2$ around a plaquette 
(since that is the operator whose image under $\varphi$ is $\gamma' \gamma$), 
possibly multiplied by a product of $\ftwo$ loops.

Thus, the terms of $\alpha_{qud}(H_{3F})$ generate a multiplicative group 
including the $\ftwo$ loop and the product of $Z_2$ around a plaquette.
These two types of terms then generate also the product of $X_2$ on all links attached to a given vertex.
Thus, the terms of $\alpha_{qud}(H_{3F})$ generate a multiplicative group 
which includes the terms of the toric code on the second qubit, 
i.e., the product of $Z_2$ around a plaquette and the product of $X_2$ around links attached to a vertex.

For the toric code, there are topological charges at the end points of open strings of $Z_2$ 
or open strings of $X_2$ on the dual lattice. 
These topological charges are bosons $a$,
i.e., following \cref{def:topologicalSpin} they have $\theta_a=+1$.
Let $O$ be an open string operator which creates such bosonic charges in the toric code;
then, $\alpha_{qud}^{-1}(O)$ creates bosonic charges in $H_{3F}$, 
inconsistent with \cref{lem:chargesOfH3F} which shows that all topological charges in $H_{3F}$ are fermionic.

It remains to construct $\alpha_{qud}$.  
The QCA $\alpha_{qud}$ will also be a circuit.
For each gate $U$ in the circuit for $\alpha_{C}$, 
there will be a corresponding gate $V$ in the circuit for $\alpha_{qud}$ acting on the qudit system 
such that $U^\dagger \varphi(O) U = \varphi(V^\dagger O V)$ for any $V$ which commutes with $\ftwo$.
So, it suffices to find $V$ with the correct support (i.e., on a set of bounded diameter) such that $U=\varphi(V)$.

Suppose $U$ is supported on some set $S$.
To construct $V$, we will apply \cref{lem:image}.
This lemma requires that $S$ contain some path of edges connecting any pair of vertices in $S$,
and perhaps $S$ may not have this property.
However, given set an arbitrary set $S$,
we can add edges to this set so that $S$ obeys this requirement of \cref{lem:image} 
{\it without increasing the diameter of $S$}, 
i.e., simply add edges on any shortest path between each pair of vertices.
So, adding edges to $S$ in this way, we apply \cref{lem:image},
and this gives a bound on the diameter for the support of $V$.

Note that here we use the locality of the gates to ensure that $\alpha_{qud}$ will also be local;
given an arbitrary QCA $\alpha$ acting on the fermionic system,
it is not obvious how to construct a QCA $\alpha'$ acting on the qudit system 
such that $\alpha(\varphi(O))=\varphi(\alpha'(O))$.
\end{proof}

\begin{lemma}
\label{alphapC}
Consider a system of fermionic degrees of freedom with sites on the two-torus.  
Consider the image of the trivial separator under a shift QCA of range $O(1)$.
Then, there is a circuit of depth $O(1)$ and range $O(1)$ such that, acting on that image, the result is
a separator whose elements are the same as the elements $i\gamma_j \gamma'_j$ of the trivial separator, except possibly on a small subsystem of the two-torus consisting of two (long) lines,
one horizontal and one vertical, 
where on such long lines the system may be in either the trivial separator
or in the trivial separator shifted by one Majorana mode, i.e., in a Majorana chain.
\begin{proof}
 Consider the image of the trivial separator under the shift.  Tile the two-torus with hexagons, each of linear size which is much larger than the shift but still $O(1)$, so that at most three hexagons are within distance $O(1)$ of any point.  Given any pair of neighboring hexagons, there may be some number of separator elements which are a product of one Majorana operator in each hexagon.  By a unitary supported within those two hexagons, we can change this number by any even amount, changing it to equal $0$ or $1$.  For example, if $i\gamma_i\gamma_j$ and $i\gamma_k\gamma_l$ are separator elements with $i,k$ in one hexagon and $j,l$ in another, then swapping $\gamma_i$ and $\gamma_l$ reduces the number by two.  This gives one unitary for each pair of hexagons but these unitaries can all be done in parallel
since they have disjoint support; since the diameter of the hexagons is $O(1)$, this product of unitaries is a quantum circuit.
Note that there are no separator elements which are a product of two Majorana operators in non-neighboring hexagons.

 Define a chain complex, with hexagons corresponding to $0$-cells, a $1$-cell attached to any pair of neighboring hexagons, and a $2$-cell for each triple of hexagons meeting at a point.  Define a $1$-chain with ${\mathbb Z}_2$ coefficients where the coefficient on an edge is equal to the number of
separator elements which are a product of one Majorana operator in each hexagon.  By a gate supported on any triple of hexagons corresponding to a $2$-cell, we can change this chain by a boundary, turning it into any homologous chain.  In particular, the chain can be made equal to zero everywhere, except possibly on two long one-dimensional lines if the chain is homologically nontrivial.  These gates can all be done in parallel in depth $O(1)$ since we can color the $2$-cells by $O(1)$ colors (indeed, by $2$ colors) and in each round we only apply the gates corresponding to a given color.  

Finally, apply a further circuit so that on any hexagon with all $0$ coefficients on the edges attached to that hexagon, we map all separator elements supported on that hexagon to elements of the trivial separator.  
\end{proof}
\end{lemma}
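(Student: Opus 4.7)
My plan is to tile the torus with a coarse cell structure, turn the problem into a combinatorial question over $\ZZ_2$ cellular homology, and then clean up inside each cell. First I would observe that because the shift QCA has range $O(1)$, the image of each trivial separator element $i\gamma_j\gamma'_j$ is $\pm i\gamma_a\gamma_b$ for two Majorana modes with $\dist(a,b)=O(1)$. So the image separator is a collection of local Majorana bilinears, one per pair of fermionic modes.

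Next I would fix a constant $D$ much larger than the range of the shift and tile the torus by hexagons of diameter $D$. I form a cell complex whose $0$-cells are these hexagons, $1$-cells are pairs of adjacent hexagons, and $2$-cells are triples of hexagons meeting at a corner. Because every bilinear is supported on a disk of $O(1)$ diameter $\ll D$, each bilinear either sits inside a single hexagon or straddles a unique $1$-cell. I then define a $\ZZ_2$ $1$-chain $c$ by setting $c_e$ equal to the parity of the number of bilinears straddling $1$-cell $e$; the homology class $[c]\in H_1(T^2;\ZZ_2)=(\ZZ_2)^2$ is precisely the obstruction to turning the separator back into the trivial one.

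The bulk of the construction is three rounds of parallel local gates. In the first round, for each $1$-cell $e$ I pair up straddling bilinears two at a time and apply a local unitary on the two adjacent hexagons that turns each such pair into two bilinears wholly inside single hexagons; for example, given $i\gamma_i\gamma_j$ and $i\gamma_k\gamma_l$ straddling $e$ with $i,k$ on one side and $j,l$ on the other, swapping $\gamma_j\leftrightarrow\gamma_l$ makes both local. This reduces every $c_e$ to $\{0,1\}$ without altering $[c]$. In the second round, at each $2$-cell I apply a local gate implementing the elementary move that flips $c$ by the $2$-cell's coboundary; since these moves generate $\im\partial_2$, they let me deform $c$ to a minimal representative of $[c]$, supported on at most one horizontal and one vertical long cycle. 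In the third round, inside every hexagon whose incident edges all satisfy $c_e=0$, all remaining bilinears are internal and a local unitary converts them into the standard form $i\gamma_m\gamma'_m$. Each round parallelizes to $O(1)$ sub-depth by a bounded-chromatic-number coloring of the relevant cells, and every gate has range $O(D)=O(1)$, so the total circuit has depth and range $O(1)$.

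The subtlest step I expect is the second round: verifying that the $2$-cell gates really implement the combinatorial boundary move on Majorana bilinears in a way that is compatible with the circuit definition. The purely homological content — that $\im\partial_2$ is the group of $1$-boundaries on the hexagonal decomposition of $T^2$, with quotient $(\ZZ_2)^2$ represented by one horizontal and one vertical loop — is standard. The real work is to exhibit, for each local reconfiguration of $c$ by a $2$-cell coboundary, an \emph{even}-parity local unitary on the triple of hexagons that realizes it on the bilinears; the even parity is needed so that the gate qualifies as a gate of $\Circuit$ in the sense of the paper. This reduces to checking a short finite list of local moves, which I would do case by case.
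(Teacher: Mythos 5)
Your proposal follows essentially the same route as the paper: tile the torus by hexagons much coarser than the shift, reduce the parity of straddling bilinears on each $1$-cell, move the resulting $\ZZ_2$ $1$-chain to a minimal homology representative by $2$-cell boundary moves, and then locally trivialize all hexagons whose incident edges vanish. One detail to correct: in your round-one example, swapping $\gamma_j\leftrightarrow\gamma_l$ (both on the \emph{same} side of the edge) leaves both bilinears straddling; you need a cross-hexagon swap such as $\gamma_i\leftrightarrow\gamma_l$ or $\gamma_j\leftrightarrow\gamma_k$, which sends $i\gamma_i\gamma_j\mapsto i\gamma_l\gamma_j$ (internal to one hexagon) and $i\gamma_k\gamma_l\mapsto i\gamma_k\gamma_i$ (internal to the other).
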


\section{Translation-invariant commuting Pauli Hamiltonians}\label{sec:pauli}

In this section we prove three main results for translation-invariant commuting Pauli Hamiltonians.
All our results will be algorithmic
where the termination of the algorithm will be guaranteed by results from commutative algebra.
We will use definitions from \cref{sec:introduction,separatorsection},
but the content here is independent of the previous sections.
In the last subsection here we prove lemmas on which the results of the previous sections rely.

The first main result is that there exists a disentangling QCA for a commuting Pauli Hamiltonian
if and only if the ground state is nondegenerate.
The forward implication is obvious,
but the backward implication is not.
Applying our general solution to a specific example,
we obtain the locally flippable separator for the three fermion Walker-Wang model in \cref{WWsection}.
The second main result is that any translation-invariant Clifford QCA $\alpha$ on qubits
squares to be trivial after stabilization,
i.e., $\alpha^2 \otimes \Id$ is a product of a Clifford circuit and a shift.
The third result is that in any one-dimensional translation-invariant
group of Pauli operators,
there exists a maximal abelian subgroup that is locally generated.
This result is applied to find a boundary Hamitonian of a two-dimensional commuting Pauli Hamiltonian
by considering the commutant of the bulk terms near a boundary.

To obtain these results,
we review and develop the perspective for translation-invariant Pauli algebras using Laurent polynomials~\cite{Haah2013}.
At a minimum level the machinery is simply a notation change;
it is a compact way of expressing data of Pauli algebras.
But it will become clear that this allows us to use powerful theorems from commutative algebra.

Let us clarify our convention.
By a {\bf Pauli operator} we mean a tensor product of Pauli matrices;
the identity operator is also considered as a Pauli operator.
Unless explicitly stated, 
we assume every Pauli operator has finitely many non-identity tensor components,
i.e., it has finite support.
By a {\bf commuting Pauli Hamiltonian}
we mean a local Hamiltonian whose terms are Pauli operators and are pairwise commuting.
In addition, we always assume that each term takes the minimum eigenvalue on a ground state; 
the Hamiltonian is assumed to be frustration-free.
The term ``stabilizer Hamiltonian'' is commonly used in literature in connection to ``stabilizer codes,''
but we use more descriptive terminology.

We will mostly directly work with infinite systems of finite dimensional degrees of freedom.
However, we will not worry about defining Hilbert spaces 
on these infinite number of degrees of freedom.
The main reason is that we will only be concerned with a group of Pauli operators,
which is an infinite direct sum of single-qudit Pauli groups,
but never their $\CC$-linear combinations.
Hence, we will not encounter any infinite sequences,
and thus the convergence is out of the question.
As we will only consider translation invariant groups,
it is always possible to transcribe everything into a statement on a sequence of finite periodic systems
and any finiteness will be transcribed into a uniform constant bound in this sequence of systems;
however, we will not do the transcription.

Unless explicitly stated otherwise,
any algebra is generated by finitely supported operators.
Hence, a commutant of an algebra is the set of all finitely supported operators 
that commute with every element of the given algebra.
Here, the verb ``generate'' follows the usage in algebra:
A group generated by a set of elements is the collection of all finite products of the elements.
An algebra generated by a set of elements is the collection of all finite linear combinations
of finite products of the elements.
In this section, we use $R$ to denote the base ring $\FF_p[x_1^\pm,\ldots, x_D^\pm]$;
in previous sections $R$ was the range of QCA.

\subsection{Introduction to polynomial methods}

This subsection is to review the machinery developed in Ref.~\cite{Haah2013},
which combines the emphasis of symplectic structure 
in Pauli stabilizer codes~\cite{CalderbankRainsShorEtAl1997Quantum}
and the polynomial representation of cyclic codes in classical error correction~\cite{MacWilliamsSloane1977}.
This allows us to use tools from homological algebra.
A reader should be able to find more explanations in a lecture note~\cite{Haah2016}
for any unproven statements here.

\paragraph{Stabilizer group.}

We begin with an example.
Consider a one-dimensional array of qubits (spin-$\frac 1 2$'s)
and a Hamiltonian
\begin{align}
H_0 = - \sum_{j} Z_{j-1} X_{j} Z_{j+1}.
\end{align}
This Hamiltonian is invariant under translation,
and consists of commuting Pauli operators of uniformly bounded support.
(This Hamiltonian is known as the ``cluster state'' Hamiltonian~\cite{BriegelRaussendorf}.)
The ground space is determined by a set of eigenvalue equations
\begin{align}
 Z_{j-1} X_j Z_{j+1} \ket \psi = \ket \psi.
\end{align}
Since $\ket \psi$ is stabilized by a set of operators that are pair-wise commuting,
it is natural to think of the ground space as a trivial representation space
of the \emph{group} $\stab$, called a {\bf stabilizer group},
 generated by all the Hamiltonian terms $Z_{j-1} X_j Z_{j+1}$ (without the minus sign).
We will almost always treat the stabilizer group as a whole, 
rather than focusing on individual elements or a certain generating set.

\paragraph{Pauli group.}

The stabilizer group $\stab$ is abelian and every element of $\stab$ squares to become the identity
(said to have \emph{exponent} 2).
Hence, the stabilizer group can be regarded as a vector space over the binary field.
However, this perspective on its own is not too useful since it does not reveal
how the stabilizer group sits in the full algebra of local operators.
The stabilizer group being comprised of Pauli operators
suggests that we should first investigate the multiplicative group $\pauligroup$ of local Pauli operators,
which we call {\bf Pauli group} $\pauligroup$.
The Pauli group has a special property that its commutator subgroup $[\pauligroup,\pauligroup]$,
which is generated by all elements of form $ghg^{-1}h^{-1}$,
consists of phase factors.
In addition any Pauli operator squares to become a phase factor.
Thus, factoring out the commutator subgroup ({\bf abelianization})
the resulting abelian group $\pauligroup/[\pauligroup,\pauligroup]$
``loses'' only phase factors and has exponent 2.
Hence, the abelianized Pauli group is a vector space over the binary field.
The smallest case is when the Pauli group consists of all single-qubit Pauli matrices,
and the abelianized Pauli group is $\FF_2^2$ because $Y = iXZ$.
We make an explicit convention in the correspondence:
\begin{align}
X \leftrightarrow \begin{pmatrix} 1 \\ 0 \end{pmatrix}, \qquad
Z \leftrightarrow \begin{pmatrix} 0 \\ 1 \end{pmatrix}
\end{align}

\paragraph{Symplectic form and generalization to qudits.}

Commutation relations in the Pauli group can still be kept track of in the abelianized Pauli group
by means of a bilinear form.
In $H_0$ two terms $Z_0 X_1 Z_2$ and $Z_1 X_2 Z_3$ commute because there are two tensor factors
that are anticommuting.
That is, the commutation relation is revealed by counting the number of anticommuting pairs 
of Pauli tensor components
\begin{align}
(Z_0, I_0) + (X_1, Z_1) + (Z_2,X_2) + (I_3, Z_3) = 0 + 1 + 1 + 0 = 0 \mod 2
\end{align}
where the pairings $(Z_0,I_0)$, etc. take value 0 if commuting or 1 if anti-commuting.
Under the convention of the previous paragraph
the pairing is identified with a bilinear pairing as promised:
\begin{align}
(P,Q) = p^T \begin{pmatrix} 0 & 1 \\ -1 & 0 \end{pmatrix} q \label{eq:sympform}
\end{align}
where $p,q$ are column vectors in the binary vector space 
corresponding to the Pauli operators $P,Q$, respectively.

In \cref{eq:sympform} we have included an apparently superfluous minus sign in the $\FF_2$-valued form.
This is to have the notation consistent in a generalization to qudits ($p$-state ``spin''s).
In this generalization the Pauli group is defined to be generated by
\begin{align}
  X = \sum_{j=0}^{p-1} \ket{j+1 \mod p} \bra j, \qquad Z = \sum_{j=0}^{p-1} \exp(2\pi i j / p) \ket j \bra j
\end{align}
and their tensor products. The bilinear pairing is justified by
\begin{align}
  X^a Z^b = \exp(2\pi i ab /p) Z^b X^a, \text{ or } Z^b X^a = \exp(-2\pi iab /p) X^a Z^b.
\end{align}

In summary,
the commutation relation in $n$-qubit Pauli group is given by
a bilinear form on the abelianized Pauli group, which is a vector space, 
whose matrix representation is
\begin{align}
\lambda_n = \begin{pmatrix} 0 & I_n \\ -I_n & 0 \end{pmatrix}. \label{eq:lambdaMatrix}
\end{align}
A stabilizer group on $n$-qubits then corresponds to a subspace and the bilinear form (in fact, symplectic)
restricted to this subspace must identically vanish.
If we represent the generators of a stabilizer group by the columns of a binary matrix $\sigma$,
then this matrix $\sigma$ has to satisfy an equation
\begin{align}
  \sigma^T \lambda_n \sigma = 0. \label{eq:ScalarvaluedSymplecticCondition}
\end{align}

\paragraph{Pauli module and stabilizer module.}

In an infinite lattice or in a family of finite-sized lattices,
the matrix $\sigma$ that represents a stabilizer group would be large.
Recalling the example in the beginning of this subsection,
we realize that the matrix $\sigma$ would have a cyclic structure due to the translation invariance,
which ought to be used to ``compress'' the matrix $\sigma$.

A natural way for the compression is to regard the vector space of the abelianized Pauli group
as a module over the translation group.
Let $R = \FF_2[x^\pm]$ be the translation group algebra that acts on the abelianized Pauli group.
The~``variable''~$x$ is the generator of the translation group in one dimension.
Since the Pauli operator $X_j$ acting on site $j$ is a translate of $X_0$,
if $X_0$ is represented by a vector $e_1$,
its translate $X_j$ must be represented by a vector $x^j e_1$.
Likewise, if $Z_0$ is represented by a vector $e_2$,
then $Z_j$ must be represented by $x^j e_2$.
Since any finitely supported Pauli operator is a finite product of these,
we conclude that the entire module of Pauli operators 
is identified with a free module $R e_1 \oplus R e_2 = R^2$.
A term in the example $H_0$ is then expressed as a member of the module $R^2$ as follows.
\begin{align}
\begin{array}{c|c}
\text{ Pauli operators on Hilbert space } & \text{ Vectors in the abelianized Pauli group } \\
\hline
Z_{j-1} & x^{j-1} \begin{pmatrix} 0 \\ 1 \end{pmatrix} \\
X_j     & x^j     \begin{pmatrix} 1 \\ 0 \end{pmatrix} \\
Z_{j+1} & x^{j+1} \begin{pmatrix} 0 \\ 1 \end{pmatrix} \\
\hline
\text{multiply} & \text{add} \\
\hline
Z_{j-1} X_j Z_{j+1} & x^j \begin{pmatrix} 1 \\ x + x^{-1} \end{pmatrix} \\
\end{array}\label{eq:egCluster}
\end{align}
This way the stabilizer group generated by a translation-invariant Hamiltonian $H_0$
is represented by a submodule of $R^2$ generated over $R$ 
by a single member (column vector) in the bottom right in \cref{eq:egCluster}.
Since the abelianized Pauli group $R^2$ is abelian (tautologically),
we have used additive notation in place of multiplicative one.
The appearance of Laurent polynomials is rooted in the fact that the translation group
is the multiplicative group of monomials $\{ x^j ~:~ j \in \ZZ\}$
where $x$ is an indeterminant.

\paragraph{Anti-hermitian form.}

This compact description of the stabilizer group is not yet complete,
since we have not adopted the symplectic form that captures the commutation relation
as some form over the \emph{Pauli module} $P = R^2$.
The solution is straightforward once we recall the fact that the symplectic form
is designed to count the number of anticommuting pairs of tensor factors.
In terms of Laurent polynomials,
a tensor factor of a Pauli operator is a term of a (two-component) Laurent polynomial,
and we need to count the number of terms of the same exponents between two Laurent polynomials, say $f(x)$ and $g(x)$.
A trick is to consider the product $f(1/x) g(x)$,
so that the terms of $f(x),g(x)$ of the same exponents would multiply to become 1,
and the constant term of $f(1/x) g(x)$ is precisely the sum of all of these.
Since the coefficient is in $\FF_2$,
the constant term of $f(1/x)g(x)$ is the parity of the number of common terms in $f(x)$ and $g(x)$.
Since common terms must be cross-counted between two-component vectors representing Pauli operators,
we conclude that the symplectic form value between two vectors $v,w$ is
the constant term (the coefficient of 1) in 
\begin{align}
v^\dagger \lambda_1 w .
\end{align}
Here $\dagger$ is the transpose followed by an involution, that we usually denote by a ``bar,''
of $R$ such that $x \mapsto \bar x = x^{-1}$.
In particular, 
\emph{
if $v^\dagger \lambda w$ vanishes identically as a Laurent polynomial,
then every translate of the Pauli operator of $v$ commutes with every translate of the Pauli operator of $w$.
}
Therefore, if the columns of a matrix $\sigma$ over $R$ generates the \emph{stabilizer module}
then $\sigma$ satisfies a matrix equation
\begin{align}
\sigma^\dagger \lambda \sigma = 0 .
\end{align}
Note the similarity and difference of this equation 
to \cref{eq:ScalarvaluedSymplecticCondition}.
With the example $H_0$ above we see
\begin{align}
\begin{pmatrix} x^{-1} + x & 1 \end{pmatrix}
\begin{pmatrix} 0 & 1 \\ -1 & 0 \end{pmatrix}
\begin{pmatrix} 1 \\ x + x^{-1} \end{pmatrix}  = 0
\end{align}
which encodes the fact that $H_0$ is a commuting Hamiltonian.

\paragraph{Larger unit cell.}

If a unit cell contains $q$ qubits rather than just a single qubit,
then the Pauli group (after abelianization)
is identified with a free module $R^{2q}$.
For instance as we will use below, 
if two qubits reside on each edge of a simple cubic lattice of three-dimensions,
then $q=6$.
The commutation relation have motivated us to introduce an {\bf anti-hermitian}%
\footnote{
A form that is linear in one argument and involution-linear in the other is called a {\bf sesquilinear form}.
A sesquilinear form is {\bf (anti)hermitian} if $(v,w) = \pm \overline{(w,v)}$.
The usual notion of being hermitian over complex vector space is a special case of this
where the involution is the complex conjugation.
With qubits ($p=2$) any anti-hermitian form is hermitian since the base field is of characteristic~2,
but we insist to call it anti-hermitian with qudit generalizations in mind.
}
form on $R^{2q}$ defined by
\begin{align}
R^{2q} \times R^{2q} \ni (v,w) \mapsto v^\dagger \lambda_q w \in R 
\end{align}
where $\lambda_q$ is defined in \cref{eq:lambdaMatrix}.

\paragraph{Higher dimensions.}

Higher dimensions pose no further complication.
We would need more formal variables to denote translations in all directions.
The base ring becomes $\FF_{p}[x_1^\pm, \ldots,x_D^\pm]$ for $D$-dimensional lattice,
where $p=2$ for a system of qubits.
Note that the perspective through modules is forgetful of fine structures of a lattice
such as whether it is triangular or simple cubic, but retains only the dimension and the unit cell size.
Also, the formal translation variables $x_1,\ldots,x_D$ 
are forgetful of the geometry of translation vectors,
and may correspond to nonorthogonal directions in a real space.

\paragraph{Exactness.}

In all situations of interest in this paper 
(and almost all cases where this polynomial method is used elsewhere),
except for one-dimensional cases where $D=1$,
the {\bf generating matrix} $\sigma$ of a stabilizer module
satisfies
\begin{align}
\ker \sigma^\dagger \lambda_q = \im \sigma \subset R^{2q} \label{eq:exactness}
\end{align}
where the kernel and image are over $R$.
Let us interpret this equation.
The kernel on the left-hand side consists of (abelianized) Pauli operators $v$ in $R^{2q}$
such that the anti-hermitian form value is zero with every $w$ in the stabilizer module.
That is, the kernel represents all operators that commute with every term in a Pauli Hamiltonian.
Since $R^{2q}$ represents operators of finite support,
we may say that the kernel is the collection of all \emph{local} Pauli operators that acts within the ground space.
The equation says every such Pauli operator must belong to the stabilizer group up to a forgotten phase factor,
and hence the action on the ground space must be a scalar multiplication by the forgotten phase factor.
We should say that a commuting Pauli Hamiltonian with \cref{eq:exactness} satisfied is {\bf topologically ordered}.

\paragraph{Conditions for exactness.}

The equation $\ker \sigma^\dagger \lambda_q = \im \sigma$ is easy to state,
but it may be nontrivial to check this condition for an explicit instance of $\sigma$.
This is a question that the present module perspective can give an algorithmic answer.
To explain criteria we need to introduce {\bf determinantal ideals $I_t$}.

Given a rectangular matrix $M$ over a ring $R$ and a nonnegative integer $t$,
consider the collection $I_t$ of all $t$-minors (the determinant of a $t$-by-$t$ submatrix)
and their $R$-linear combinations.
By convention we set $I_0 = R$.
If $t$ is larger than any of the matrix dimensions, then $I_t = 0$.
The cofactor expansion formula for the determinant implies that $I_{t+1} \subseteq I_t$,
so
\begin{align}
I_0 \supseteq I_1 \supseteq \cdots \supseteq I_r \supsetneq 0
\end{align}
for some $r \ge 1$.
The maximum integer $r$ is the {\bf rank} of the matrix $M$.
Observe the difference between this definition of the rank and that of a linear operator as the dimension of the image.
The ring $R$ is not a field in general, so we cannot speak of the dimension of the image;
however, using determinantal ideals we obtain a consistent generalization of the notion of rank, 
even when the dimension of the image is not defined.%
\footnote{
When the image is a free module, which is the case over a field,
this rank coincides with the rank of the image.
}

In our case $R$ is a Laurent polynomial ring,
and for any ideal $I$ of $R$ we may consider the set $V(I)$ of zeros of the ideal $I$.%
\footnote{
Formally, one should initially define from which set the zeros are seek.
However, for the present pedestrian treatment, one's intuitive polynomial solving is sufficient.
}
If the ideal is given by an explicit set of generators, which is the case for any determinantal ideal,
the set of zeros of the ideal is the same as the common zeros of the generators.
For example, if $M = \begin{pmatrix} x-1 & y-1 \end{pmatrix}$ over $R = \FF_p[x^\pm,y^\pm]$,
then
\begin{align}
I_0 &= R & V(I_0) &= \emptyset, \nonumber \\
I_1 &= (x-1,y-1) & V(I_1) &= \{(1,1)\}, \label{eq:eg-point} \\
I_2 &= 0 & V(I_2) &= \text{everything}. \nonumber
\end{align}
The {\bf height}
of a proper ideal $I \subseteq R = \FF_p[x_1^\pm,\ldots, x_D^\pm]$
is defined to be the difference between $D$ and the ``largest geometric dimension'' 
of the set of zeros of $I$.%
\footnote{
A proper definition is the minimum integer $\ell$ such that
$\ell$ is the maximum length of chain of primes 
starting with zero and ending with a prime ideal above $I$.
But, the intuitive geometric dimension hardly causes any confusion in our situation.
}
The set of zeros of $I$ may consists of e.g. a point and a line,
in which case the ``largest geometric dimension'' is 1, not 0.
If $I = R$, the height is defined to be $+\infty$.
In the example of \cref{eq:eg-point} the heights of $I_0,I_1,I_2$ are $\infty,2,0$, respectively.

The following lemma states conditions for the equality $\ker \sigma^\dagger \lambda_q = \im \sigma$.
The conditions are not complete, but are sufficient for our purpose later.
One can in fact completely determine whether $\ker \sigma^\dagger \lambda_q = \im \sigma$
using Gr\"obner basis, but we will not use this technique.
An interested reader is referred to a textbook chapter on Gr\"obner basis~\cite[Chap.~15]{Eisenbud}.

\begin{lemma}\label{lem:chaincondition}
Let $\sigma$ be a matrix over $R = \FF_p[x_1^\pm,\ldots, x_D^\pm]$
that satisfies $\sigma^\dagger \lambda_q \sigma = 0$.
If $\ker \sigma^\dagger \lambda_q = \im \sigma$,
then the rank of $\sigma$ is equal to $q$ and the height of $I_q(\sigma)$ is at least $2$.
If $\rank \sigma = q$ and $I_q(\sigma) = R$,
then $\ker \sigma^\dagger \lambda_q = \im \sigma$.
\end{lemma}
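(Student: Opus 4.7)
Both implications reduce to standard commutative algebra on the Noetherian regular UFD $R = \FF_p[x_1^\pm,\ldots,x_D^\pm]$, whose localizations $R_\mathfrak{p}$ at height-$1$ primes are DVRs. The key tools I will use are (i) the Fitting-ideal criterion: for a finitely presented $R$-module $M$, the localization $M_\mathfrak{p}$ is free of rank $r$ iff $\mathrm{Fitt}_{r-1}(M)_\mathfrak{p}=0$ and $\mathrm{Fitt}_r(M)_\mathfrak{p}=R_\mathfrak{p}$; and (ii) the fact that finitely generated torsion-free modules over a DVR are free. For the presentation $\sigma : R^m \to R^{2q}$ of the module $M := R^{2q}/\im \sigma$ one has $\mathrm{Fitt}_i(M) = I_{2q-i}(\sigma)$, so in particular $\mathrm{Fitt}_q(M) = I_q(\sigma)$ and $\mathrm{Fitt}_{q-1}(M) = I_{q+1}(\sigma)$. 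The paper's convention of rank via determinantal ideals coincides with the usual $K$-rank, where $K$ is the fraction field of $R$.

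\textbf{Forward direction.} Suppose $\ker \sigma^\dagger \lambda_q = \im \sigma$. Tensoring with $K$ and using that $\lambda_q$ is invertible over $K$, $\rank\ker(\sigma^\dagger \lambda_q) = 2q - \rank \sigma$. Equating this with $\rank(\im \sigma) = \rank \sigma$ forces $\rank \sigma = q$, and in particular $I_q(\sigma) \ne 0$. For the height bound, the hypothesis identifies $M = R^{2q}/\ker \sigma^\dagger \lambda_q \cong \im(\sigma^\dagger \lambda_q) \subseteq R^m$, so $M$ embeds into a free module and is torsion-free. At any height-$1$ prime $\mathfrak{p}$, $M_\mathfrak{p}$ is a finitely generated torsion-free module over the DVR $R_\mathfrak{p}$, hence free; its rank equals the generic rank $q$. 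By the Fitting criterion $I_q(\sigma)_\mathfrak{p} = R_\mathfrak{p}$, so $I_q(\sigma)$ is contained in no height-$1$ prime. Since $I_q(\sigma)\ne 0$, its height is at least $2$.

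\textbf{Converse direction.} Suppose $\rank \sigma = q$ and $I_q(\sigma) = R$. Then $\mathrm{Fitt}_{q-1}(M) = I_{q+1}(\sigma) = 0$ by the rank assumption, while $\mathrm{Fitt}_q(M) = I_q(\sigma) = R$ by hypothesis. The Fitting criterion therefore makes $M$ locally free of rank $q$ everywhere, in particular torsion-free. Now the quotient $N := \ker \sigma^\dagger \lambda_q / \im \sigma$ injects into $M$, so $N$ is torsion-free as well. Both $\ker \sigma^\dagger \lambda_q$ and $\im \sigma$ have generic rank $q$ by the same fraction-field calculation as above, so $N$ has generic rank $0$; a torsion-free module of rank $0$ is zero, giving $\ker \sigma^\dagger \lambda_q = \im \sigma$.

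\textbf{Main obstacle.} The only non-mechanical step is bringing in the right machinery: one must recognize that $I_q(\sigma)$ is exactly the Fitting ideal $\mathrm{Fitt}_q(\coker \sigma)$, invoke the Fitting-ideal criterion for local freeness, and then arrange the argument so that the observation ``torsion-free of generic rank $0$ implies zero'' does the remaining work. Once those ingredients are in place, everything else is a uniform rank count that is insensitive to the ambient dimension $D$ and uses only invertibility of $\lambda_q$, not its symplectic nature.
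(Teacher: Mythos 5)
Your proof is correct, and it takes a genuinely different route from the paper's. The paper treats the Buchsbaum-Eisenbud acyclicity criterion as a black box: for the forward direction it extends the two-term complex to a finite free resolution via the Hilbert syzygy theorem and reads off the necessary rank/depth conditions, and for the converse it invokes Eisenbud's Cor.~20.12 to produce a resolution whose connecting maps have unit determinantal ideals, then applies the sufficient half of Buchsbaum-Eisenbud. You instead route everything through Fitting ideals and local freeness. For the forward direction, identifying $I_q(\sigma)$ with $\mathrm{Fitt}_q(\coker\sigma)$ and localizing at height-one primes $\mathfrak{p}$ --- where regularity of $R$ makes $R_\mathfrak{p}$ a DVR, so finitely generated torsion-free modules are free --- shows directly that no height-one prime can contain $I_q(\sigma)$. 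For the converse, the hypotheses $\mathrm{Fitt}_{q-1}(\coker\sigma)=0$ and $\mathrm{Fitt}_q(\coker\sigma)=R$ force $\coker\sigma$ to be projective, hence torsion-free, so the submodule $\ker(\sigma^\dagger\lambda_q)/\im\sigma$ of $\coker\sigma$ is torsion-free of generic rank zero and must vanish. Both arguments lean on the regularity of the Laurent ring, but in different places: the paper uses it for the syzygy theorem, you use it for the DVR structure at height one. Your version is more self-contained, avoids restating a global acyclicity criterion, and makes the source of the ``height $\geq 2$'' condition concrete (non-membership in any codimension-one prime), whereas in the paper's treatment this emerges somewhat opaquely from the Buchsbaum-Eisenbud conditions. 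One small terminological slip: in the converse you call $\ker(\sigma^\dagger\lambda_q)/\im\sigma$ a ``quotient'' that ``injects into $M$''; it is simply a submodule of $M=R^{2q}/\im\sigma$, and the argument is otherwise exactly right.
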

Note that the necessary conditions in the first claim are not sufficient conditions.
Here is a counterexample of a 4-by-4 matrix: 
\begin{align}
\sigma = \sigma_1 \oplus (\lambda_1 \bar \sigma_1) \text{ where } 
\sigma_1 = \begin{pmatrix} xy & x^2 \\ y^2 & xy \end{pmatrix} .
\end{align}
\begin{proof}
The full proof is well beyond the present exposition,
so we will be brief assuming some familiarity with homological algebra.
The first claim is a half of the Buchsbaum-Eisenbud theorem on the exactness of chain complexes~\cite{BuchsbaumEisenbud1973Exact}\cite[Thm.~20.9]{Eisenbud}, 
together with the Hilbert syzygy theorem on the existence of finite free resolution 
on $\ker \sigma$~\cite[Cor.~15.11]{Eisenbud}.
The Buchbaum-Eisenbud theorem will be used importantly,
so we quote the theorem here.
\begin{theorem}[Buchsbaum-Eisenbud~\cite{BuchsbaumEisenbud1973Exact}]
A chain complex 
\begin{align}
0 \xrightarrow{\varphi_{n+1} = 0} F_n \xrightarrow{\varphi_n} F_{n-1} \to \cdots \to F_1 \xrightarrow{\varphi_1} F_0
\end{align}
of finitely generated free modules over a Noetherian ring is exact,
if and only if $\rank F_k = \rank \varphi_{k+1} + \rank \varphi_k$ 
and $\mathrm{depth}~ I(\varphi_k) \ge k$ for $k =1,\ldots,n$.
\end{theorem}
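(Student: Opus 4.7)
The plan is to reduce to Noetherian local rings by localization and then invoke two pillars of commutative algebra: McCoy's theorem relating the rank of a map of free modules to its determinantal ideals, and the Peskine--Szpiro acyclicity lemma providing a depth criterion for acyclicity. Since exactness, the determinantal ideals $I_t(\varphi_k)$, and the grade/depth of ideals are all determined by their localizations at primes (indeed at maximal ideals), there is essentially no loss in assuming $R$ is local throughout the argument.

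For the forward implication (exactness $\Rightarrow$ rank and depth), I would first localize at each minimal prime $\mathfrak p$, where $R_\mathfrak p$ is Artinian and the complex becomes an exact sequence of vector spaces over a field; the Euler-characteristic identity then yields $\rank F_k = \rank \varphi_k + \rank \varphi_{k+1}$, since the rank over $R$ can be read off at any minimal prime via the determinantal ideals. For the depth inequality I would argue by contrapositive: if some prime $\mathfrak p \supseteq I(\varphi_k)$ had depth strictly less than $k$, I would localize there and use the characterization $\mathrm{depth}\, R_\mathfrak p = \min\{i : \mathrm{Ext}^i_{R_\mathfrak p}(\kappa(\mathfrak p), R_\mathfrak p) \neq 0\}$, combined with a Koszul/regular-sequence count, to exhibit a cycle not hit by $\varphi_{k+1}$, contradicting exactness.

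For the backward implication I would apply Peskine--Szpiro. After localizing at an arbitrary prime $\mathfrak p$, either $\mathfrak p$ misses some $I(\varphi_k)$, in which case McCoy upgrades $\varphi_k$ (at $\mathfrak p$) to a map whose maximal minor is a unit, so that the localized complex splits off a trivial piece and the inductive hypothesis applies to what remains; or $\mathfrak p$ contains every $I(\varphi_k)$, in which case the hypothesis $\mathrm{depth}\, I(\varphi_k) \geq k$ forces $\mathrm{depth}\, R_\mathfrak p \geq k$ for each $k$. This last inequality, together with the fact that each $F_k$ is free (hence has the depth of the ring), is exactly what Peskine--Szpiro requires to conclude $H_k((F_\bullet)_\mathfrak p) = 0$ for $k \geq 1$. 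Since exactness is a local property, this yields the claim globally.

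The main obstacle I anticipate is reconciling the two notions of rank in play: the coordinate-free rank governed by the determinantal ideals $I_t(\varphi_k)$, which appears in the hypotheses, and the residue-field rank used internally in McCoy and Peskine--Szpiro. Ensuring these agree at every localization where it matters, and tracking how $I(\varphi_k)$ behaves under the localizations that split off pieces of the complex, is the technically subtlest bookkeeping. Once McCoy is wielded carefully and the grade inequalities are verified to propagate under localization, the remaining work is standard homological algebra built around the Peskine--Szpiro lemma.
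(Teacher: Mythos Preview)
The paper does not prove this theorem. It is quoted inside the proof of \cref{lem:chaincondition} as a cited result from \cite{BuchsbaumEisenbud1973Exact}, introduced with ``The Buchbaum-Eisenbud theorem will be used importantly, so we quote the theorem here,'' and the surrounding passage explicitly says ``The full proof is well beyond the present exposition.'' So there is nothing in the paper to compare your proposal against; the authors treat the statement as a black box from the literature.

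As for the content of your sketch: the overall architecture---reduce to the local case, read off the rank identity at minimal primes, and for the converse use a depth-based acyclicity criterion in the spirit of Peskine--Szpiro together with McCoy's characterization of when a map of free modules is injective---is indeed one of the standard routes to the result (see, e.g., the treatment in Eisenbud's \emph{Commutative Algebra}, \S20, or Bruns--Herzog). The point you flag as the main obstacle is real: one must show that $I_{r_k}(\varphi_k)$ (with $r_k=\rank\varphi_k$) has grade at least $k$, not merely that some determinantal ideal does, and the induction that peels off a split summand when a maximal minor is a unit has to be set up so that the residual complex still satisfies the rank and depth hypotheses with the correct indices. Your forward-direction sketch for the depth inequality is the vaguest part; the cleanest argument there is not an ad hoc Ext computation but rather the observation that if $\mathrm{grade}\,I(\varphi_k)<k$, one can localize at an associated prime of a maximal regular sequence in $I(\varphi_k)$ and obtain a complex of free modules over a depth-zero local ring that is still exact in degrees $\ge k$ but whose $k$-th map has all maximal minors in the maximal ideal, contradicting injectivity via McCoy. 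Filling that in would complete a correct proof, but again, none of this appears in the paper itself.
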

The second claim of our lemma 
is the other half of the Buchsbaum-Eisenbud theorem together with 
Corollary~20.12 of Ref.~\cite{Eisenbud}.
The cited theorem~\cite[Cor.~20.12]{Eisenbud}
implies that any finite resolution of $\ker \sigma$ has connecting maps
with the determinantal ideal being unit.
That is, there exists a finite chain complex,
extending $R^q \xrightarrow{\sigma} R^{2q} \xrightarrow{\sigma^\dagger \lambda_q} R^q$ on the left, 
and ending with zero, where all determinantal ideals of the connecting maps are unit,
and the ranks of the maps going in and out sum to the rank of the module.
The unit determinantal ideal trivially satisfies the depth condition of 
the Buchsbaum-Eisenbud criterion for finite exact sequences,
and the rank condition is also satisfied.
Therefore, the extended chain complex is exact, and in particular is exact at $R^{2q}$
between $\sigma$ and $\sigma^\dagger \lambda_q$, which is the claim of the lemma.
\end{proof}

\paragraph{Taking a smaller translation group: the functor $\phi^{(n)}_\#$.}

Consider a ring homomorphism $\phi: S \to R$ between commutative rings
and a module $A$ over $R$.
With the help of the morphism $\phi$, we can say that $A$ is an $S$-module;
all we need to do is to define the ring action on the module,
but there is a canonical way by letting $s \in S$ act on $m \in A$ by 
$s \cdot m := \phi(s)m$.
Since $\phi$ is a ring homomorphism, all requirement for the ring action
is satisfied.

Suppose we have a $R$-linear module map $f: M \to N$ 
between $R$-modules $M$ and $N$.
If we regard $M$ and $N$ as $S$-module via $\phi$,
then a natural question is whether $f$ is $S$-linear.
The answer is yes; $f(s \cdot m) = f(\phi(s)m) = \phi(s) f(m) = s \cdot f(m)$.
Thus, $\phi$ induces a \emph{covariant functor} $\phi_\#$ from the category of $R$-modules
to the category of $S$-modules.
If $S$ is in fact isomorphic to $R$ as rings (but the morphism $\phi$ may not even be invertible),
we may regard $\phi_\#$ as a functor from the category of $R$-modules to itself,
and we may consider composition of two or more such functors.
Naturally, we write $\phi_\#(f)$ to denote the map $f$ viewed as an $S$-linear map.

If we have an $R$-linear map between free $R$-modules with a basis chosen,
the map can be written as a matrix $T$ with entries in $R$
and the map is given by matrix multiplication on a vector from the left.
The matrix element at $(a,b)$ of $T$
is a map from $b$-th direct summand $R$ to $a$-th direct summand $R$.
Under the functor $\phi_\#$, any summand $R$ is an $S$-module,
and every matrix element of $T$ 
must be described as a $S$-linear map $T_{ab}$ between $S$-modules.

If $R$ happens to be a free $S$-module (which will be true in our case),
$T_{ab}$ can be written as a matrix with entries in $S$,
and we obtain a concrete implementation of $\phi_\#(T)$.
Caution: for a bilinear (or sesquilinear) form $\Xi$ on a free $R$-module $\mathcal M$,
the matrix representation of $\Xi$ under $\phi_\#$ should be obtained,
not by interpreting $\Xi$ as a matrix (a linear map),
but by evaluating form values on a chosen $S$-basis of $S$-module $\phi_\#(\mathcal M)$;
however, see \cref{rem:FormAsMap} below.

In our case, we consider a ring homomorphism 
$\phi^{(n)} : S = \FF_p [y^{\pm}] \ni y \mapsto x^n \in R = \FF_p [x^\pm]$,
which is injective for any integer $n \ge 1$.
(Here we explain with one-dimensional group algebra for notational convenience, 
but generalizations to higher dimensions can be easily obtained by
$y_i \mapsto x_i^n$ for $i = 1,\ldots, D$.)
The ring $S$ is of course isomorphic to $R$,
but we interpret $S$ as a coarser translation group embedded in $R$
by $\phi^{(n)}$.
Physically, this only means that we do not use the largest translation group available,
but only a smaller translation group.
The translation variable~$y$ denotes the translation by $n$ units ($x^n$).
As the translation group gets smaller,
the unit cell gets larger,
and there are more data we need to provide to describe a module~$A$.
Indeed, if a basis for $R$-module $A$ has $m$ elements,
then under $\phi^{(n)}$, we need $nm$ basis elements for $S$-module $A$.

To explicitly find an $S$-basis from an $R$-basis,
we think of $A$ as the image of a $R$-linear map (or matrix) $B : R^m \to R^{2q}$.
Then, an $S$-basis for $A$ is going to be the columns of $\phi^{(n)}_\#(B)$.
Each direct summand $R$ of $R^m$ or $R^{2q}$ is a free $S$-module of rank $n$
with basis $\{1,x,x^2,\ldots,x^{n-1}\}$
and the finer translation variable $x$ acts on this $S$-module $R = S^n$ (on the left) as
\begin{align}
\phi^{(n)}_\# (x) = \begin{pmatrix} 
0 &   &  & y \\
1 & 0 &  &    \\
  & 1 &0 &   \\
  &   & \ddots & 
\end{pmatrix} , \label{eq:phiImpl}
\end{align}
a cyclic permutation matrix except for the top right corner.
Note that $\phi^{(n)}_\#$ is a homomorphism from $R$ into a matrix ring over $S$.
Hence, the prescription to write $\phi^{(n)}_\#(B)$ is to 
\emph{
write $B$ as a matrix with Laurent polynomial entries in $x$,
and then replace every Laurent polynomial entry $g(x)$ with a matrix 
$g\left( \phi^{(n)}_\#(x) \right)$.
}

\begin{remark}\label{rem:FormAsMap}
Let $\Xi$ be a bilinear or sesquilinear form over $R$.
A convenient feature of our chosen basis $\{1,x,x^2,\ldots,x^{n-1}\}$ of an $S$-module $R = S^n$
is that the explicit matrix representation of $\phi^{(n)}_\#(\Xi)$ follows the \emph{same prescription}
as if the matrix representation $\Xi$ were an $R$-linear map.
(The prescription would not be valid 
if we had chosen a different basis for the $S$-module $R=S^n$.)
Thus, we will always use this basis 
whenever we apply the functor $\phi^{(n)}_\#$.
In particular, under this basis choice, $\phi^{(n)}_\#(\lambda_q) = \lambda_{nq}$,
where $\lambda_q$ is the anti-hermitian form on a Pauli module $R^{2q}$.
\end{remark}

\subsection{Locally flippable separators from nondegeneracy}

We assume throughout that there are finitely many Hamiltonian terms up to translations;
otherwise, the sum of all terms supported on a disk would have unbounded norm.

\begin{theorem}\label{thm:ticp-disentangler}
Let $H$ be a translation-invariant commuting Pauli Hamiltonian (Pauli stabilizer Hamiltonian) 
in $D$-dimensional lattice with $q$ qudits of prime dimension $p$ per site.
Suppose that $H$ has a unique (nondegenerate) ground state 
on every finite lattice with periodic boundary conditions of any sufficiently large size.

Then, there exists a translation-invariant Clifford QCA 
that maps the ground state of $H$ into a product state.
Moreover, when $p=2$, if we assume further that
every term of $H$ is invariant under complex conjugation,
then the Clifford QCA can be chosen so that it maps any real operator to a real operator.
\end{theorem}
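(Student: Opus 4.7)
The plan is to work in the polynomial framework set up earlier in this section and exhibit a translation-invariant locally flippable separator whose separator and flipper elements are all Pauli operators; once we have this, the construction of \cref{thm:qcadisent}, applied in this Pauli-preserving setting, immediately produces a translation-invariant Clifford QCA mapping the ground state into a product state.

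Concretely, let $\sigma$ be a $2q \times m$ generating matrix of the stabilizer module over $R = \FF_p[x_1^\pm,\ldots,x_D^\pm]$, satisfying $\sigma^\dagger \lambda_q \sigma = 0$. I would first argue that nondegeneracy on all sufficiently large periodic systems is equivalent, in the polynomial framework, to the conjunction of (i) $\ker(\sigma^\dagger \lambda_q) = \im\,\sigma$ (topological order: any Pauli commuting with every stabilizer would otherwise act nontrivially on the ground space and produce degeneracy), (ii) $\rank \sigma = q$, and (iii) $I_q(\sigma) = R$. The content of (iii) is a quantitative strengthening of Lemma~\ref{lem:chaincondition}: a common zero $\zeta \in (\bar \FF_p)^D$ of the $q \times q$ minors would, on any torus whose side lengths are multiples of the order of $\zeta$, contribute an extra harmonic mode commuting with every stabilizer translate, contradicting nondegeneracy at that torus size.

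Given $I_q(\sigma) = R$, the $q$-minors of $\sigma^\dagger \lambda_q$ agree with those of $\sigma$ up to signs, so $I_q(\sigma^\dagger \lambda_q) = R$ as well. Writing $1 = \sum_\alpha r_\alpha m_\alpha$ with $m_\alpha$ a $q\times q$ minor of $\sigma^\dagger \lambda_q$ and $r_\alpha \in R$, Cramer's rule applied to each of the corresponding $q \times q$ submatrices furnishes vectors $v_b^{(\alpha)} \in R^{2q}$ with $(\sigma^\dagger \lambda_q)\, v_b^{(\alpha)} = m_\alpha e_b$; setting $v_b = \sum_\alpha r_\alpha v_b^{(\alpha)}$ gives $(\sigma^\dagger \lambda_q)\,v_b = e_b$. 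Because each $v_b$ has Laurent-polynomial entries, it represents a Pauli operator of bounded support, and the defining equation says precisely that $v_b$ anticommutes with one distinguished column of $\sigma$ in the correct $p$-th-root-of-unity sense while supercommuting with every translate of every other column. These $v_b$ are therefore translation-invariant local Pauli flippers for the columns of $\sigma$, so the columns of $\sigma$ together with $\{v_b\}$ form a translation-invariant locally flippable separator. Applying the construction of \cref{thm:qcadisent} and choosing a translation-invariant bijection $f(\cdot)$ from $R$-generators onto the $q$ qudits per unit cell (possible since both sides have the same rank $q$), the resulting QCA sends Paulis to Paulis and is therefore a translation-invariant Clifford QCA that disentangles the ground state.

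For the reality claim when $p=2$: the construction above never introduces imaginary phases. The entries of $\sigma$ lie in $\FF_2$, the minors $m_\alpha$ and coefficients $r_\alpha$ are ordinary polynomials, and the $v_b$ can be realized as products of $X$ and $Z$ operators with $\pm 1$ overall phase; the bijection step and the gate implementation in \cref{thm:qcadisent} can then be carried out using only real Clifford gates (Hadamard, CNOT, CZ, and Pauli operators), so the resulting QCA commutes with complex conjugation. The hard step in this plan is step (iii): showing that nondegeneracy on all sufficiently large tori forces $I_q(\sigma)$ to be the full ring $R$ rather than merely an ideal of height $\geq 2$. This will require quantitatively tracking how a common zero of the $q$-minors in an algebraic extension of $\FF_p$ produces a concrete logical operator on a periodic system of matching size, and is the main technical obstacle separating the weaker conditions guaranteed by \cref{lem:chaincondition} from the unit-ideal condition driving the Cramer's-rule construction.
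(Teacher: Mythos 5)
Your high-level plan --- produce a translation-invariant locally flippable separator of Pauli operators, then build a Clifford QCA from it --- is the paper's plan, and your Cramer's-rule derivation of a local Pauli flipper from $I_q(\sigma^\dagger\lambda_q)=R$ is essentially the content of \cref{lem:lf-Clif-separator} (the paper instead concludes $\coker\sigma^\dagger\lambda_q=0$ via Quillen--Suslin--Swan, but both routes yield the same $F$). However, there are two serious gaps that prevent your argument from closing.

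First, invoking \cref{thm:qcadisent} does not by itself produce a \emph{Clifford} QCA, even when the separator $\opZ_a$ and the raw flipper $\tilde\opX_a$ are all Pauli. In the proof of \cref{thm:qcadisent}, the corrected flipper is $\opX_a = U_a\tilde\opX_a$ where $U_a$ is a local unitary that is diagonal in the $\opZ_a$-eigenbasis; nothing forces $U_a$ to be a Pauli operator, so $\opX_a$ generally is not Pauli and the resulting QCA generally maps $X_{f(a)}$ to a non-Pauli operator under its inverse. What the paper actually does is construct a Pauli flipper $T=F-\sigma E$ directly at the module level, where $E$ is chosen to satisfy $E-E^\dagger=F^\dagger\lambda_q F$; this forces $T^\dagger\lambda_q T=0$, i.e.\ the flipper columns mutually supercommute while remaining Pauli, and then the Clifford QCA is simply the symplectic matrix $Q=\begin{pmatrix}T & \sigma\end{pmatrix}$. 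The existence of such an $E$ is a nontrivial point (it uses that the diagonal of $F^\dagger\lambda_q F$ has no constant term, by anti-hermitianness of $\lambda_q$), and your proposal skips it entirely.

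Second, the reality claim is not addressed by observing that "the construction never introduces imaginary phases." A Pauli operator with an odd number of $Y$ tensor factors satisfies $P^*=-P$ and hence generates a non-real Clifford gate. The paper imposes the additional constraint $\diag\Coe(T_X^\dagger T_Z)=0$, tracking the parity of $Y$-factors in each column of $T$, and shows this can be solved simultaneously with the commuting constraint on $E$ because \cref{eq:EEFF} leaves the diagonal constant terms of $E$ free. Your proposal has no analogue of this step. (Minor remarks: your proposed route to $I_q(\sigma)=R$ via common zeros in algebraic extensions is considerably harder than needed --- the paper derives $\ker\sigma^\dagger\lambda_q=\im\sigma$ directly from nondegeneracy by the Krull intersection theorem in \cref{lem:nondegeneracy-exactness}, after which $I_q(\sigma)=R$ follows from~\cite[Cor.~4.2]{Haah2013} and \cref{lem:chaincondition}. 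Also, you implicitly assume the stabilizer module admits a free basis with exactly $q$ generators; this is the content of \cref{lem:separatorexists}, which requires a free-resolution argument and Quillen--Suslin--Swan, and is not automatic.)
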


This theorem will follow from the general construction of \cref{sec:disent}
after \cref{lem:separatorexists,lem:lf-Clif-separator} below,
but here we give a proof entirely within the polynomial framework.

We begin with a basic property.
\begin{lemma}\label{lem:nondegeneracy-exactness}
Suppose $H$ has a nondegenerate ground state 
on every finite periodic lattice of sufficiently large size.
If the column span of a matrix $\sigma$ over $R = \FF_p[x_1^\pm,\ldots,x_D^\pm]$ 
is the stabilizer module of $H$,
then $\ker \sigma^\dagger \lambda_q = \im \sigma$ and $I_q(\sigma) = R$.
\end{lemma}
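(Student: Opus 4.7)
The plan is to establish $I_q(\sigma) = R$ first, deduce $\rank \sigma = q$, and then invoke Lemma~\ref{lem:chaincondition} to obtain $\ker \sigma^\dagger \lambda_q = \im \sigma$.

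To prove $I_q(\sigma) = R$, I would argue by contradiction. Suppose $I_q(\sigma)$ is proper, hence contained in some maximal ideal $\mathfrak{m} \subset R$. Maximal ideals of $R = \FF_p[x_1^\pm,\ldots,x_D^\pm]$ correspond (up to Galois action) to points $\alpha \in (\bar{\FF}_p^*)^D$ at which $\rank \sigma(\alpha) \le q-1$, since $\mathfrak{m} \supseteq I_q(\sigma)$ forces every $q\times q$ minor of $\sigma$ to vanish at $\alpha$. The key observation is that $\bar{\FF}_p^*$ consists entirely of torsion elements of order coprime to $p$; so letting $L_0$ denote the lcm of the orders of the $\alpha_i$, we have $\gcd(L_0,p) = 1$ and $\alpha$ sits on the $L$-torus whenever $L_0 \mid L$. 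I would pick such an $L$, coprime to $p$, and large enough for the nondegeneracy hypothesis to apply.

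Next I would compute the ground space degeneracy on that $L$-torus by Fourier decomposition. Since $\gcd(L,p)=1$, the quotient ring splits as $R_L \otimes_{\FF_p} \bar{\FF}_p \cong \prod_{\beta} \bar{\FF}_p$, with $\beta$ ranging over tuples of $L$-th roots of unity. Both $\sigma_L$ and $\sigma_L^\dagger \lambda_q$ decompose fiberwise (the involution $x_i \mapsto x_i^{-1}$ permutes the fibers via $\beta \leftrightarrow \beta^{-1}$, but this does not change dimension counts after reindexing), yielding
\[
\log_p |\mathrm{GSD}| = \tfrac{1}{2} \bigl(\dim_{\bar{\FF}_p} \ker(\sigma_L^\dagger\lambda_q) - \dim_{\bar{\FF}_p} \im(\sigma_L)\bigr) = \sum_\beta \bigl(q - \rank \sigma(\beta)\bigr).
\]
Each summand is nonnegative: the identity $\sigma^\dagger \lambda_q \sigma = 0$ makes $\im \sigma$ isotropic for the nondegenerate anti-hermitian form $\lambda_q$ over the fraction field of $R$, bounding $\rank \sigma \le q$ and hence $\rank \sigma(\beta) \le q$ at every point. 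Nondegeneracy forces the sum to vanish, so $\rank \sigma(\beta) = q$ for all $\beta$, contradicting $\rank \sigma(\alpha) \le q-1$.

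With $I_q(\sigma) = R$ in hand, the existence of a unit $q \times q$ minor gives $\rank \sigma \ge q$, and combined with the isotropy bound $\rank \sigma \le q$ from the previous paragraph this yields $\rank \sigma = q$. Lemma~\ref{lem:chaincondition} then delivers $\ker \sigma^\dagger \lambda_q = \im \sigma$. The main delicate step is the fiberwise dimension count in the Fourier decomposition, especially bookkeeping the $\beta \leftrightarrow \beta^{-1}$ pairing induced by the involution; once one commits to working over $\bar{\FF}_p$ with $L$ coprime to $p$ the rest is routine, and such an $L$ is always available because $\bar{\FF}_p^*$ has no $p$-torsion.
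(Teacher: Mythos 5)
Your proof is correct but takes a genuinely different route from the paper's. The paper first establishes $\ker\sigma^\dagger\lambda_q\subseteq\im\sigma$ directly: nondegeneracy on the $L$-torus gives $\ker\sigma^\dagger\lambda_q\subseteq\im\sigma+\mathfrak b_L R^{2q}$, and then setting $L=p^m$ makes $\mathfrak b_L\subseteq\mathfrak b_1^n$ (via $x^{p^m}-1=(x-1)^{p^m}$ in characteristic $p$), so the Krull intersection theorem kills the quotient; only afterwards does the paper deduce $I_q(\sigma)=R$ by citing Corollary~4.2 of~\cite{Haah2013}. You instead prove $I_q(\sigma)=R$ first, via a Fourier/CRT decomposition over a torus of size $L$ \emph{coprime} to $p$ containing a hypothetical bad point $\alpha$, bounding the ground-state degeneracy from below by $\sum_\beta(q-\rank\sigma(\beta))\geq q-\rank\sigma(\alpha)\geq1$; exactness then follows from Lemma~\ref{lem:chaincondition}. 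So the two proofs exploit complementary slices of the nondegeneracy hypothesis: $L=p^m$ (paper) versus $\gcd(L,p)=1$ (yours). Your route is more elementary in the sense that it avoids the appeal to Corollary~4.2 of~\cite{Haah2013}, replacing it with a hands-on dimension count, while the paper's route is shorter and stays entirely within commutative algebra. Two small remarks for tightening the write-up: (i) the equality $\log_p|\mathrm{GSD}|=\tfrac12(\dim\ker-\dim\im)$ is literally valid only when the torus Hamiltonian is frustration-free, but the implication you actually need — that $\dim\ker>\dim\im$ forces $\mathrm{GSD}\ge p^{\dim\ker-\dim\im\over2}>1$ because each syndrome sector has dimension $p^{qL^D-\dim\im}$ — holds unconditionally, so the argument is safe; (ii) "existence of a unit $q\times q$ minor" overstates what $I_q(\sigma)=R$ gives (it only guarantees an $R$-linear combination of minors is $1$), but you only need $I_q\neq0$, which yields $\rank\sigma\geq q$.
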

\begin{proof}
Since $\ker \sigma^\dagger \lambda_q \supseteq \im \sigma$ by the assumption that $H$ is commuting,
we have to show $\ker \sigma^\dagger \lambda_q \subseteq \im \sigma$.
Once this is shown, the second claim $I_q(\sigma) = R$ follows from~\cite[Cor.~4.2]{Haah2013}
and \cref{lem:chaincondition}.

By the nondegeneracy assumption, any ``logical'' Pauli operator
(those that act within the ground space) on a periodic lattice of linear size $L$
must be a stabilizer and belong to $\im \sigma + \mathfrak{b}_L R^{2q}$, 
where 
\begin{align}
\mathfrak b _L = (x_1^L -1,\ldots,x_D^L -1 )
\end{align}
is an ideal of $R$ that imposes periodic boundary conditions in all $D$ directions.
Thus, we have
\begin{align}
\ker \sigma^\dagger \lambda_q \subseteq \im \sigma + \mathfrak{b}_L R^{2q}
\end{align}
for any $L$ sufficiently large.
We may replace $\mathfrak b _L$ by an ideal power, 
only to enlarge the module on the right-hand side as follows.
For any $n \ge 1$ consider $p^m \ge n$.
Then, $(x-1)^n | (x-1)^{p^m}$ but $(x-1)^{p^m} = x^{p^m} - 1$ 
due to the positive characteristic $p$ of $R$,
implying
$\mathfrak b_{p^m} \subseteq \mathfrak b_1^n$.%
\footnote{
The ideal power $\mathfrak b_1^n$ 
is the ideal generated by all $n$-fold products of elements of $\mathfrak b_1$.
See~\cite[Lem.~7.3]{Haah2013} with $N = R/\mathfrak b_1^n$ for a more general result.
}
Thus,
\begin{align}
\ker \sigma^\dagger \lambda_q  / \im \sigma \subseteq \bigcap_{n=1}^\infty \left( \mathfrak{b}_1^n R^{2q} / \im \sigma \right).
\end{align}
The Krull intersection theorem~\cite[Cor.~5.4]{Eisenbud} implies that the right-hand side is zero.
\end{proof}

The following result is not used elsewhere.
\begin{corollary}\label{cor:disentanglingQCA-is-manifesting}
Under the assumption of \cref{lem:nondegeneracy-exactness},
a translation-invariant Clifford QCA $Q$
maps the ground state into a product state $\bigotimes \ket 0$,
if and only if
$Q$ transforms the stabilizer group into the trivial stabilizer group $\langle Z_j \rangle$.
\end{corollary}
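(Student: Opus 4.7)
The plan rests on two observations. First, under the nondegeneracy hypothesis, \cref{lem:nondegeneracy-exactness} tells us that any Pauli operator commuting with every element of $\stab(H)$ already lies in $\stab(H)$ up to a phase; equivalently, $\stab(H)$ coincides with the group of Pauli operators that fix the unique ground state $\ket{\psi_H}$ (with the $+1$ phase). Second, an elementary fact is that $\langle Z_j\rangle$ is precisely the group of Pauli operators that fix $\bigotimes_j\ket{0}$. Once both sides of the biconditional are rewritten as ``the subgroup of Paulis fixing the given state,'' and one recalls that a Clifford QCA restricts to a bijection of the Pauli group (modulo phases), the corollary reduces to the statement that $Q$ sends $\ket{\psi_H}$ to $\bigotimes_j\ket{0}$ iff it carries one stabilizer subgroup to the other.

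For the $(\Leftarrow)$ direction I would pick a unitary $U$ with $Q(O)=U^\dagger O U$, assume $Q(\stab(H))=\langle Z_j\rangle$, and compute, for every $P\in\stab(H)$, that $Q(P)\,U^\dagger\ket{\psi_H}=U^\dagger P\ket{\psi_H}=U^\dagger\ket{\psi_H}$; since $Q(P)$ ranges over all of $\langle Z_j\rangle$, the vector $U^\dagger\ket{\psi_H}$ is the unique joint $+1$-eigenstate of all the $Z_j$, namely $\bigotimes_j\ket 0$. For $(\Rightarrow)$, starting from $U^\dagger\ket{\psi_H}=\bigotimes_j\ket 0$, the same computation shows that $Q(P)$ fixes $\bigotimes_j\ket 0$; since $Q$ is Clifford, $Q(P)$ is Pauli and therefore lies in $\langle Z_j\rangle$, giving $Q(\stab(H))\subseteq\langle Z_j\rangle$. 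For the reverse inclusion I would apply the identical argument to $Q^{-1}$, which by construction sends $\bigotimes_j\ket 0$ back to $\ket{\psi_H}$, yielding $Q^{-1}(\langle Z_j\rangle)\subseteq\stab(H)$.

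The step where \cref{lem:nondegeneracy-exactness} is actually invoked is the auxiliary identification: a Pauli $P$ with $P\ket{\psi_H}=\ket{\psi_H}$ must belong to $\stab(H)$. I plan to establish this by the standard commutator dichotomy: for any $s\in\stab(H)$ one has $Ps=\omega\, sP$ for some $p$-th root of unity $\omega$; applying both sides to $\ket{\psi_H}$ yields $\ket{\psi_H}=\omega\ket{\psi_H}$, so $\omega=1$, hence $P$ commutes with every stabilizer, and by \cref{lem:nondegeneracy-exactness} it lies in $\stab(H)$ up to a phase, which must be $+1$ since $P$ fixes $\ket{\psi_H}$. I do not anticipate a genuine obstacle; the only care required is in phase bookkeeping and the convention $Q(O)=U^\dagger OU$, since an inverted convention would merely swap the two sides of the biconditional.
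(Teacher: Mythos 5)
Your proof is correct and takes essentially the same route as the paper's: the backward direction is immediate, and the forward direction comes down to showing that the group of finitely supported Pauli stabilizers of the relevant state is unique and hence determined, which you establish (as the paper does) by combining the commutator dichotomy with the exactness conclusion $\ker\sigma^\dagger\lambda_q=\im\sigma$ of \cref{lem:nondegeneracy-exactness}. The only cosmetic difference is that the paper works with the image group $Q(\stab(H))$ on the product-state side (tacitly using that a symplectic $Q$ preserves exactness), while you pull back through $Q^{-1}$ and apply the lemma directly to $\stab(H)$ — both are valid.
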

In particular, any Clifford circuit that disentangles the ground state of a commuting Pauli Hamiltonian
has to disentangle it manifestly.
This might not be true in general;
as we explain in \cref{discussionsection}
there is a subtle possibility that
the nondegenerate ground state of a general commuting Hamiltonian $H$
may be disentangled by some quantum circuit $U$ (not necessarily Clifford)
but $U^\dagger H U$ cannot be connected to the negative sum of the elements of a trivial separator
via some path in a space of commuting Hamiltonians.
This corollary says that this possibility is ruled out 
for translation invariant commuting Pauli Hamiltonians.
\begin{proof}
Certainly, any QCA that trivializes the stabilizer group disentangles the ground state.
Conversely, we have to show that for any product state
the stabilizer group is unique.
Suppose that a translation-invariant stabilizer group is chosen.
Consider any Pauli stabilizer $g$ of finite support for the product state.
It has to commute with any other Pauli stabilizer,
and by \cref{lem:nondegeneracy-exactness} the stabilizer $g$
belongs to the chosen stabilizer group.
That is, 
the stabilizer group is precisely the collection of all Pauli stabilizers of finite support.
\end{proof}

Now we construct a separator.
\begin{lemma}\label{lem:separatorexists}
Under the assumption of \cref{lem:nondegeneracy-exactness},
there exists a translation-invariant separator.
\end{lemma}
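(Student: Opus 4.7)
The plan is to replace the generating matrix $\sigma$ of the stabilizer module by a matrix $\tau$ whose $q$ columns form a free $R$-basis of $\im\sigma$, and then to declare the $q$ Pauli operators represented by these columns, together with all their lattice translates, to be the elements of the separator. A priori the columns of $\sigma$ may carry syzygies, which on a periodic lattice would produce redundancies among the generators and collapse the joint eigenspaces to something larger than one-dimensional; passing to a free basis is meant to remove this obstruction in a translation-invariant way.

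The key algebraic step is to prove that $\im\sigma$ is a free $R$-module of rank $q$. Combining \cref{lem:chaincondition} with \cref{lem:nondegeneracy-exactness} we already have $\rank\sigma = q$ and $I_q(\sigma)=R$. The latter condition furnishes an expression $1=\sum_I c_I\,\det\sigma_I$ in terms of $q\times q$ minors of $\sigma$, from which one can cover $\mathrm{Spec}\,R$ by principal opens on each of which $\sigma$ admits an explicit right inverse onto its image; this exhibits $\im\sigma$ as a finitely generated projective $R$-module of rank $q$. Because $R=\FF_p[x_1^\pm,\ldots,x_D^\pm]$ is a Laurent polynomial ring over a field, the Quillen--Suslin theorem (in its extension to Laurent polynomial rings due to Swan) upgrades projectivity to freeness, producing a basis $\{v_1,\ldots,v_q\}\subset R^{2q}$ of $\im\sigma$. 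I would assemble these as the columns of a $2q\times q$ matrix $\tau$, so that $\im\tau=\im\sigma$; each $v_i$ then lifts, up to an overall phase, to a finitely supported Pauli operator $\opZ_i$, which I would normalize so that $\opZ_i^{\,p}=I$.

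With $\tau$ in hand, the separator axioms are nearly automatic: condition~1 is vacuous because there are no fermions; condition~2 is the chosen normalization; commutativity (condition~3) is inherited from the inclusion $\im\tau\subset\ker\tau^\dagger\lambda_q$; and the bounded-support condition~4 follows because each $v_i$ carries only finitely many monomial terms. The main obstacle is condition~5, that every joint eigenspace of the $\opZ_a$ is one-dimensional. To verify it I would pass to a finite periodic torus of side $L$ large enough that \cref{lem:nondegeneracy-exactness} applies. Because the columns of $\tau$ form an $R$-basis of $\im\sigma$, their reductions modulo $\mathfrak b_L=(x_1^L-1,\ldots,x_D^L-1)$ yield $L^D q$ translated vectors that remain linearly independent over $\FF_p$, so the family $\{\opZ_{i,\vec t}\}$ generates an abelian subgroup of the finite Pauli group of size $p^{L^D q}$, which is exactly the square root of the Hilbert-space dimension. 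This is the cardinality of a maximal abelian stabilizer subgroup, and the standard stabilizer-formalism count then forces each simultaneous eigenspace to have dimension one, completing the verification of the separator axioms.
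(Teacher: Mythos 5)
Your high-level plan — pass to a free $R$-basis $\tau$ of $\im\sigma$ and then argue that all translates of the corresponding Pauli operators are independent over finite tori — is the right idea, and your route to freeness (local freeness of rank $q$ from $I_q(\sigma)=R$, then Quillen--Suslin--Swan) is a valid alternative to the paper's free-resolution argument, where freeness of $\im\sigma'$ is extracted from the Eisenbud Cor.~20.12 / Prop.~20.8 machinery applied to a minimal resolution.

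The genuine gap is in the verification of condition~5, and it is precisely the step the paper devotes its final paragraph to. You write that ``because the columns of $\tau$ form an $R$-basis of $\im\sigma$, their reductions modulo $\mathfrak b_L$ \dots\ remain linearly independent over $\FF_p$.'' This inference is false in general: a free $R$-basis of a free submodule of $R^{2q}$ need not remain independent after reduction modulo $\mathfrak b_L$. For a one-qudit example, take $\tau = \begin{pmatrix}1+x \\ 0\end{pmatrix}$ over $\FF_2[x^\pm]$; the image is free of rank one, yet $(1+x+\cdots+x^{L-1})(1+x)\equiv 0 \bmod (x^L-1)$, so the reduction of $\tau$ has a nonzero kernel. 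What actually makes the descent work is not freeness of $\im\sigma$ but the stronger fact that the inclusion $\im\sigma\hookrightarrow R^{2q}$ is split, i.e., $\coker\sigma$ is projective. Your Zariski-open covering argument does furnish this --- the invertible $q\times q$ minor on each principal open exhibits a local splitting, and a finitely presented locally split monomorphism is globally split --- but you only draw the conclusion that $\im\sigma$ is projective, and then reason from that weaker property. With the splitting in hand, reduction modulo $\mathfrak b_L$ preserves the direct-sum decomposition $R^{2q} = \im\tau\oplus N$, so $\tau\bmod\mathfrak b_L$ is split injective and the translates are independent; equivalently, one can invoke McCoy's theorem together with the Fitting-ideal invariance $I_q(\tau)=I_q(\sigma)=R$. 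The paper accomplishes the same thing by extending $\sigma'$ to an invertible $2q\times 2q$ matrix via Quillen--Suslin--Swan applied to $\coker\sigma'$. Either way, the descent must be argued; it does not follow from freeness alone. (Also, a minor slip: for $L^Dq$ qudits of dimension $p$ the Hilbert space has dimension $p^{L^Dq}$, so a stabilizer group of order $p^{L^Dq}$ \emph{equals} the Hilbert-space dimension rather than being its square root --- this does not affect the conclusion but the phrasing is off.)
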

\begin{proof}
We will find a local generating set for the stabilizer group
such that there is no relation among the generators.
(A relation is a nonempty product of generators that results to the identity operator.
In terms of the vector representation of the Pauli operators over $\FF_p$,
no relation means the linear independence of the corresponding vectors over $\FF_p$.)
Then, from the theory of quantum stabilizer codes,
any common eigenspace of all generators is nonzero,
and is isomorphic to each other.
Since we assume the nondegeneracy of the ground state subspace,
which is a common eigenspace of all generators,
the generating set is a separator.
As we will work in the polynomial framework,
the translation-invariance will be automatic.

Algebraically, the no-relation condition amounts to the nullity
of the kernel of the matrix $\sigma$, whose image is the stabilizer module.
This nullity has to hold for every periodic finite lattice.%
\footnote{
For a nonexample, the toric code in two-dimensions on the square lattice
has
$
\sigma = \begin{pmatrix} x-1 & 0 \\ y-1 & 0 \\ 0 & -\bar y +1 \\ 0 & \bar x - 1 \end{pmatrix}
$
whose kernel is zero over $R = \FF_p[x^\pm,y^\pm]$ (infinite lattice),
but nonzero over $R/(x^L -1, y^L-1)$ ($L$-by-$L$ periodic lattice).
}

Consider any finite free resolution of $\ker \sigma^\dagger \lambda_q$, 
which is equal to $\im \sigma$ by \cref{lem:nondegeneracy-exactness},
of the minimum length $n$:
\begin{align}
0 \to F_n \to \cdots \to F_3 \xrightarrow{T} F_2 \xrightarrow{\sigma'} R^{2q} \xrightarrow{\sigma^\dagger \lambda_q} R^t
\end{align}
where every $F_j$ is a finitely generated free $R$-module
and $t$ is the number of Hamiltonian terms up to translations.
By construction, $\im \sigma' = \ker \sigma^\dagger \lambda_q = \im \sigma$.
We claim that $T = 0$ and $n=2$ for $n$ to be the minimum.
This will establish that there exists a generating set 
for the stabilizer module with no relation on the infinite lattice; 
we will treat finite periodic lattices shortly.

Since the determinantal ideal of $\sigma$ (or that of $\sigma^\dagger \lambda_q$)
is unit by \cref{lem:nondegeneracy-exactness},
the determinantal ideals of $\sigma'$ and $T$ 
must be all unit as well~\cite[Cor.~20.12]{Eisenbud}.
It follows that $\coker T$ is locally free with respect to 
any maximal ideal of $R$~\cite[Prop.~20.8]{Eisenbud}.
The Quillen-Suslin-Swan theorem~\cite{Suslin1977Stability,Swan1978}
says that $\coker T$ is in fact free.
In elementary terms, this means that there exist invertible matrices $A$ and $B$ over $R$
such that $ATB = \begin{pmatrix}I & 0 \\ 0 & 0 \end{pmatrix}$.%
\footnote{The results of Ref.~\cite{ParkWoodburn1995Algorithmic} 
gives an algorithmic proof of the existence of $A$ and $B$.}
Therefore, the we can choose $\sigma'$ such that it is kernel free.
Due to the minimality of the resolution, we must have $n=2$ and $T = 0$.

Finally, it remains to show that $\ker \sigma'$ is still zero over $R/\mathfrak b_L$ for any $L \ge 1$.%
\footnote{Formally, this amounts to examine chain complexes after tensoring an $R$-module
$R/ \mathfrak b_L$.}
This is another consequence of the Quillen-Suslin-Swan theorem applied to $\coker \sigma'$.
We know there are invertible matrices $A'$ and $B'$ such that $A' \sigma' B'$ is diagonal
with the identity matrix on the main diagonal. 
$B'$ being invertible remains true over $R/\mathfrak b _L$, 
and therefore $\sigma'$ has zero kernel over $R/\mathfrak b _L$.
\end{proof}

\begin{lemma}\label{lem:lf-Clif-separator}
Any separator of \cref{lem:separatorexists} is locally flippable.
Moreover, the local flipper is translation-invariant.
\end{lemma}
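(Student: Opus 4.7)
The plan is to work entirely in the polynomial framework of \cref{sec:pauli}. Let $\sigma \colon R^q \to R^{2q}$ be a matrix whose columns, together with their translates, generate the separator produced by \cref{lem:separatorexists}. A translation-invariant local flipper is then nothing but a matrix $\tau$ over $R$ of the same shape as $\sigma$ satisfying
\begin{align}
\tau^\dagger \lambda_q \sigma = I_q,
\end{align}
because the coefficient of $x^{\mathbf n}$ in the $(a,b)$-entry of $\tau^\dagger \lambda_q \sigma$ records the commutation phase exponent between the $a$-th flipper and the translate by $\mathbf n$ of the $b$-th separator element. The identity on the right hand side is exactly the set of commutation conditions in the definition of a local flipper. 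Laurent polynomial entries of $\tau$ automatically yield bounded support, and the matrix description is manifestly translation invariant, so everything reduces to producing such a $\tau$.

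Since the form matrix $\lambda_q$ has integer entries with $\det \lambda_q = \pm 1$ and so lies in $GL_{2q}(\FF_p) \subseteq GL_{2q}(R)$, finding $\tau$ is equivalent to finding a left inverse $L$ of $\sigma$ over $R$; given $L$ with $L\sigma = I_q$, one sets $\tau = (\lambda_q^{-1})^\dagger L^\dagger$ and verifies $\tau^\dagger \lambda_q \sigma = L \sigma = I_q$. Existence of such a left inverse is equivalent to the splitting of the short exact sequence
\begin{align}
0 \to R^q \xrightarrow{\sigma} R^{2q} \to \coker \sigma \to 0,
\end{align}
and hence to $\coker \sigma$ being projective as an $R$-module.

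The main --- really, the only --- nontrivial step is to conclude projectivity of $\coker \sigma$ from the hypothesis that $H$ has a nondegenerate ground state. Here I would invoke \cref{lem:nondegeneracy-exactness}, whose output includes the unit Fitting ideal condition $I_q(\sigma) = R$. Standard commutative algebra~\cite[Prop.~20.8]{Eisenbud} then yields that $\coker \sigma$ is locally free at every maximal ideal of $R$, hence projective. (The Quillen--Suslin--Swan theorem would even upgrade this to freeness, as already used in the proof of \cref{lem:separatorexists}, but only projectivity is needed here for the splitting.) This produces $L$, and hence $\tau$, and thus the desired translation-invariant local flipper. The whole argument is algorithmic: once $\sigma$ is given, an explicit $L$ (and therefore an explicit translation-invariant $\tau$) can be extracted from a finite free presentation of $\coker \sigma$.
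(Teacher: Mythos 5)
Your proof is correct, and it reaches the conclusion by a route that is a little more economical than the paper's. Both proofs reduce the lemma to producing a Laurent matrix $\tau$ with $\tau^\dagger \lambda_q \sigma = I_q$, and both ultimately use the output $I_q(\sigma)=R$ of \cref{lem:nondegeneracy-exactness}. The paper views the flipper as a preimage under $\sigma^\dagger\lambda_q$: it asserts that $\coker(\sigma^\dagger\lambda_q)$ is free by the Quillen--Suslin--Swan theorem, then kills it by a rank count, and reads off a right inverse of $\sigma^\dagger\lambda_q$. You instead observe that the flipper condition is literally a left inverse of $\sigma$ over $R$, i.e.\ a splitting of $0\to R^q\xrightarrow{\ \sigma\ }R^{2q}\to\coker\sigma\to 0$, and that $I_q(\sigma)=R$ together with the automatic $I_{q+1}(\sigma)=0$ gives projectivity of $\coker\sigma$ directly from the Fitting-ideal criterion~\cite[Prop.~20.8]{Eisenbud}. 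Projectivity is all that is needed to split, so you avoid Quillen--Suslin--Swan entirely and also never use the full exactness $\ker\sigma^\dagger\lambda_q=\im\sigma$. (In fact the paper's side could be shortened the same way: since $\lambda_q$ is invertible and $\dagger$ is a ring automorphism composed with transpose, $I_q(\sigma)=R$ gives $\mathrm{Fitt}_0\bigl(\coker\sigma^\dagger\lambda_q\bigr)=I_q(\sigma^\dagger\lambda_q)=R$, which forces $\coker\sigma^\dagger\lambda_q=0$ with no appeal to freeness.) The only bookkeeping to watch is the sign convention in the flipper relation $\opZ_a \tilde \opX_a =\tilde \opX_a \opZ_a \exp(2\pi i/D_a)$; the normalization $\tau^\dagger\lambda_q\sigma=I_q$ you use matches the one the paper itself adopts later in the proof of \cref{thm:ticp-disentangler}, so there is no discrepancy.
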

\begin{proof}
In \cref{lem:separatorexists} we have proved existence of a free basis $\sigma$ 
(as a collection of columns)
for the stabilizer module under the nondegeneracy assumption.
The collection of all translates of Pauli operators represented as the columns of $\sigma$
is a separator.
Also we know by \cref{lem:nondegeneracy-exactness} that 
$\ker \sigma^\dagger \lambda_q = \im \sigma$.
Thus, we have an exact sequence
\begin{align}
0 \to R^q \xrightarrow{\sigma} R^{2q} \xrightarrow{\sigma^\dagger \lambda_q} R^q 
\end{align}
where we have used \cref{lem:chaincondition} 
to fix the rank $q$ of the first and the third free module from the right.
This rank is equal to the number of generators for the stabilizer module 
(or equivalently the stabilizer group up to translations).
Due to Quillen-Suslin-Swan theorem~\cite{Suslin1977Stability,Swan1978}, $\coker \sigma^\dagger \lambda_q$ has to be free,
but since $R^q$ has rank $q$, so does $\sigma^\dagger$, it follows that $\coker \sigma^\dagger \lambda_q = 0$.
This means that there exists a vector $v_a$ such that $\sigma^\dagger \lambda_q v_a = e_a$ where
$e_a$ is the unit column vector with the sole 1 at $a$-th component.
Translating the Pauli operator corresponding to $v_a$,
we see that any single stabilizer generator can be flipped alone.
The collection of all $v_a$ and their translates form a local flipper,
and it is translation invariant.
\end{proof}

\begin{proof}[Proof of \cref{thm:ticp-disentangler}]
We defer the construction of the real QCA to the last part of the proof.

By \cref{lem:separatorexists,lem:lf-Clif-separator} 
there exists a locally flippable separator,
which can be regarded as a free basis for the stabilizer module,
represented as the columns of a $2q \times q$ Laurent polynomial matrix $\sigma$.
By construction, $\sigma$ satisfies $\sigma^\dagger \lambda_q \sigma = 0$.
Let $F$ be a matrix representing a local flipper,
i.e., a $2q \times q$ Laurent polynomial matrix such that
\begin{align}
F^\dagger \lambda_q \sigma = I_q, \quad \text{ or } \quad \sigma^\dagger \lambda_q F = -I_q. \label{eq:flipping}
\end{align}
We will modify $F$ by $\sigma$ to define $T = F - \sigma E$ for some matrix $E$,
so that
\begin{align}
T^\dagger \lambda_q \sigma &= I_q, \nonumber \\
T^\dagger \lambda_q T &= 0 .
\end{align}
The first equation holds for any $E$,
but the second equation is satisfied only by some $E$.
Once such $T$ is constructed,
the promised Clifford QCA is easy.
Define $Q$ by adjoining the columns of $\sigma$ on the right to $T$
\begin{align}
Q = \begin{pmatrix} T & \sigma \end{pmatrix} .
\end{align}
Then, it follows that
\begin{align}
Q^\dagger \lambda_q Q &= \lambda_q, \\
- \lambda_q Q^\dagger \lambda_q &= Q^{-1} . \nonumber
\end{align}
Both $Q$ and $Q^{-1}$ define automorphisms of the Pauli module 
and preserve the anti-hermitian form that encodes commutation relations.
Thus, these define automorphisms of the Pauli group, 
and hence of the quasi-local algebra of operators on a complex Hilbert space,
up to conjugations by Pauli operators~\cite[Prop.~2.2]{Haah2013}.
Clearly, the automorphism $Q^{-1}$ maps a separator into the trivial separator.

Now we find $E$.
The equation we have to solve is $F^\dagger \lambda_q F - E + E^\dagger = 0$;
that is, 
\begin{align}
E - E^\dagger = F^\dagger \lambda_q F. \label{eq:EEFF}
\end{align}
If $p \neq 2$ (characteristic not 2),
this equation is trivial to solve by setting $E = \frac 1 2 F^\dagger \lambda_q F = - E^\dagger$.
If $p = 2$ (or any other prime),
we observe that the diagonal of $M = F^\dagger \lambda_q F$ 
does not have any nonzero ``constant'' term and consists of anti-hermitian entries
by the virtue of~$\lambda_q$ being symplectic.
Thus, we can choose~$E$ to be the collection of all upper triangular submatrix of~$M$ excluding the diagonal,
together with a ``half'' of the terms from the diagonal of~$M$.
The ``half'' can be chosen to be the sum of all terms $c x_1^{a_1} \cdots x_D^{a_D}$ 
with $c \in \FF_p$ and $a_j \in \ZZ$
such that the real number $\sum_j a_j \pi^j$ is positive.
Here, $\pi = 4 \arctan(1) \approx 3.14$ is an arbitrarily chosen real number
that is transcendental over rational numbers.
This completes the proof for the first claim.

It remains to specialize the construction for a ``real'' stabilizer group with $p=2$.
The terms of $H$ being real is a property of the group generated by the terms,
since the complex conjugation is an automorphism of the ring of operators.
(The complex conjugation is not an automorphism of a complex algebra, 
and is excluded under our definition of QCA.)
Hence, the locally flippable separator constructed 
in \cref{lem:separatorexists} consists of real operators.
We have to construct a real local flipper.


Let $O^*$ denote the complex conjugation of any operator $O$ on a complex Hilbert space.
Among $I,X,Y,Z$, only $Y = - Y^*$ is nonreal,
and in a tensor product of these matrices,
the parity of the number of $Y$ determines whether the tensor product is real.
To count the number of tensor factors $Y$,
we use a similar trick as we counted the number of distinct Pauli components:
When $v  = \begin{pmatrix} v_X \\ v_Z \end{pmatrix}$
is a column vector of Laurent polynomials representing a Pauli operator $O$,
it is easy to see that the coefficient of $1 = x_1^0 \cdots x_D^0 \in \FF_2[x_1^\pm,\ldots,x_D^\pm]$
in $v_X^\dagger v_Z$, which we denote as 
\begin{align}
\Coe: R \ni v_X^\dagger v_Z \mapsto \Coe(v_X^\dagger v_Z) \in \FF_2 ,
\end{align}
is the number of tensor factors $Y$ in $O$.
Hence, the reality condition of the flipper $T = \begin{pmatrix} T_X \\ T_Z \end{pmatrix} = F - \sigma E$
is cast into an equation $\diag(\Coe(T_X^\dagger T_Z)) = 0$, or
\begin{align}
\diag \Coe ( F_X^\dagger F_Z + E^\dagger \sigma_X^\dagger \sigma_Z E + E^\dagger \sigma_X^\dagger F_Z + F_X^\dagger \sigma_Z E ) = 0.
\end{align}
Since the reality of $H$ is a property of the group generated by the terms of $H$,
we know $\diag \Coe (E^\dagger \sigma_X^\dagger \sigma_Z E) = 0$.
Using \cref{eq:flipping},
we know $F_X^\dagger \sigma_Z = F_Z^\dagger \sigma_X + I$ and
this equation becomes
\begin{align}
\diag \Coe( F_X^\dagger F_Z + E + E^\dagger \sigma_X^\dagger F_Z + F_Z^\dagger \sigma_X E ) = 0.
\end{align}
The last two terms sum to zero since $\diag \Coe$ is unchanged under $\dagger$.
Hence, we are left with
\begin{align}
\diag\Coe(F_X^\dagger F_Z + E) = 0. \label{eq:FFE}
\end{align}
This has to be simultaneously solved with \cref{eq:EEFF}.
But \cref{eq:EEFF} allows arbitrary diagonal $\FF_2$ entries in $E$,
so \cref{eq:FFE} can always be satisfied by some $E$.

We thus have shown that $Q$ can be chosen to preserve reality;
a QCA given its polynomial representation $Q$ is unique up to
Pauli conjugations, which always preserve reality.
Finally, we show that if $Q$ is reality-preserving, so does $Q^{-1}$,
which will complete the proof.
This is easy: In the complex operator algebra,
if a real QCA $\alpha$ maps $O = \eta O'$ to a real operator,
where $\eta \in \CC$ is a phase factor and $O'$ is real, 
then the equation $\alpha(\eta O') = \eta \alpha(O')$ implies $\eta$ is real.%
\footnote{
If one insists not to leave the polynomial framework,
then one can think of a quadratic form on an infinite dimensional $\FF_2$-vector space
defined by
\[
R^{2q} \ni v \mapsto \Coe\left[ v^\dagger \begin{pmatrix}
0 & I_q \\ 0 & 0 \end{pmatrix} v \right] \in \FF_2
\]
whose polar form is the scalar-valued symplectic form~\cite{Kniga}.
This $\FF_2$-valued quadratic form captures the parity of the number of $Y$ components.
The constructed symplectic transformation $Q$ is an isometry of this quadratic form.
}
\end{proof}

\subsection{Exponent of qubit Clifford QCAs}

\begin{theorem}\label{thm:CQCAsqIsCircuit}
Let $D \ge 0$ be any integer, and
$\alpha$ be any translation-invariant Clifford QCA in $D$-dimensional lattice 
comprised of qubits (i.e., $p=2$).
Then, $\alpha^2 \otimes \Id$ is a Clifford circuit up to shifts.
\end{theorem}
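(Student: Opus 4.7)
The plan is to work in the polynomial framework of \cref{sec:pauli}. The translation-invariant Clifford QCA $\alpha$ on $q$ qubits per unit cell is represented by a symplectic matrix $Q \in \mathrm{Sp}_{2q}(R)$ over $R = \FF_2[x_1^\pm, \ldots, x_D^\pm]$, with $Q^\dagger \lambda_q Q = \lambda_q$. Clifford circuits of bounded depth correspond to elementary symplectic matrices of bounded polynomial degree (built from CNOT, Hadamard, and phase generators), and shifts correspond to monomial symplectic matrices. Stabilization allows passing to $Q \oplus I_{2q'}$.

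By \cref{lem:abelianModCircuit} the quotient $\All/\Circuit$ is abelian and $[\alpha^2 \otimes \Id] = [\alpha \otimes \alpha]$, so the theorem reduces to showing $Q \oplus Q \in \Circshift$ on the doubled system. The crucial fact specific to characteristic $2$ is that the symplectic matrix of the global Hadamard (a depth-$1$ Clifford circuit) is exactly $\lambda_q$, and conjugation by $\lambda_q$ sends $Q$ to $\lambda_q Q \lambda_q = (Q^{-1})^\dagger$, using $Q^\dagger \lambda_q Q = \lambda_q$ together with $\lambda_q^2 = I$ in characteristic $2$. Applying this circuit to only the second copy converts $Q \oplus Q$ into the hyperbolic double $\hat Q := \mathrm{diag}(Q, (Q^{-1})^\dagger)$, which is symplectic with respect to the doubled form $\lambda_q \oplus \lambda_q$.

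The core remaining step is a symplectic analogue of the Whitehead lemma: showing $\hat Q$ lies in the elementary symplectic subgroup after stabilization, modulo shifts. I would factor $\hat Q$ as a product of symplectic shears $\bigl(\begin{smallmatrix} I & X \\ 0 & I \end{smallmatrix}\bigr)$ and $\bigl(\begin{smallmatrix} I & 0 \\ Y & I \end{smallmatrix}\bigr)$ (with off-diagonal blocks symmetric with respect to the hyperbolic form on the doubled space) along with a single swap $\bigl(\begin{smallmatrix} 0 & I \\ I & 0 \end{smallmatrix}\bigr)$, which is a shift on the doubled system. Each shear has off-diagonal entries determined by the bounded-degree entries of $Q$ or $Q^\dagger$, and can be realized as a bounded-depth local Clifford circuit by expanding the off-diagonal block in its Laurent monomial terms and implementing each term as a local CNOT-like gate between corresponding sites of the two copies, scheduled in $O(1)$ rounds via a coloring argument. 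Stabilization supplies the ancilla qudits needed to give these shears enough workspace.

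The main obstacle is carrying out this symplectic Whitehead factorization explicitly for a general symplectic $Q$ over the Laurent polynomial ring $R = \FF_2[x_1^\pm, \ldots, x_D^\pm]$; it is the symplectic analogue of the Quillen--Suslin theorem for $GL$, together with stable-range bookkeeping. The characteristic-$2$ hypothesis is indispensable at the step $\lambda_q Q \lambda_q = (Q^{-1})^\dagger$: for odd primes the analogous conjugation by the global Hadamard picks up a sign that cannot be absorbed, and the theorem is expected to fail, consistent with the fact that no analogous $\ZZ_2$ invariant is expected for qudits of odd prime dimension.
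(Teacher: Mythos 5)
Your reduction (via $\alpha\otimes\alpha^{-1}\in\Circuit$) to showing that $Q\oplus Q$ is a circuit up to a shift is correct, and your Hadamard observation is a genuinely cleaner route than the paper's explicit block manipulation in \cref{eq:sigmaManipulation}: in characteristic~$2$ the global Hadamard has symplectic matrix $\lambda_q$, so conjugation on the second factor sends $Q\oplus Q$ to the hyperbolic double $\hat Q=\diag\bigl(Q,(Q^{-1})^\dagger\bigr)$, and $\hat Q$ is a circuit iff $\hat Q^{-1}=\diag\bigl(Q^{-1},Q^\dagger\bigr)=\diag\bigl(-\lambda_q Q^\dagger\lambda_q,\,Q^\dagger\bigr)$ is. But you then stop exactly at the crux and call it the ``main obstacle.'' That gap is real, and the missing ingredient is not an analogue of the Quillen--Suslin theorem (which concerns freeness of projective modules): it is \emph{Suslin's stability theorem}~\cite{Suslin1977Stability}, that $SL(n,R)=E(n,R)$ for $n\ge 3$ when $R$ is a (Laurent) polynomial ring over a field. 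The whole point of passing to the hyperbolic double is that you only need the $GL$ version of this, not a symplectic analogue: since $\det Q$ is a monomial you first factor out a shift to land in $SL(2q,R)$; with $q\ge 2$ (ensured by coarsening the translation group or by stabilization) Suslin gives $Q'$ as a finite product of $GL$ elementaries; and each $E_{i,j}(a)$ lifts to the CNOT-type elementary symplectic $\diag\bigl(E_{i,j}(a),\,E_{j,i}(-\bar a)\bigr)$. Your proposal acknowledges the gap rather than filling it, so as written it is not a proof.

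There is also a structural problem with the factorization you propose. The shears $\bigl(\begin{smallmatrix} I & X \\ 0 & I \end{smallmatrix}\bigr)$ and $\bigl(\begin{smallmatrix} I & 0 \\ Y & I \end{smallmatrix}\bigr)$ must have hermitian off-diagonal blocks ($X=X^\dagger$, $Y=Y^\dagger$) to be symplectic, whereas $\hat Q$ is a \emph{block-diagonal} Levi element built from a generically non-hermitian $Q$; its blocks cannot be read off as shear data of bounded degree ``determined by the entries of $Q$'' as you assert. Factoring a general symplectic matrix into such phase-type shears plus a Weyl element over a Laurent polynomial ring would need a stability theorem for the symplectic group itself (Suslin--Kopeiko/Vaserstein), which is strictly harder than the $SL$ case and which the hyperbolic double lets you avoid entirely: the correct decomposition of $\hat Q$ is into CNOT lifts of $GL$ elementaries, not phase shears and a swap. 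Your points about stabilization supplying room and about scheduling commuting local gates in $O(1)$ rounds are fine, but they only become useful once the elementary factorization is in hand, and that factorization is exactly Suslin's theorem.
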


Before presenting a proof of the theorem,
we briefly review how translation-invariant Clifford circuits 
are represented in the polynomial framework.
Since any translation-invariant Clifford QCA is determined
by the image of basis elements $X_j, Z_j$,
the matrix $Q$ representing a Clifford QCA $\alpha$
is one that has the polynomial representations of Pauli operators 
$\alpha(X_j)$ and $\alpha(Z_j)$
in its columns.
Since a QCA must preserve the commutation relation,
the matrix $Q$ satisfies $Q^\dagger \lambda_q Q = \lambda_q$
where $q$ is the number of qudits per site.
That is, $Q$ is {\bf symplectic}.%
\footnote{
Although an automorphism of a sesquilinear form is usually called \emph{unitary},
here we instead call it \emph{symplectic}.
This is to keep the terminology consistent for the cases $D=0$ and $D \ge 1$ (the lattice dimension),
and to disambiguate automorphisms on Pauli modules from those on complex Hilbert spaces.
}
This mapping $\alpha \mapsto Q$ is not injective 
inasmuch as the Pauli module is forgetful of the phase factor.
However, $Q$ determines $\alpha$ up to a conjugation action by 
a Pauli operator of possibly infinite support~\cite[Prop.~2.2]{Haah2013}.

Elementary Clifford circuits 
--- the control-NOT gate, the Hadamard gate, and the phase gate,
(and one more that maps $Z$ to $Z^m$ for $m \neq 0 \mod p$ if $p > 2$) ---
that are translation-invariant
induce {\bf elementary symplectic transformations} on the Pauli module.
They are represented by elementary row operations as listed:
\begin{align}
 \left[ E_{i,j}(a) \right]_{\mu \nu} &= \delta_{\mu \nu} + \delta_{\mu i} \delta_{\nu j} a 
 & \text{ where $\delta$ is the Kronecker delta,} \nonumber\\
 \text{Hadamard: } \quad&E_{i,i+q}(-1) E_{i+q,i}(1) E_{i,i+q}(-1) &\text{ where } 1 \le i \le q,\nonumber\\
 \text{control-Phase: }\quad & E_{i+q,i}(f)&\text{ where }f = \bar f, 1 \le i \le q,\\
 \text{control-NOT: } \quad & E_{i,j}(a) E_{j+q,i+q}(-\bar a) &\text{ where }1 \le i \ne j \le q, \nonumber \\
\text{extra gate for }p \neq 2: \quad & E_{i,i}(a-1) E_{i+q,i+q}(a^{-1}-1) & \text{ where } a \in \FF_p^\times, 1 \le i \le q. \nonumber
\end{align}
In particular, from CNOT we see that any elementary row operation on the upper half block
can be compensated by an elementary row operation on the lower half block
to become an elementary symplectic transformation.
This remark is useful when we determine whether a symplectic transformation is a composition of CNOTs.
The appearance of $-1$ in the extra gate for $p \neq 2$ is 
just to cancel $\delta_{\mu=i, \nu=j}$ in the definition of $E_{i,j}(a)$.

On the other hand, a shift is given by
\begin{align}
\text{shift:} \quad & E_{i,i}(x_1^{s_1} \cdots x_D^{s_D} - 1 ) E_{i+q,i+q}(x_1^{s_1} \cdots x_D^{s_D} - 1 ) 
&\text{ where } s_j \in \ZZ, 1 \le i \le q.
\end{align}
Note that every elementary symplectic transformation has determinant~1,
while a shift has determinant a square of a monomial.

\begin{lemma}\label{lem:elemCphase}
Over $\FF_p[x_1^\pm,\ldots,x_D^\pm]$ where $p$ is any prime, if
\begin{align}
Q = \begin{pmatrix}
I_q & B \\
0 & I_q
\end{pmatrix}
\end{align}
satisfies $Q^\dagger \lambda_q Q = \lambda_q$, then $Q$ is a product of elementary symplectic transformations.
\end{lemma}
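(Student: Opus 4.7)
The plan is to reduce the symplectic constraint to a Hermitian condition on $B$ and then realize $B$ entry-by-entry as a commuting product of elementary symplectic transformations.

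First I would compute $Q^\dagger \lambda_q Q$ as a block product; the constraint $Q^\dagger \lambda_q Q = \lambda_q$ reduces to $B^\dagger = B$, i.e., $B$ is Hermitian with respect to the involution $x_i \mapsto x_i^{-1}$.

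Next I would decompose $B$ into Hermitian ``atoms'':
\begin{equation*}
B = \sum_{i=1}^q B_{ii}\, e_i e_i^T + \sum_{1 \le i < j \le q} \bigl( B_{ij}\, e_i e_j^T + \overline{B_{ij}}\, e_j e_i^T \bigr) .
\end{equation*}
The diagonal entries $B_{ii}$ are self-conjugate (since $B = B^\dagger$), while the off-diagonal pairs consist of conjugate pairs. For each diagonal atom, the upper-right control-Phase $E_{i,i+q}(B_{ii})$ is a symplectic transformation precisely because $B_{ii} = \overline{B_{ii}}$; it is obtained from the listed lower-left control-Phase $E_{i+q,i}(f)$ by Hadamard conjugation on qudit $i$. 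For each off-diagonal atom, a short direct calculation shows that the product $E_{i,j+q}(B_{ij})\,E_{j,i+q}(\overline{B_{ij}})$ is symplectic for every $B_{ij} \in R$; this can in turn be obtained from a diagonal control-Phase by conjugating with a CNOT that spreads the control-Phase from index $i$ to index $j$, together with a Hadamard.

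Finally, since every such elementary factor has the identity on both diagonal blocks and nonzero entries only in the upper-right block, the factors commute pairwise, and their product is exactly the block matrix with $I_q$ on the diagonal and $B$ in the upper-right, i.e., $Q$. The main potential obstacle is to write the upper-right off-diagonal piece $E_{i,j+q}(a)\,E_{j,i+q}(\bar a)$ explicitly as a composition of the elementary generators listed in the paper; this is a straightforward but somewhat tedious bookkeeping using Hadamards and a CNOT to transform the diagonal control-Phase into the required off-diagonal form.
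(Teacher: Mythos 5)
Your proof is correct in substance but takes a genuinely different route from the paper's. The paper proceeds by induction on $q$: it first cancels the diagonal of $B$ with control-Phase gates, then applies a Hadamard on the $q$-th qudit followed by CNOTs to eliminate the bottom row of $B$, undoes the Hadamard, and observes that the resulting $B'$ still satisfies $B'=B'^\dagger$ so its last column also vanishes, allowing the induction to close. You instead decompose $B$ directly into a sum of Hermitian ``atoms'' $B_{ii}\,e_ie_i^T$ and $B_{ij}\,e_ie_j^T + \overline{B_{ij}}\,e_je_i^T$, realize each atom as an elementary symplectic factor, and note that all such block-upper-triangular factors commute. This is more explicit about what the final factorization looks like, whereas the paper's proof is shorter but less constructive. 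One caution about the step you call ``straightforward but somewhat tedious bookkeeping'': conjugating a diagonal control-Phase $E_{j,j+q}(1)$ by the CNOT $E_{i,j}(a)E_{j+q,i+q}(-\bar a)$ produces not the bare off-diagonal atom but $\bigl(\begin{smallmatrix} a\bar a & a \\ \bar a & 1\end{smallmatrix}\bigr)$ in the relevant $2\times 2$ block of $B$; the residual diagonal entries $a\bar a$ and $1$ must be cancelled by two further diagonal control-Phases $E_{i,i+q}(-a\bar a)$ and $E_{j,j+q}(-1)$ (both legitimate since $-a\bar a$ and $-1$ are self-conjugate), each in turn obtained by Hadamard conjugation from the listed lower-left control-Phase. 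Your phrase ``a CNOT together with a Hadamard'' undersells this; it is not a single Hadamard but a pair of Hadamard-conjugated control-Phase corrections. With that bookkeeping spelled out, the argument is complete.
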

This lemma characterizes the difference between two translation-invariant Clifford QCAs, 
say $\alpha$ and $\beta$,
that map a trivial separator to the same separator.
Indeed, $\alpha \circ \beta^{-1}$ maps the trivial separator to itself,
and hence has a symplectic matrix representation $Q$ in the supposition.
The lemma says that $\alpha \circ \beta^{-1}$ is a Clifford circuit.
\begin{proof}
The assumption is that $B = B^\dagger$.
In particular, the diagonal of $B$ is hermitian 
(i.e., $B^\dagger = B$ with $\dagger$ being the transpose followed by monomial inversion),
and can hence be canceled by the control-Phase gates multiplying on the left of $Q$.
Therefore, we may assume $B$ has zero diagonal.
After Hadamard on the $q$-th qudit (the bottom row of $B$),
the lower right block has zero column on the right,
but $(q-1)\times (q-1)$ identity matrix on the upper $q-1$ rows, 
which can be used to eliminate, by the control-NOT, the bottom that is copied from $B$.
This control-NOT does not alter the left half of the post-Hadamard $Q$.
Applying the inverse Hadamard on the $q$-th qudit,
we again have identity on the full diagonal of $Q$,
but the upper right block $B'$ has the zero bottom row.
By the condition $B' = B'^\dagger$, the rightmost column of $B'$ must be zero as well.
By induction in $q$, the proof is complete.
\end{proof}

\begin{proof}[Proof of \cref{thm:CQCAsqIsCircuit}]
For a technical reason later in this proof,
we assume that there are at least two qubits per site ($q \ge 2$).
This is easily satisfied by considering a smaller translation group if necessary.

Note that $(\alpha \otimes \Id)\circ \Swap \circ(\alpha^{-1} \otimes \Id)$ is a Clifford circuit.
To see this, if suffices to consider one gate $W$ in $\Swap$.
Conjugating a gate $W$ that swaps a pair of qubits is a Clifford gate, by $\alpha \otimes \Id$,
we obtain a Clifford QCA that is nonidentity only on a ball of radius $3r$,
where $r$ is the range of $\alpha$.
Such a local Clifford QCA can be written as a Clifford circuit of depth $O(r^{2D}) = O(1)$.
Therefore,
\begin{align}
\alpha \otimes \alpha^{-1} = (\alpha \otimes \Id) \circ\Swap \circ(\alpha^{-1} \otimes \Id) \circ\Swap
\end{align}
is a Clifford circuit, and we will show that $\alpha \otimes \alpha$ is a Clifford circuit.
It suffices to consider the symplectic transformation
on the Pauli module induced by $\alpha \otimes \alpha$,
and show that the induced symplectic transformation is a product of elementary symplectic transformations.
Suppose $\alpha$ induces the symplectic matrix 
\begin{align}
Q = \begin{pmatrix} A & B \\ C & D \end{pmatrix}
\end{align}
on the Pauli module
where $A,B,C,D$ are $q \times q$ Laurent polynomial matrices.
By definition, $Q$ satisfies $Q^\dagger \lambda_q Q = \lambda_q$.
In particular,
\begin{align}
&-\lambda_q Q^\dagger \lambda_q 
\begin{pmatrix} A & B \\ C & D \end{pmatrix} 
=
\begin{pmatrix}
I_q & 0 \\ 0 & I_q
\end{pmatrix},\\
&\begin{pmatrix}
-\lambda_q Q^\dagger \lambda_q & 0 \\ 0 & Q^\dagger
\end{pmatrix}^\dagger
\lambda_{2q}
\begin{pmatrix}
-\lambda_q Q^\dagger \lambda_q & 0 \\ 0 & Q^\dagger
\end{pmatrix}
=
\lambda_{2q} .\label{eq:ESp-through-Suslin}
\end{align}
We claim that
\begin{align}
Q \oplus Q 
=
\begin{pmatrix} 
A & 0 & B & 0 \\
0 & A & 0 & B \\
C & 0 & D & 0 \\
0 & C & 0 & D
\end{pmatrix}
\end{align}
is a product of the elementary symplectic matrices.
Our representation of $Q \oplus Q$ here is to retain the convention
that the symplectic matrix $\lambda$ for the doubled system, on which $Q\oplus Q$ acts, 
is $\lambda_{2q}$ as defined in \cref{eq:lambdaMatrix}.

A crucial fact we use is Suslin's stability theorem~\cite{Suslin1977Stability},
which shows that the special linear group $SL(n, R)$ 
is generated by elementary matrices if $n \ge 3$ and
$R$ is a Laurent polynomial ring over a field.
We use this theorem as follows. The matrix $Q$ belongs to $GL(2q,R)$.
The determinant of $Q$ has to be a monomial $m$ because only a monomial is invertible in $R$.
By multiplying $m^{-1}$ to, say, the first row of $Q$,
we obtain a member $Q'$ of $SL(2q,R)$.
By Suslin's stability theorem we conclude that
$\begin{pmatrix}
-\lambda_q Q'^\dagger \lambda_q & 0 \\ 0 & Q'^\dagger
\end{pmatrix}$
is a product of elementary symplectic transformation,
representing a circuit of CNOTs.
In other words, the symplectic transformation of \cref{eq:ESp-through-Suslin}
corresponds to a Clifford circuit up to shifts.

Now we complement the stability theorem 
with a few more elementary symplectic transformations on the \emph{left} of $Q\oplus Q$.
We focus on the left half block;
the right half block will be treated through \cref{lem:elemCphase} later.
Working over a ring of characteristic 2,
we can ignore $\pm$ signs.
\begin{align}
&\begin{pmatrix} 
A & 0  \\
0 & A  \\
C & 0  \\
0 & C 
\end{pmatrix}
\xrightarrow{CNOT_{2 \to 1}}
\begin{pmatrix} 
A & A  \\
0 & A  \\
C & 0  \\
C & C 
\end{pmatrix}
\xrightarrow{Hadamard_{2}}
\begin{pmatrix}
A & A \\
C & C  \\
C & 0  \\
0 & A
\end{pmatrix} 
\xrightarrow{
\begin{pmatrix}
\lambda_q Q^\dagger \lambda_q & 0 \\ 0 & Q^\dagger
\end{pmatrix}
}
\begin{pmatrix}
I_q & I_q \\
0   &   0 \\
G &    G \\
I_q+F & F
\end{pmatrix} \nonumber \\
&
\xrightarrow{Hadamard_2}
\begin{pmatrix}
I_q &  I_q \\
I_q + F & F  \\
G &   G \\
0   &  0 
\end{pmatrix}
\Longrightarrow 
\begin{pmatrix}
I_q &  I_q \\
0 & I_q  \\
0 &   0 \\
0 &  0 
\end{pmatrix}
\xrightarrow{CNOT_{2 \to 1}} 
\begin{pmatrix}
I_q &  0 \\
0 & I_q  \\
0 &   0 \\
0 &  0 
\end{pmatrix} \label{eq:sigmaManipulation}
\end{align}
where 
\begin{align}
\begin{pmatrix} F \\ G \end{pmatrix} = \lambda_q Q^\dagger \lambda_q \begin{pmatrix} A \\ 0 \end{pmatrix} .
\end{align}
The arrow $\Longrightarrow$ needs to be explained.
The block $I_q + F$ can be eliminated by CNOTs since we have $I_q$ in the upper left corner.
The block $G$ has to be hermitian 
since $(Q \oplus Q)^\dagger \lambda_{2q} (Q \oplus Q) = \lambda_{2q}$,
and hence $G$ can also be eliminated by \cref{lem:elemCphase}.

The full symplectic matrix after the transformations above,
including the right half block that we neglected in \cref{eq:sigmaManipulation},
has to be $\begin{pmatrix} I_{2q} & J \\ 0 & I_{2q} \end{pmatrix}$
for some $J = J^\dagger$ due to the symplectic structure.
By \cref{lem:elemCphase} once again,
this matrix is elementary.
\end{proof}

\subsection{Gapped boundaries of 2D translation-invariant commuting Pauli Hamiltonians}

Given a commuting Pauli Hamiltonian in two dimensions which is obviously gapped,
can we introduce boundary terms such that that the overall Hamiltonian is still commuting,
and no more boundary terms can be defined without ruining commutativity and locality?
Certainly, one can start by investigating all possible terms that are supported 
on any interval of length, say, 1 that commutes with the bulk term.
After choosing some maximally commuting subset of all these possible terms,
one can examine possible terms on all intervals of length 2,
and iterate the process.
We should then ask if this process would end;
that is, after we examine boundary terms on intervals of length $n$ for some finite $n$,
will any other operator that commutes with already chosen terms 
be redundant?
The following theorem guarantees that there is a good choice of boundary terms,
at least in the special case of translation-invariant commuting Pauli Hamiltonians in two dimensions.

\begin{theorem}\label{thm:maximal-commutative-algebra}
Let $G$ be a translation-invariant subgroup of the Pauli group in one dimension.
There exists an abelian subgroup $\stab$ of the commutant of $G$ 
with a local translation invariant generating set consisting of Pauli operators
(under some translation group which may be a proper subgroup of the original translation group),
such that any Pauli operator of bounded support that commutes with $\stab$ and $G$ belongs to $\stab$.
\end{theorem}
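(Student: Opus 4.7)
My plan is to set up the problem in the polynomial framework of the present section and exploit the fact that in one dimension the base ring $R = \FF_p[x^\pm]$ is a principal ideal domain, hence Noetherian. The subgroup $G$ corresponds to an $R$-submodule $\mathcal{G}$ of the Pauli module $R^{2q}$, after enlarging the unit cell if necessary so that $q$ is finite. Its commutant $\mathcal{C} := \ker(\sigma^\dagger \lambda_q)$, where $\sigma$ is a matrix whose columns generate $\mathcal{G}$, is itself an $R$-submodule of $R^{2q}$; as a submodule of a free module over a PID it is free and finitely generated. Consequently, any $R$-submodule of $\mathcal{C}$ admits a finite generating set of vectors with Laurent polynomial entries, each representing a Pauli operator of bounded support, so ``locally generated and translation invariant'' and ``$R$-submodule'' become equivalent notions.

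With the polynomial dictionary in hand, the theorem reduces to producing $\stab \subseteq \mathcal{C}$ that is $\lambda_q$-isotropic (abelianness) and that equals its own $\lambda_q$-orthogonal complement inside $\mathcal{C}$ (maximality). I would construct $\stab$ by iterative enlargement. Start with the radical $\stab_0 := \{v \in \mathcal{C} : v^\dagger \lambda_q \mathcal{C} = 0\}$, which is automatically isotropic. At step $i$, if the $\lambda_q$-orthogonal complement $\stab_i^\perp \cap \mathcal{C}$ strictly contains $\stab_i$, pick a vector $v$ in this complement but outside $\stab_i$ satisfying $v^\dagger \lambda_q v = 0$, and set $\stab_{i+1} := \stab_i + Rv$; the enlarged submodule remains isotropic. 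The strictly ascending chain $\stab_0 \subsetneq \stab_1 \subsetneq \cdots$ lives in the Noetherian module $\mathcal{C}$ and so must terminate at the desired maximal isotropic $\stab$.

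The main obstacle is guaranteeing that a self-orthogonal vector $v$ can always be selected at each step. For $p \ne 2$ one can simply halve, but for qubits ($p=2$) the quantity $v^\dagger \lambda_q v$ is a hermitian Laurent polynomial that measures parities of self-anticommutations among the translates of the Pauli operator represented by $v$, and it need not vanish a priori. To cope with this I would pass to the quotient $\mathcal{C}/\stab_0$, on which the induced anti-hermitian form is nondegenerate, and carry out a Witt-type splitting: a generic vector can be corrected by an $R$-linear combination of its own translates to become self-orthogonal, which may force replacing $R$ by $\FF_p[y^\pm]$ with $y = x^n$ for some $n \ge 1$, i.e., passing to a sublattice of the original translation group---exactly the caveat in the theorem statement. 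Thereafter one splits off hyperbolic pairs $Rv \oplus Rw$ with $v^\dagger \lambda_q w = 1$, recurses on the orthogonal complement, and lifts the resulting Lagrangian back through $\stab_0$ to obtain $\stab$.
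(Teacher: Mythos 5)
Your setup is exactly right --- translating to the polynomial framework, observing that $R=\FF_p[x^\pm]$ is a PID so the commutant $\mathcal C = A^\perp$ is free, and reducing to the problem of extracting a maximal isotropic submodule of the anti-hermitian form $\Xi = B^\dagger\lambda_q B$ --- and this is also how the paper proceeds. However, the core step in your proposal is not justified and contains an error, so there is a genuine gap.

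The difficulty, which you correctly identify, is producing a nonzero $v$ with $v^\dagger\lambda_q v = 0$ \emph{as a Laurent polynomial} (not merely constant term zero) in the quotient $\stab_i^\perp/\stab_i$. Your claim that ``for $p\neq 2$ one can simply halve'' is false: $v^\dagger\lambda_q v$ satisfies $\overline{f}=-f$, which kills only its constant term; the remaining antisymmetric part need not vanish, and nothing prevents it from being nonzero for \emph{every} $v$ in a given $R$-submodule (the paper's own example with $v_0=\binom{1}{x}$, where $v^\dagger\lambda_1 v = \bar r r\,(x-\bar x)\neq 0$ for all $r\neq 0$, illustrates this). Similarly, the claim that ``a generic vector can be corrected by an $R$-linear combination of its own translates to become self-orthogonal'' is not a correction of a single vector at all: $(rv)^\dagger\lambda_q(rv)=\bar r r\,v^\dagger\lambda_q v$ never helps. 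What is actually required is an isotropy theorem for the \emph{whole form} after a controlled base change $\phi^{(b)}_\#$. The paper achieves this by first passing to a sublattice so that all elementary divisors of $\Xi$ become $1$ or $x-1$ (a nontrivial fact, \cref{lem:toStandardXi}), and then invoking Witt-theory-style isotropy results over the function field $\FF_p(x)$ (\cref{lem:dimtwoisotropic,lem:dimthreeisotropic,lem:s1decompose}) to obtain the canonical form $\xi^{\oplus s}\oplus\lambda_1^{\oplus t}\oplus 0$ of \cref{cor:structure}, from which the Lagrangian is read off directly.

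A second problem with the iterative scheme is termination. You invoke Noetherianity of $\mathcal C$ over $R$, but each enlargement step may force a coarser translation group $R_{b_i}$, and a chain of $R_{b_i}$-submodules with $b_i\to\infty$ is not controlled by Noetherianity over $R$ (nor over any fixed $R_b$). The paper avoids this by proving the structure theorem once, after a single base change $\phi^{(b)}_\#$ that simultaneously standardizes all elementary divisors; the bound on $b$ is obtained up front rather than accumulated through an unbounded greedy process. To make your approach rigorous you would need to establish both (i) a uniform bound on the required index $b$ and (ii) the existence of an isotropic vector in every nonzero quotient after base change --- together these are exactly the content of the paper's structure theorem, so the greedy route does not actually save work.
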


We think of $G$ as the bulk terms near a boundary.
Then, the local generating set of $\stab$ qualifies to be a good set of boundary terms.

We will prove this theorem by going through the polynomial framework.
The transcription of the theorem into the polynomial framework is as follows.
Recall that the anti-hermitian form over the translation group algebra $R$ encodes 
whether two Pauli operators represented by $v$ and $w$ commute,
by the coefficient of ``$1$'' in the pairing $v^\dagger \lambda_q w$.
We distinguish the anti-hermitian form that is valued in $R$,
from this {\bf scalar-valued symplectic form} that is defined to be the coefficient of $1$ in $v^\dagger \lambda_q w$.
The distinction is only necessary as we consider smaller translation groups than the full one.

\begin{theorem}[Equivalent to \cref{thm:maximal-commutative-algebra}]\label{thm:mca-moduleversion}
Let $R = \FF_p[x^\pm]$ be the translation group algebra in one dimension,
and let $R_b = \FF_p[x^{\pm b}]$ 
denote a subring of $R$ parametrized by an integer $b \ge 1$.
Let $A$ be any $R$-submodule of a finitely generated Pauli module.
Then, there exist an integer $b \ge 1$ 
and a free $R_b$-submodule $\stab$ of $A^\perp$ such that
the orthogonal complement $\stab^\perp$ within $A^\perp$ is equal to $\stab$,
where the orthogonal complement is 
with respect to the scalar-valued symplectic form.
\end{theorem}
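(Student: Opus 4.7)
The plan is to prove the equivalent \Cref{thm:mca-moduleversion} using the polynomial framework of this section, in four steps: reduce to a nondegenerate anti-hermitian form on a free module by quotienting out the radical, reformulate scalar isotropy as an $R_b$-valued form condition, produce a Lagrangian of the latter by enlarging $b$, and lift it back to $A^\perp$.

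For the reduction, note that $R = \FF_p[x^\pm]$ is a principal ideal domain, so the finitely generated submodule $A^\perp \subseteq R^{2q}$ is free. The radical $K := A^\perp \cap A^{\perp\perp}$ of the restricted anti-hermitian form $\omega|_{A^\perp}$ is saturated in $R^{2q}$ (it is the intersection of two saturated submodules, namely orthogonal complements under $\omega$), so $M := A^\perp / K$ is torsion-free hence free, and inherits a nondegenerate anti-hermitian form $\bar\omega$. Any candidate $\stab$ must contain $K$, since $K$ already lies in the scalar radical inside $A^\perp$; thus the theorem reduces to producing an $R_b$-submodule $\stab/K$ of $M$ that is maximal scalar-isotropic.

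To handle the scalar isotropy, for each $b \ge 1$ define $\pi_b : R \to R_b$ to be the $\FF_p$-linear projection picking out the coefficients at exponents divisible by $b$ and sending $x^{kb} \mapsto y^k$, and set $\bar\omega_b := \pi_b \circ \bar\omega$. A short computation shows that $\bar\omega_b$ is a nondegenerate $R_b$-anti-hermitian form on $M$ viewed as a free $R_b$-module of finite rank, and that an $R_b$-submodule of $M$ is scalar-isotropic in the sense of the theorem exactly when it is isotropic for $\bar\omega_b$. So one needs to produce a $b$ together with a Lagrangian of $\bar\omega_b$. I would take an $R$-basis $m_1, \ldots, m_r$ of $M$, let $N$ bound the absolute exponents appearing in every $\bar\omega(m_i, m_j)$, and choose $b$ sufficiently large and even (for instance $b \ge 2N$). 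The alternation of the scalar form ensures $[\bar\omega(v,v)]_0 = 0$ for every $v \in M$, and the choice of $b$ forces any other self-pairing whose support sits strictly between consecutive multiples of $b$ to be killed by $\pi_b$. This provides enough flexibility to adapt the standard Smith-normal-form decomposition of a nondegenerate alternating form over the PID $R_b$ to $\bar\omega_b$ and read off an explicit Lagrangian---rank-one blocks over $R$ that obstructed hyperbolicity become rank-$b$ blocks over $R_b$ with vanishing diagonal, each carrying its own internal Lagrangian by a direct counting of basis elements.

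Lifting $\stab$ as the preimage of $\stab/K$ under the quotient $A^\perp \twoheadrightarrow M$ yields an $R_b$-submodule of $A^\perp$; a translation-invariant generating set for $\stab$ is obtained by concatenating an $R_b$-basis of $K$ with lifts of an $R_b$-basis of $\stab/K$. The main obstacle in the argument is the third step: exhibiting the Lagrangian of $\bar\omega_b$ constructively. Although Lagrangians of alternating forms exist abstractly in any $\FF_p$-vector space, forcing one to be $R_b$-invariant requires selecting $b$ in terms of the elementary divisors of $\bar\omega$ over $R$. The reason this can be done is that $M$ is finitely generated as an $R$-module, so only finitely many obstructions must be cleared, and all of them can be absorbed by taking $b$ large enough relative to the exponent bound $N$.
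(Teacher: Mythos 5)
Your steps~1 and~2 are sound and in fact parallel the paper's own strategy: quotienting by the radical of the restricted form corresponds to the paper's use of \cref{lem:singularXi}, and your map $\pi_b$ implementing the $R_b$-valued form $\bar\omega_b$ is, in the notation of the paper, precisely $\phi^{(b)}_\#(\Xi)$ (compare \cref{rem:FormAsMap}).  The reformulation ``scalar-isotropic $R_b$-submodule of $A^\perp$ $\Leftrightarrow$ Lagrangian of $\bar\omega_b$'' is correct and is the right way to state the problem.

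The gap is step~3, and it is not a small one.  Making the diagonal of $\bar\omega_b$ vanish by taking $b > N$ does not give you a Lagrangian, and the proposed bound $b \ge 2N$ is provably insufficient.  The paper's footnote to \cref{lem:toStandardXi} gives an explicit counterexample: with $p=2$ and the $1\times 1$ form $\Xi = (f(x+x^{-1}))$ for $f$ a primitive polynomial of $\FF_{2^n}$ over $\FF_2$ with $n$ odd, any $b$ that makes $\phi^{(b)}_\#(\Xi)$ admit a Lagrangian must satisfy $(2^{2n}-1) \mid b$, which is exponential in the degree $N=n$.  For such $\Xi$ and $b=2N=2n$, the $b\times b$ form $\bar\omega_b$ does have zero diagonal, but it has no free rank-$b/2$ $R_b$-Lagrangian; nondegeneracy over the quotient field together with zero diagonal guarantees a Lagrangian over a \emph{field}, but not over $R_b=\FF_p[y^\pm]$.  ``Adapting the Smith-normal-form decomposition of a nondegenerate alternating form over the PID $R_b$'' does not bridge this: the Smith normal form controls the ideal-theoretic shape of $\bar\omega_b$ but not its congruence class as an anti-hermitian form under the nontrivial involution $y\mapsto y^{-1}$, and a $2\times 2$ block with a non-unit elementary divisor (e.g.\ the block $\xi$ in the paper's \cref{eq:smallxi}) may or may not carry its own Lagrangian depending on its detailed structure.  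The actual obstruction is captured by the discriminant exponent $s$ in \cref{lem:toStandardXi}, and clearing it requires the paper's chain of results --- the elementary-divisor collapse $\phi^{(b)}_\#(\Xi)\to$~``standard'' borrowed from \cite[Lem.~6.1, Lem.~7.3]{Haah2013}, the isotropy results in dimensions $2$ and $\ge 3$ (\cref{lem:dimtwoisotropic,lem:dimthreeisotropic}, using Witt group/field extension arguments over $\FF_p(x)$), and the extraction of $\xi$ and $\lambda_1$ summands in \cref{lem:StandardToDetS1,lem:s1decompose} --- culminating in \cref{cor:structure}.  None of this follows from a counting argument; you would need to import \cref{cor:structure} (or reprove it), at which point the paper's one-paragraph proof of \cref{thm:mca-moduleversion} goes through directly.
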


It is important that the module $\stab$ is over a \emph{subring} $R_b$ 
that is in general not equal to $R$.
For example,
consider an $R$-module $A$ 
generated by $v_0 = \begin{pmatrix}
1 \\ x
\end{pmatrix} \in R^2$.
This represents the \emph{nonabelian} group generated by Pauli operators $X_j Z_{j+1}$.
No nonzero $R$-submodule of $A$ can have the anti-hermitian form vanishing on it;
that is, every nontrivial subgroup that is translation-invariant 
with respect to the original translation group is nonabelian.
To see this, suppose $v \in A \setminus \{ 0 \}$.
Then, $v = r v_0$ for some $r \in R$
and $v^\dagger \lambda_1 v = \bar r r (x - \bar x)$.
If this is to vanish as a polynomial, 
we must have $r = 0$.
However, the $R_2$-submodule generated by $v_0$,
which corresponds to the group generated by $X_{2j} Z_{2j+1}$,
is a maximal submodule on which the scalar-valued symplectic form vanishes.

The rest of this section constitutes the proof of \cref{thm:mca-moduleversion}.
Our proof will rely heavily on the fact that we are working in one dimension,
which implies that $R = \FF_p[x^\pm]$ is a polynomial ring with \emph{one} variable.

\subsubsection{Free basis and Matrix of commutation relations}

The ring $R$ is a Euclidean domain;
if we define the {\bf (absolute) degree} of a Laurent polynomial as
the difference of the highest exponent to the lowest exponent,
then the long division gives, for any two Laurent polynomials $f$ and $g$,
an equation $f = q g + r$ for some $q, r \in R$ with $\deg r < \deg g$.
In particular, $R$ is a principal ideal domain,
and hence $A$ is free and has a basis.
(To see this, one considers the Smith normal form of a matrix
whose column span over $R$ is $A$. 
Any submodule of a finitely generated free module 
over a principal ideal domain is free.)
Given a basis of $A$ written in the columns of a matrix $B$,
we can capture the commutation relations
among our generators (up to translations) of $G$, 
a subgroup of the Pauli group, by the matrix
\begin{align}
\Xi = B^\dagger \lambda_q B  = - \Xi^\dagger.
\end{align}
Formally, $\Xi$ represents the anti-hermitian form
restricted to $A$.
Note that the diagonal elements of~$\Xi$ may be nonzero.

The following lemma characterizes $\Xi$.
\begin{lemma}\label{lem:structureXi}
  A matrix $\Xi$ over $R = \FF_p[x^\pm]$ is equal to $B^\dagger \lambda_q B$
  for some $q$ and $B$
  if and only if 
  (i) $\Xi = - \Xi^\dagger$, and
  (ii) any diagonal element $\Xi_{jj}$ has zero constant term, 
  i.e., the coefficient of $x^0 = 1$ is zero.
\end{lemma}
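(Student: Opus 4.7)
My plan is to verify each direction of the characterization separately, with the forward direction being essentially a computation and the backward direction a small constructive argument.

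For the forward direction, suppose $\Xi = B^\dagger \lambda_q B$. Condition (i) is immediate from $\lambda_q^\dagger = -\lambda_q$: $\Xi^\dagger = B^\dagger \lambda_q^\dagger B = -B^\dagger \lambda_q B = -\Xi$. For condition (ii), write the $j$-th column of $B$ as $b_j = \begin{pmatrix} u_j \\ v_j \end{pmatrix}$ with $u_j, v_j \in R^q$. Then $\Xi_{jj} = u_j^\dagger v_j - v_j^\dagger u_j$. For any $f, g \in R$, the constant term of $\bar f g$ equals $\sum_a (\text{coeff of }x^a\text{ in }f) \cdot (\text{coeff of }x^a\text{ in }g)$, which is symmetric in $f$ and $g$. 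Hence the constant term of $\bar f g - \bar g f$ vanishes, and summing over components shows $\Xi_{jj}$ has no constant term.

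For the backward direction, suppose $\Xi$ is an $n \times n$ matrix satisfying (i) and (ii). I will take $q = n$ and construct $B$ of the form $B = \begin{pmatrix} I_n \\ T \end{pmatrix}$ for some $T \in R^{n\times n}$, so that $B^\dagger \lambda_n B = T - T^\dagger$. It then suffices to solve $T - T^\dagger = \Xi$. For off-diagonal entries, simply set $T_{ij} = \Xi_{ij}$ for $i < j$ and $T_{ij} = 0$ for $i > j$; then $T_{ij} - \overline{T_{ji}}$ recovers $\Xi_{ij}$ when $i < j$, and recovers $-\overline{\Xi_{ji}} = \Xi_{ij}$ when $i > j$. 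For diagonal entries we must solve $T_{ii} - \overline{T_{ii}} = \Xi_{ii}$, and this is exactly where hypothesis (ii) is used: writing $\Xi_{ii} = \sum_{a} c_a x^a$, condition (i) forces $c_{-a} = -c_a$ in odd characteristic (respectively $c_{-a} = c_a$ in characteristic $2$), and condition (ii) says $c_0 = 0$. Define $T_{ii} = \sum_{a>0} c_a x^a$; then $T_{ii} - \overline{T_{ii}} = \sum_{a>0}(c_a x^a - c_a x^{-a}) = \sum_a c_a x^a = \Xi_{ii}$ in odd characteristic, and the same choice works in characteristic $2$ where the subtraction is a sum.

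No step here looks genuinely hard; the only delicate point is recognizing that condition (ii) is precisely the obstruction to writing a diagonal entry as $T_{ii} - \overline{T_{ii}}$. An alternative, perhaps slightly cleaner proof would follow a Gram--Schmidt style induction on the number of columns of $B$, enlarging $q$ as needed to adjoin a single fresh symplectic pair each step; one could show that condition (ii) suffices to realize the required diagonal form value on the new pair, and the previously chosen columns then accommodate the off-diagonal pairings trivially. Either way the proof is short and the key leverage is condition (ii).
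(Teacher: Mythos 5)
Your proof is correct and takes essentially the same route as the paper's: both directions match, including the construction $B = \begin{pmatrix} I_n \\ T \end{pmatrix}$ with $T$ upper triangular, off-diagonal entries copied from $\Xi$, and diagonal entries taken to be the positive-exponent half of $\Xi_{jj}$.
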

  The condition (ii) in the lemma is redundant if $p \neq 2$.
\begin{proof}
  $(\Rightarrow)$ (i) is obvious, and (ii) is because 
  $\Xi_{jj} = v^\dagger \lambda_q  v = \bar a b - a \bar b$
  where $a$ and $b$ are upper $q$ and lower $q$ components of $v$, respectively.
  $(\Leftarrow)$ By (i) we know $\Xi_{jk} = - \bar \Xi_{kj}$, and by (ii)
  any diagonal $\Xi_{jj} = r_j - \bar r_j$ for some $r_j$, 
  which may be chosen to be all terms with positive exponents.
  Set $q$ be the number of rows (or equivalently columns) of $\Xi$,
  and define a upper triangular matrix $B$ as $B_{jk} = \Xi_{jk}$ for $ j < k$
  and $B_{jj} = r_j$. Then,
\begin{align}
  \Xi = 
  \begin{pmatrix} I & B^\dagger \end{pmatrix}
  \begin{pmatrix} 0 & I_q \\ -I_q & 0 \end{pmatrix}
  \begin{pmatrix} I \\ B \end{pmatrix}.
\end{align}
\end{proof}

We are going to find a canonical form of $\Xi$ under congruent transformations
$\Xi \cong E^\dagger \Xi E$ for any invertible matrix $E$.
This is a subject of long history 
under the name of quadratic forms or bilinear forms~\cite{Lam}.
We will use some of the results of this classic subject,
but we first simplify the problem by taking a smaller translation group,
which will almost solve the problem.

From here and below, the symbol $\Xi,\Xi',\ldots$ 
denotes the matrix of an anti-hermitian form.
The indeterminant $x$ denotes the generator for the translation group algebra in one dimension.
We will also use $y$ to denote $y = x^n$ for some $n \ge 1$
when we take a smaller translation group.
The variable~$y$ in this subsection 
should not be confused with the translation variable in two or higher dimensions in other sections.
A {\bf unit} of $R$ is any invertible element of $R$.

\subsubsection{Standard anti-hermitian form}

The {\bf discriminant} of an anti-hermitian form over $\FF_p[x^\pm]$
is defined to be the determinant modulo square elements 
$\{ x^2 : x \in \FF_p \setminus \{0\} \}$
of the matrix $\Xi$ of the form in any basis.
The following lemma implies that we may only consider $\Xi$ of nonzero discriminant.

\begin{lemma}\label{lem:singularXi}
For any $\Xi$ there is a basis change $E$ such that
each column (and hence each row) of $E^\dagger \Xi E$ generates an ideal 
that is generated by a elementary divisor of $\Xi$ 
(a diagonal element of the Smith normal form of $\Xi$).
In particular, if $\det \Xi = 0$, then there is a basis change $E$ such that
\begin{align}
E^\dagger \Xi E = \begin{pmatrix}
\Xi' & 0 \\ 0 & 0 
\end{pmatrix}
\end{align}
where $\det \Xi'$ is nonzero.%
\footnote{
In fact, the lemma is true for an arbitrary square matrix over a principal ideal domain;
the involution in $\dagger$ could be the identity automorphism of $R$.
}
\end{lemma}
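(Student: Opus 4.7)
The plan is to prove the ``in particular'' clause first by induction on the matrix size, since that is the only assertion used explicitly later in the paper, and then to address the finer statement about elementary divisors via a congruence-adapted Smith normal form.

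For the singular case, suppose $\det \Xi = 0$. Then $\Xi: R^n \to R^n$ has nontrivial kernel, so pick any nonzero $v \in \ker \Xi$ and write $v = g \tilde v$ with $g$ the gcd of the entries of $v$. The entries of $\tilde v$ generate the unit ideal, and since $R = \FF_p[x^\pm]$ is a PID, any unimodular column extends to an $R$-basis of $R^n$; equivalently, $\tilde v$ is the first column of some invertible matrix $E_1 \in GL_n(R)$. Because $R$ is an integral domain, $\Xi v = 0$ forces $\Xi \tilde v = 0$, so $\Xi E_1$ has vanishing first column. Then $E_1^\dagger \Xi E_1$ has vanishing first column as well, and the anti-hermitian property $(E_1^\dagger \Xi E_1)^\dagger = -E_1^\dagger \Xi E_1$ implies that the first row also vanishes. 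Hence
\begin{equation*}
E_1^\dagger \Xi E_1 = \begin{pmatrix} 0 & 0 \\ 0 & \Xi'' \end{pmatrix},
\end{equation*}
and $\Xi''$ is again anti-hermitian. Iterate on $\Xi''$ until the remaining block $\Xi'$ is nondegenerate; the composed basis change produces the claimed decomposition. The elementary divisors of $\Xi$ are precisely those of $\Xi'$ together with zeros corresponding to the dropped rows/columns, which matches the ideal structure demanded.

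For the stronger first assertion, I would promote the above induction to a congruence-Smith reduction. After stripping the kernel, one works with nondegenerate $\Xi'$ and inducts on matrix size: find an entry $\Xi_{ij}$ whose principal ideal is minimal among all entries (i.e., a generator of the smallest nonzero elementary divisor $d_1$, which equals the gcd of all entries), move it to a corner by conjugating by a permutation, and use anti-hermitian row/column operations of the form $E = I + c \, e_{kj}$ (together with their $\dagger$-compensations in the corresponding column) to clear the rest of the column containing $d_1$. Since each such operation preserves the form's congruence class and because $d_1$ divides every entry, the cleared entries reduce to multiples that one can eliminate by further subtraction. What remains is a single column generating $(d_1)$ (and, by anti-hermicity, a row generating $(d_1)$); the complementary block $\widetilde \Xi$ is again anti-hermitian and has elementary divisors $d_2 \mid \cdots \mid d_r$, so induction on the matrix size completes the reduction.

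The main obstacle is the congruence-only nature of the problem: the classical Smith normal form uses two independent transformations $P \Xi Q$ on left and right, whereas here the same matrix must be applied on both sides (up to $\dagger$). Making the elementary-divisor matching work therefore requires exploiting the anti-hermitian structure to ensure that each row-clearing operation can be compensated in the dual column without disturbing the chosen corner entry. Once the form is made nondegenerate by the kernel extraction in the second paragraph, this reduction is standard material from the theory of hermitian/anti-hermitian forms over principal ideal domains, and one can appeal to \cite{Lam} for the detailed linear algebra; the ``in particular'' clause, which is all the subsequent proofs need, follows cleanly from the short inductive argument above.
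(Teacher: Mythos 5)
Your proof of the ``in particular'' clause is correct, and it takes a genuinely different route than the paper's. You perform a direct kernel-stripping induction: take a nonzero kernel vector, factor out its content to get a unimodular vector, extend it to a basis (valid over a PID), observe that $\Xi$ kills the unimodular vector so the first column of $E_1^\dagger \Xi E_1$ vanishes, and then use anti-hermicity to kill the matching row; recurse on the complementary block. The paper instead writes the Smith normal form of $\Xi$ as $D\Xi E$ and observes $E^\dagger \Xi E = (E^\dagger D^{-1})(D\Xi E)$; the $i$-th column of this product is the $i$-th elementary divisor times the (unimodular) $i$-th column of the invertible matrix $E^\dagger D^{-1}$, so zero elementary divisors give zero columns, and anti-hermicity gives zero rows. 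Both are valid; yours is more elementary, but the paper's Smith-form observation establishes the first sentence of the lemma with no extra work, which yours does not.

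Your sketch of the first clause is, however, flawed. You are aiming at a congruence-diagonalization: put the gcd $d_1$ at a single position and clear the rest of its column, so that iterating gives a Smith-like form with each column being $d_i e_j$. That is strictly stronger than what the lemma asserts, and it fails. Take $R = \FF_2[x^\pm]$ and $\Xi = \begin{pmatrix} 0 & 1 \\ 1 & 0 \end{pmatrix}$, an anti-hermitian form satisfying the hypotheses of \cref{lem:structureXi} with elementary divisors $1,1$. Every congruence image $E^\dagger \Xi E$ has diagonal entries of the form $f + \bar f$, whose constant coefficient vanishes, so no diagonal entry can ever equal $1$; the form can never become $\diag(1,1)$. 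And if you instead park $d_1 = 1$ off the diagonal at $(2,1)$ and recurse on the $1\times 1$ remainder at $(2,2)$, that entry also has vanishing constant coefficient and so cannot have elementary divisor $d_2 = 1$; the induction produces the wrong invariants. Yet the lemma holds trivially here with $E = I$: both columns of $\Xi$ already generate $(1)$. The point you missed is that the lemma does not assert a normal form --- it only asks that each column of $E^\dagger \Xi E$ generate an elementary-divisor ideal, and the paper's factoring $E^\dagger \Xi E = (E^\dagger D^{-1})(D\Xi E)$ gives this for free, since left multiplication by an invertible matrix cannot change the ideal generated by any column of the diagonal matrix $D\Xi E$.
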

\begin{proof}
Let $D \Xi E$ be the Smith normal form of $\Xi$;
$D\Xi E$ is diagonal composed of elementary divisors of $\Xi$.
Then, the matrix $E^\dagger \Xi E = E^\dagger D^{-1} D \Xi E$ consists
of columns $v_i = E^\dagger D^{-1} d_i$ where $d_i$ is a column vector
whose sole nonzero component is the $i$-th elementary divisor of $\Xi$.
This proves the first claim.

Let the rank (defined by the determinantal ideals) of $\Xi$ be $m$.
Then, $d_i$ with $i > m$ is zero, and so is $v_i$.
Since $\Xi^\dagger = - \Xi$, the rows below $m$-th row must be zero.
This proves the second claim.
\end{proof}

\begin{definition}
An anti-hermitian form $\Xi$ of nonzero discriminant is {\bf standard}
if all of its elementary divisors (the diagonal elements in the Smith normal form)
are either $1$ or $x-1$ up to units of $\FF_p[x^\pm]$.
\end{definition}
In particular, if $\det \Xi$ has a unit discriminant, then it is standard.

\begin{lemma}\label{lem:toStandardXi}
For any anti-hermitian form $\Xi$ of nonzero discriminant over $R=\FF_p[x^\pm]$,
there exists an integer $b \ge 1$ such that
$\phi^{(b)}_\# (\Xi)$ is standard.
If $\Xi$ is standard, 
then $\Xi$ has discriminant $(2-x-\bar x)^s = (x-1)^s(\bar x -1)^s$ for some integer $s \ge 0$,
and $\phi^{(b)}_\#(\Xi)$ is also standard for any $b \ge 1$
with discriminant $(2-y - \bar y)^s$
where the exponent $s$ remains unchanged.
In particular, if $\Xi$ has a unit discriminant, then its discriminant is $1$.
\end{lemma}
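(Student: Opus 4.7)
My plan is to attack both halves through the Smith normal form of $\Xi$ over the PID $R = \FF_p[x^\pm]$. Let $d_1 \mid d_2 \mid \cdots \mid d_n$ be the nonzero elementary divisors, all nonzero by the assumption $\det \Xi \neq 0$. Each $d_i$ factors into irreducibles; aside from $(x-1)$ up to units, every irreducible factor $q$ has roots in $\bar\FF_p$ of some multiplicative order $N>1$ that is automatically coprime to $p$. Let $L$ be the least common multiple of all such orders occurring in the $d_i$, and pick $k\ge 0$ with $p^k$ at least the maximum multiplicity of any irreducible factor in any $d_i$. Set $b = L p^k$. Because $\gcd(L,p)=1$, one has $x^b - 1 = (x^L - 1)^{p^k}$ in $\FF_p[x]$, where $x^L - 1$ is separable and contains every relevant $q$ as a simple factor. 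Hence $q^{p^k} \mid x^b - 1$, so every $d_i$ divides $y-1 = x^b - 1$.

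Next I will show that $\phi^{(b)}_\#(\Xi)$ is standard. Writing $\Xi = P D Q$ for the Smith decomposition with invertible $P,Q$ and $D = \mathrm{diag}(d_i)$, the functoriality of $\phi^{(b)}_\#$ together with invertibility of $\phi^{(b)}_\#(P), \phi^{(b)}_\#(Q)$ means that the elementary divisors of $\phi^{(b)}_\#(\Xi)$ are the union of those of the $b \times b$ blocks $\phi^{(b)}_\#(d_i)$. Each block is injective and has cokernel $R/(d_i)$ regarded as an $R_b$-module via $y \mapsto x^b$. Since $d_i$ divides $y-1$, the element $y-1$ annihilates $R/(d_i)$, so this cokernel is an $\FF_p$-vector space of dimension $\deg d_i$ with trivial $y$-action, namely $(R_b/(y-1))^{\oplus \deg d_i}$. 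It follows that $\phi^{(b)}_\#(d_i)$ has exactly $\deg d_i$ elementary divisors equal to $(y-1)$ together with $b - \deg d_i$ ones, all up to units; assembling across $i$, $\phi^{(b)}_\#(\Xi)$ is standard.

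For the second half I assume $\Xi$ is itself standard, with $s$ copies of $(x-1)$ and $n-s$ copies of $1$ among its elementary divisors, so $\det \Xi = u(x-1)^s$ for some unit $u \in R^\times = \FF_p^\times \cdot \{x^k\}$. I will combine the anti-hermitian constraint $\overline{\det\Xi} = (-1)^n \det\Xi$ with the identity $(\bar x - 1)^s = (-1)^s x^{-s}(x-1)^s$ to pin down $u$ modulo the equivalence defining the discriminant; the outcome is that the discriminant equals $(x-1)^s(\bar x - 1)^s = (2-x-\bar x)^s$. For the behavior under $\phi^{(b)}_\#$, standardness lets me apply the first half and observe that each $(x-1)$ contributes exactly one $(y-1)$ (because $R/(x-1)\cong \FF_p \cong R_b/(y-1)$ has $\FF_p$-dimension one), while each trivial divisor contributes none. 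So $\phi^{(b)}_\#(\Xi)$ is standard with the same $s$, giving discriminant $(2-y-\bar y)^s$. Finally, since $2-x-\bar x = -x^{-1}(x-1)^2$ is a non-unit in $R$, a unit discriminant forces $s=0$ and the discriminant collapses to $1$.

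The step I expect to be most delicate is not the module-theoretic structure, which flows cleanly from the separability of $x^L - 1$ and the $(y-1)$-annihilation argument, but rather the final identification of the discriminant class of a standard $\Xi$; extracting the constrained unit $u$ from the anti-hermitian identity and checking equality modulo the precise equivalence defining the discriminant involves some case analysis between characteristic $2$ and characteristic $\neq 2$, and it is here that one must be most careful about signs and monomial factors.
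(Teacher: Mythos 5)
Your first half is correct and in fact gives a nice self-contained argument where the paper simply cites~\cite[Lem.~6.1, Lem.~7.3]{Haah2013}: you unpack what those citations say (coarsen the translation group so that every nontrivial elementary divisor is absorbed into a power of $y-1$) by choosing $b = L p^k$ explicitly, using separability of $x^L-1$ together with Frobenius to force $d_i \mid x^b-1$, and then identifying the cokernel of $\phi^{(b)}_\#(d_i)$ with $(R_b/(y-1))^{\oplus \deg d_i}$ because $y-1$ annihilates $R/(d_i)$. That is the right computation, and it is more transparent than the citation chain.

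The second half is where your argument has a genuine gap, and also a bookkeeping error. The bookkeeping error: you set $s$ to be the number of elementary divisors equal to $x-1$, so that $\det\Xi = u(x-1)^s$, and then assert the discriminant is $(x-1)^s(\bar x-1)^s = (2-x-\bar x)^s$. But $(2-x-\bar x)^s$ contains $(x-1)^{2s}$, not $(x-1)^s$: your two expressions for the determinant already disagree by a factor of $(x-1)^s$, which is not a unit, so they cannot be equal modulo nonzero squared constants of $\FF_p$. The correct statement is that the number of $(x-1)$ divisors is $2s$ with discriminant $(2-x-\bar x)^s$; the constraint $\overline{\det\Xi}=(-1)^n\det\Xi$ does force that number to be even (writing $u=cx^m$ gives $2m$ plus the number of $(x-1)$'s equal to $0$). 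The substantive gap: that same constraint pins down only the monomial exponent $m$; it places no restriction at all on the constant $c\in\FF_p^\times$. So ``the outcome is that the discriminant equals $(2-x-\bar x)^s$'' simply does not follow from what you have written. You still need to show that $(-1)^s c$ is a square in $\FF_p^\times$, and for $p\neq 2$ this is not just a matter of ``being careful about signs and monomial factors'' --- it requires a genuinely new input. The paper's proof supplies it by evaluating $\Xi$ at $y=-1$ (a fixed point of the involution), observing that the anti-hermitian condition forces the diagonal entries to vanish there when $p\neq 2$, so $\Xi|_{y=-1}$ is a nonsingular alternating matrix over $\FF_p$, whose determinant is a perfect square (the Pfaffian squared); since $\det\Xi|_{y=-1}=4^s(-1)^s c$, the desired conclusion follows. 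Without this step (or some substitute), the claimed equality of discriminants is unproven, and indeed false in general for anti-hermitian forms not satisfying it.

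The parts that propagate from the above (same $s$ under $\phi^{(b)}_\#$, unit discriminant $\Rightarrow$ discriminant $1$) are fine once the second half is repaired.
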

\begin{proof}
Thanks to \cref{rem:FormAsMap}, we can compute $\phi^{(b)}_\#(\Xi)$
as if $\Xi$ were a linear map,
and for any linear map $\Xi$ over $\FF_p[x^\pm]$
it is shown~\cite[Lem.~6.1, Lem.~7.3]{Haah2013} 
that for some $b \ge 1$ the elementary divisor of $\phi^{(b)}_\#(\Xi)$ 
is either a unit or a scalar multiple of $y-1$.%
\footnote{
The integer $b$ (= the size of the unit cell under the smaller translation group
= the index of ``spontaneous translation symmetry breaking'')
can be exponentially large in the total polynomial degree of an element of $\Xi$
($\approx$ the range of a basis operator for the original algebra).
For example, consider a $1\times 1$ matrix $\Xi = ( f(x + x^{-1}) )$ 
for any primitive polynomial $f$
of an extension field $\FF_{2^n}$ over $\FF_2$ with $n$ odd.
Then, the field extension $\FF_2[x^\pm]/(f(x+x^{-1}))$ has extension degree $2n$ over $\FF_2$
and $x^b -1$ must be a multiple of $f(x+x^{-1})$,
which implies $(2^{2n}-1) | b$.
}
Hence, the determinant of $\Xi' = \phi^{(b)}_\#(\Xi)$
is a power of $y-1$ up to a unit.
Since a unit is a monomial in $R' = \FF_p[y^\pm]$,
we must have
\begin{align}
\det \Xi' = c y^s (y-1)^t
\end{align}
with $c \in \FF_p^\times$, $s,t \in \ZZ$, and $t \ge 0$.
We know that $\Xi' = - (\Xi')^\dagger$, and hence
$c y^{-s} (y-1)^t = (-1)^m c y^s ( y^{-1} - 1)^t$
where $m$ is the dimension of $\Xi'$.
Since the polynomials on the both sides have to be identical,
this implies that $m+t$ is even and $-s = s -t$,
so $\det \Xi' = c y^{-s} (y -1 )^{2s} = c (y + \bar y -2)^s$.
We have to show that $(-1)^s c$ is a square.
If $p = 2$, every element of $\FF_p$ is a square, and we are done.
If $p \neq 2$, then since $-1$ is self-inverse, 
the diagonal of $\Xi'$ is vanishing at $y = -1$,
but the determinant of $\Xi'$ is nonzero, 
and hence $\Xi'|_{y=-1}$ is a symplectic matrix over $\FF_p$, 
and $\det \Xi'|_{y=-1} = 4^s (-1)^s c$ is a square,
and therefore $(-1)^s c$ is a square.

The elementary divisors of $\phi^{(b)}_\#(x-1)$ are $1$'s and a single $y-1$
up to units, since the cokernel of the $1 \times 1$ matrix $x-1$ is $\FF_p$
which is independent of $b$.
Thus, if $\Xi$ is standard, 
the number of $x-1$'s, which is $2s$, in the list of all elementary divisors of $\Xi$,
is the same as $y-1$'s in the list of all elementary divisors of $\phi^{(b)}_\#(\Xi)$.
\end{proof}

\subsubsection{Congruence classes of standard forms}

A bilinear or sesquilinear form $\Xi$ is said to be {\bf isotropic} 
if there exists a nonzero vector $v$ such that $v^\dagger \Xi v = 0$.%
\footnote{
Sometimes a subspace of a symplectic vector space is called isotropic
if the symplectic form vanishes on the subspace.
Here, the isotropy only means there is at least one nonzero vector whose value is zero.
Usually in literature on quadratic/bilinear forms, the term ``totally isotropic''
is used to describe a subspace on which the form vanishes identically.
}
That is, an isotropic form can be represented by a square matrix with a zero in the diagonal.
Note that being isotropic is independent of 
whether we regard the form over a ring $\FF_p[x^\pm]$, an integral domain,
or over its quotient field $\FF_p(x)$, 
the field of all rational ``functions'' in $x$ with coefficients in $\FF_p$.
We fix a notation for an isotropic standard anti-hermitian form:
\begin{align}
\xi = \begin{pmatrix} 0 & x-1 \\ -x^{-1}+1 & 0 \end{pmatrix}. \label{eq:smallxi}
\end{align}

To warm up, let us inspect forms over $\FF_p$:
\begin{lemma}\label{lem:kerFpmatrix}
Any square matrix $A$ of dimension $n$ over $\FF_p$ is isotropic if $n \ge 3$.%
\footnote{
The bound is sharp; $v^T \begin{pmatrix} 1 & 1 \\ 0 & 1 \end{pmatrix} v = 0$
implies $v=0$ over $\FF_2$.
}
\end{lemma}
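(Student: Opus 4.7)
The plan is to apply the Chevalley--Warning theorem. Observe that the map $v \mapsto v^T A v = \sum_{i,j} A_{ij} v_i v_j$ is a homogeneous polynomial $f \in \FF_p[v_1, \ldots, v_n]$ of degree exactly~$2$. Chevalley--Warning asserts that for any polynomial $f$ over $\FF_p$ with $\deg f < n$, the cardinality $N(f) = |\{v \in \FF_p^n : f(v) = 0\}|$ is divisible by $p$. Since $n \ge 3 > 2 = \deg f$ and $v = 0$ is trivially a zero, we have $N(f) \ge p$, so there exist at least $p-1 \ge 1$ nonzero isotropic vectors. This immediately yields the lemma.

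The only point requiring a moment's care is that Chevalley--Warning applies uniformly in all characteristics, including $p=2$, where the polynomial simplifies (using $v_i^2 = v_i$ in $\FF_2$ when $p=2$) but retains degree~$2$ generically; the theorem does not require $f$ to be homogeneous or the field to have odd characteristic. The sharpness of the bound $n \ge 3$ is exactly the footnote's example, where $\begin{pmatrix} 1 & 1 \\ 0 & 1 \end{pmatrix}$ over $\FF_2$ satisfies $v^T A v = v_1^2 + v_1 v_2 + v_2^2$, which vanishes only at $v = 0$ in $\FF_2^2$.

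I do not expect any serious obstacle here; the whole proof is essentially a one-line invocation of a standard tool. If one wished to avoid Chevalley--Warning, an elementary alternative would be: for $p \ne 2$, replace $A$ by its symmetrization $S = \tfrac{1}{2}(A + A^T)$ (so $v^T A v = v^T S v$) and invoke the classical fact that every symmetric bilinear form on an $\FF_p$-space of dimension $\ge 3$ is isotropic, which follows from a counting/pigeonhole argument on the image of $v \mapsto v^T S v$ restricted to a suitable $3$-dimensional subspace; the $p=2$ case can then be handled separately by noting that $v^T A v$ becomes linear plus a bilinear cross term and directly exhibiting an isotropic vector. But the Chevalley--Warning route is cleaner and uniform, so I would present that.
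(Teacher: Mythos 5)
Your Chevalley--Warning argument is correct and is a genuinely different route from the paper's. The paper splits into two cases: for $p > 2$ it replaces $A$ by its symmetrization $(A + A^T)/2$ (the antisymmetric part contributes nothing to $v^T A v$) and then invokes the structure of the Witt group of symmetric forms over $\FF_p$ --- $\ZZ/2\ZZ \oplus \ZZ/2\ZZ$ for $p \equiv 1 \pmod 4$, so $n \ge 3$ forces a hyperbolic plane, and $\ZZ/4\ZZ$ for $p \equiv 3 \pmod 4$, where $\diag(1,1,1) \cong \diag(1,-1,-1)$ yields a kernel vector --- while for $p=2$ it reduces by explicit congruence transformations to a $3\times 3$ matrix with unit diagonal and finds an isotropic vector by hand. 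Your approach is uniform in $p$ and one-line: $v \mapsto v^T A v$ is a (formal) polynomial of degree at most $2 < n$, so Chevalley--Warning gives $p \mid N(f)$; since $v=0$ is a zero, $N(f) \ge p \ge 2$, so a nonzero isotropic vector exists. (Note: if the polynomial has degree $<2$, e.g.\ when $A$ is skew and $p \ne 2$, isotropy is even more immediate, so that case is not an obstruction.) What the paper's approach buys is self-containedness at an elementary level and an explicit congruence normalization; what yours buys is brevity and uniformity across all primes, at the cost of appealing to a (standard but nontrivial) theorem. Both correctly establish the lemma, and the sharpness footnote is the same example in both.
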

\begin{proof}
If $\det A = 0$, the claim is trivial, so assume $\det A \neq 0$.

When $p > 2$, consider $A_\pm = (A \pm A^T)/2$ so that $A = A_+ + A_-$.
For any vector $v$, we know $v^T A_- v = 0$, so it suffices to consider $A = A_+$.
The Witt group of all finite dimensional symmetric forms over $\FF_{p>2}$ 
is either  $\ZZ/2\ZZ \oplus \ZZ/2\ZZ$ if $p \equiv 1 \mod 4$,
in which case $n\ge 3$ implies the existence of a hyperbolic plane,
or $\ZZ/4\ZZ$ generated by $\mathrm{diag}(1)$ if $p \equiv 3 \mod 4$
in which case $\mathrm{diag}(1,1,1) \cong \mathrm{diag}(1,-1,-1)$
and we have a nonzero radical (kernel) of $A$ whenever $n \ge 3$;
this is a well-known fact,
but one can see e.g.~\cite[App.~E]{HHPW2017} for elementary computation.

When $p=2$, it suffices to consider that $A$ is $3\times 3$ and its diagonal is all $1$.
Consider the following sequence of congruent transformations
where any $\star$ indicates an arbitrary entry.
\begin{align}
A=\begin{pmatrix}
1 & \star & \star \\
\star & \star & \star \\
\star & \star & \star
\end{pmatrix}
 \to
\begin{pmatrix}
1 & 0 & 0 \\
\star & \star & \star \\
\star & \star & \star
\end{pmatrix}
\to
A' =
\begin{pmatrix}
1 & 0 & 0 \\
0 & \star & \star \\
0 & \star &a
\end{pmatrix} \text{ or }
A'=\begin{pmatrix}
1 & 0 & 0 \\
1 & \star & \star \\
0 & \star & a
\end{pmatrix}.
\end{align}
If $a=0$, we are done.
If $a=1$, then $v^T=\begin{pmatrix}1&0&1\end{pmatrix}$ satisfies $v^T A' v= 0$.
\end{proof}

\begin{lemma}\label{lem:dimtwoisotropic}
Any $2 \times 2$ standard anti-hermitian form over $\FF_p[x^\pm]$ is isotropic.
\end{lemma}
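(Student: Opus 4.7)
The plan is to produce an explicit isotropic vector, using the fact that for a $2\times 2$ standard anti-hermitian form the determinant is always a norm $w\bar w$ for some $w\in R$.

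Write $\Xi=\begin{pmatrix} a & b \\ -\bar b & d \end{pmatrix}$ with $\bar a=-a$, $\bar d=-d$, so that $\det\Xi=ad+b\bar b$ is fixed by the involution. If $a=0$ then $(1,0)^T$ is already isotropic, so from now on I assume $a\neq 0$. The first step is to classify $\det\Xi$. Standardness says each of the two elementary divisors of $\Xi$ lies in $\{1,x-1\}$ up to a unit, so $\det\Xi$ is a unit multiple of $1$, $x-1$, or $(x-1)^2$. Combined with \cref{lem:toStandardXi}, which forces the discriminant to be a power $(2-x-\bar x)^s$, and with $\overline{\det\Xi}=\det\Xi$ (which holds since $\Xi$ is anti-hermitian of even dimension), only the two cases
\[
s=0:\ \det\Xi=c^2,\qquad s=1:\ \det\Xi=c^2(2-x-\bar x),\qquad c\in\FF_p^\times
\]
survive; any $s\ge 2$ is ruled out because $\det\Xi$ has degree at most two in $(x-1)$.

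The key observation is the identity $(2-x-\bar x)=(1-x)(1-\bar x)$, i.e., $2-x-\bar x$ is itself the norm $N(1-x):=(1-x)\overline{(1-x)}$. Setting $w:=c(1-x)^s\in R$, both cases above give $w\bar w=\det\Xi$. I would then propose the candidate
\[
v=\begin{pmatrix} w-b \\ a \end{pmatrix}\in R^2,
\]
which is nonzero since its second component is $a\neq 0$. A direct expansion of $v^\dagger \Xi v$ using only $\bar a=-a$ makes the mixed $w\bar b$ and $b\bar w$ terms cancel in pairs, leaving
\[
v^\dagger \Xi v = a\bigl(w\bar w-b\bar b-ad\bigr) = a\bigl(w\bar w-\det\Xi\bigr) = 0.
\]

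The only place I expect real friction is in the classification step: I have to rule out $s\ge 2$ carefully from the two-dimensional elementary-divisor bound, and I have to pin down the scalar $c\in\FF_p^\times$ in its \emph{exact} normalization (rather than only modulo squares) so that $w\bar w$ matches $\det\Xi$ on the nose rather than up to a constant. Everything else collapses to a one-line computation once the factorization $(2-x-\bar x)=(1-x)(1-\bar x)$ is spotted, so the proof sidesteps Witt-group machinery entirely and works uniformly in all characteristics $p$.
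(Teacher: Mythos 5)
Your proof is correct, and it takes a genuinely different route from the paper. The paper passes to the fraction field $K=\FF_p(x)$, assumes anisotropy to guarantee a nonzero diagonal entry, diagonalizes, and then uses the fact that the determinant is $1$ modulo norms $r\bar r$ to rescale to $\diag(h,h^{-1})$ with $h=-\bar h$, at which point $(1,h)$ is visibly isotropic. You instead stay in $R=\FF_p[x^\pm]$ and produce an explicit isotropic vector $(w-b,a)^T\in R^2$ from the single identity $w\bar w=\det\Xi$, which you realize concretely via $w=c(1-x)^s$ and $(1-x)(1-\bar x)=2-x-\bar x$. Both arguments rest on the same underlying fact --- that \cref{lem:toStandardXi} forces $\det\Xi$ to be a norm --- but yours sidesteps both the field extension and the diagonalization-by-contradiction, and produces a witness with polynomial entries.

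Two small remarks. First, the ``friction'' you anticipate is illusory: you only need $\det\Xi=w\bar w$ for \emph{some} $w\in R$, and the proof of \cref{lem:toStandardXi} already pins down $\det\Xi=c^2(2-x-\bar x)^s$ exactly (not merely up to a square) by evaluating at $x=-1$ when $p\neq 2$ and using that every scalar is a square when $p=2$. Any choice of $c$ with the right square then gives a valid $w$. Second, the classification step ruling out $s\ge 2$ is correct --- the $(x-1)$-adic valuation of $\det\Xi$ is $2s$ and is bounded by $2$ from the two elementary divisors --- but it is not actually used: the formula $w=c(1-x)^s$ and the cancellation in $v^\dagger\Xi v=a(w\bar w-\det\Xi)=0$ go through verbatim for any $s\ge 0$, so the argument would establish isotropy for $2\times 2$ standard forms of any discriminant exponent, were such forms to exist.
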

\begin{proof}
We consider a hermitian form $\Xi$ over the field $K = \FF_p(x)$;
this extension does not affect whether $\Xi$ is isotropic.
\Cref{lem:toStandardXi} implies that the determinant of $\Xi$ is~$1$
modulo $\{ r \bar r : r \in K^\times \}$.
If $\Xi$ was anisotropic, any diagonal is nonzero,
and we can diagonalize it as $\Xi \cong \diag(h,h^{-1})$ where $h = - \bar h$.
Then, $ h\cdot 1 \cdot 1 + h^{-1} \cdot \bar h \cdot h = 0 $ 
so $(1,h)$ is in the radical (kernel) of $\Xi$.
\end{proof}

\begin{lemma}\label{lem:StandardToDetS1}
Let $\Xi$ be a standard anti-hermitian form 
of discriminant $(2-x-\bar x)^s$.
If $s \ge 2$,
then there exists an integer $b \ge 1$ such that $\phi^{(b)}_\#(\Xi) \cong \xi \oplus \Xi'$.
\end{lemma}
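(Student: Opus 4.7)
The plan is to exhibit $\xi$ as a direct summand of $\phi^{(b)}_\#(\Xi)$ for some $b \ge 1$ by producing a pair of vectors $v_1, v_2$ on which the form restricts exactly to $\xi$, after which the non-degeneracy of $\xi$ splits off the orthogonal complement. The key tool is \cref{lem:kerFpmatrix} applied to a natural $\FF_p$-valued symmetric pairing that arises on the $(x-1)$-kernel of $\Xi$.

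First I would study $\Xi$ modulo $(x-1)$. Because $\Xi$ is standard with discriminant $(2-x-\bar x)^s$, its elementary divisors are $2s$ copies of $(x-1)$ together with units, so the rank satisfies $n \ge 2s \ge 4$; the reduction $\bar\Xi := \Xi|_{x=1}$ is a skew-symmetric $\FF_p$-matrix of rank $n-2s$ whose kernel $V_0 \subset \FF_p^n$ has dimension $2s \ge 4$. For lifts $v,w \in R^n$ of elements of $V_0$ one has $\Xi v, \Xi w \in (x-1) R^n$, so $v^\dagger \Xi w \in (x-1) R$. Using $\overline{x-1} = -x^{-1}(x-1)$ together with anti-hermicity of $\Xi$, one checks that $B(\bar v, \bar w) := [v^\dagger \Xi w /(x-1)]|_{x=1}$ descends to a well-defined \emph{symmetric} $\FF_p$-pairing on $V_0$, whose discriminant equals, up to a unit in $\FF_p$, the $(x-1)$-free part of $\det \Xi$ evaluated at $x=1$; by standardness that unit is nonzero, so $B$ is non-degenerate.

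Next I would build the $\xi$-pair. Since $\dim V_0 \ge 3$, \cref{lem:kerFpmatrix} supplies a nonzero $B$-isotropic $\bar v_1 \in V_0$; non-degeneracy of $B$ then gives $\bar v_2 \in V_0$ with $B(\bar v_1,\bar v_2) = 1$, and replacing $\bar v_2$ by $\bar v_2 - \tfrac{1}{2} B(\bar v_2, \bar v_2) \bar v_1$ (with the $p=2$ analog handled as in the proof of \cref{lem:kerFpmatrix}) makes $\bar v_2$ also isotropic. Lifting to primitive $v_1, v_2 \in R^n$, we get $v_j^\dagger \Xi v_j \in (x-1)^2 R$ and $v_1^\dagger \Xi v_2 = (x-1) u$ for a unit $u$ of $R$ at $x=1$. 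Passing to $\phi^{(b)}_\#$ for $b$ large enough that $u$ becomes invertible in $R_b$ and that the image of $\alpha \mapsto \bar\alpha(x-1) - \alpha(\bar x-1)$ contains the $v_j^\dagger \Xi v_j$, I rescale $v_2$ and modify $v_1, v_2$ by $R_b$-multiples of each other to reach $v_1^\dagger \Xi v_2 = y-1$ and $v_j^\dagger \Xi v_j = 0$.

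At that stage the restriction of $\phi^{(b)}_\#(\Xi)$ to $M := R_b v_1 \oplus R_b v_2$ is exactly $\xi$, and since $\det \xi = 2-y-\bar y$ is a non-zerodivisor, the orthogonal-projection formula (inverting $\xi$ on $M$) yields $R_b^{nb} = M \oplus M^\perp$, giving $\phi^{(b)}_\#(\Xi) \cong \xi \oplus \Xi'$ with $\Xi' := \Xi|_{M^\perp}$ of discriminant $(2-y-\bar y)^{s-1}$. The hard part is the explicit adjustment producing $(v_1, v_2)$ with pairing exactly $y-1$ and both vectors exactly isotropic: over $R$ the unit $u$ need not be invertible and the map $\alpha \mapsto \bar\alpha(x-1)-\alpha(\bar x-1)$ does not hit every anti-hermitian target, so one must identify the correct $b$ and carry out these manipulations carefully, particularly in characteristic two where \cref{lem:structureXi} permits nonvanishing diagonal entries.
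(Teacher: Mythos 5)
Your residue-pairing construction is a genuinely different way to organize the argument, and it is a nice idea: you detect the $(x-1)$-kernel $V_0$ of $\Xi|_{x=1}$, equip it with the symmetric $\FF_p$-pairing $B(\bar v,\bar w) = [v^\dagger \Xi w/(x-1)]|_{x=1}$, and then use \cref{lem:kerFpmatrix} to extract a hyperbolic pair. If you write $\phi^{(b)}_\#(\Xi)$ in the normalization $A(y-1) - A^T(\bar y-1) + C$ with $A,C$ over $\FF_p$ (as in the paper's proof), your $B$ is exactly $A+A^T$ restricted to $V_0$, so the two proofs are looking at closely related data. The paper instead applies \cref{lem:kerFpmatrix} directly to $A$ (not its symmetrization) after a block reduction; note that for $p=2$ your $B=A+A^T$ is alternating, so ``$B$-isotropic'' is vacuous and gives you no control on the diagonal entries $A_{ii}$ that the paper does obtain.

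The real gap, however, is the final $R$-level cleanup, which you correctly flag as ``the hard part'' but do not resolve. You arrive at $v_1^\dagger \Xi v_2 = (x-1)u$ with $u(1)\ne 0$, and propose to ``pass to $\phi^{(b)}_\#$ for $b$ large enough that $u$ becomes invertible in $R_b$.'' This does not work: $\phi^{(b)}_\#$ is restriction of scalars along the embedding $R_b\hookrightarrow R$, not a localization, and it does not invert non-units; indeed the scalar $(x-1)u$ becomes a $b\times b$ matrix over $R_b$, not a more invertible scalar. Relatedly, you want to choose $b$ so that the map $\alpha\mapsto\bar\alpha(x-1)-\alpha(\bar x-1)$ hits the diagonal entries $v_j^\dagger\Xi v_j$, but no such $b$ is needed in the paper and it is unclear one exists in the form you want. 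The paper avoids both issues at the source: \cref{lem:singularXi} produces a basis change making the relevant column generate the ideal $(y-1)$ \emph{exactly} (not $(y-1)u$), so the off-diagonal entry is $y-1$ on the nose; and the diagonal is killed not by a ``large $b$'' argument but by the identity $d = r-\bar r = (y-1)q - (\bar y-1)\bar q$ with $r=(y-1)q+f$, which is then absorbed by a single congruent transformation using the adjacent $y-1$. Finally, the nondegeneracy of $B$ is plausible via your discriminant heuristic, but you assert it rather than prove it; this would also need to be filled in.
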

\begin{proof}
By choosing a sufficiently large $b$ in \cref{lem:toStandardXi}
we may assume that $\phi^{(b)}_\#(\Xi)$ is not only standard but also
\begin{align}
\phi^{(b)}_\#(\Xi) = A (y-1) - A^T (\bar y - 1) + B
\end{align}
where $A$ and $B = - B^T$ are matrices over $\FF_p$.
That is, every entry of $\phi^{(b)}_\#(\Xi)$ is linear in $1,y,\bar y$.
Due to \cref{lem:structureXi}, $B$ has zero diagonal.
We know $\det \phi^{(b)}_\#(\Xi) = (2-y - \bar y)^s$ up to nonzero squares of $\FF_p$
by \cref{lem:toStandardXi}.
To avoid clutter in notation, let us assume that $b = 1$ for the rest of the proof.

Since $s > 0$,
the matrix $\Xi|_{x=1}$ has zero discriminant,
and by \cref{lem:singularXi} there exists a basis change over $\FF_p$
such that
\begin{align}
\Xi \cong \begin{pmatrix}
\Xi_1 & \Xi_2 \\
-\Xi_2^\dagger & \Xi_0
\end{pmatrix}
\end{align}
where $\Xi_2|_{x=1} = 0$, $\Xi_0|_{x=1} = 0$, and $\det \Xi_1|_{x=1} = 1$.
Let $n$ be the dimension of $\Xi$, and $m$ be the dimension of $\Xi_1$.
With $x=1$ no submatrix of dimension larger than $m$ can have nonzero determinant.
This means that there are exactly $m$ elementary divisors of $\Xi$
that do not vanish at $x=1$.
Since $\Xi$ is standard, there are exactly $n-m$ $x-1$'s in the list of all elementary divisors,
whose product is the discriminant up to units.
In particular, the dimension of $\Xi_0$ is $n-m = 2s$, an even number $\ge 4$.

Since $\Xi_0$ meets the dimension requirement of \cref{lem:kerFpmatrix},
there is a congruent transformation over $\FF_p$
that gives a zero diagonal element, say, in the lower right corner in $A$.
The last row and the second-to-last row (or equivalently column) of $\Xi$
must generate the ideal $(y-1)$,
since they are the image of $(y-1)e_n$ and $(y-1)e_{n-1}$, respectively
(where $e_j$ is the unit column vector),
under an invertible matrix; see the proof of \cref{lem:singularXi}.
Hence after some congruent transformation,
which in general involves higher powers of $y^\pm$,
we obtain $\Xi$ in the following form:
\begin{align}
\Xi 
&\cong 
\begin{pmatrix}
\Xi' & c & 0_{(n-2) \times 1} \\
-c^\dagger & d & y - 1  \\
0_{1 \times (n-2)} & -\bar y + 1 & 0 
\end{pmatrix}
\cong
\begin{pmatrix}
\Xi' & 0 & 0_{(n-2) \times 1} \\
0 & d & y - 1  \\
0_{1 \times (n-2)} & -\bar y + 1 & 0 
\end{pmatrix}\nonumber \\
&\cong
\begin{pmatrix}
\Xi' & 0 & 0_{(n-2) \times 1} \\
0 & 0 & y - 1  \\
0_{1 \times (n-2)} & -\bar y + 1 & 0 
\end{pmatrix}
\end{align}
In the last congruence
we used the fact that since $d = r - \bar r$ for some $r \in \FF_p[y]$ by \cref{lem:structureXi},
it holds that $d = (y-1)q - (\bar y -1)\bar q $ where $r = (y-1)q + f$ with $f \in \FF_p$.
The lower right corner is $\xi$ as claimed.
\end{proof}

\begin{lemma}\label{lem:s1decompose}
Let $\Xi$ be an isotropic standard anti-hermitian form 
of discriminant $1$ or $(2-x-\bar x)$ over $R = \FF_p[x^\pm]$.
Then, $\Xi$ is congruent to either $\xi \oplus \Xi'$ or $\lambda_1 \oplus \Xi'$ where
$\xi$ and $\lambda_1$ are as in \cref{eq:smallxi,eq:lambdaMatrix}.
\end{lemma}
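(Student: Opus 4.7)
The plan is to pick a primitive isotropic vector $v$ and analyze the ideal $I_v \subseteq R$ generated by the entries of the row vector $v^\dagger \Xi$. Given any nonzero isotropic $v_0$, I would write $v_0 = d v$ with $v$ primitive (possible since $R = \FF_p[x^\pm]$ is a UFD); then $v$ remains isotropic, and being primitive it extends to a basis of $R^n$ (as $R$ is a PID). The key observation is that standardness of $\Xi$ means all elementary divisors lie in $\{1, x-1\}$ up to units, whence $\im(\Xi) \supseteq (x-1)R^n$; combining this with $v^\dagger R^n = R$ (primitivity) gives $I_v \supseteq (x-1)$. Because $x-1$ is irreducible in $R$, the ideal $I_v$ is either $(1)$ or $(x-1)$, which drives a case analysis.

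In the unit case $I_v = (1)$, I would pick $w_0$ with $v^\dagger \Xi w_0 = 1$; a content argument shows $w_0$ is primitive, and after subtracting its $v$-component the pair $\{v, w_0\}$ extends to a basis of $R^n$. Replacing $w_0$ by $w = w_0 + \beta v$ to enforce $w^\dagger \Xi w = 0$ reduces to solving $\beta - \bar\beta = w_0^\dagger \Xi w_0$, which is solvable because the right-hand side is anti-hermitian with zero constant term by \cref{lem:structureXi}. The plane $Rv + Rw$ is then nondegenerate and carries the form $\lambda_1$, so its orthogonal complement yields the decomposition $\Xi \cong \lambda_1 \oplus \Xi'$.

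In the non-unit case $I_v = (x-1)$, I would instead pick $w_0$ with $v^\dagger \Xi w_0 = x-1$. Its content divides $x-1$ but cannot equal $x-1$ (else $1 \in I_v$), so $w_0$ is primitive and, after subtracting its $v$-component, $\{v, w_0\}$ extends to a basis. Anti-hermiticity and the zero-constant-term condition force $(w_0^\dagger \Xi w_0)|_{x=1} = 0$, so $w_0^\dagger \Xi w_0 = (x-1)\alpha'$; a direct calculation then shows $\bar{\alpha'} = x \alpha'$. Setting $w = w_0 + \beta v$, killing $w^\dagger \Xi w$ reduces to solving $\bar\beta + x^{-1}\beta = -\alpha'$; the compatibility condition on the right-hand side $\gamma$ is exactly $\bar\gamma = x\gamma$, so one can write down an explicit solution, e.g.\ $\beta = x\sum_{k \ge 0}\gamma_k x^k$ after expanding $\gamma$ in monomials.

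Now with both $v, w$ isotropic and $v^\dagger \Xi w = x-1$, I would examine the ideal $I_w = ((x-1), t_1, \ldots, t_{n-2})$, where the $t_j$ are the pairings of $w$ with the remaining basis vectors. The same dichotomy applies: if $I_w = (1)$, apply the previous paragraph's argument to $w$ to obtain $\Xi \cong \lambda_1 \oplus \Xi'$; if $I_w = (x-1)$, then $(x-1) \mid t_j$ for every $j$, allowing one to clear $t$ by the congruence $v_{j+2}' \to v_{j+2}' + (t_j/(\bar x - 1))\,v$, which fixes $v$ and $w$ and yields $\Xi \cong \xi \oplus \Xi'$. The main anticipated obstacle is the solvability analysis of $\bar\beta + x^{-1}\beta = -\alpha'$ in the second case, which hinges on the subtle identity $\bar{\alpha'} = x\alpha'$; the rest of the argument is careful bookkeeping of primitivity and basis-extension under the two values of the content ideal.
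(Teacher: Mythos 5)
Your argument takes essentially the same route as the paper's: pick a primitive isotropic vector $v$, split off a two-dimensional plane, and case on whether the pairing ideal $I_v$ is all of $R$ or $(x-1)$. The paper packages this as a Euclidean reduction of the first row of the Gram matrix, so that its resulting $\gcd$ $b$ is a generator of your ideal $I_v$; structurally the two proofs coincide. Your explicit treatment of solving $\bar\beta + x^{-1}\beta = -\alpha'$, including the compatibility identity $\bar{\alpha'} = x\alpha'$, is more detailed than what the paper records (which simply asserts that $d$ ``can be eliminated''), and your sub-case ``$I_w=(1)$ pushes back to the $\lambda_1$ case'' mirrors the paper's ``if $\star$ does not become zero, then we are back to the situation where $b=1$.''

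There is, however, one gap in the final case $I_v = I_w = (x-1)$. Your congruence $v_{j+2}' \mapsto v_{j+2}' + \bigl(t_j/(\bar x-1)\bigr)v$ annihilates the pairings $t_j = w^\dagger\Xi v_{j+2}$, but since $v$ is $\Xi$-isotropic this congruence leaves the first-row pairings $a_j := v^\dagger\Xi v_{j+2}$ completely untouched. Your basis extension of $\{v,w\}$ is not chosen with any $\Xi$-orthogonality to $v$, so there is no reason for the $a_j$ to vanish, and the Gram matrix after your congruence is therefore not yet $\xi\oplus\Xi'$. The fix is a second round of congruences $v_{j+2} \mapsto v_{j+2} - \bigl(a_j/(x-1)\bigr)w$: these are legal because $I_v=(x-1)$ forces $(x-1)\mid a_j$, they preserve $w^\dagger\Xi v_{j+2}=0$ (since $w$ is isotropic), and they do not disturb the $v$--$w$ block. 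With this extra step the decomposition follows. The paper sidesteps this point because its initial Euclidean reduction already zeros out the first row beyond the entry $b$ before the case analysis on $b$ begins.
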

\begin{proof}
Suppose the top left diagonal of $\Xi$ is zero.
Then, by elementary row and column operations we have
\begin{align}
\Xi 
\cong 
\begin{pmatrix}
0 & b & 0_{1 \times(n-2)} \\
-\bar b & d & \star \\
0_{(n-2)\times 1} & \star & \Xi'
\end{pmatrix}.
\end{align}
Note that $\det \Xi = b \bar b ~\det \Xi'$ by the cofactor expansion formula for determinants.
Since the determinant of $\Xi$ is $2-x-\bar x$ up to nonzero squares of $\FF_p$,
the Laurent polynomial $b \bar b$ must divide $2-x-\bar x$.
The only options are $b = 1$ or $b = x-1$ up to units.
In either case, $d$ that has no constant term by \cref{lem:structureXi},
can be eliminated by some congruent transformation.

If $b = 1$ which has to be the case if the discriminant of $\Xi$ is $1$,
then $\star$ can be eliminated, and $\Xi \cong \lambda_1 \oplus \Xi'$.
If $b = x-1$,
then $\star$ can be made so that its entries are all in $\FF_p$.
If $\star$ does not become zero, then we are back to the situation where $b=1$.
If $\star$ becomes zero, then $\Xi \cong \xi \oplus \Xi'$.
\end{proof}

\begin{lemma}\label{lem:dimthreeisotropic}
Suppose $m \ge 3$.
Then any $m \times m$ anti-hermitian form $\Xi$ over $\FF_p[x^\pm]$ is isotropic.
\end{lemma}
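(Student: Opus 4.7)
The plan is a case analysis based on the discriminant invariant. First, if $\det \Xi = 0$, any nonzero $v \in \ker \Xi$ trivially satisfies $v^\dagger \Xi v = 0$, so henceforth assume $\Xi$ is nondegenerate. By \cref{lem:toStandardXi}, I would apply $\phi^{(b)}_\#$ for some $b \ge 1$ so that $\phi^{(b)}_\#(\Xi)$ is standard with discriminant $(2-y-\bar y)^s$ up to a nonzero square. Isotropy is preserved under $\phi^{(b)}_\#$: a vector $v \in (\FF_p[x^\pm])^m$ with $v^\dagger \Xi v = 0$ corresponds, via restriction of scalars, to a vector in $(\FF_p[y^\pm])^{bm}$ on which $\phi^{(b)}_\#(\Xi)$ vanishes, and conversely. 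So without loss of generality $\Xi$ is standard of dimension $m' := bm \ge m \ge 3$ (or $m' \ge m \ge 2$ for $p=2$).

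If $s \ge 2$, \cref{lem:StandardToDetS1} yields, after a possible further application of $\phi^{(b')}_\#$, a decomposition $\Xi \cong \xi \oplus \Xi''$; the vector $(1,0)^T$ paired with $\xi$ is isotropic (since $\xi$ has zero diagonal), so $\Xi$ is isotropic. If $s \le 1$, the discriminant is a norm from $K := \FF_p(x)$ to $K_0 := \FF_p(x+\bar x)$, because both a nonzero square and $(x-1)(\bar x - 1) = N(x-1)$ are norms from $K^\times$. I would then pass to the quotient field $K$ (isotropy over $\FF_p[x^\pm]$ is equivalent to isotropy over $K$ by clearing denominators: if $v \in K^m$ satisfies $v^\dagger \Xi v = 0$ and $d$ is the common denominator of the entries of $v$, then $dv \in (\FF_p[x^\pm])^m$ still gives $(dv)^\dagger \Xi (dv) = \bar d d \cdot v^\dagger \Xi v = 0$), and diagonalize $\Xi$ as $\diag(h_1,\ldots,h_{m'})$ with $\bar h_i = -h_i$; for $p \ne 2$ I would first scale by $\tau^{-1} = (x-\bar x)^{-1}$ so that the form becomes hermitian with $h_i \in K_0$.

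The hard part will be extracting an isotropic vector from the diagonalized form in the $s \le 1$ case. My approach is to find a $2 \times 2$ principal subform to which the diagonalization trick from \cref{lem:dimtwoisotropic} applies, namely a subform $\diag(h_i,h_j)$ whose discriminant $h_i h_j$ is a norm. Since every $\alpha \in K_0$ satisfies $\alpha^2 = N(\alpha)$, the quotient $K_0^\times / N(K^\times)$ has exponent $2$ and is thus an $\FF_2$-vector space, and the classes $[h_i h_j]$ inherit constraints from the total discriminant class, which is trivial in the $s \le 1$ case. A parity argument in this $\FF_2$-vector space should produce the required pair; alternatively, writing $v_i = a_i + b_i \sqrt{\Delta}$ with $\Delta = t^2 - 4$ recasts the hermitian form as a $2m'$-variable quadratic form over $K_0 \cong \FF_p(t)$, which is isotropic whenever $m' \ge 3$ by the Tsen--Lang theorem (every quadratic form in more than $4$ variables over the $C_2$ field $\FF_p(t)$ is isotropic). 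The base case $p=2,\, m=m'=2$ (when $b=1$ is already standard) is handled directly by \cref{lem:dimtwoisotropic}.
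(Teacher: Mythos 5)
Your reduction to standard form via $\phi^{(b)}_\#$ rests on the claim that isotropy of $\phi^{(b)}_\#(\Xi)$ implies isotropy of $\Xi$, and this converse direction is false. An isotropic vector $w$ for $\phi^{(b)}_\#(\Xi)$ corresponds, under restriction of scalars, to an operator that commutes with all of its $y=x^b$-translates; the polynomial identity $v^\dagger\Xi v = 0$ over $\FF_p[x^\pm]$ additionally demands commutation with all intermediate $x$-translates, which is strictly stronger. For a concrete counterexample, take $\Xi = (x - \bar x)$: this $1\times 1$ form is anisotropic over $\FF_p[x^\pm]$ because $\bar v v(x-\bar x)\neq 0$ for every $v\neq 0$ in an integral domain, yet $\phi^{(2)}_\#(\Xi) = \left(\begin{smallmatrix} 0 & y-1 \\ -\bar y +1 & 0\end{smallmatrix}\right)$ is killed by $(1,0)^T$. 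Consequently the step ``without loss of generality $\Xi$ is standard'' is not justified, and the $s\ge 2$ branch, which produces an isotropic vector only for some $\phi^{(bb')}_\#(\Xi)$, does not prove the stated lemma.

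Your ``alternative'' --- pass to $K=\FF_p(x)$, transfer along the quadratic extension $K/K_0$ with $K_0 = \FF_p(t)$, $t = x+\bar x$, and invoke the $C_2$ property of $\FF_p(t)$ --- is essentially the paper's argument; the paper multiplies $\Xi$ by $x-\bar x$ to obtain a hermitian form $\Phi$, regards $\Phi$ as a $2m$-dimensional form over $K_0$, and appeals to $u(\FF_p(t))=4$. Applied directly to $\Xi$, with no $\phi^{(b)}_\#$ detour, it cleanly handles all $m\ge 3$. However, it does not reach the $p=2,\ m=2$ case: there $2m=4$ sits exactly at the $C_2$/$u$-invariant threshold, and your fallback to \cref{lem:dimtwoisotropic} only covers forms that are already standard, which you have not shown. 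The paper closes this case with a separate characteristic-two argument: after diagonalizing the transferred form (possible precisely when there is no isotropic vector), anisotropy forces the diagonal entries to be $K_0^2$-linearly independent, and $[K_0:K_0^2]=2$ then gives $2m\le 2$, i.e., $m\le 1$. Finally, the ``parity argument in $K_0^\times/N(K^\times)$'' you sketch for the $s\le 1$ branch is a plan rather than a proof; no mechanism is given for actually producing a pair $(h_i,h_j)$ with $h_ih_j$ a norm.
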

\begin{proof}
This trivially follows from a known fact~\cite[Example~2 of App.~2]{MilnorHusemoller},
but let us explain a bit more.
Let $F$ be the subfield of $K$ fixed element-wise by the involution $x \mapsto x^{-1}$.
Then, $K$ is a quadratic extension field over $F$;
put $t = x+x^{-1}$ and $F = \FF_p(t)$ so that $K = F[y]/(y^2 -ty +1 )$.

Suppose $p \neq 2$.
We consider a hermitian form $\Phi = (x - x^{-1})\Xi$ over the field $K = \FF_p(x)$;
$\Phi$ is isotropic if and only if $\Xi$ is.
The form $\Phi$ is then a symmetric bilinear form of dimension~$2m$ over $F$
by the identification $K = F^2$ as an $F$-vector space,
and for the rest of the proof we regard $\Phi$ as a form over $F$.
Then~\cite[XI~1.5]{Lam} implies that $\Phi$ is isotropic.%
\footnote{
The proof of this involves ``Hasse--Minkowski principle''~\cite{Lam}.
}

Suppose $p = 2$. We can use~\cite[Lem.~36.8]{Kniga}.
The form $\Xi$ is hermitian, and we may regard it as a quadratic form~$\xi$ over~$F$ of dimension~$2m$
by $K^m = F^m \oplus y F^m \ni f_0 + y f_1 \mapsto (f_0^\dag + \tfrac 1 y f_1^\dag) \Xi (f_0 + y f_1) \in F$.
If $\Xi$ is anisotropic, then, tautologically, $\xi$ must also be anisotropic.
\cite[Lem.~36.8]{Kniga} (cf.~\cite[Prop.~7.31]{Kniga})
says that $\tfrac 1 2 (2m) \le [F:F^2] = 2$.
\end{proof}

\begin{corollary}[Structure theorem of anti-hermitian forms]\label{cor:structure}
For any anti-hermitian form~$\Xi$ over~$\FF_p[x^\pm]$,
there exist integers $b \ge 1, s \ge 0, t \ge 0$ such that
$\phi^{(b)}_\#(\Xi) \cong \xi^{\oplus s} \oplus \lambda_1^{\oplus t} \oplus 0$.
The integer~$s$ is uniquely determined by $\Xi$, irrespective of $b$.
\end{corollary}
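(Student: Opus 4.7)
The plan is to extract the decomposition by stepwise peeling off summands, using the earlier lemmas as a toolkit. First, \cref{lem:singularXi} lets me split off the zero block and reduce to the case that $\Xi$ has nonzero discriminant. Then \cref{lem:toStandardXi} supplies some $b_0 \ge 1$ such that $\phi^{(b_0)}_\#(\Xi)$ is a standard form of discriminant $(2-y-\bar y)^s$. From this point the problem reduces to the following claim: every standard anti-hermitian form $\Xi$ of discriminant $(2-x-\bar x)^s$ admits some $b \ge 1$ with $\phi^{(b)}_\#(\Xi) \cong \xi^{\oplus s}\oplus \lambda_1^{\oplus t}$ for some $t \ge 0$.

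I would prove this claim by a double induction: primary on $s$, secondary on $\dim \Xi$. For $s \in \{0,1\}$, the form has even dimension, and if $\dim \Xi \ge 2$ then \cref{lem:dimtwoisotropic} (if $\dim\Xi=2$) or \cref{lem:dimthreeisotropic} (if $\dim\Xi \ge 3$) yields isotropy; \cref{lem:s1decompose} then strips off either a $\xi$ or a $\lambda_1$ summand \emph{without} any further use of $\phi^{(\cdot)}_\#$, so dimension drops by $2$, discriminant drops correspondingly, and the secondary induction closes the case. In particular the residual $s$ never exceeds the starting value.

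The interesting case is $s \ge 2$. Here \cref{lem:StandardToDetS1} supplies some $b_1$ with $\phi^{(b_1)}_\#(\Xi) \cong \xi \oplus \Xi''$, where $\Xi''$ is again standard and has discriminant $(2-y-\bar y)^{s-1}$. By primary induction, there exist $b_2$ and $t'$ with $\phi^{(b_2)}_\#(\Xi'') \cong \xi^{\oplus(s-1)} \oplus \lambda_1^{\oplus t'}$. Pushing $\phi^{(b_2)}_\#$ through the direct sum leaves me with the term $\phi^{(b_2)}_\#(\xi)$, which is a standard form of discriminant $(2-z-\bar z)$ and dimension $2b_2$; the already-established $s=1$ case of the induction decomposes it as $\xi \oplus \lambda_1^{\oplus(b_2-1)}$. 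Composing the base changes into a single $\phi^{(b_1 b_2)}_\#$ yields the desired congruence for the original $\Xi$.

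Uniqueness of $s$ follows from the fact that the discriminant of the nondegenerate part, read modulo squares in the fraction field, is a congruence invariant; by \cref{lem:toStandardXi} this exponent is preserved under every application of $\phi^{(b)}_\#$. Since the target decomposition has discriminant $(2-y-\bar y)^s$ modulo squares, $s$ is pinned down. The main obstacle I expect is the mismatch between the inductive parameter and the effect of $\phi^{(b)}_\#$ on dimension: each application inflates dimension by a factor of $b$, so a naive induction on dimension fails in the $s \ge 2$ case. The remedy is to make $s$ strictly primary, so that the recursive call lands in a smaller $s$-stratum where the $s=1$ sub-case (needed to dispose of the stray $\phi^{(b_2)}_\#(\xi)$) is already available.
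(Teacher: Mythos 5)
Your argument is correct and follows the same decomposition strategy as the paper, but you make the inductive structure explicit where the paper is terse; this is a genuine improvement in rigor. The paper says only ``by \cref{lem:StandardToDetS1} we can decrease $s$\ldots\ hence it suffices to prove the claim with a standard $\Xi$ of discriminant $(2-x-\bar x)$ or $1$,'' which glosses over exactly the point you flag: each application of $\phi^{(b)}_\#$ inflates the dimension, and the previously peeled-off $\xi$ summand becomes a larger block $\phi^{(b')}_\#(\xi)$ that must itself be re-decomposed. Your double induction --- $s$ primary, dimension secondary, with the stray $\phi^{(b_2)}_\#(\xi)$ dispatched by the already-established $s=1$ stratum --- is exactly the right way to organize the recursion, and it is what the paper's proof implicitly relies on. Two small remarks. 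First, your uniqueness argument via the discriminant is a genuinely different route from the paper's, which instead observes that the torsion submodule of $\coker\Xi\cong\coker\phi^{(b)}_\#(\Xi)$ is an $\FF_p$-vector space of dimension $2s$; both work, but note that for the discriminant argument you must read the determinant modulo $(\FF_p^\times)^2$ (as in the paper's definition of discriminant), \emph{not} modulo squares or norms in the fraction field $\FF_p(x)$ --- modulo norms, $2-x-\bar x = -(x-1)\overline{(x-1)}$ collapses to $-1$ and only the parity of $s$ would survive. Second, you assert that in the $s\in\{0,1\}$ case ``the form has even dimension''; this is automatic for $p\neq 2$ from $\overline{\det\Xi}=(-1)^{\dim\Xi}\det\Xi$, but for $p=2$ it requires a short separate check (or, more simply, insert the paper's hedge of doubling $b$ once at the start of the secondary induction).
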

\begin{proof}
By \cref{lem:singularXi,lem:toStandardXi} it suffices to prove the claim for a nondegenerate standard $\Xi$.
If a standard $\Xi$ has discriminant $(2-x-\bar x)^s$ with $s \ge 2$,
then by \cref{lem:StandardToDetS1} we can decrease $s$ by extracting a direct summand $\xi$.
Hence, it suffices to prove the claim with a standard $\Xi$ of discriminant $(2-x-\bar x)$ or $1$.
Increasing $b$ by a factor of 2 if necessary, we may assume that $\Xi$ has even dimension.
If $\Xi'$ has dimension $>2$, then, by \cref{lem:dimthreeisotropic}, $\Xi$ is isotropic.
If $\Xi'$ has dimension $2$, then, by \cref{lem:dimtwoisotropic}, $\Xi$ is isotropic.
Hence, \cref{lem:s1decompose} implies that we can still find a wanted direct summand.
The direct summands $\lambda_1$ and $\xi$ both have even dimensions,
and by induction in the dimension of $\Xi$, we are done.
The uniqueness of $s$ follows from the fact that $\coker \Xi \cong \coker \phi^{(b)}_\#(\Xi)$ 
as $\FF_p$-vector spaces of dimension $2s$.
\end{proof}

\subsubsection{Proof of \cref{thm:mca-moduleversion}}
\begin{proof}
Let $B$ be a matrix whose columns freely generate the module $A^\perp$.
By \cref{cor:structure} we may assume that 
$\Xi = B^\dagger \lambda_q B = \xi^{\oplus s} \oplus \lambda_1^{\oplus t} \oplus 0$;
this requires taking a smaller translation group.
We collect exactly one basis vector $v_j$ from each pair $\{v_j, v'_j\}$
that forms one summand $\xi$ or $\lambda_1$.
We further collect every basis vector $w_k$ corresponding to the $0$ summand of $\Xi$.
The desired submodule $\stab$ is generated by $\{v_j, w_k\}$.
Indeed, if $v \in A^\perp$, then $v$ is uniquely written as a linear combination of $v_j,v'_j,w_k$.
Further if $v \in \stab^\perp$, then the coefficients of $v'_j$ must be zero.
\end{proof}

\subsection{Miscellaneous results}

As a direct application of the sufficient condition for $\ker \sigma^\dagger \lambda_q = \im \sigma$
we can now prove \cref{lem:nondegeneracy},
which asserts the nondegeneracy of the ground state 
of the Walker-Wang model Hamiltonian for the 3-fermion theory.
\begin{proof}[Proof of \cref{lem:nondegeneracy}]\label{pflem:nondegeneracy}
The model Hamiltonian of \cref{eq:WWHam} 
is a translation-invariant commuting Pauli Hamiltonian
on a system of qubits ($p=2$);
we will soon check that the Hamitonian is frustration-free.
The stabilizer map is
\begin{align}
\sigma = \left(
\begin{array}{cccccccc}
 1+\frac{1}{x} & 0 & 0 & 0 & y z+\frac{1}{x} & 0 & 0 & y z \\
 1+\frac{1}{y} & 0 & 0 & x z+\frac{1}{y} & 0 & 0 & x z & 0 \\
 1+\frac{1}{z} & 0 & x y+\frac{1}{z} & 0 & 0 & x y & 0 & 0 \\
 0 & 1+\frac{1}{x} & 0 & 0 & \frac{1}{x} & 0 & 0 & y z+\frac{1}{x} \\
 0 & 1+\frac{1}{y} & 0 & \frac{1}{y} & 0 & 0 & x z+\frac{1}{y} & 0 \\
 0 & 1+\frac{1}{z} & \frac{1}{z} & 0 & 0 & x y+\frac{1}{z} & 0 & 0 \\
 0 & 0 & y+1 & z+1 & 0 & 0 & 0 & 0 \\
 0 & 0 & x+1 & 0 & z+1 & 0 & 0 & 0 \\
 0 & 0 & 0 & x+1 & y+1 & 0 & 0 & 0 \\
 0 & 0 & 0 & 0 & 0 & y+1 & z+1 & 0 \\
 0 & 0 & 0 & 0 & 0 & x+1 & 0 & z+1 \\
 0 & 0 & 0 & 0 & 0 & 0 & x+1 & y+1 \\
\end{array}
\right) \label{eq:stabilizermap}
\end{align}
over $R = \FF_2[x^\pm, y^\pm, z^\pm]$.
The convention of the ordering of the qubits in a unit cell 
(three edges with two qubits per edge)
is that we put ``1''-qubits along $x,y,z$-axes in the first three components,
and then ``2''-qubits along $x,y,z$-axes in the last three components.
By the convention on $\lambda_3$ of \cref{eq:lambdaMatrix},
Pauli $X$ operators are represented in the upper 6 rows,
and Pauli $Z$ operators are in the lower 6 rows.

To show that the Hamiltonian is frustration-free,
we have computed the kernel of $\sigma$.
One can easily check that $\sigma K = 0$ where
\begin{align}
K = \left(
\begin{array}{cc}
 x y z+1 & x y z \\
 1 & x y z+1 \\
 z+1 & 0 \\
 y+1 & 0 \\
 x+1 & 0 \\
 0 & z+1 \\
 0 & y+1 \\
 0 & x+1 \\
\end{array}
\right)
\end{align}
and also that $I_2(K) = R$.%
\footnote{This is obvious by looking at the uppermost $2\times 2$ block.}
In terms of operators on the complex Hilbert space,
each column of $K$ expresses the fact that the product of the six plaquette terms $B_{P,j}$ for a fixed $j=1,2$ 
around a cube is a product of three vertex terms up to a phase factor.
By a similar argument as in the proof of the second claim of \cref{lem:chaincondition},
we know the columns of $K$ generates the kernel of $\sigma$.
Since the smallest nonzero determinantal ideal of $K$ is unit,
the relation $\ker \sigma = \im K$ continues to hold after factoring out 
$\mathfrak b_L = (x^L - 1, y^L - 1, z^L -1 )$
that imposes periodic boundary conditions of linear system size $L$.
Therefore, it suffices for us to check that
every product of Hamiltonian terms that corresponds to $\sigma K$ 
is \emph{not} $-1$ times the identity operator.
There are only two cases to check, one for each column of $K$.
This is an easy calculation, and we have confirmed that 
the stabilizer group generated by the terms of Hamiltonian 
does not contain $-1$ times the identity operator,
and hence the Hamiltonian of \cref{eq:WWHam} is frustration-free.

Next, it is simple to check that $\sigma^\dagger \lambda_3 \sigma = 0$,
which implies that the Hamiltonian is indeed commuting.
We have computed that $I_6(\sigma) = R$ by Gr\"obner basis computation.
Since $\sigma^\dagger \lambda_3 \sigma = 0$, we know $I_7(\sigma)$ vanishes.
By \cref{lem:chaincondition}, we conclude that $\ker \sigma^\dagger \lambda_3 = \im \sigma$.%
\footnote{
By~\cite[Lem.~3.1]{Haah2013} this implies that the Hamiltonian
satisfies the local topological order condition.
}
Then, by~\cite[Cor.~4.2]{Haah2013} the ground state subspace of the Hamiltonian
is a quantum error correcting code encoding zero logical qubit,
which means that the ground state is nondegenerate.
\end{proof}

\begin{proof}[Alternative proof of \cref{lem:enough-smallfermionloops}]\label{pflem:enough-smallfermionloops}
The problem can be cast into the polynomial framework,
and we transcribe the operators as 
$\sigma_\text{surf}$ for the truncated bulk terms and 
$\sigma_{SFL}$ for the small fermion loop operators in the bottom of \cref{fig:WWsurface}:
\begin{align}
\sigma_\text{surf} &= \left(
\begin{array}{cccccccc}
 1+\frac{1}{x} & 0 & 0 & 0 & 0 & 1 & 0 & 1 \\
 1+\frac{1}{y} & 0 & 0 & 0 & 1 & 0 & 1 & 0 \\
 0 & 1+\frac{1}{x} & 0 & 0 & 0 & 0 & 0 & 1 \\
 0 & 1+\frac{1}{y} & 0 & 0 & 0 & 0 & 1 & 0 \\
 0 & 0 & y+1 & 0 & \frac{1}{x} & 0 & 0 & 0 \\
 0 & 0 & x+1 & 0 & 0 & \frac{1}{y} & 0 & 0 \\
 0 & 0 & 0 & y+1 & 0 & 0 & \frac{1}{x} & 0 \\
 0 & 0 & 0 & x+1 & 0 & 0 & 0 & \frac{1}{y} \\
\end{array}
\right), \quad
\sigma_\text{SFL} = \left(
\begin{array}{cc}
 0 & 1+\frac{1}{x} \\
 0 & 1+\frac{1}{y} \\
 1+\frac{1}{x} & 0 \\
 1+\frac{1}{y} & 0 \\
 0 & y+1 \\
 0 & x+1 \\
 y+1 & y+1 \\
 x+1 & x+1 \\
\end{array}
\right).
\end{align}
It is straightforwardly verified that $\sigma_\text{surf} ^\dagger \lambda_4 \sigma_\text{SFL} = 0$ over $R = \FF_2[x^\pm, y^\pm]$,
which implies that the small fermion loop operators indeed commute with the truncated bulk terms.
We have computed the determinantal ideals:
\begin{align}
I_6(\sigma_\text{surf}^\dagger \lambda_4) &= ( (1+x)^2 , (1+x)(1+y), (1+y)^2 ), \\
I_2(\sigma_\text{SFL}) &= ( (1+x)^2 , (1+x)(1+y), (1+y)^2 ).
\end{align}
They both have depth $2$.
Hence, the sequence
\begin{align}
0 \to 
R^2 \xrightarrow{ \sigma_\text{SFL} } 
R^8 \xrightarrow{ \sigma_\text{surf}^\dagger \lambda_4 }
R^8
\end{align}
satisfies the Buchsbaum-Eisenbud criterion~\cite{BuchsbaumEisenbud1973Exact}, 
and therefore is exact.
In particular, the commutant of the truncated bulk operators
within the Pauli group is generated by the small fermion loop operators.

The last claim is generally implied be the fact that any element $v$ of $\im \sigma_\text{SFL}$
can be written as a linear combination of a unique Gr\"obner basis of $\im \sigma_\text{SFL}$.
Under the total degree order of monomials,
the standard division algorithm gives an algorithm to decompose $v$ in terms of
the small fermion loop operators. See the proof of~\cite[Lem.~3.1]{Haah2013}.
\end{proof}

\begin{lemma}\label{lem:removingRelations2D}
If an algebra $\mathcal A$ 
on an infinite two-dimensional lattice of finitely many qudits of prime dimension $p$
is a commutant of a translation-invariant set of (generalized) Pauli operators,
then there exists a generating set $A$ for $\mathcal A$
consisting of Pauli operators such that $A$ is translation-invariant 
and any nonempty product of elements of $A$ is nonidentity (i.e., locally nonredundant).
\end{lemma}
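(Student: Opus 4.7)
The plan is to recast the problem in the polynomial framework developed in this section, reduce the desired conclusion to the freeness of a specific submodule of the Pauli module, and then invoke the Quillen-Suslin-Swan theorem.

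First, I would replace the given translation-invariant set of Pauli operators by the translation-invariant subgroup $L$ it generates; this does not change the commutant. Since $R = \FF_p[x^\pm, y^\pm]$ is Noetherian and $R^{2q}$ is finitely generated, $L$ corresponds to a finitely generated $R$-submodule of the Pauli module; let its generators be the columns of a matrix $B$ of size $2q \times k$. The Pauli content of $\mathcal A$ is then the $R$-module $K = \ker(B^\dagger \lambda_q : R^{2q} \to R^k)$, because commutation of every translate of $v$ with every translate of $w$ is equivalent to $v^\dagger \lambda_q w = 0$ identically as a polynomial. A translation-invariant, locally nonredundant Pauli generating set for $\mathcal A$ corresponds exactly to a free $R$-basis of $K$: nonredundancy (``no nontrivial $\FF_p$-combination of translates of generators vanishes'') is precisely $R$-linear independence. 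So it suffices to prove that $K$ is a free $R$-module.

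Second, I would show $K$ is projective by a short homological argument, using that $R$ is regular of Krull dimension $2$ and hence of global dimension $2$. From the short exact sequence
\begin{align*}
0 \to \im(B^\dagger \lambda_q) \to R^k \to \coker(B^\dagger \lambda_q) \to 0,
\end{align*}
the cokernel has projective dimension at most $2$, so its first syzygy $\im(B^\dagger \lambda_q)$ has projective dimension at most $1$. Now apply $\mathrm{Ext}^\bullet(-,N)$ to the defining sequence
\begin{align*}
0 \to K \to R^{2q} \xrightarrow{B^\dagger \lambda_q} \im(B^\dagger \lambda_q) \to 0;
\end{align*}
since $R^{2q}$ is free, dimension-shifting yields $\mathrm{Ext}^1(K, N) \cong \mathrm{Ext}^2(\im(B^\dagger \lambda_q), N) = 0$ for every $R$-module $N$. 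Hence $K$ is projective. By the Quillen-Suslin-Swan theorem~\cite{Suslin1977Stability,Swan1978}, every finitely generated projective module over the Laurent polynomial ring $R$ is free. Choose a free $R$-basis $v_1, \ldots, v_m$ of $K$ and let $V_1, \ldots, V_m$ be associated Pauli operators with any chosen phases. Then $A = \{ x^a y^b V_i : a, b \in \ZZ,\ 1 \le i \le m \}$ is translation-invariant, generates $\mathcal A$ as an algebra (since Pauli operators form a $\CC$-basis of the operator algebra), and is locally nonredundant because any $\FF_p$-relation among its elements would pull back to an $R$-linear relation among $v_1, \ldots, v_m$.

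The only genuinely nontrivial step is the homological argument, and it is precisely here that the two-dimensional assumption is used essentially: submodules of free modules over $\FF_p[x_1^\pm, \ldots, x_D^\pm]$ can have projective dimension as large as $D-1$, so the dimension-shift deduction of projectivity of $K$ breaks down when $D \ge 3$. Everything else is a routine transcription through the polynomial framework already developed in this section.
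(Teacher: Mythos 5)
Your argument is correct, and it follows the same overall strategy as the paper: pass to the polynomial framework, identify the commutant with the kernel $K = \ker(B^\dagger\lambda_q) \subseteq R^{2q}$, observe that a translation-invariant locally nonredundant generating set is precisely a free $R$-basis of $K$, prove $K$ is projective, and invoke Quillen--Suslin--Swan. Where you differ is in the homological step establishing projectivity. You argue directly from $\mathrm{gl.dim}\,\FF_p[x^\pm,y^\pm]=2$: since $K$ is a second syzygy of $\coker(B^\dagger\lambda_q)$, two applications of dimension-shifting give $\mathrm{Ext}^1(K,-)=0$. The paper instead takes a finite free resolution $\cdots \to F_3 \xrightarrow{T} F_2 \xrightarrow{C} R^{2q}$ of $K$ and applies the Buchsbaum--Eisenbud exactness criterion, which forces $\mathrm{depth}\,I(T)\ge 3$; over a two-dimensional ring any proper ideal has depth $\le 2$, so $I(T)=R$, whence $T$ splits and $\coker T\cong K$ is free after a Quillen--Suslin--Swan change of basis. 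For a regular ring these two facts (global dimension equals Krull dimension; proper ideals have depth at most the Krull dimension) are closely intertwined, so the content is the same. Your version is cleaner and more conceptual; the paper's version is more constructive (it exhibits the invertible matrices $E,E'$ used to produce the nonredundant generating set from the columns of $C$) and reuses the Buchsbaum--Eisenbud machinery already invoked in \cref{lem:chaincondition} and \cref{lem:separatorexists}.
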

\begin{proof}
Since the translation group ring $R = \FF_p[x^\pm,y^\pm]$ is Noetherian,
the group of Pauli operators whose commutant is $\mathcal A$ is a finitely generated $R$-module.
Hence, we may represent this group as a matrix $G$ whose columns are Laurent polynomial vectors
representing generators of the group.
The commutant as a subgroup of the full group of all Pauli operators
is then the kernel of $G^\dagger \lambda_q$ 
where $q$ is the number of qudits in a translation unit cell.

Now, consider any finite free resolution of $\ker G^\dagger \lambda_q$:
(The rest of the proof is a slight variant of the proof of \cref{lem:separatorexists}.)
\begin{align}
0 \to F_n \to \cdots \to F_3 \xrightarrow{T} F_2 \xrightarrow{C} 
R^{2q} \xrightarrow{G^\dagger \lambda_q} F_0.
\end{align}
By construction $\im C = \ker G^\dagger \lambda_q$,
and as a $\CC$-algebra the Pauli operators of $\im C$ generates $\mathcal A$.
By the Buchsbaum-Eisenbud theorem we must have $\mathrm{depth}~ I(T) \ge 3$,
but the base ring is two-dimensional.
Therefore, $I(T) = R$, and by Quillen-Suslin-Swan theorem~\cite{Suslin1977Stability,Swan1978}
there exist invertible matrices $E$ and $E'$
such that $E T E'$ is diagonal with the identity matrix on the diagonal.
This means that the nonzero columns of $CE$ forms a locally nonredundant generating set for $\im C$.
\end{proof}

\begin{lemma}\label{lem:NoredunNochargeFlippable-qudit}
Suppose with the terms of a translation-invariant commuting Pauli Hamiltonian 
in an infinite $D$-dimensional lattice with $q$ qudits of prime dimension $p$ at each site,
any nonempty product of terms is nonidentity.
If every excitation of finite energy created by a Pauli operator of possibly infinite support,
can in fact created by a Pauli operator of finite support,
then there exists a translation-invariant set of operators of finite support, 
each of which flips exactly one term of the Hamitonian.
\end{lemma}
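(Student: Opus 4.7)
The plan is to translate the statement into the polynomial framework of Section \ref{sec:pauli}. Let $\sigma : R^t \to R^{2q}$ be the stabilizer map of the Hamiltonian, with one generator per translation orbit, over $R = \FF_p[x_1^\pm,\ldots,x_D^\pm]$. The non-redundancy hypothesis is precisely $\ker \sigma = 0$. A translation-invariant family of local flippers is a matrix $T$ over $R$ with $\sigma^\dagger \lambda_q T = I_t$; its columns together with their translates give the flippers. So the goal reduces to proving that $\sigma^\dagger \lambda_q : R^{2q} \to R^t$ is surjective, equivalently $\coker(\sigma^\dagger \lambda_q) = 0$.

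The key step is to show that every single-term excitation $e_\tau \in R^t$ is physical, i.e., realized by some Pauli operator of possibly infinite support. Take an increasing exhaustion of the lattice by nested finite boxes $\Lambda_0 \subset \Lambda_1 \subset \cdots$ with $\tau \in \Lambda_0$, and restrict the Hamiltonian to terms fully supported in each $\Lambda_k$. The restricted Hamiltonian inherits non-redundancy, since any relation among the restricted generators would lift to a relation in the infinite system. A standard stabilizer-code dimension count then shows every syndrome on $\Lambda_k$ is achievable by a finite Pauli on $\Lambda_k$, so one chooses $P_k$ on $\Lambda_k$ realizing the syndrome $e_\tau$ on the restricted generators. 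By a diagonal argument across the exhaustion --- at each qudit the factor of $P_k$ lies in a finite set --- extract a subsequence converging pointwise to a possibly infinite Pauli $P_\infty$. For any term $h_{\tau'}$ of the infinite Hamiltonian, choose $k$ large enough that $h_{\tau'}$ is in the restricted Hamiltonian on $\Lambda_k$ and $P_\infty$ agrees with $P_k$ on the support of $h_{\tau'}$; then $P_\infty$ flips $h_{\tau'}$ iff $\tau' = \tau$. So $P_\infty$ witnesses the physicality of $e_\tau$.

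Combined with the no-nontrivial-charges hypothesis, this gives $e_\tau \in \im(\sigma^\dagger \lambda_q)$ for every $\tau$. Since the $e_\tau$ generate $R^t$ as an $R$-module by translation, $\sigma^\dagger \lambda_q$ is surjective; any right inverse $T$ over $R$, assembled column by column, yields the desired translation-invariant local flippers.

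The main obstacle is the physicality construction --- specifically, justifying the dimension count on each finite $\Lambda_k$ and verifying that the extracted subsequence genuinely witnesses physicality of the prescribed syndrome on the infinite Hamiltonian. The dimension count itself is standard for a non-redundant commuting Pauli stabilizer code on $\Lambda_k$: the Hilbert space decomposes into $p^{t'_k}$ joint eigenspaces of the $t'_k$ independent generators, each of equal nonzero dimension, so every syndrome is realized. The matching of syndromes between restricted and infinite Hamiltonians follows from locality: once $\Lambda_k$ contains an interaction-range neighborhood of $h_{\tau'}$, the action of a Pauli supported in $\Lambda_k$ on $h_{\tau'}$ is the same whether one views $h_{\tau'}$ as a term of the restricted or of the infinite Hamiltonian. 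This matching is what converts the finite-system argument into the statement that $e_\tau$ is physical for the original infinite Hamiltonian, at which point the no-nontrivial-charges hypothesis can be invoked.
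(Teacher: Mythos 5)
Your proof is correct, but it takes a genuinely different route from the paper's, even though both ultimately reduce the claim to $\coker(\sigma^\dagger \lambda_q) = 0$. The paper's argument is purely algebraic: it cites Theorem~1 of \cite{Haah2013} to identify the group of topological charges with the torsion submodule of $\coker\sigma^\dagger$, so the no-charge hypothesis becomes the assertion that $\coker\sigma^\dagger$ is torsion-free; nonredundancy makes $\sigma$ injective of full rank $t$, and the ideal inequality $(\ann\coker\sigma^\dagger)^t \subseteq I_t(\sigma) \subseteq \ann\coker\sigma^\dagger$ from \cite{Lang} then forces $\coker\sigma^\dagger$ to be pure torsion; torsion-free plus pure torsion gives zero. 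Your argument bypasses both the charge/torsion identification and the ideal inequality: nonredundancy descends to each finite box, the nondegeneracy of the symplectic form on a finite system shows every syndrome on that box is realized, and a diagonal compactness argument over an exhaustion produces a possibly-infinite Pauli realizing $e_\tau$ in the infinite system; the no-charge hypothesis, applied verbatim as stated in the lemma, then upgrades this to a finite Pauli. Your approach is more elementary and self-contained (it needs no torsion machinery and no citation from commutative algebra) at the cost of more bookkeeping in the subsequence extraction and in verifying that the limit genuinely witnesses the intended syndrome; it also surfaces an intermediate fact that the paper's algebraic argument leaves implicit, namely that nonredundancy alone already forces every single-term syndrome to be physical.
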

Using \cref{def:nonredundant,def:charge},
the lemma can be phrased as 
\emph{
the terms of 
any locally nonredundant translation-invariant commuting Pauli Hamiltonian without nontrivial charges,
can be flipped by local operators.
}
\begin{proof}
The first supposition in the polynomial framework is to say that
the terms define an injection $\sigma: R^t \to R^{2q}$ whose image is the stabilizer module.
In~\cite[Thm.~1]{Haah2013} the set of all equivalence classes of topological charges
is identified with the torsion submodule of $\mathrm{coker}~\sigma^\dagger$.
So, the no charge assumption is transcribed as $\coker~\sigma^\dagger$ is torsion free.
But, since the rank of the injective map $\sigma$ must be equal to $t$,
the ideal inequality~\cite[XIX.2.5]{Lang}
\begin{align}
(\ann~\coker~\sigma^\dagger)^t \subseteq I_t(\sigma) \subseteq \ann~\coker~\sigma^\dagger
\end{align}
shows that $\coker~\sigma^\dagger$ is pure torsion.
Therefore, $\coker \sigma^\dagger = 0$.
This means that there exists a vector $v_a$ such that $\sigma^\dagger \lambda_q v_a = e_a$ where
$e_a$ is the unit column vector with the sole 1 at $a$-th component.
Taking translates of the Pauli operator corresponding to $v_a$,
we see that any single stabilizer generator (a term of the Hamiltonian) can be flipped alone.
The collection of all $v_a$ and their translates form a local flipper,
and it is translation invariant.
\end{proof}

\begin{remark}\label{rem:includingFermions}
The statements of \cref{lem:removingRelations2D,lem:NoredunNochargeFlippable-qudit}
can be modified to handle translation-invariant systems with qubits and fermionic modes.
The analogue of the Pauli operator is a finite product of Majorana operators and qubit Pauli matrices.
The prime $p$ that appears in both lemmas, has to be fixed as $p=2$.
The terms of the Hamiltonian is always assumed to be fermion parity even,
and the $\lambda_q$ has to be replaced by
\begin{align}
\lambda_q \oplus I_m
\end{align}
to capture the commutation relations,
where $q$ is the number of qubits per site, and $m$ is the number of Majorana modes per site.
No other change is necessary. 
With this understanding incorporated into the terminology ``commuting Pauli Hamiltonians,''
the conclusions of the both lemmas hold.
After all, we have not used anything special about $\lambda_q$ in these lemmas.
\end{remark}

\section{Discussion}
\label{discussionsection}
The method that we have used to constructing the QCA $\alpha_{WW}$ may be more general.  The Walker-Wang model that we considered was just a particular example of a model without bulk topological order but with surface topological order.  Other such models may perhaps give rise to other interesting QCA.

Indeed, this approach suggests that the notion of a ``trivial state" should perhaps be re-considered.  
Commonly a quantum state is called ``trivial''
if it can be mapped to a product state using a quantum circuit (perhaps with some weakening of strict locality).
However, here we see that the Walker-Wang model ground state 
can be mapped to a product state using a QCA $\alpha_{WW}$ that we believe to be nontrivial.
Of course, it is possible that given some separator (such as that obtained from the Walker-Wang model), 
there is a circuit that maps a particular eigenstate of all the $\opZ_a$ (such as the Walker-Wang ground state) 
to a product state without mapping the separator to the trivial separator.  
So, it is unclear whether or not there exists a circuit to map the Walker-Wang ground state to a product state.  

On the other hand, we note that the $3$-fermion Walker-Wang model certainly 
defines an invertible phase of matter, 
in the sense that there exists another model that can be stacked on top of it (its `inverse') 
such that the pair can be disentangled by a circuit.
Indeed, the inverse can be taken as just another $3$-fermion Walker-Wang model. 
This follows from the fact that $\alpha_{WW} \otimes\alpha_{WW}$ is a circuit up to translations, 
as shown in \cref{thm:CQCAsqIsCircuit}.
Alternatively, one can note that the product of two $3$-fermion models is equivalent, 
as a unitary modular tensor category, to the product of two toric codes.  
Hence a stack of two $3$-fermion Walker-Wang models 
is equivalent to a stack of two Walker-Wang models based on the toric code modular tensor category; 
the latter can be disentangled with a circuit, as shown in \cite{RBH}.%
\footnote{
The authors of~\cite{RBH} start with a model with $7$-body interaction terms on a simple cubic lattice of qubits,
but in the course of discussion they consider a slightly different Hamiltonian of $5$-body interaction.
The latter can be regarded as being put on a lattice of interpenetrating cubic lattice with qubits on edges,
and has term $X$ on an edge multiplied by a plaquette $Z$-operator around the edge,
which is a Walker-Wang model for the toric code.
Ref.~\cite{RBH} also considers an encoding scheme at the boundary,
which is the surface topological order. See \cite{Roberts2016}.
}
Thus, if the $3$-fermion Walker-Wang model could not be disentangled with a circuit (possibly with tails),
it would define a new invertible phase of matter in $3$ spatial dimensions. 
Such a phase, which does not require any symmetries to protect it, 
does not appear in any currently proposed classifications of phases of matter~\cite{Kapustin, Freed}.
The question of how trivial the $3$-fermion Walker-Wang model ground state is
thus remains an interesting open problem.

Another interesting question is whether $\alpha_{WW}$ is a finite depth circuit when stabilized by local fermionic degrees of freedom.
Although we do not know if this is true, let us make some observations.
First, consider the Walker-Wang model based on the braided fusion category $\{1,f\}$. 
This is a bosonic model whose only non-trivial pointlike excitation is a fermion $f$,
and whose only other non-trivial excitation is a loop-like `vison' that has full braiding phase $-1$ with $f$.
It is thus very similar to the $3$d toric code, 
except that the pointlike excitation is a fermion not a boson.  
It can be thought of as a $\ZZ_2$ gauge field coupled to fermionic matter, 
i.e., it is what one obtains after gauging the $\Z_2$ fermion parity symmetry in a trivial fermionic insulator.

Now consider stacking our $3$-fermion Walker-Wang model with the $\{1,f\}$ Walker-Wang model.
The stacked model can be thought of as the Walker-Wang model 
built on the premodular category $\{1,\fone,\ftwo,\fthree \} \times \{ 1,f \}$.
However, after a relabeling this is the same category as $\{1,e,m,\varepsilon \} \times \{1,f\}$, 
where $\{1,e,m,\varepsilon\}$ is the toric code (just set $e=\fone f, m=\ftwo f$).
This shows that after stacking with the $\{1,f\}$ Walker-Wang model,
the $3$-fermion and toric code Walker-Wang models become equivalent.
However, as we already noted, the toric code Walker-Wang model is known to be finite circuit disentanglable~\cite{RBH}.
Thus, there exists a circuit which transforms the $3$-fermion Walker-Wang ground state 
tensored with the ground state of the $\{1,f\}$ model 
into a trivial product state tensored with the ground state of the $\{1,f\}$ model. 
It is not clear if the same is true when the stabilization is by a model with local fermionic degrees of freedom,
though perhaps this can be addressed by using higher dimensional versions of the Jordan-Wigner bosonization 
duality~\cite{Kapustin2d, Ellison, Kapustin3d}. 
Also, even if a disentangling circuit exists, it is not clear whether $\alpha_{WW}$ is such a circuit.

We note that with local fermionic degrees of freedom,
there are other candidates for non-trivial QCA, even if $\alpha_{WW}$ ends up being trivial. 
Indeed, there exists a 3d fermionic commuting projector model 
which is conjectured to be in the same phase 
as the root phase of the $\Z_{16}$ classification of topological superconductors in class DIII~\cite{TopoSC}.
This model realizes a surface topological order~$SO(3)_3$, equal to the integral spin sub-theory of~$SU(2)_6$
--- which, when realized purely in $2d$, has chiral central charge~$\frac{9}{4}$, 
which, modulo~$\frac{1}{2}$, is half of the minimal chiral central charge~$\frac{1}{2}$
that can be realized with a 2d short range entangled phase (namely a $p+ip$ superconductor).
This is analogous to the $3$-fermion topological order in the bosonic setting,
which has a chiral central charge of $4$, 
one half of the minimal allowed value of $8$ realized in the $E_8$ bosonic phase.
If the~$SO(3)_3$ Walker-Wang model can be fermionized into a model built on fermionic degrees of freedom, 
a QCA which disentangles it would be a natural candidate for a non-trivial fermionic QCA.

\appendix

\section{Structure of a ground state of commuting 2-local Hamiltonians with fermions} \label{sec:fBV}

In this Appendix, we prove a fermionic generalization of 
a result of Bravyi and Vyalyi~\cite{bravyi2005commutative}
regarding $2$-local Hamiltonians which are sums of commuting projectors.
This generalization is not related to the rest of the paper, 
other than that elsewhere we do spend some effort analyzing two-dimensional (rather than $2$-local) commuting projector Hamiltonians 
with fermionic degrees of freedom.
We discovered the generalization in this appendix 
while trying to better understand those two-dimensional fermionic Hamiltonians, 
and, though it did not help us there, we felt that this generalization might be of independent interest.

A $2$-local Hamiltonian is a Hamiltonian in which each term acts on at most $2$ sites.
Bravyi and Vyalyi considered the case that
each site was a qudit and the Hamiltonian was a sum of commuting terms, showing that the ground state was a product state up to conjugation by a quantum circuit.  The quantum circuit is composed of unitary gates, each of which acts on a pair of sites $i,j$ which are neighbors on the interaction graph (i.e., this is a graph $G$ with a vertex for every site and an edge for every pair of sites for which some term in the Hamiltonian acts on those two sites); these unitaries all commute with each other so if the interaction graph has bounded degree then this is a circuit of bounded depth.
We generalize this to the case that for each site there is some qudit as well as some Majorana degrees of freedom.

First, we emphasize that it is too much to expect in the fermionic case that the state can be disentangled by a quantum circuit.  Consider the ``Majorana chain.''
This is a one-dimensional system.  
On each site $j$, we have Majorana modes $\gamma_j,\gamma'_j$, where now we take $i$ to be an integer.
We have the Hamiltonian
\begin{align}
H=i \sum_j \gamma'_j \gamma_{j+1}.
\end{align}
The operator $i\gamma'_j \gamma_{j+1}$ is not a projector, but its square is equal to the identity, 
so up to a linear rescaling these terms are projectors.
Famously this chain cannot be disentangled by a quantum circuit~\cite{kitaev2001unpaired}.

Thus, we will show a weaker result.  
In a sense, we will show that the ground state is, up to conjugation by a quantum circuit, 
a ``generalized Majorana chain.''
For any pair of sites $i,j$ which are neighbors on the interaction graph,
let $P_{i,j}$ denote the projector acting on those sites so that $H=\sum_{(i,j)} P_{i,j}$, where the sum is over neighboring $i,j$ and $(i,j)$ denotes an unordered pair.
Our result is:
\begin{theorem} \label{fermionicbv}
There exist unitaries $U_{i,j}$ supported on each edge $(i,j)$ 
in the interaction graph and projectors $\Pi_i$ supported on each site $i$
with the following properties.
\begin{enumerate}
\item The unitaries $U_{i,j}$ are mutually commuting,
 and each unitary $U_{i,j}$ commutes with $P_{k,l}$ if $(i,j) \neq (k,l)$.
\item The conjugated Hamiltonian $\tilde H = \sum_{(i,j)} U_{i,j}^\dagger P_{i,j} U_{i,j}$ 
commutes with $\Pi_i$ for all $i$
\item Some ground state of $\tilde H$ is in the $+1$ eigenspace of all $\Pi_i$, 
and, restricted to this eigenspace, each $U^\dagger_{i,j} P_{i,j} U_{i,j}$ is either $0$, $1$, or
$\frac 1 2 (1 + i \tau_{j \to i} \tau_{i \to j})$
where $\tau_{j \to i }, \tau_{i \to j}$ are fermion parity odd hermitian operators that square to $1$.
Here, for any $i \neq j \neq k$
\[
\{ \tau_{i \to j},\tau_{j \to k} \}=0
\]
whenever they exist.
\end{enumerate}
\end{theorem}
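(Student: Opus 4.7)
The plan is to mimic the Bravyi--Vyalyi strategy, replacing their bosonic structure theorem for commuting $*$-subalgebras of a matrix algebra with a graded ($\ZZ_2$-super) analog. At each vertex $j$ we would take a Schmidt-style decomposition of every edge term, $P_{j,k}=\sum_\alpha a^{(j)}_{k,\alpha}\grotimes b^{(k)}_\alpha$, where the $a^{(j)}_{k,\alpha}$ and $b^{(k)}_\alpha$ have definite fermion parity, and let $A_{j,k}$ be the graded $*$-subalgebra of the local operator algebra $M_j$ generated by the $a^{(j)}_{k,\alpha}$. Because $[P_{j,k},P_{j,k'}]=0$ while $P_{j,k}$ and $P_{j,k'}$ have disjoint supports outside of $j$, one gets that $A_{j,k}$ and $A_{j,k'}$ supercommute inside $M_j$ for $k\ne k'$; the even-even part is literal commutation, and odd-odd factors on $k$ and $k'$ anticommute for free because they sit in disjoint sites.

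The heart of the argument would then be a graded structure theorem for a finite collection of mutually supercommuting graded $*$-subalgebras $A_{j,k_1},\ldots,A_{j,k_m}\subseteq M_j$. I would prove (or cite) that there exist central projections $\{\pi^{(j)}_s\}$ in the graded center of the graded algebra $\bigvee_l A_{j,k_l}$ such that in each sector one has a graded tensor-product factorization $M_j\pi_s^{(j)}=B_{j,k_1}^{(s)}\grotimes\cdots\grotimes B_{j,k_m}^{(s)}\grotimes C_j^{(s)}$, with each $B_{j,k_l}^{(s)}$ a graded simple $*$-algebra containing $A_{j,k_l}\pi_s^{(j)}$. The $\Pi_j$ of the theorem will be a choice of one such $\pi^{(j)}_{s_j}$. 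In the purely bosonic (even) case this is the standard fact used by Bravyi--Vyalyi, so the content is the fermionic version: each $B_{j,k_l}^{(s)}$ is a graded simple $*$-algebra whose graded center is either $\CC$ (a matrix algebra $M_n\otimes M_m$) or a Clifford algebra $Cl_1$ on a single Majorana generator. When the center of $B_{j,i}^{(s_j)}$ is $Cl_1$, the corresponding Majorana generator in $M_j$ is what will play the role of $\tau_{j\to i}$.

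Given the decomposition, I would define $U_{i,j}$ as a graded unitary supported on sites $i$ and $j$ that brings $P_{i,j}$ to a normal form inside $B_{i,j}^{(s_i)}\grotimes B_{j,i}^{(s_j)}$. Because $U_{i,j}$ is built only from these specific graded tensor factors at the two endpoints, it automatically commutes with every $A_{k,l}$ at other edges and with every $U_{k,l}$ on disjoint edges; for edges sharing a vertex $j$, the factors $B_{j,i}^{(s_j)}$ and $B_{j,k}^{(s_j)}$ are distinct supercommuting summands of the graded tensor product at site $j$, so $U_{i,j}$ and $U_{j,k}$ commute as well. Restricting to the joint $+1$ eigenspace of all $\Pi_i$ (which is invariant under $\tilde H$ because each $\Pi_i$ commutes with $\tilde H$ by construction, and nonzero for a suitable choice of sectors because some ground state projects nontrivially), one analyzes the possible commuting projectors on $B_{i,j}^{(s_i)}\grotimes B_{j,i}^{(s_j)}$: if both graded centers are $\CC$, the only projectors that survive the constraint of commuting with the rest of $M_i$ and $M_j$ are $0$ and $1$; if both are $Cl_1$, the generic even commuting projector on $Cl_1\grotimes Cl_1$ is exactly $\tfrac12(1+i\tau_{j\to i}\tau_{i\to j})$ for the two generating Majoranas. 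The anticommutation $\{\tau_{i\to j},\tau_{j\to k}\}=0$ is then built in, since $\tau_{j\to i}$ and $\tau_{j\to k}$ are odd elements of distinct graded tensor factors $B_{j,i}^{(s_j)}$ and $B_{j,k}^{(s_j)}$ inside $M_j$, and odd elements of different graded tensor factors always anticommute.

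The hard part will be the graded structure theorem and, in particular, controlling the "shared Majorana" phenomenon. In the bosonic setting, two commuting $*$-subalgebras always jointly generate an ordinary tensor product and no modes are shared. In the graded setting, two supercommuting graded $*$-subalgebras can jointly generate a graded algebra whose graded center is $Cl_1$, reflecting a single Majorana that lies in neither algebra separately but is present as a central generator in their graded product. Matching such modes consistently between $B_{i,j}^{(s_i)}$ at site $i$ and $B_{j,i}^{(s_j)}$ at site $j$, so that the odd generator of the center of $B_{i,j}^{(s_i)}$ pairs with that of $B_{j,i}^{(s_j)}$ to give the fermion bilinear $i\tau_{j\to i}\tau_{i\to j}$, is the technical core, and getting the normal form of an even projector on $Cl_1\grotimes Cl_1$ to come out as $\tfrac12(1+i\tau_{j\to i}\tau_{i\to j})$ is what produces the Majorana-chain obstruction that prevents the usual Bravyi--Vyalyi disentangling circuit from existing in the fermionic case.
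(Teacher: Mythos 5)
Your outline follows essentially the same algebraic route as the second of the two proofs the paper gives for this theorem (the paper also records a separate flow-equation proof, which you do not use but do not need). You correctly identify the load-bearing step: a graded analogue of the Bravyi--Vyalyi supercommutant factorization, which the paper proves as \cref{lem:supercommutant} by importing Wall's classification of $\ZZ_2$-graded central simple algebras. You leave that lemma as "to be proved or cited," which is exactly the part that carries the mathematical weight; the rest of your plan (restrict to central sectors via $\Pi_j$, factor $M_j$ as a graded tensor product over incident edges, reduce to the two-site problem, normalize the projector inside that single graded tensor factor) matches the paper's algebraic proof essentially step for step.

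One piece of your intuition is off, though it does not derail the argument. The "shared Majorana" picture you describe --- two supercommuting graded $*$-subalgebras of $M_j$ jointly generating an algebra whose graded center is $Cl_1$, with the central odd generator lying in neither subalgebra separately --- does not actually occur. Over $\CC$, $Cl_1 \grotimes Cl_1 \cong Cl_2 \cong M_2(\CC)$ has \emph{trivial} graded center (Wall type $(+)$), and more generally a graded product of two central simple graded algebras has $Cl_1$ as graded center only when exactly one factor is of type $(-)$, in which case the odd central generator lives inside that one factor. The mechanism that actually produces the Majorana bilinear is more mundane: after passing to the central projections and choosing minimal even projectors at both ends of an edge, the restricted interaction algebra at each end is either $\CC$ or $\CC \oplus \CC\tau$, and an even projector inside $\mathcal{A}'_L \grotimes \mathcal{A}'_R$ lies in the span of $1$ and $\tau_L \tau_R$, so by idempotency it can only be $0$, $1$, or $\tfrac 12(1 \pm i\tau_L\tau_R)$. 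No cross-edge matching beyond what the structure theorem already delivers is needed. Also note a notational slip near the end: in the paper's convention $\tau_{i\to j}$ lives on site $j$ and $\tau_{j\to k}$ on site $k$, so the stated anticommutation $\{\tau_{i\to j},\tau_{j\to k}\}=0$ holds simply because they are odd operators supported on disjoint sites; your justification "distinct graded tensor factors inside $M_j$" is correct for the separate (also true) relation $\{\tau_{j\to i},\tau_{k\to i}\}=0$ (both living on site $i$, in different summands $B_{i,j}$ and $B_{i,k}$).
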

\noindent
We refer to the ground state of $\tilde H$ as a generalized Majorana chain: 
the Majorana chain~\cite{kitaev2001unpaired} can be written in this way for example.

Note that the operators $\tau_{j\rightarrow i}$ represent the algebra of Majorana operators, 
i.e., an algebra of operators $\gamma_a$ obeying the relations $\{\gamma_a,\gamma_b\}=\delta_{a,b}$. 
However, there are various possible ways to represent this algebra.  
As an example, given a qubit with Pauli operators $X,Y,Z$ and a single Majorana mode $\gamma$, 
then the operators $\gamma X, \gamma Y, \gamma Z$ give such a representation of the algebra of 
three Majorana operators $\gamma_1,\gamma_2,\gamma_2$.

If the degree of the graph $G$ is bounded, then conjugation by the unitaries $U_{i,j}$ can be implemented by a quantum circuit of bounded depth.

We give two related but slightly different approaches to show this theorem, 
one an analytic approach and one an algebraic approach.  Each has its advantages.
Both approaches make use of the idea of interaction algebras~\cite{knill2000theory}.
Let ${\cal A}_{j\rightarrow i}$ denote the interaction algebra of $P_{i,j}$ on site $i$.
This is the minimal algebra such that $P_{i,j}$ can be written 
as a sum of products of some term supported on $j$ times some term in ${\cal A}_{j\rightarrow i}$.

This algebra ${\cal A}_{j \rightarrow i}$ is a $\ZZ_2$ graded algebra. 
A term with odd grading is a sum of terms with an odd product of Majoranas 
and similarly for a term with even grading.  
Let ${\cal A}^{even}_{j \rightarrow i}$ denote the even subalgebra of ${\cal A}_{j \rightarrow i}$.
For any pair $j\neq k$ which are both neighbors of $i$, 
the algebras ${\cal A}_{j \rightarrow i}$ and ${\cal A}_{k \rightarrow i}$ {\it supercommute}.
That is, given any two terms, one from each of these algebras, 
if one term is even graded then they commute, 
and if both terms are odd graded then they anti-commute.

\subsection{Analytic approach}
In this subsection, we will use the solution of a differential equation to find a unitary that brings the Hamiltonian to a simpler form.
Consider ${\cal A}^{even}_{j \rightarrow i}$.  By standard results, this algebra is isomorphic to a direct sum of matrix algebras.  For each such matrix algebra, the identity operator in that matrix algebra (and the zero operator in all other matrix algebras) corresponds to some central element of ${\cal A}^{even}_{j \rightarrow i}$.
If there is only one matrix algebra in the direct sum, then ${\cal A}^{even}_{j \rightarrow i}$ is simple.
Now choose a $V \in {\cal A}^{even}_{j \rightarrow i}$ as follows: choose it to correspond to a direct sum of {\it diagonal} matrices in these matrix algebras such that all diagonal entries are distinct.  That is,
it corresponds to some $\oplus_a V_a$ where each $V_a$ is diagonal and where every diagonal entry of every $V_a$ is distinct from every
diagonal entry of that $V_a$ and every diagonal entry of every other $V_b$.

Now, define a flow equation:
\begin{align}
\partial_t H(t)=[H(t),[H(t),V]],
\end{align}
where $t\geq 0$ is some real  parameter, with boundary conditions at $t=0$ that $H(t)=H$.
This flow equation can be written as
\begin{align}
\partial_t H(t)=[H(t),\eta(t)],
\end{align}
where $\eta(t)=[H(t),V]$.
By the commutation properties of the Hamiltonian, we have
$H(t)=\sum_{(k,l)} P_{k,l}(t)$,
where for $(k,l) \neq (i,j)$ we have $P_{k,l}(t)=P_{k,l}$
and
\begin{align}
\partial_t P_{i,j}(t)=[P_{i,j}(t),[P_{i,j}(t),V]],
\end{align}
i.e., we can reduce the flow equation to a flow for $P_{i,j}(t)$ while all other $P_{k,l}(t)$ are independent of~$t$.

This flow equation for $P_{i,j}(t)$ is a gradient flow for $\Tr((P_{i,j}(t)-V)^2)$ and has a 
limit~\cite{bloch1992completely,brockett1991dynamical} 
as $t\rightarrow \infty$ which commutes with $V$.%
\footnote{
For our purpose, it suffices to note that the flow preserves the norm, 
but does not increase the potential $\Tr((P_{i,j}(t)-V)^2) \ge 0$,
and hence the $t$-derivative of the potential converges to zero at any accumulation point of the flow,
which implies that at any accumulation point $R_{i,j}$ we have $[R_{i,j},V] = 0$.
}
Call this limit $R_{i,j}$ so that $[R_{i,j},V]=0$.
This flow is generated by an infinitesimal anti-Hermitian $\eta$ acting on site $i,j$,
and so integrating from $t=0$ to $t=+\infty$ there is some unitary supported on sites $i,j$,
which conjugates $P_{i,j}$ to $P_{i,j}(+\infty)$, leaving all other $P_{k,l}$ unchanged.
This preserves all the commutation properties of the Hamiltonian.

Note also that now all terms in the Hamiltonian commute with $V$ 
so that some ground state of the Hamiltonian may be chosen to be an eigenvector of $V$.
So, we restrict to the eigenspace of $V$ with that eigenvalue.  Having made this restriction,
the even subalgebra of the interaction algebra of $R_{i,j}$ on $i$ is now just the algebra of complex scalars.
This means that the interaction algebra of $R_{i,j}$ on $i$ 
is generated by some fermion parity odd operator whose square is a scalar.

We can repeat this simplification by considering the interaction algebra of $R_{i,j}$ on $j$.  Let
$\tilde P_{i,j}$ be the limit of that flow equation.  Integrating this flow means that $\tilde P_{i,j}$ is related to $R_{i,j}$ by some unitary.  Again, the even subalgebra of the interaction algebra of $\tilde P_{i,j}$ on $j$ 
is just the algebra of complex scalars,
so that the interaction algebra of $\tilde P_{i,j}$ on $j$ is generated by a fermion parity odd operator 
whose square is a scalar.
Hence, either $\tilde P_{i,j}$ is a scalar, or $\tilde P_{i,j}$ is a linear combination of a scalar
and $\tau_{j \rightarrow i} \tau_{i \rightarrow j}$ 
where $\tau_{j \rightarrow i}$ and $\tau_{i \rightarrow j}$ 
are both fermion odd operators whose square is a scalar and which are supported $i$ and $j$, respectively.
Hence, since $\tilde P_{i,j}$ is a projector, either $\tilde P_{i,j}=0$ or $\tilde P_{i,j}=1$ 
or $\tilde P_{i,j}= \frac 1 2 (1 + i\tau_{j \rightarrow i} \tau_{i \rightarrow j})$;
if $\tilde P_{i,j}= \frac 1 2 (1 - i\tau_{j \rightarrow i} \tau_{i \rightarrow j})$ 
then we could redefine one of $\tau$ with a minus sign.

Note that composing these two unitaries (the unitary in the first flow equation and that used in the second flow equation) gives some unitary that we write $U_{i,j}$.

We then apply this simplification to all other pairs of sites $(k,l)$.
Thus we arrive at \cref{fermionicbv}.  
In the theorem, $\Pi_i$ are projectors on each site 
that enforce the restrictions to the eigenspaces of all the different $V$ used in the construction.

\subsection{Algebraic approach}

\begin{definition}[\cite{Wall1964GradedBrauer}]
An algebra is {\bf central} over a field $\FF$ if its center is $\FF$.
An algebra is {\bf simple} if it does not have any proper (nonzero and nonunit) two-sided ideal.
A {\bf ($\ZZ_2$-)graded} algebra is an algebra $\mathcal{A} = \mathcal{A}_0 \oplus \mathcal{A}_1$ (even and odd parts)
where $\mathcal{A}_j + \mathcal{A}_j = \mathcal{A}_j$ 
but $\mathcal{A}_i \mathcal{A}_j \subseteq \mathcal{A}_{i+j \mod 2}$.
A graded algebra is {\bf simple} if it has no graded proper ideal.
A graded algebra is {\bf central} over $\FF$ if the even part of the center is $\FF$.
A {\bf graded tensor product}, which we denote by $\grotimes$, of two graded algebras
is an ordinary tensor product as a vector space over $\FF$ with the obvious $\ZZ_2$ grading,
but we modify the multiplication by {\bf supercommutativity}:
\begin{align}
(a_i \grotimes b_j )(a'_{i'} \grotimes b'_{j'}) := (-1)^{i' j} a_i a'_{i'} \grotimes b_j b'_{j'}
\end{align}
where $i,j,i',j' = 0,1$ denote the grading.
\end{definition}
The two notions for an algebra to be central and to be simple are independent:
The algebra of complex numbers is simple but not central over the field of real numbers;
the algebra of upper triangular matrices is central over any field, but not simple.
The center $Z(\mathcal{A})$ of a graded algebra $\mathcal{A}$ is graded naturally.
A central graded algebra may have a nonzero odd part in the center.
The graded product of central simple graded algebras is central simple~\cite[Thm.~2]{Wall1964GradedBrauer}.

A physically important example is the complex Clifford algebra generated 
by anti-commuting Majorana operators $\gamma_1,\ldots, \gamma_n$.
This algebra can be regarded as the graded tensor product of 
$n$ two-dimensional central simple graded algebras $\CC+ \gamma_j \CC$.
Note that there exists the Schmidt decomposition 
for any complex bipartite operator,
as the Schmidt decomposition only cares about vector space structure.
We will think of the grading given by the fermion parity as in the algebra of Majorana operators,
but the definition of graded algebras covers systems with both fermions and qudits present.

\begin{lemma}\label{lem:daggercentral-to-simple}
Let $\FF = \RR$ or $\FF = \CC$. Suppose $\mathcal{A}$ is a central $\ZZ_2$-graded algebra 
over $\FF$ acting on an inner product space that is closed under adjoint $\dagger$.
Then, $\mathcal{A}$ is simple as a graded algebra.
\end{lemma}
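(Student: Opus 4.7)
My plan is to show by contradiction that $\mathcal{A}$ has no proper nonzero graded two-sided ideal. Suppose $\mathcal{J} \subsetneq \mathcal{A}$ is such an ideal. I would first upgrade $\mathcal{J}$ to one that is $\dagger$-closed, then show that any $\dagger$-closed graded two-sided ideal of $\mathcal{A}$ is cut out by an even central projection, and finally use the centrality hypothesis to force that projection to be $0$ or $1$.

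For the first step, observe that $\mathcal{J}^\dagger = \{x^\dagger : x \in \mathcal{J}\}$ is again a graded two-sided ideal: grading is by fermion parity, which is preserved under $\dagger$, and for $a,b \in \mathcal{A}$, $x \in \mathcal{J}$ we have $a x^\dagger b = (b^\dagger x a^\dagger)^\dagger \in \mathcal{J}^\dagger$, using that $\mathcal{A}$ is closed under $\dagger$. Therefore $\mathcal{J} \cap \mathcal{J}^\dagger$ is a $\dagger$-closed graded two-sided ideal contained in $\mathcal{J}$. It is nonzero because, for any nonzero homogeneous $a \in \mathcal{J}$, the element $a a^\dagger$ lies in $\mathcal{J}$ (right ideal, $a^\dagger \in \mathcal{A}$), equals its own adjoint, and is nonzero since $\langle v, a a^\dagger v\rangle = \|a^\dagger v\|^2 > 0$ for some vector $v$ in the inner product space on which $\mathcal{A}$ acts. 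So I may replace $\mathcal{J}$ by $\mathcal{J} \cap \mathcal{J}^\dagger$ and assume $\mathcal{J} = \mathcal{J}^\dagger$.

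For the second step I would invoke the structure of finite-dimensional $\dagger$-closed algebras on an inner product space, which is the implicit setting of the paper: such an algebra is semisimple (decomposing as a direct sum of matrix algebras by Wedderburn), and its $\dagger$-closed two-sided ideals are in bijection with subsets of the simple summands, each such ideal being of the form $e \mathcal{A} = \mathcal{A} e$ for a central projection $e \in Z(\mathcal{A})$. Because $\mathcal{J}$ is graded and $1 \in \mathcal{A}_0$, the projection $e$ lies in $\mathcal{J} \cap \mathcal{A}_0$ and is therefore even. For an even element, the supercommutator with any homogeneous element coincides with the ordinary commutator, so $Z(\mathcal{A}) \cap \mathcal{A}_0$ is precisely the even part of the graded center of $\mathcal{A}$, which by hypothesis equals $\FF$. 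Hence $e$ is a scalar projection, i.e., $0$ or $1$; since $\mathcal{J}$ is neither $0$ nor $\mathcal{A}$, this is a contradiction.

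The main piece of machinery I import is the fact that a $\dagger$-closed two-sided ideal in a finite-dimensional $\dagger$-algebra on an inner product space is generated by a central projection; this is a standard consequence of the $C^*$-structure (Wedderburn plus polar decomposition). The only nontrivial point beyond that is the bookkeeping with the grading: extracting the $\dagger$-closed subideal in the first step, and verifying that the central projection produced by Wedderburn lies in the even part of $Z(\mathcal{A})$, so that it is controlled by the centrality hypothesis on the graded center rather than by any assumption on odd central elements.
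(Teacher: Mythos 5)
Your overall strategy is valid and in fact somewhat cleaner than the paper's, but the crux of the argument is stated too tersely to constitute a proof. You assert that ``because $\mathcal{J}$ is graded and $1 \in \mathcal{A}_0$, the projection $e$ lies in $\mathcal{J} \cap \mathcal{A}_0$.'' As written, this is a non sequitur: $e$ lies in $\mathcal{J}$ and hence decomposes as $e = e_0 + e_1$ with $e_0 \in \mathcal{J}_0$, $e_1 \in \mathcal{J}_1$, but nothing said so far forces $e_1 = 0$. Indeed this is exactly the nontrivial content of the lemma, since when the odd part of the center is nonzero, $\mathcal{A}$ does have proper ungraded ideals $e_\pm\mathcal{A}$ with $e_\pm = (1\pm\gamma)/2$ a central projection that is \emph{not} even. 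The fix is short but must be made explicit: since $\mathcal{A}$ is semisimple, $\mathcal{J}$ has a complementary two-sided ideal $\mathcal{K}$, and $\mathcal{K}$ is itself graded (it is the two-sided annihilator of the graded ideal $\mathcal{J}$). Writing $1 = e + f$ with $f$ the unit of $\mathcal{K}$ and taking the odd part of this identity gives $e_1 = -f_1 \in \mathcal{J}_1 \cap \mathcal{K}_1 = 0$. (Equivalently: the parity automorphism $\pi(x_0 + x_1) = x_0 - x_1$ fixes the graded ideal $\mathcal{J}$, so $\pi(e)$ is again the unit of $\mathcal{J}$, whence $\pi(e) = e$ by uniqueness of the unit.) Once that is in place, your conclusion that $e \in Z(\mathcal{A})\cap\mathcal{A}_0 = \FF$ forces $e \in \{0,1\}$ is correct, and the argument goes through.

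As for the comparison with the paper's proof: the paper splits into cases according to whether the odd part $Z_1$ of the ungraded center vanishes, treats $Z_1 = 0$ via ungraded simplicity, and in the case $Z_1 \ne 0$ constructs the element $\gamma$ with $\gamma^\dagger=\gamma$, $\gamma^2 = 1$ explicitly, decomposes $\mathcal{A}$ into the two simple ungraded summands $(1\pm\gamma)\mathcal{A}/2$, and checks directly that neither summand is a graded subspace. Your proof instead works uniformly through a single structural fact about semisimple $\dagger$-algebras (every two-sided direct summand is cut out by a central projection) plus a short grading argument to pin that projection in the even center. Both proofs ultimately rest on semisimplicity coming from $\dagger$-closedness; the paper's version is more concrete and spells out the structure of the odd center (which it later needs anyway in \cref{lem:supercommutant}), while yours is shorter and avoids the case split. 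Also worth noting: your first step (replacing $\mathcal{J}$ by the $\dagger$-closed $\mathcal{J}\cap\mathcal{J}^\dagger$) is correct but not strictly necessary, since in a semisimple algebra every two-sided ideal is already a direct summand and hence of the form $e\mathcal{A}$ for a central idempotent; the reduction does, however, make it immediate that $e$ is self-adjoint, so it is a harmless convenience.
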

\begin{proof}
Due to the $\dagger$-closedness, the algebra $\mathcal{A}$ as an ungraded algebra is semi-simple.
If $\mathcal{A}$ has trivial center with zero odd component,
then $\mathcal{A}$ is simple as an ungraded algebra, and hence simple as a graded algebra.
If the center $Z(\mathcal{A}) = Z_0 \oplus Z_1$ of $\mathcal{A}$ has a nonzero odd component $Z_1$,
we claim that $Z_1$ is one-dimensional.
For any $\gamma \in Z_1$, $\gamma^\dagger \gamma$ is positive semidefinite, and is thus nonzero. 
But $\gamma^\dagger \gamma$ belongs to $Z_0 = \FF$, and thus is a positive real.
Rescaling, we may assume $\gamma^\dagger  = \gamma$ and $\gamma^2 = 1$.
For any other $\gamma' \in Z_1$, $\gamma \gamma' = a \in \FF$,
so $\gamma' = a \gamma$; hence $Z_1 = \gamma \FF$.
Considering projectors $(1 \pm \gamma) /2$
we see that the ungraded $\mathcal{A}$ is a direct sum of two simple algebras.

Take any graded two-sided ideal $J = J_0 + J_1$ of $\mathcal{A}$.
Since $J$ is an ideal of the ungraded $\mathcal{A}$, if proper, 
$J$ has to be one of the simple summand of $\mathcal{A}$.
Without loss of generality, we put $J = (1+\gamma) \mathcal{A} /2$.
Then, the even component $J_0$ includes the even part of $(1+\gamma)\mathcal{A}_0/2$, which is $\mathcal{A}_0$.
Likewise, the odd part $J_1$ includes the odd part of $(1+\gamma)\mathcal{A}_1/2$, which is $\mathcal{A}_1$.
Therefore, $J$ is not proper.
\end{proof}

Wall~\cite{Wall1964GradedBrauer} gives complete invariants for graded central simple algebras.
Over $\CC$, a central simple graded algebra is either $(+)$, 
meaning that the algebra is central simple as an ungraded algebra, 
or $(-)$, 
meaning that the algebra's even part is central simple.
The graded algebra's isomorphism class is determined by, in $(+)$ case, the dimensions of even and odd parts,
and in $(-)$ case, the dimension of the even part.
\begin{remark}\label{rem:grrep}
A faithful representation of a graded central simple algebras is as follows.
In the $(+)$ case, it is the algebra of all matrices of form $\begin{pmatrix} A & B \\ C & D \end{pmatrix}$
where the block diagonal subalgebra of $A,D$ is the even part,
and the off diagonal subspace of $B,C$ is the odd part.
In the $(-)$ case, it is the algebra of all matrices of form $\begin{pmatrix} A & B \\ B & A \end{pmatrix}$
(each block $A$ or $B$ is repeated),
where the block diagonal submatrix $A$ is the even part, and off diagonal submatrix $B$ is the odd part.
The faithful representation for $(+)$ is irreducible,
but not for $(-)$.
\end{remark}
A useful fact is that any two embeddings of a ungraded central simple algebra 
into another central simple algebra are related by an inner automorphism;
in particular, any automorphism of an ungraded central simple algebra is inner.

Let us fix some elements of a graded central simple algebra $\mathcal{D}$.
If $\mathcal{D}$ is simple as an ungraded algebra (the type $(+)$),
then there exists $u_\mathcal{D} \in \mathcal{D}$ such that 
$u_\mathcal{D} x u_\mathcal{D}^{-1}$ is $x$ if $x \in \mathcal{D}_0$ and $-x$ if $x \in \mathcal{D}_1$;
such $u_\mathcal{D}$ must exist because it gives an inner automorphism of $\mathcal{D}$.
The chosen $u_\mathcal{D}$ has to be even since $u_\mathcal{D} u_\mathcal{D} u_\mathcal{D}^{-1} = u_\mathcal{D}$.
Since $u_\mathcal{D}^2$ is in the even center of $\mathcal{B}$, it is a scalar, which we choose to be~$1$.
If $\mathcal{D}_0$ is central simple (the type $(-)$),
then the center $Z(\mathcal{D})$ has to be $\FF \oplus v_\mathcal{D} \FF$ where $v_\mathcal{D}^2 = 1$.
We see $\mathcal{D}_1 = v_\mathcal{D}^2 \mathcal{D}_1 \subseteq v_\mathcal{D} \mathcal{D}_0$, but $v_\mathcal{D} \mathcal{D}_0 \subseteq \mathcal{D}_1$, so they are equal.
For example, in the algebra of Majorana operators (the Clifford algebra)
$u_\mathcal{D}$ or $v_\mathcal{D}$ is the product of all Majorana operators.

If $\mathcal{A}$ is a graded algebra and $\mathcal{B}$ is a graded subalgebra of $\mathcal{A}$,
then the {\bf supercommutant} of $\mathcal{B}$ in $\mathcal{A}$ is defined to be the direct sum $\mathcal{B}' = \mathcal{B}'_0 \oplus \mathcal{B}'_1$
where
\begin{align*}
\mathcal{B}'_0 &= \{ x \in \mathcal{A}_0 ~:~ xb -b x = 0 \quad \forall b \in \mathcal{B} \},\\
\mathcal{B}'_1 &= \{ x \in \mathcal{A}_1 ~:~ xb_1 + b_1 x = x b_0 - b_0 x = 0 \quad \forall b_0,b_1 \in \mathcal{B}_1 \}.
\end{align*}
The supercommutant is a graded algebra.
Note that if $\mathcal{B}$ is central simple with a nonzero odd part,
then it suffices to consider the condition $b_1 x + x b_1 = 0$ to compute $\mathcal{B}'_1$ 
since $\mathcal{B}_1^2 = \mathcal{B}_0$~\cite[Lem.~1]{Wall1964GradedBrauer}.
For example, in the algebra of Majorana operators $\gamma_1,\ldots,\gamma_n$,
the supercommutant of the subalgebra of $\gamma_1,\ldots,\gamma_k$ is 
the subalgebra of $\gamma_{k+1},\ldots, \gamma_n$.

\begin{lemma}\label{lem:supercommutant}
Let $\mathcal{A}$ be a finite dimensional $\dagger$-closed graded central algebra over $\FF=\CC$
acting on an inner product space, and $\mathcal{B}$ be a $\dagger$-closed central graded subalgebra of $\mathcal{A}$.
Then, $\mathcal{A}$ is the graded tensor product of $\mathcal{B}$ and its supercommutant,
and the supercommutant is also $\dagger$-closed central as a graded algebra.
\end{lemma}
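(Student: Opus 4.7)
The plan is to prove this as a graded analogue of the classical Double Centralizer Theorem, leveraging the $\dagger$-structure to get graded simplicity for free. By \cref{lem:daggercentral-to-simple}, both $\mathcal{A}$ and $\mathcal{B}$ are graded simple, so each falls into one of the two Wall classes $(\pm)$. I would then introduce the multiplication map
\[
\phi \colon \mathcal{B} \grotimes \mathcal{B}' \longrightarrow \mathcal{A}, \qquad b \grotimes c \longmapsto bc,
\]
and check that it is a well-defined homomorphism of graded algebras: the sign $(-1)^{|c||b'|}$ appearing in the multiplication of the graded tensor product is exactly the supercommutation sign $c b' = (-1)^{|c||b'|} b' c$ that holds in $\mathcal{A}$ by definition of the supercommutant $\mathcal{B}'$.

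Next I would verify that $\mathcal{B}'$ is itself $\dagger$-closed. If $x \in \mathcal{B}'_i$ and $b \in \mathcal{B}_j$, then the supercommutation relation $xb - (-1)^{ij} bx = 0$ taken under adjoint, together with $\mathcal{B}^\dagger = \mathcal{B}$ and the fact that the involution preserves grading, gives the same relation with $x^\dagger$ in place of $x$, so $x^\dagger \in \mathcal{B}'_i$. Once $\mathcal{B}'$ is known to be both $\dagger$-closed and graded central, \cref{lem:daggercentral-to-simple} again yields graded simplicity, and Wall's theorem (graded tensor product of graded central simples is graded central simple) shows $\mathcal{B} \grotimes \mathcal{B}'$ is graded simple. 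Since $\phi$ sends $1 \grotimes 1$ to $1$, it is nonzero, and a nonzero homomorphism out of a graded simple algebra must be injective.

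The main obstacle is establishing centrality of $\mathcal{B}'$ and the dimension identity $\dim \mathcal{A} = \dim \mathcal{B} \cdot \dim \mathcal{B}'$ simultaneously — the naive route of saying ``any $z \in Z(\mathcal{B}')_0$ lies in $Z(\mathcal{A}) = \CC$ because $\mathcal{A} = \mathcal{B}\mathcal{B}'$'' is circular since surjectivity of $\phi$ is what we want to prove. My plan is to bypass this by working case-by-case through Wall's classification, using the explicit faithful representations in \cref{rem:grrep}. For $\mathcal{B}$ of type $(+)$, $\mathcal{B}$ is ungraded simple, so its action on $\mathcal{A}$ by left multiplication splits $\mathcal{A}$ as a tensor product of the graded simple $\mathcal{B}$-module with a multiplicity space; the supercommutant is identified with the graded endomorphism algebra of that multiplicity space, whose graded center is $\CC$ (possibly with an odd central element in the type $(-)$ subcase of $\mathcal{A}$), and whose dimension is $\dim\mathcal{A}/\dim\mathcal{B}$. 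For $\mathcal{B}$ of type $(-)$, the odd central element $v_\mathcal{B} \in Z(\mathcal{B})_1$ acts on $\mathcal{A}$ and its action together with $\mathcal{B}_0$ again forces a tensor decomposition up to the $\ZZ_2$ grading provided by $v_\mathcal{B}$. In all four combinations the dimension count closes, making $\phi$ an isomorphism, and the analysis of the endomorphism algebra simultaneously certifies that $\mathcal{B}'$ is $\dagger$-closed graded central simple, completing the statement.
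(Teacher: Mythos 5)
Your high-level structure is sound and you correctly flag the circularity in naively inferring centrality of $\mathcal{B}'$ from $\mathcal{A}=\mathcal{B}\mathcal{B}'$, which is exactly the crux. But the ``bypass'' you propose is left at a level of vagueness where the genuinely hard points are not actually addressed, and this matters.

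For $\mathcal{B}$ of type $(+)$ you claim ``the supercommutant is identified with the graded endomorphism algebra of the multiplicity space.'' This identification is true but nontrivial: the supercommutant is \emph{not} the ordinary commutant $\mathcal{C}=\text{(ungraded commutant)}$, which is what acts on the multiplicity space; rather $\mathcal{B}' = \mathcal{C}_0 \oplus u_\mathcal{B}\mathcal{C}_1$, where $u_\mathcal{B}\in\mathcal{B}_0$ is the grading element, and one must verify that this twist still produces a subalgebra isomorphic (as a graded algebra) to $\mathcal{C}$ itself. This is the central algebraic observation in the paper's proof, and it is what makes the dimension count $\dim\mathcal{B}\cdot\dim\mathcal{B}'=\dim\mathcal{A}$ inherit from the ungraded Double Centralizer Theorem. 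Your sketch conceals it rather than proving it.

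For $\mathcal{B}$ of type $(-)$ the situation is worse: $\mathcal{B}$ is then \emph{not} ungraded simple (it is $M_n(\CC)\oplus M_n(\CC)$ via the idempotents $(1\pm v_\mathcal{B})/2$), so ``its action on $\mathcal{A}$ splits $\mathcal{A}$ as a tensor product of the simple $\mathcal{B}$-module with a multiplicity space'' does not even parse — there are two non-isomorphic simple ungraded modules. The paper handles this by an auxiliary algebra trick: graded-tensoring with $\mathcal{E}=\CC\oplus\CC\gamma$ of type $(-)$ to convert $\mathcal{B}$ to type $(+)$, then taking the supercommutant of $\mathcal{E}$ at the end. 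Your sentence ``its action together with $\mathcal{B}_0$ again forces a tensor decomposition up to the $\ZZ_2$ grading provided by $v_\mathcal{B}$'' does not supply an argument and would need to be replaced by something concrete. In the four-way case analysis, two of the cases are essentially unaddressed, and the ones you do gesture at suppress the $u_\mathcal{B}$ mechanism that is the technical heart of the lemma.
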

This is an analogue of the statement for the finite dimensional ungraded case:
\emph{
If $\mathcal{A}$ is a finite dimensional central simple algebra over $\CC$, 
and $\mathcal{B}$ a central simple subalgebra of $\mathcal{A}$,
then $\mathcal{A}$ is isomorphic to the tensor product of $\mathcal{B}$ and its commutant.
}
\begin{proof}
We use elements $u_\mathcal{A},u_\mathcal{B},v_\mathcal{A},v_\mathcal{B}$ whenever they exist.
By \cref{lem:daggercentral-to-simple} both $\mathcal{A}$ and $\mathcal{B}$ are simple.

(1) If $\mathcal{A}$ is $(+)$ and $\mathcal{B}$ is $(+)$, then 
consider the commutant $\mathcal{C}$ of $\mathcal{B}$ in $\mathcal{A}$.
$\mathcal{C}$ has grading naturally: $\mathcal{C} = (\mathcal{C} \cap \mathcal{A}_0) \oplus (\mathcal{C} \cap \mathcal{A}_1)$.
We find $\mathcal{B}' = \mathcal{C}_0 \oplus (\mathcal{C}_1 u_\mathcal{B})$.
(The even part $\mathcal{B}'_0 = \mathcal{C}_0$ is by definition. 
For any odd element $f \in \mathcal{B}'_1$, we see $u_\mathcal{B} f \in \mathcal{A}_1$ commutes with $\mathcal{B}_1$.)
We need to check if the natural map $\varphi : \mathcal{B} \grotimes \mathcal{B}' \to \mathcal{A}$ is injective;
the map being a homomorphism is clear by definition of $\grotimes$, and
the surjectivity will follow by dimension counting, since we know $\mathcal{A} \cong \mathcal{B} \otimes \mathcal{C}$ as ungraded algebras.
Thanks to inner automorphisms $u_\mathcal{A}$ and $u_\mathcal{B}$,
we see $\mathcal{B} \grotimes \mathcal{B}' = \bigoplus_{j,k = 0,1} \mathcal{B}_j \otimes \mathcal{B}'_k$,
but since each direct summand is isomorphic to $\mathcal{B}_j \otimes \mathcal{C}_k$ as $\CC$-vector spaces,
and the ungraded product map $\mathcal{B} \otimes \mathcal{C} \to \mathcal{A}$ is injective,
we see that $\varphi$ is injective.

(2) If $\mathcal{A}$ is $(-)$ and $\mathcal{B}$ is $(+)$, then 
$\mathcal{A} = \mathcal{A}_0 \oplus v_\mathcal{A} \mathcal{A}_0$ with $\mathcal{A}_0$ central simple.
Let $\mathcal{C}$ be the commutant of $\mathcal{B}$ in $\mathcal{A}$.
We find $\mathcal{B}' = \mathcal{C}_0 \oplus (\mathcal{C}_1 u_\mathcal{B})$ for the same reason as in (1).
Since $v_\mathcal{A} \in Z(\mathcal{A})$, $\mathcal{C}_1 u_\mathcal{B} v_\mathcal{A} \subseteq \mathcal{A}_0$ commutes with $\mathcal{B}$,
and hence is equal to $\mathcal{C}_0$.
That is, $\mathcal{B}' = \mathcal{C}_0 \oplus v_\mathcal{A} \mathcal{C}_0$.

Consider $\tilde{\mathcal{B}} = \mathcal{B}_0 + v_\mathcal{A} \mathcal{B}_1 \subseteq \mathcal{A}_0$.
The sum $\mathcal{B}_0 + v_\mathcal{A} \mathcal{B}_1$ is in fact direct, 
since $v_\mathcal{A}$ is in the center of $\mathcal{A}$ 
and $u_\mathcal{B}$ can tell $\mathcal{B}_0$ from $\mathcal{B}_1$.
Moreover, $\tilde{\mathcal{B}}$ is $\dagger$-closed, and if there is a central element $b_0 + b_1 v_\mathcal{A}$
then $b_0 + b_1$ is a central element of $\mathcal{B}$, so the ungraded $\tilde{\mathcal{B}}$ is central, and hence also simple.
Since $v_\mathcal{A} \in Z(\mathcal{A})$,
the commutant of $\tilde{\mathcal{B}}$ in $\mathcal{A}_0$ is precisely $\mathcal{C}_0$, which in turn has to be central simple as an ungraded algebra.
So we have the ungraded isomorphism $\tilde{\mathcal{B}} \otimes \mathcal{C}_0 \to \mathcal{A}_0$;
in particular, $\dim \mathcal{B} \otimes \mathcal{C}_0 = \dim \mathcal{A}_0$,
and thus $\dim \mathcal{B} \otimes \mathcal{B}' = \dim \mathcal{A}$.

It remains only to show that the map $\mathcal{B} \grotimes \mathcal{B}' \to \mathcal{A}$ is injective,
for the same reason as in~(1).
Due to the grading of $\mathcal{A}$ that is given, and the grading of $\mathcal{B}$ by $u_\mathcal{B}$,
it suffices to consider $\mathcal{B}_i \otimes \mathcal{B}'_j \to \mathcal{A}_{i+j \mod 2}$,
whose injectivity follows from that of $\tilde{\mathcal{B}} \otimes \mathcal{C}_0 \to \mathcal{A}_0$. 

(3) If $\mathcal{A}$ is $(+)$ and $\mathcal{B}$ is~$(-)$, then 
we introduce an auxiliary central simple algebra $\mathcal{E} = \CC \oplus \CC \gamma$ of type $(-)$. 
Consider $\mathcal{E} \grotimes \mathcal{A}$ and its subalgebra $\mathcal{E} \grotimes \mathcal{B}$,
We then in the situation of case (2), 
and the desired result follows by taking the supercommutant of $\mathcal{E}$.

(4) If $\mathcal{A}$ is $(-)$ and $\mathcal{B}$ is $(-)$, 
then similarly we take $\mathcal{E} \grotimes \mathcal{A} \supseteq \mathcal{E} \grotimes \mathcal{B}$,
to use the argument of case~(1).

Finally, it is clear that $\mathcal{B}'$ is $\dagger$-closed.
The centrality follows because any even central element of $\mathcal{B}'$
is a central element of $\mathcal{A}$.
\end{proof}

\begin{proof}[Proof of \cref{fermionicbv}]
The argument is parallel to that of \cite{bravyi2005commutative} 
using \cref{lem:supercommutant}
in place of the ungraded version that we have noted in between the statement of the lemma and its proof.

The interaction algebra $\mathcal A_{i \to j}$ is a $\dagger$-closed graded algebra over $\CC$, 
since $P_{i,j}$ is hermitian.
If $\mathcal A_{i \to j}$ is not central,
there would be an even subalgebra that commutes with all other interaction algebras at site $j$.
Choosing an even minimal projector in the center of $\mathcal A_{i \to j}$ for each neighbor $i$ of $j$,
we construct a projector $\Pi'_j$ at site $j$.
Restricting to the image of $\Pi'_j$,
we may assume that $\mathcal A_{i \to j}$ is central, for any $i$ and $j$.
The same remark goes into the supercommutant of all interaction algebras at site $j$,
and hence we may assume that the algebra generated by all interaction algebras at site $j$ is central,
for all $j$.

By inductively applying \cref{lem:supercommutant}
we see that the algebra generated by all interaction algebras at site $j$
is a graded tensor product of $\mathcal A_{i \to j}$ 
where $i$ ranges over all neighboring sites of $j$.
So, the problem is reduced to an instance where there are only two sites and one Hamitonian term $P$.
Let $\mathcal A_L$ and $\mathcal A_R$ be the interaction algebras of the term on the left and right sites,
respectively.

Choose any nonzero even minimal projectors $Q_L \in \mathcal A_L$ and $Q_R \in \mathcal A_R$,
where the minimality means that $Q_{L(R)} O Q_{L(R)} \propto Q_{L(R)}$ for any even $O \in \mathcal A_{L(R)}$.
(The minimality is a way of speaking of ``rank one'' projectors without referring to a representation space,
which may not be physical on its own since $\mathcal A_{L,R}$ may be $(-)$,
in which case we have an unpaired Majorana mode.)
Let us think in terms of the faithful representations of \cref{rem:grrep}.
In every case out of four cases $(\mathcal A_L,\pm)$ and $(\mathcal A_R,\pm)$,
$Q_L \grotimes Q_R$ is some even projector, and 
$P$ can be conjugated by an even unitary $U_{LR} \in \mathcal A_L \grotimes \mathcal A_R$
to become $\tilde P$ that commutes with $Q_L \grotimes Q_R$.
Further restricting the Hilbert space of the two sites $L,R$ by $Q_L$ and $Q_R$,
the problem is reduced to the situation where $Q_L = 1$ and $Q_R = 1$.
Then, the only possibility is that $\tilde P$ is
either $0,1$ or $(1+i \tau_L \tau_R)/2$.

The projector $\Pi_i$ is the product of $\Pi'_i$ and $Q$'s that restricts the algebra on each site.
\end{proof}

\begin{acknowledgments}
We thank Michael Freedman for encouraging discussions
and for the suggestion to use the Hall marriage theorem.
JH thanks Vadym Kliuchnikov for suggesting Ref.~\cite{Kniga}.
LF is supported by NSF DMR-1519579.
\end{acknowledgments}

\bibliography{nta3-ref}
\end{document}